\newcommand{\be}{\begin{equation}}
\newcommand{\ee}{\end{equation}}
\newcommand{\bea}{\begin{eqnarray}}
\newcommand{\eea}{\end{eqnarray}}
\newcommand{\ba}{\begin{aligned}}
\newcommand{\ea}{\end{aligned}}
\numberwithin{equation}{section}
\newcounter{thmcounter}
\numberwithin{thmcounter}{section}
\theoremstyle{definition}
\newtheorem{definition}[thmcounter]{Definition}
\newtheorem{remark}[thmcounter]{Remark}
\theoremstyle{plain}
\newtheorem{corollary}[thmcounter]{Corollary}
\newtheorem{lemma}[thmcounter]{Lemma}
\newtheorem{proposition}[thmcounter]{Proposition}
\newtheorem{theorem}[thmcounter]{Theorem}
\newtheorem{conjecture}[thmcounter]{Conjecture}
\def\1{{\boldsymbol 1}}                     %
\def\cH{{\mathcal H}}                       %
\def\ri{{\rm i}}                            %
 \def\C{\mathbb{C}}                          %
\def\fH{\mathfrak{H}}                       %
\def\fI{\mathfrak{I}}
\def\cF{{\mathcal F}}                       %
\def\reg{\mathrm{reg}}                      %
\def\red{\mathrm{red}}                      %
\def\res{\mathrm{res}}
\def\span{{\mathrm{span}}}                  %
\def\cR{{\mathcal R}}                       %
\def\Ad{{\mathrm{Ad}}}                      %
\def\id{{\mathrm{id}}}                      %
\def\cA{{\mathcal A}}                       %
\def\dt {\left.\frac{d}{dt}\right|_{t=0}}   %
\def\fR{{\mathfrak{R}}}                        %
 \def\B{\mathrm{B}}                          %
\def\cM{{\mathcal M}}                       %
\def\sgn{\mathrm{sgn}}
\def\fC{{\mathfrak{C}}}                   %
\def\cE{{\mathcal E}}
\newcommand{\CC}{\ensuremath{\mathbb{C}}}
\newcommand{\R}{\ensuremath{\mathbb{R}}}
\newcommand{\Z}{\ensuremath{\mathbb{Z}}}
\newcommand{\D}{\ensuremath{\mathbb{D}}}
\newcommand{\MM}{\ensuremath{\mathcal{M}}}
\newcommand{\tr}{\operatorname{tr}}
\newcommand{\diag}{\operatorname{diag}}
\newcommand{\Hom}{\operatorname{Hom}}
\newcommand{\Mat}{\operatorname{Mat}}
\newcommand{\Gl}{\operatorname{GL}}
\newcommand{\UU}{\operatorname{U}}
\newcommand{\ad}{\operatorname{ad}}
\newcommand{\Vect}{\operatorname{Vect}}
\newcommand{\disk}{\operatorname{D}}
\newcommand{\gl}{\ensuremath{\mathfrak{gl}}}
\newcommand{\g}{\ensuremath{\mathfrak{g}}}
\newcommand{\uu}{\ensuremath{\mathfrak{u}}}
\newcommand{\Cinf}{\ensuremath{\mathcal{C}^\infty}}
\newcommand\ic{\ensuremath{\mathfrak{i}}}
\newcommand{\att}{\ensuremath{\mathtt{a}}}
\newcommand{\btt}{\ensuremath{\mathtt{b}}}
\newcommand{\ctt}{\ensuremath{\mathtt{c}}}
\newcommand\br[1]{\{ #1 \}}
\newcommand\ip[1]{\langle #1 \rangle}
\def\cC{{\mathrm{C}}}
\newcommand{\bT}{\ensuremath{\mathbb{T}}}
\def\fT{{\mathfrak t}}
\def\End{{\mathrm{End}}}
\def\fN{{\mathfrak{N}}}
\def\fC{{\mathfrak{C}}}
\def\mC{\mathrm{C}}
\def\SU{{\mathrm{SU}}}
\def\cO{{\mathcal O}}
\def\bM{{\mathbb{M}}}
\def\fF{{\mathfrak{F}}}
\begin{document}

\begin{center}
 {\Large\bf Integrable multi-Hamiltonian systems from reduction of
 an extended quasi-Poisson double of $\UU(n)$}
\end{center}

\medskip
\begin{center}
M.~Fairon${}^{a}$ and  L.~Feh\'er${}^{b,c}$
\\

\bigskip
${}^a$Laboratoire de math\'ematiques d'Orsay\\
Universit\'e Paris-Saclay, F-91405 Orsay, France\\
e-mail: maxime.fairon@universite-paris-saclay.fr

\medskip
${}^b$Department of Theoretical Physics, University of Szeged\\
Tisza Lajos krt 84-86, H-6720 Szeged, Hungary\\
e-mail: lfeher@physx.u-szeged.hu

\medskip
${}^c$Institute for Particle and Nuclear Physics\\
Wigner Research Centre for Physics\\
 H-1525 Budapest, P.O.B.~49, Hungary

\end{center}

 \setcounter{tocdepth}{2}

\medskip
\begin{abstract}
We construct a master dynamical system on a $\UU(n)$ quasi-Poisson manifold, $\cM_d$, built from the double
$\UU(n) \times \UU(n)$ and $d\geq 2$ open balls in $\CC^n$, whose
 quasi-Poisson structures are obtained from $T^* \R^n$  by  exponentiation.
A pencil of quasi-Poisson bivectors $P_{\underline{z}}$ is defined on $\cM_d$ that depends on $d(d-1)/2$ arbitrary
real parameters and gives rise to pairwise compatible Poisson brackets on the $\UU(n)$-invariant functions.
The master system on $\cM_d$ is a quasi-Poisson analogue of the degenerate integrable system of free motion
on the extended cotangent bundle $T^*\!\UU(n) \times \CC^{n\times d}$. Its
commuting Hamiltonians are  pullbacks of the class functions on one of the $\UU(n)$ factors.
We prove that the master system descends to a degenerate integrable system on a dense open subset of
the smooth component  of the quotient space $\cM_d/\UU(n)$ associated with the principal orbit type.
Any reduced Hamiltonian  arising from a class function generates the same flow  via any
of the compatible  Poisson structures stemming from the bivectors $P_{\underline{z}}$.
The restrictions of the reduced system on minimal symplectic leaves parameterized by generic elements
of the center of $\UU(n)$ provide a new real form of the complex, trigonometric
spin Ruijsenaars--Schneider model of Krichever and Zabrodin.
This generalizes the derivation of the  compactified trigonometric RS model found previously  in the $d=1$ case.
\end{abstract}

 \newpage

{\linespread{0.7}\tableofcontents}

 \newpage

\section{Introduction}
\label{sec:I}

The present paper is intended to advance the reduction approach
to integrable Hamiltonian systems.
The application of this method always starts with a manifestly integrable `master system' having a large symmetry group.
Then, taking quotient by the symmetry group or its suitable subgroup leads to the reduced system.
The quotient space gets equipped with a Poisson structure, and can be decomposed into
a disjoint union of symplectic leaves. Integrability is usually inherited by
the restrictions of the reduced system on the generic symplectic leaves, and on particular symplectic leaves
one may find models that have important applications.

Notable successes of the  method include
the  reduction treatment of integrable many-body models of
Calogero--Moser--Sutherland and Ruijsenaars--Schneider type \cite{AFM,FK2,FK,FGNR,KKS,N,OP,Pu}.
The study of this celebrated family of integrable systems and their extensions by so-called spin variables
started decades ago \cite{C,Mo,Su,RS,GH,KZ,LX,Wo} and still attracts considerable attention \cite{ARe,Fa,FG,Fe1,Fe2,KLOZ,KMK,Re3,TZ}.
The prototype of the spin-particle models was introduced by Gibbons and Hermsen \cite{GH} using
Hamiltonian reduction in a complex holomorphic setting.
Interesting generalizations of the Gibbons--Hermsen model are the spin Ruijsenaars--Schneider (RS) models
of Krichever and Zabrodin \cite{KZ}, who found their models working  at the level of the equations of motion.
The Hamiltonian structure of the holomorphic trigonometric/hyperbolic model
of Krichever and Zabrodin has been revealed by one of us in collaboration
with Chalykh \cite{CF2} relying on  complex quasi-Poisson geometry.
Soon after, another reduction treatment of the same model was independently developed
by Arutyunov and Olivucci building on the theory of Poisson--Lie groups \cite{AO}.
Then, a trigonometric real form of the model was derived in our joint paper with Marshall \cite{FFM},
again using Poisson--Lie groups.
As a complex integrable system may have several real forms possessing very different
 properties, it is natural
to ask if new real forms of the Krichever--Zabrodin model can be obtained by utilizing
 quasi-Poisson techniques in a real setting.

\medskip

Since it provides our basic motivation, let us recall the reduction behind the
trigonometric real form of the Gibbons--Hermsen model.
In this case, the `master phase space'
is the symplectic manifold
\be
M_d:= T^*\!\UU(n) \times \C^{n\times d} \simeq \UU(n) \times \uu(n) \times \C^{n\times d} = \{ (g,J,\sigma)\},
\qquad \forall d\geq 2,
\label{I1}\ee
where the cotangent bundle $T^*\!\UU(n)$ is trivialized by right-translations and
an identification $\uu(n)^* \simeq \uu(n)$ is made.
The entries of $\sigma\in\C^{n\times d}$ encode independent copies  of the
symplectic vector space $\R^2\simeq \C$.
The commuting Hamiltonians of an integrable master system on $M_d$ are defined
by the conjugation invariant functions on $\uu(n)$.
A Hamiltonian action of $\UU(n)$ on $M_d$ is
engendered by the maps
\be
A_\eta: (g,J, \sigma) \mapsto (\eta g \eta^{-1}, \eta J \eta^{-1}, \eta \sigma),\quad \forall \eta \in \UU(n).
\label{I2}\ee
This action is generated by the classical moment map $\phi: T^*\!\UU(n) \times \C^{n\times d}\to \uu(n)$ with
\be
\phi(g,J,\sigma)=J - g^{-1} J g + \ic \sum_{\alpha=1}^d \sigma_\alpha \sigma_\alpha^\dagger,
\label{phi}\ee
where the $\sigma_\alpha$ are the columns of the matrix $\sigma$.
Denoting by $\cO\subset \uu(n)^*$ a co-adjoint orbit, $\phi^{-1}(\cO)/\UU(n)$
yields a Poisson subspace of the full reduced phase space $M_d/\UU(n)$.
Under further conditions, $\phi^{-1}(\cO)/\UU(n)$ is actually a symplectic leaf.
Picking the smallest non-trivial orbits, $\cO = \{ \ic \gamma \1_n\}$
with an arbitrary positive constant $\gamma$, one obtains the Gibbons--Hermsen model~\cite{GH}, whose
main Hamiltonian descends from the `kinetic energy of free motion' furnished by
$-\frac{1}{2} \tr(J^2)$.
As is well known \cite{KKS},
the $d=1$ case leads to the spinless Sutherland model with coupling parameter $\gamma$.

A generalization of the above sketched construction
 \cite{FFM}  begins by replacing the cotangent bundle by the so-called
Heisenberg double \cite{STS}, which as a (real) manifold is provided by $\Gl(n,\C)$.
Moreover, the column vectors that constitute $\sigma\in \C^{n\times d}$ are equipped
with a Poisson structure that is $\UU(n)$-covariant, with respect to the  standard Poisson--Lie group structure on $\UU(n)$.
The Poisson--Lie counterpart of $\uu(n)^*$ is the dual Poisson--Lie group $\B(n)$ of $\UU(n)$, alias
the group of $n\times n$ upper-triangular matrices having positive diagonal elements.
The moment map $\phi$ \eqref{phi} is replaced by a Poisson map $\Lambda$ from
$\bM_d:=\Gl(n,\C) \times \C^{n\times d}$ into $\B(n)$ having the product form
$\Lambda = \Lambda_L \Lambda_R b_1 \cdots b_d$,
where $\Lambda_L$ and $\Lambda_R$ depend only on $\Gl(n,\C)$,
$b_\alpha$ depends only on the $\alpha$-th column of $\sigma\in \C^{n\times d}$,
and all the factors of $\Lambda$ are Poisson maps into $\B(n)$.
The Poisson--Lie moment map $\Lambda$
generates a Poisson action of $\UU(n)$ on $\bM_d$, and  this was used for reducing
the integrable master system whose commuting Hamiltonians are the
conjugation invariant functions of $\Lambda_R \Lambda_R^\dagger$.
Then,  a trigonometric real form of the Krichever--Zabrodin model was found  on the symplectic
leaf $\Lambda^{-1}(e^\gamma \1_n)/\UU(n)  \subset \bM_d/\UU(n)$ of the full quotient space, for any $\gamma>0$ and $d\geq 2$.
The Krichever--Zabrodin equations of motion are associated with the reduced Hamiltonian
descending from  $\tr (\Lambda_R \Lambda_R^\dagger)$.
 In  the $d=1$ case,  this construction yields
the spinless trigonometric RS model \cite{FK2}.

The unreduced master systems are integrable
in the degenerate sense both in the cotangent bundle and in the Heisenberg double cases.
Let us recall \cite{J,Nek,Re2} that a degenerate integrable system on a symplectic manifold of dimension $2m$ consists
of a ring of commuting Hamiltonians (with complete flows)  and the ring of their joint constants of motion, such
that the functional dimensions of the two rings sum up to $2m$.
More precisely, it is further assumed that the number, $r$, of independent commuting Hamiltonians satisfies $r<m$.
In the case of Liouville integrability $r=m$ and the two rings coincide.
It was proved in the papers \cite{AO,CF2,FFM} that degenerate integrability is inherited by the special,
Krichever--Zabrodin type reduced systems for any $d\ge 2$. It is also known that
the spinless reduced systems that result  for $d=1$  are (`only') Liouville integrable \cite{Ru}.

The principal goal of the present paper is to explore the generic reductions of a different master system,
relying on quasi-Poisson \cite{AKSM} and quasi-Hamiltonian \cite{AMM} techniques.
In addition to the previously mentioned papers, our work also  builds on a recent study of reductions
of doubles of compact Lie groups \cite{Fe}.

Our starting phase space will be the manifold
\be
\cM_d := \UU(n) \times \UU(n) \times \disk(x_1) \times \cdots \times \disk(x_d) = \{(A,B,v_1,\dots, v_d)\},
\label{I4}\ee
where $x_1,\dots, x_d$ are non-zero real parameters and $\disk(x_\alpha):= \{ v_\alpha \in \CC^n\mid \vert v_\alpha\vert^2 < 2\pi/ |x_\alpha|\}$.
The building block $\UU(n)\times \UU(n)$ of $\cM_d$ is the so-called internally fused
double,
 a quasi-Poisson analogue
of the cotangent bundle $T^*\!\UU(n)$.  Every open ball $\disk(x_\alpha)$ is endowed with a $\UU(n)$ quasi-Poisson structure,
which blows up at the boundary of the ball and is
obtained from  $T^*\R^n$ by exponentiation \cite{AKSM} and a rescaling of the variables.
Then, a quasi-Poisson bivector $P_{\cM_d}$ on $\cM_d$ results from these building blocks by  application of
the fusion procedure of \cite{AKSM}.
What is more, $\cM_d$ will be equipped with a `pencil' of
quasi-Poisson bivectors $P_{\underline{z}}$ of the form\footnote{In the wording of our presentation it is often
assumed that $d\geq 2$,  but the $d=1$ case, when the bivector $P_{\underline{z}}$ \eqref{I5} collapses to $P_{\cM_d}$, is also interesting.}
\be
P_{\underline{z}} = P_{\cM_d} + \psi_{\underline{z}},
\label{I5}\ee
which depend not only on the fixed parameters $x_1,\dots, x_d$, but also on further
arbitrary parameters $z_{\alpha \beta}\in \R$, for  $1\leq \alpha < \beta \leq d$.
All these bivectors
are invariant with respect to the $\UU(n)$-action operating via the maps
\be
\cA_\eta : (A,B,v_1,\dots, v_d) \mapsto (\eta A \eta^{-1}, \eta B \eta^{-1}, \eta v_1,\dots, \eta v_d), \quad \forall \eta\in \UU(n),
\label{I6}\ee
and  give rise to genuine Poisson brackets on the  $\UU(n)$-invariant functions.
For any two choices of the $z_{\alpha\beta}$, the corresponding Poisson brackets of the invariant functions are
 compatible in the usual sense of Poisson geometry, that is, their arbitrary linear combinations satisfy the Jacobi identity.
Thus, by identifying its space of smooth functions with the $\UU(n)$-invariant functions on $\cM_d$,
 the quotient space $\cM_d/\UU(n)$ inherits a family of compatible Poisson structures. It will also be explained   that the
 map $\Phi: \cM_d \to \UU(n)$, defined by
\be
\Phi(A,B,v_1,\dots, v_d) =  A B A^{-1} B^{-1} e^{\ic x_1 v_1 v_1^\dagger} \cdots e^{\ic x_d v_d v_d^\dagger},
\label{I7}
\ee
is a group valued moment map,  giving $(\cM_d, P_{\underline{z}}, \Phi)$ the structure of  a `Hamiltonian quasi-Poisson manifold' in the sense of \cite{AKSM}.
Moreover, $\cM_d$  admits a quasi-Hamiltonian structure as well; the 2-form
$\omega_{\underline{z}}$ compatible with $P_{\underline{z}}$ and $\Phi$ will be given explicitly. (We shall also explain how this pencil
can be adapted to the extended cotangent  bundle $M_d$ \eqref{I1}.)
It follows from the general theory that, for  any conjugacy class $\mC \subset \UU(n)$,
$\Phi^{-1}(\mC)/\UU(n)$ becomes a Poisson subspace for any of the compatible Poisson structures on $\cM_d/\UU(n)$.
If the $\UU(n)$-action is free on $\Phi^{-1}(\mC)$, then this quotient is a symplectic manifold.

The phase space $\cM_d$  carries a dynamical system
quite akin to a degenerate
integrable system on a symplectic manifold. To describe this, remark that the bivector $P_{\underline{z}}$ corresponds to a
quasi-Poisson bracket on $\Cinf(\cM_d)$, and every function $H\in \Cinf(\cM_d)$  gives rise to a vector field $X_H$.
Our master system is generated by the Hamiltonians of the form
\be
H(A,B,v_1,\dots, v_d):= h(A) \quad \hbox{with any}\quad h\in \Cinf(\UU(n))^{\UU(n)}.
\label{I8}\ee
These Hamiltonians are in involution with respect to the quasi-Poisson bracket and form a ring of functional dimension $n$.
It turns out that $A$ and the $v_\alpha$ are constant along the integral curves of $X_H$, while $B$
develops according to $B(t) = B(0) \exp( -t \nabla h(A(0)) )$, where $\nabla h$ is the $\uu(n)$-valued `differential' of $h$.
This is very similar to the unreduced dynamics on $T^*\!\UU(n)$.
The ring of joint constants of motion of the Hamiltonians \eqref{I8} has functional dimension $\dim(\cM_d) -n$,
and is closed under the quasi-Poisson bracket.
We shall present a neat characterization of the ring of constants of motion
in terms of a (non-surjective) map $\Psi: \cM_d \to \cM_d$,
\be
\Psi (A,B, v_1,\dots, v_d) = (A, B A B^{-1}, v_1,\dots, v_d),
\label{I9}\ee
which is a quasi-Poisson map
with respect to $P_{\underline{z}}$ on its domain and  a certain degenerate
quasi-Poisson structure on its target space, and is constant along the integral curves.

The quotient space $\cM_d^\red:= \cM_d/\UU(n)$ carries the Poisson algebra of the smooth functions
 $\Cinf(\cM_d)^{\UU(n)}$.
However, $\cM_d^\red$  is not a smooth manifold, but a stratified Poisson space \cite{SL,Sn}.
We shall restrict our study of the reduction  to the dense open subset of the quotient space
corresponding to the subset  $\cM_{d*}\subset \cM_{d}$ on which the $\UU(n)$-action is free.
Then, $\cM_{d*}^\red :=\cM_{d*}/\UU(n)$ is a smooth Poisson manifold, having generic symplectic leaves of codimension $n$.
We prove that the functional dimension of the reduced commuting Hamiltonians on $\cM_{d*}^\red$
that descend from our master system remains $n$, and can also show that their Hamiltonian vector fields
span an $n$-dimensional subspace of the tangent space over a dense open subset of $\cM_{d*}^\red$.
These reduced Hamiltonian vector fields and their flows are the projections
of the unreduced quasi-Hamiltonian vector fields $X_H$ and their flows, which stay in $\cM_{d*}$.
All this would mean \cite{LGMV} that our master system reduces to a degenerate integrable system on
the Poisson manifold $\cM_{d*}^\red$, if we could exhibit a ring of smooth constants of motion inside $\Cinf(\cM_{d*}^\red)$
of functional dimension $\dim(\cM_{d*}^\red)-n$.
At present, we are able to construct such constants of motion after restriction to a smaller dense open subset
$\cM_{d**}^\red \subset \cM_{d*}^\red$.

It is worth stressing that the commuting reduced flows are multi-Hamiltonian since the reduced Hamiltonian coming from $H$  in \eqref{I8}
 generates the same flow by means of each
element of the pencil of compatible  Poisson structures on $\cM_{d*}^\red$  stemming from the bivectors $P_{\underline{z}}$ \eqref{I5}.

After the general study of the reduced master system, we inspect
the smallest symplectic leaf in $\cM_{d*}^\red$   given by the quotient space
 $\Phi^{-1}(e^{\ic \gamma} \1_n)/\UU(n)$, which indeed lies inside $\cM_{d*}^\red$ if $k \gamma \notin 2\pi \Z$ for any $k=1,\dots, n$.
 At the level of the reduced equations of motion, we demonstrate that this
  symplectic leaf  carries a new real form of the complex trigonometric spin RS system
 of Krichever and Zabrodin \cite{KZ}, for any $d\geq 2$.
 However,  we encountered unexpected technical difficulties in trying to prove the degenerate integrability of the reduced system
 restricted on this interesting symplectic leaf.
 Actually this is only one of  the challenging open problems that remain for future work.
 Note that for $d = 1$ the corresponding reduced systems were identified in \cite{FK,FKl} as (spinless)
 compactified trigonometric RS systems.

\subsubsection*{Layout} The goal of Section \ref{S:Back} consists in presenting the quasi-Poisson structure of the manifolds that are at the core of this paper.
We start by reviewing the notions of quasi-Poisson and quasi-Hamiltonian manifolds,
their relation, and the method of `exponentiation' of Hamiltonian  manifolds \cite{AMM,AKSM}.
The first main example, the internally fused double $\D(\UU(n))$ of $\UU(n)$, is recalled in Proposition \ref{Pr:qP-intDble}; we also present another key quasi-Poisson structure on the underlying smooth manifold $\UU(n)\times \UU(n)$  in Proposition \ref{Pr:qP-AltDble}.
We then exhibit as part of Proposition \ref{Pr:qP-Cn} the third main example: the quasi-Poisson ball $\disk(x)$, which depends on a parameter $x\in \R\setminus \{0\}$. It is constructed from the Hamiltonian manifold $\CC^n\simeq T^\ast \R^n$ by the exponentiation method; the quasi-Poisson manifold $\disk(x)$ admits a corresponding quasi-Hamiltonian structure which is exhibited in Proposition \ref{Pr:qSymp} based on a result of Hurtubise, Jeffrey and Sjamaar \cite{HJS}.

In Section \ref{S:Master}, we define the master phase space $\MM_d$ as a Hamiltonian quasi-Poisson manifold (with respect to the Lie group $\UU(n)$ and the moment map $\Phi$ from \eqref{I7}) by performing fusion of the internally fused double $\D(\UU(n))$ and $d\geq 1$ quasi-Poisson balls, see Theorem \ref{Thm:qH-Master}; the corresponding quasi-Hamiltonian structure is spelled out in Corollary \ref{cor:qHamMd}.
The quasi-Poisson bivector $P_{\MM_d}$ constructed in this way is used to build a pencil of Hamiltonian quasi-Poisson structures $(\MM_d,P_{\underline{z}},\Phi)$ depending on a tuple of parameters $\underline{z}\in \R^{d(d-1)/2}$ as part of Proposition \ref{Pr:Pencil}. We prove in Proposition \ref{Pr:om-Pencil} that for each quasi-Poisson bivector $P_{\underline{z}}$ there is a compatible quasi-Hamiltonian structure. An analogous result also holds for the extended cotangent bundle $M_d$ \eqref{I1}, see Remark \ref{Rem:Cotang}.

The dynamical considerations of the paper start in Section \ref{S:Integr-Md}.
We introduce the ring $\fH$ \eqref{B3} by pulling back the smooth invariant functions on $\UU(n)$ through the projection map $\cE_1:\MM_d\to \UU(n)$ on the first component as in \eqref{I8}. We explicitly write down the integral curves of the quasi-Hamiltonian vector fields associated with the functions in $\fH$ as part of Theorem \ref{Thm:B1}, and we collect the corresponding first integrals in a $\UU(n)$-equivariant smooth map $\Psi:\MM_d\to \MM_d$.
In Theorem \ref{Thm:B5} we note that $\Psi$ is a quasi-Poisson map for a natural but \emph{different} quasi-Poisson structure on the target $\MM_d$. Let us emphasize that this result holds for any choice of quasi-Poisson bivector $P_{\underline{z}}$ from the pencil constructed previously.
We finish Section \ref{S:Integr-Md} by explaining the notion of degenerate integrability, which fits the master system constructed in this section.

In Section \ref{S:Red}, we perform reduction of the master system through the action of $\UU(n)$ on $\MM_d$.
Denoting by $\MM_{d\ast}\subset \MM_d$ the subset on which $\UU(n)$ acts freely, we show in Proposition \ref{Pr:LR3} that a dense open
subset of $\MM_{d\ast}^{\red}:=\MM_{d\ast}/\UU(n)$ is filled by symplectic leaves of codimension $n$,
having the form $\MM_{d*}^\red(\cC) := (\Phi^{-1}(\cC)\cap \MM_{d*})/\UU(n)$
for a conjugacy class $\cC\subset \UU(n)_{\reg}$ of regular elements.
Next, we define the subset $\MM_{d\ast}^{2,\red}\subset \MM_{d\ast}^{\red}$ by requiring
that the second $\UU(n)$ component, $B$ in \eqref{I4},  belongs to $\UU(n)_\reg$, and introduce
 a  partial gauge fixing by diagonalizing
$B$.
We then determine the vector fields $Y_H$ on the resulting `gauge slice' that
induce the reduced Hamiltonian vector fields on $\MM_{d\ast}^{2,\red}$ of the elements $H\in \fH$, see  Proposition \ref{Pr:LR6}.
More importantly, we prove in Theorem \ref{Thm:DegInt} that a suitable quotient of
 the map $\Psi:\MM_d\to \MM_d$ yields enough constants of motion implying the degenerate integrability of the
 reduced system on a dense open subset $\MM_{d\ast \ast}^{\red}$ of the Poisson manifold $\MM_{d\ast}^{\red}$.
  We are also able to show that degenerate integrability is inherited on the largest symplectic leaves (of codimension $n$)
   inside $\MM_{d\ast\ast}^{\red}$, see Corollary \ref{Cor:IntSympl}.
We finish Section \ref{S:Red} by noting that the reduced spaces $\Phi^{-1}(\cC)/\UU(n)$ are compact if and only if $d=1$.

We investigate in Section \ref{S:spinRS} the smallest symplectic leaves $\Phi^{-1}(e^{\ic \gamma} \1_n)/\UU(n)$, where  $e^{\ic \gamma}$ is not a $k$-th root of unity for $k=1,\ldots,n$. By writing down the reduced equations of motion for two suitably chosen elements of $\fH$ in Theorem \ref{Thm:Dyn-RS}, we exhibit the connection between our model and the spin Ruijsenaars--Schneider systems of Krichever and Zabrodin \cite{KZ}.

The main text ends with Section \ref{S:ccl} where we gather some open questions for future work.
There are three appendices: Appendix \ref{sec:A} provides an alternative derivation of the quasi-Poisson ball
$\disk(x)$ based on an ansatz stemming from a \emph{holomorphic} quasi-Poisson manifold; Appendix \ref{sec:J}
collects auxiliary results needed in Section \ref{S:Red}; Appendix \ref{sec:M} contains a study of
a conjecturally important algebra of first integrals.

\medskip

Let us finish by  emphasizing our main results.
Firstly, we constructed a pencil of  Hamiltonian quasi-Poisson structures and corresponding quasi-Hamiltonian structures on $\MM_d$  (Propositions \ref{Pr:Pencil} and \ref{Pr:om-Pencil}).
Secondly, we proved that the master system living on $\MM_d$  induces a degenerate integrable system on a dense open subset of the reduced phase space $\MM_d^\red$ (Theorem \ref{Thm:DegInt}) and on its maximal symplectic leaves  (Corollary \ref{Cor:IntSympl}).
Thirdly, we have related the reduced system on special minimal symplectic leaves
to the spin Ruijsenaars--Schneider models of Krichever and Zabrodin \cite{KZ} (Section \ref{S:spinRS}).

\section{Quasi-Poisson and quasi-Hamiltonian geometry}
\label{S:Back}

We collect background material about quasi-Poisson and quasi-Hamiltonian geometry which is necessary to navigate throughout this paper.
We fix a real Lie group $G$ with Lie algebra $\g$ and exponential map $\exp:\g\to G$. Denote by $\operatorname{Ad}$  the adjoint actions of $G$ on itself and on its Lie algebra, while $\ad:\g\to \Hom(\g)$ is given by $\ad(\xi)(\zeta)=[\xi,\zeta]$ for all $\xi,\zeta\in \g$.
Introduce for any $\xi \in \g$ the left- and right-invariant vector fields $\xi^L$ and $\xi^R$ on $G$.
They act as derivations of the smooth functions $F\in \Cinf(G)$ according to
\begin{equation}
  \begin{aligned} \label{EqinfLR}
    \xi^L(F)(g)=\left.\frac{d}{dt}\right|_{t=0} \, F\left(g \cdot \exp(t \xi) \right)\,, \quad
\xi^R(F)(g)=\left.\frac{d}{dt}\right|_{t=0} \, F\left( \exp(t \xi) \cdot g \right)\,,
\quad \,\,g\in G\,.
  \end{aligned}
\end{equation}

Assume that $\g$ is endowed with an inner product $\ip{-,-}:\g\times\g\to \R$ which is $\Ad$-invariant, i.e.
$\ip{\Ad_g\xi ,\Ad_g\zeta}=\ip{\xi,\zeta}$ for all $\xi,\zeta\in \g$, $g\in G$. By differentiating the last identity, we get the $\ad$-invariance $\ip{\ad(\rho)(\xi) ,\zeta}+\ip{\xi,\ad(\rho)(\zeta)}=0$ for $\xi,\zeta,\rho\in \g$.
We also fix a basis $(e_a)_{a\in \mathtt{A}}$ of $\g$  orthonormal with respect to $\ip{-,-}$. Then define the Cartan trivector\footnote{Wedge products do not contain a normalization factor, e.g.
$e_1\wedge e_2 \wedge e_3=\sum_{\sigma \in S_3} \sgn(\sigma) e_{\sigma(1)}\otimes e_{\sigma(2)}\otimes e_{\sigma(3)},$ and all unadorned tensor products or wedge products are taken over $\R$.} $\phi \in\wedge^3\g$ by
\begin{equation} \label{Eq:Cartan3}
 \phi =\frac{1}{12} \sum_{a,b,c\in \mathtt{A}} C_{abc} \, e_a\wedge e_b \wedge e_c\,, \quad \text{where }C_{abc}=\ip{e_a,[e_b,e_c]}\,.
\end{equation}
The Cartan trivector is $\Ad$-invariant because the bilinear form is.
There is a $3$-form similar to the trivector $\phi$.
For $\theta^L$ and $\theta^R$ the left- and right-invariant Maurer-Cartan elements ($g^{-1} dg$ and $dg\,g^{-1}$ in matrix notations) which are $\g$-valued $1$-forms, we let $\theta_a^{L,R}:=\ip{e_a,\theta^{L,R}}$. We can then define the $3$-form
\begin{equation} \label{Eq:eta3}
 \eta=\frac{1}{12} \ip{\theta^R\stackrel{\wedge}{,}[\theta^R\stackrel{\wedge}{,}\theta^R]}
 =\frac{1}{12} \sum_{a,b,c\in \mathtt{A}}C_{abc}\, \theta^R_a \wedge \theta^R_b \wedge \theta^R_c\,.
\end{equation}
Note that $\eta$ is closed, invariant under the action by right multiplication and also invariant for the action by left multiplication (due to $ \eta=\frac{1}{12} \ip{\theta^L\stackrel{\wedge}{,}[\theta^L\stackrel{\wedge}{,}\theta^L]}$).

When we consider a smooth manifold $M$ equipped with a left action of $G$, the infinitesimal action of $\g$ on $M$ assigns to each  $\xi\in \g$ a vector field $\xi_M\in \Vect(M)$ over $M$ defined for any smooth function $f\in \Cinf(M)$ by
\begin{equation} \label{EqinfVectM}
 \xi_M(f)(m)=\left.\frac{d}{dt}\right|_{t=0} f(\exp(-t\xi)\cdot m)\,, \quad \,\,m\in M\,.
\end{equation}
The map $\xi\mapsto \xi_M$ extends for any $k\geq 1$  in such a way that it returns a $k$-vector field $\psi_M$ for any $\psi\in\wedge^k\g$. This extension is equivariant and preserves Schouten brackets.

Throughout the first two subsections, the reader can put $G=\UU(n)$ and $\g=\uu(n)$ as the unitary case is the main focus of our construction, see \S\ref{ss:Prel-Un}.
General results are taken from the original works of Alekseev, Kosmann-Schwarzbach and Meinrenken \cite{AKSM} or Alekseev, Malkin and Meinrenken \cite{AMM} regarding the quasi-Poisson and quasi-Hamiltonian setting, respectively.

\subsection{Definitions and fusion}

\subsubsection{Quasi-Poisson manifolds}

Let us fix a smooth manifold $M$ equipped with a left action of $G$. Any bivector field $P\in \wedge^2 \Vect(M)$ on $M$ induces an
antisymmetric $\R$-bilinear map $\br{-,-}$ on $M$ which is  a derivation in each argument through
$\br{f_1,f_2}:= P(df_1, df_2)$ for any smooth functions $f_1,f_2$. We call the operation $\br{-,-}:\Cinf(M)\times \Cinf(M)\to \Cinf(M)$ (and its restriction to subsets) a \emph{bracket}; we say that $\br{-,-}$ is the bracket associated with $P$.
By $\Cinf(M)$ we mean real functions.  The bracket extends naturally to complex functions in an $\C$-bilinear manner,
 and then it satisfies the reality condition $\br{\bar f_1,\bar f_2} = \overline{\br{f_1,f_2}}$.

\begin{definition}[\cite{AKSM}]
Denote by $\br{-,-}$ a $G$-invariant bracket on $M$.
We say that $\br{-,-}$ is a \emph{quasi-Poisson bracket} if for any  smooth functions $f_1,f_2,f_3\in \Cinf(M)$, we have
 \begin{equation} \label{Eq:JacPhi}
\br{f_1,\br{f_2,f_3}} +\br{f_2,\br{f_3,f_1}} + \br{f_3,\br{f_1,f_2}} = \frac12 \phi_M(f_1,f_2,f_3)\,,
 \end{equation}
where $\phi_M$ is induced using the infinitesimal action by the Cartan trivector \eqref{Eq:Cartan3}.
 The couple $(M,\br{-,-})$ is a \emph{quasi-Poisson manifold}.
\end{definition}
If $\br{-,-}$ corresponds to $P\in \wedge^2 \Vect(M)$, then \eqref{Eq:JacPhi} is equivalent to $[P,P]=\phi_M$, where $[-,-]$ denotes the Schouten bracket on $M$. If $G$ is Abelian, then $\phi=0$ and $\br{-,-}$ is a Poisson bracket on $M$ as the Jacobi identity holds by \eqref{Eq:JacPhi}.
Note also that the right-hand side of \eqref{Eq:JacPhi} vanishes if at least one of the functions $f_i$ is $G$-invariant, hence the subalgebra $\Cinf(M)^G\subset \Cinf(M)$ is equipped with a Poisson bracket.

\begin{definition}[\cite{AKSM}] \label{def:qHam}
Let $\Phi:M \to G$ be a smooth map intertwining the action of $G$ on $M$ with the adjoint action of $G$ on itself.
We say that $\Phi$ is a \emph{(Lie group valued) moment map} if, for any $F\in \Cinf(G)$, we have the following equality of vector fields on $M$:
\begin{equation} \label{Eq:momap}
  \br{F \circ \Phi,-}=\frac12 \sum_{a\in \mathtt{A}}\,\Phi^\ast\left((e_a^L+e_a^R)(F) \right) \,\,(e_a)_M\,.
\end{equation}
The triple $(M,\br{-,-},\Phi)$ is a \emph{Hamiltonian quasi-Poisson manifold}.
\end{definition}
The following operation allows one to build Hamiltonian quasi-Poisson manifolds recursively.

\begin{proposition}[Fusion product, \cite{AKSM}] \label{Pr:Fus}
Let $(M,\br{-,-}_M,\Phi_M)$ and $(N,\br{-,-}_N,\Phi_N)$ be Hamiltonian quasi-Poisson manifolds for some actions of $G$ on $M$ and $N$. Then the space
$$M\circledast N:=(M\times N, \br{-,-}^{\textrm{\emph{fus}}}_{M,N},\Phi_M \Phi_N)$$
is a Hamiltonian quasi-Poisson manifold for the diagonal action of  $G\hookrightarrow G\times G$ on $M\times N$, where
$\br{-,-}^{\textrm{\emph{fus}}}_{M,N}:=\br{-,-}_M+\br{-,-}_N-\br{-,-}_{P_{M,N}}$ using
 the bracket $\br{-,-}_{P_{M,N}}$ associated with the bivector field
\begin{equation}\label{Eq:Fus-Psi}
 P_{M,N}:=\frac12 \sum_{a\in \mathtt{A}} (e_a)_M\wedge (e_a)_N\,.
\end{equation}
\end{proposition}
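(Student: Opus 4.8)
The plan is to verify the two defining properties of a Hamiltonian quasi-Poisson manifold for the triple $(M\times N,\br{-,-}^{\textrm{fus}}_{M,N},\Phi_M\Phi_N)$: first the fundamental identity \eqref{Eq:JacPhi} with trivector $\phi_{M\times N}$ for the diagonal action, and second the moment map property \eqref{Eq:momap} for $\Phi_M\Phi_N$. Throughout I would use the bivector formulation: write $P_M,P_N$ for the bivectors on $M,N$ pulled back to $M\times N$ (so that they commute in the sense that mixed Schouten brackets of purely-$M$ and purely-$N$ multivectors vanish), write $P_{M,N}=\frac12\sum_a (e_a)_M\wedge(e_a)_N$, and set $P^{\textrm{fus}}=P_M+P_N-P_{M,N}$. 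The goal for the first part is then $[P^{\textrm{fus}},P^{\textrm{fus}}]=\phi_{M\times N}=\phi_M+\phi_N$ (the Cartan trivector for the diagonal action splits this way because the diagonal infinitesimal action is $(e_a)_{M\times N}=(e_a)_M+(e_a)_N$ and the cross terms cancel by antisymmetry of $\phi\in\wedge^3\g$ — this cancellation is itself a small lemma worth stating).

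First I would expand $[P^{\textrm{fus}},P^{\textrm{fus}}]$ bilinearly into nine Schouten-bracket terms. The terms $[P_M,P_M]=\phi_M$ and $[P_N,P_N]=\phi_N$ are given; $[P_M,P_N]=0$ since they involve disjoint factors; so it remains to show
\[
-2[P_M,P_{M,N}]-2[P_N,P_{M,N}]+[P_{M,N},P_{M,N}]=0.
\]
The key computational inputs are: (i) $[P_M,P_{M,N}]$ can be evaluated using that the Schouten bracket with the bivector of a fundamental vector field measures the failure of invariance, and since $P_M$ is $G$-invariant one gets an expression built from $[(e_a)_M,(e_b)_M]=-(\,[e_a,e_b]\,)_M=-\sum_c C_{abc}(e_c)_M$ (structure constants enter here), paired against $(e_a)_N\wedge(e_b)_N$-type terms; (ii) similarly for $[P_N,P_{M,N}]$; (iii) $[P_{M,N},P_{M,N}]$ produces a sum of terms of the form $\sum_{a,b}[(e_a)_M,(e_b)_M]\wedge(e_a)_N\wedge(e_b)_N$ plus the mirror with $M\leftrightarrow N$, plus genuinely mixed trilinear terms. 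I would organize the bookkeeping so that the mixed $M$-$M$-$N$ pieces from (i) and (iii) cancel, the mixed $M$-$N$-$N$ pieces from (ii) and (iii) cancel, and any purely-$M$ or purely-$N$ residue vanishes by the Jacobi identity in $\g$ (equivalently $d\phi=0$, i.e. $\sum C_{ab[c}C_{d]ea}=0$). This is the heart of the argument and the step I expect to be the main obstacle: keeping track of signs and symmetrization conventions in the wedge/Schouten calculus so that everything matches, using the stated convention that wedge products carry no $1/k!$ factor and the $\Ad$-invariance $C_{abc}$ totally antisymmetric.

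For the moment map property I would take $F\in\Cinf(G)$ and compute $\br{F\circ(\Phi_M\Phi_N),-}^{\textrm{fus}}$. Writing $\Phi=\Phi_M\Phi_N$, the chain rule and the multiplicativity of $\Phi$ give $(e_a^L+e_a^R)(F)\circ\Phi$ in terms of the corresponding quantities for $\Phi_M$ and $\Phi_N$ together with $\Ad$-terms coming from commuting a factor past the other; the precise identity to invoke is $\theta^L(\Phi)=\Ad_{\Phi_N}^{-1}\theta^L(\Phi_M)+\theta^L(\Phi_N)$ and the mirror for $\theta^R$. Plugging the moment map identities for $\Phi_M$ on $M$ and $\Phi_N$ on $N$ into $\br{-,-}_M+\br{-,-}_N$, the correction $-\br{-,-}_{P_{M,N}}$ applied to $F\circ\Phi$ contributes exactly the cross terms $\frac12\sum_a\Phi^\ast(\dots)(e_a)_M$ coupled to the $N$-moment map and vice versa, and after reassembling $(e_a)_M+(e_a)_N=(e_a)_{M\times N}$ one lands on \eqref{Eq:momap} for $\Phi$ with the diagonal action. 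I would also check equivariance of $\Phi_M\Phi_N$, which is immediate from equivariance of each factor and $\Ad_\eta(ab)=\Ad_\eta(a)\Ad_\eta(b)$. Finally I would note $G$-invariance of $\br{-,-}^{\textrm{fus}}_{M,N}$: $P_M$ and $P_N$ are invariant by hypothesis and $P_{M,N}=\frac12\sum_a(e_a)_M\wedge(e_a)_N$ is invariant because $(e_a)$ is an orthonormal basis and the diagonal action permutes the $(e_a)$ by the orthogonal adjoint action, leaving $\sum_a e_a\otimes e_a$ fixed.
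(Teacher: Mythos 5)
The paper itself does not prove this proposition: it is quoted from \cite{AKSM}, so there is no in-paper argument to compare against, and I assess your proposal against the standard proof. Your treatment of equivariance of $\Phi_M\Phi_N$, of the $G$-invariance of the fused bracket, and your outline of the moment map verification are all sound and essentially what \cite{AKSM} does. The problem lies in the first and central part. You claim that the Cartan trivector of the diagonal action splits as $\phi_{M\times N}=\phi_M+\phi_N$ because ``the cross terms cancel by antisymmetry of $\phi$''. This is false for nonabelian $G$: writing $(e_a)_{M\times N}=(e_a)_M+(e_a)_N$ and expanding $\frac{1}{12}\sum_{a,b,c}C_{abc}\,(e_a)_{M\times N}\wedge(e_b)_{M\times N}\wedge(e_c)_{M\times N}$, the three mixed terms of each type are \emph{equal} to one another after relabelling indices (the total antisymmetry of $C_{abc}$ exactly compensates the sign from reordering the wedge), so they add rather than cancel, giving
\begin{equation*}
\phi_{M\times N}-\phi_M-\phi_N=\frac14\sum_{a,b,c}C_{abc}\Bigl((e_a)_M\wedge(e_b)_M\wedge(e_c)_N+(e_a)_M\wedge(e_b)_N\wedge(e_c)_N\Bigr)\neq 0\,.
\end{equation*}
If your claim were true, $P_M+P_N$ alone would already satisfy $[P_M+P_N,P_M+P_N]=\phi_{M\times N}$ and the correction $P_{M,N}$ would be unnecessary; producing exactly this cross contribution is the entire purpose of the fusion term.

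The error propagates into the identity you then set out to verify, $-2[P_M,P_{M,N}]-2[P_N,P_{M,N}]+[P_{M,N},P_{M,N}]=0$, which is also false. In fact $[P_M,P_{M,N}]=0$ and $[P_N,P_{M,N}]=0$ outright: $[(e_a)_N,P_M]=0$ because the two multivectors live on different factors, and $[(e_a)_M,P_M]=0$ by the assumed $G$-invariance of $P_M$, so there are no structure-constant terms from your steps (i) and (ii) available to cancel anything. Consequently $[P^{\mathrm{fus}},P^{\mathrm{fus}}]=\phi_M+\phi_N+[P_{M,N},P_{M,N}]$, and the whole content of the Jacobiator computation is the single purely Lie-algebraic identity $[\psi,\psi]=\phi_{\mathrm{diag}}-\phi^{1}-\phi^{2}$ in $\wedge^3(\g\oplus\g)$ for $\psi=\frac12\sum_a e_a^{1}\wedge e_a^{2}$, transported to $M\times N$ using that $\xi\mapsto\xi_{M\times N}$ preserves Schouten brackets (as recalled after \eqref{EqinfVectM}; note also that with the convention \eqref{EqinfVectM} this map is a homomorphism, so $[(e_a)_M,(e_b)_M]=+([e_a,e_b])_M$, not $-([e_a,e_b])_M$ as you wrote). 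You should replace the (i)--(iii) bookkeeping by a proof of this one algebraic identity; the moment map part of your proposal can then stand essentially as written.
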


Note that $M\circledast N$ and $N\circledast M$ have different quasi-Poisson structures and moment maps, but they are (non-trivially) isomorphic as Hamiltonian quasi-Poisson manifolds, see \cite[Prop. 5.7]{AKSM}.
Hence fusion is not commutative, but it is obviously associative.

\subsubsection{Quasi-Hamiltonian manifolds}

Fix a smooth manifold $M$ equipped with a left action of $G$. Recall the existence of the $3$-form $\eta$ on $G$ defined by \eqref{Eq:eta3}.

\begin{definition} [\cite{AMM}]  \label{def:om-qHam}
Let $\omega$ be a $G$-invariant $2$-form on $M$ and let $\Phi:M \to G$ be a smooth map intertwining the $G$-action on $M$ with the adjoint action of $G$ on itself.
We say that the triple $(M,\omega,\Phi)$ is a \emph{quasi-Hamiltonian manifold} (or \emph{quasi-Hamiltonian $G$-manifold}) and that $\Phi$ is its moment map if the following three conditions are satisfied:
\begin{itemize}
 \item[(B1)] $d\omega=\Phi^\ast \eta$;
 \item[(B2)] $\iota_{(e_a)_M}\omega=\frac12 \Phi^\ast (\theta_a^R+\theta_a^L)$;
  \item[(B3)] for all $m\in M$, $\operatorname{ker} \omega_m
  =\{\xi_{M,m} \mid \xi \in \operatorname{ker}(\operatorname{Ad}_{\Phi(m)}+\id_\g)\}$.
\end{itemize}
\end{definition}
Let us emphasize that we follow the convention of \cite{AKSM} which has an unimportant sign difference compared to the original definition of \cite{AMM}. The operation of fusion also exists in the quasi-Hamiltonian setting.

\begin{proposition}[Fusion product, \cite{AMM}] \label{Pr:Fus-omega}
Assume that $(M,\omega_M,\Phi_M)$ and $(N,\omega_N,\Phi_N)$ are quasi-Hamiltonian manifolds for some actions of $G$ on $M$ and $N$. Then the space
$$M\circledast N:=(M\times N, \omega^{\textrm{\emph{fus}}}_{M,N},\Phi_M \Phi_N)$$
is a quasi-Hamiltonian $G$-manifold for the diagonal action $G\hookrightarrow G\times G \curvearrowright M\times N$, where
\begin{equation}
\omega^{\textrm{\emph{fus}}}_{M,N}:=\omega_M+\omega_N- \frac12 \sum_{a\in \mathtt{A}} \Phi_M^\ast \theta_a^L \wedge \Phi_N^\ast \theta_a^R\,.
\end{equation}
\end{proposition}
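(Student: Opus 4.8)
The plan is to establish the fusion formula for quasi-Hamiltonian manifolds by verifying the three axioms (B1), (B2), (B3) of Definition~\ref{def:om-qHam} directly for the $2$-form $\omega^{\textrm{fus}}_{M,N}$ on $M\times N$ with the product moment map $\Phi:=\Phi_M\Phi_N:M\times N\to G$. Throughout I write $p_M,p_N$ for the projections and suppress the pullbacks $p_M^\ast,p_N^\ast$ when no confusion arises; the key computational input is the behaviour of the Maurer--Cartan forms under multiplication, namely $\Phi^\ast\theta^R = \Phi_M^\ast\theta^R + \operatorname{Ad}_{\Phi_M}\!\big(\Phi_N^\ast\theta^R\big)$ and $\Phi^\ast\theta^L = \Phi_N^\ast\theta^L + \operatorname{Ad}_{\Phi_N}^{-1}\!\big(\Phi_M^\ast\theta^L\big)$, together with the relation $\theta^R = \operatorname{Ad}_g\theta^L$ on $G$ and the $\Ad$-invariance of $\ip{-,-}$.

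The hardest of the three is (B1). First I would differentiate $\omega^{\textrm{fus}}_{M,N}$: the terms $d\omega_M = \Phi_M^\ast\eta$ and $d\omega_N=\Phi_N^\ast\eta$ come for free from the hypotheses, so the task reduces to showing
\begin{equation}
\Phi^\ast\eta \;-\; \Phi_M^\ast\eta \;-\; \Phi_N^\ast\eta \;=\; -\,\tfrac12\,d\!\Big(\textstyle\sum_{a} \Phi_M^\ast\theta_a^L \wedge \Phi_N^\ast\theta_a^R\Big).
\end{equation}
For the right-hand side I would use the Maurer--Cartan structure equations $d\theta^L = -\tfrac12[\theta^L\stackrel{\wedge}{,}\theta^L]$ and $d\theta^R = \tfrac12[\theta^R\stackrel{\wedge}{,}\theta^R]$ to expand $d(\Phi_M^\ast\theta_a^L\wedge\Phi_N^\ast\theta_a^R)$, then contract the structure constants against $\ip{-,-}$ using $C_{abc}=\ip{e_a,[e_b,e_c]}$. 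For the left-hand side I would substitute the multiplicativity formulas for $\Phi^\ast\theta^R$ into the definition $\eta = \tfrac1{12}\ip{\theta^R\stackrel{\wedge}{,}[\theta^R\stackrel{\wedge}{,}\theta^R]}$, expand the cube, and use $\Ad$-invariance of the trivector to convert $\Ad_{\Phi_M}$-twisted pieces back into untwisted Maurer--Cartan forms of $M$ and $N$; the three `pure' terms reproduce $\Phi_M^\ast\eta + \Phi_N^\ast\eta$ plus boundary contributions, and the `mixed' terms must be matched term-by-term against the expansion of the right-hand side. This is a standard but genuinely lengthy Cartan-calculus identity; the main obstacle is bookkeeping the $\operatorname{Ad}$-conjugations and the antisymmetrization signs, and it is essentially the content of \cite[\S 10]{AMM} adapted to our sign convention.

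Axiom (B2) is easier. One computes $\iota_{(e_a)_{M\times N}}\omega^{\textrm{fus}}_{M,N}$ using $(e_a)_{M\times N} = (e_a)_M + (e_a)_N$ (diagonal action) and the hypotheses (B2) for $M$ and $N$, namely $\iota_{(e_a)_M}\omega_M = \tfrac12\Phi_M^\ast(\theta_a^R+\theta_a^L)$ and similarly for $N$. Contracting the correction term $-\tfrac12\sum_b\Phi_M^\ast\theta_b^L\wedge\Phi_N^\ast\theta_b^R$ against $(e_a)_M+(e_a)_N$ produces, via $\iota_{(e_a)_M}\Phi_M^\ast\theta_b^L = \ip{e_a,e_b}$-type relations (since $\Phi_M$ intertwines the actions, $\iota_{(e_a)_M}\Phi_M^\ast\theta^L = e_a - \operatorname{Ad}_{\Phi_M}^{-1}e_a$, and likewise $\iota_{(e_a)_N}\Phi_N^\ast\theta^R = e_a - \operatorname{Ad}_{\Phi_N}e_a$), terms that precisely cancel the `wrong' pieces so that the total collapses to $\tfrac12\Phi^\ast(\theta_a^R+\theta_a^L)$ after invoking the multiplicativity formulas once more. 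The moment-map intertwining property for $\Phi=\Phi_M\Phi_N$ under the diagonal action is immediate from those for $\Phi_M$ and $\Phi_N$.

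Finally, for (B3) one argues pointwise. Fix $(m,n)\in M\times N$. Using (B2) one checks that a tangent vector $(u,w)$ lies in $\ker\omega^{\textrm{fus}}_{M,N}$ at $(m,n)$ if and only if it is annihilated by pairing with everything; combining (B1)'s non-degeneracy data packaged in (B3) for $M$ and $N$ with the explicit correction term, one shows the kernel is exactly the span of $\{\xi_{M\times N} \mid \xi\in\ker(\operatorname{Ad}_{\Phi(m,n)}+\id_\g)\}$. The cleanest route is the dimension count of \cite[\S 10]{AMM}: the correction $2$-form has a controlled rank, and one verifies the two inclusions — $\supseteq$ follows formally from (B2) and $\operatorname{Ad}_\Phi = \operatorname{Ad}_{\Phi_M}\operatorname{Ad}_{\Phi_N}$, while $\subseteq$ follows by a rank/dimension argument using that $\ker\omega_M,\ker\omega_N$ are as small as (B3) permits. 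I would remark that, since all formulas in this proposition are identical to those of \cite{AMM} up to the global sign convention fixed after Definition~\ref{def:om-qHam}, it suffices to check that the sign change is consistent throughout, and otherwise cite \cite{AMM} for the detailed verification; I expect the write-up to be short, deferring the long Cartan computation in (B1) to the literature.
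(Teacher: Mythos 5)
Your outline is essentially correct, but you should be aware that the paper offers no proof of this proposition at all: it is recalled verbatim as background material and attributed entirely to \cite{AMM} (modulo the sign convention fixed after Definition \ref{def:om-qHam}). So your proposal does strictly more than the paper, and what it does is a faithful reconstruction of the standard argument from \cite{AMM}: (B1) reduces to the multiplicativity identity $\Phi^\ast\eta-\Phi_M^\ast\eta-\Phi_N^\ast\eta=-\tfrac12\,d\big(\sum_a\Phi_M^\ast\theta_a^L\wedge\Phi_N^\ast\theta_a^R\big)$, (B2) to a contraction computation, and (B3) to a two-inclusion/rank argument. Since in the end you also defer the heavy Cartan computation to \cite{AMM}, the net content of your write-up coincides with the paper's citation; the added value is that you correctly identify which identities carry the proof.

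One small caution on signs, which you yourself flag as the main hazard: with the paper's convention \eqref{EqinfVectM} (infinitesimal action generated by $\exp(-t\xi)$), the conjugation-generating vector field at $g$ is $-\xi g+g\xi$, so $\iota_{(e_a)_M}\Phi_M^\ast\theta^L=e_a-\Ad_{\Phi_M}^{-1}e_a$ as you wrote, but the right-invariant counterpart is $\iota_{(e_a)_N}\Phi_N^\ast\theta^R=\Ad_{\Phi_N}e_a-e_a$, i.e.\ opposite in sign to what you stated. This does not invalidate the plan, but it would change which cross terms cancel in your verification of (B2), so it needs to be fixed before the bookkeeping closes up.
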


\subsection{Relation between the quasi-Poisson and quasi-Hamiltonian cases} \label{ss:Exp}

Assume that $(M,\br{-,-}_0,\Phi_0)$ is a Hamiltonian Poisson manifold for an action of $G$.
This means that $\br{-,-}_0$ is a Poisson bracket associated with a bivector field $P_0\in \wedge^2\Vect(M)$,
and $\Phi_0:M\to \g$ is a \emph{moment map}  turning the  infinitesimal
action \eqref{EqinfVectM} of $\xi\in \g$ on $M$ into a Hamiltonian vector field:
$\xi_M=\br{\ip{\Phi_0,\xi},-}_0$ for all $\xi\in \g$.
Following \cite[Sec. 7]{AKSM}, there is a way to turn $\exp\circ \Phi_0:M\to \g$ into a Lie group
valued moment map as in Definition \ref{def:qHam}. Introduce the meromorphic function $\varphi$ on $\CC$  by
\begin{equation}
 \label{Eq:Mero-phi}
 \varphi(s):=\frac{1}{s}-\frac12 \coth\left(\frac{s}{2}\right)\,.
\end{equation}
This function is holomorphic at $0$ and away from the points $s\in 2\pi \ic \Z^\times$.
Next, on the open subset $\g^\natural\subset \g$ where the derivative of $\exp:\g\to G$ is invertible, we define the bivector field
\begin{equation}
 \label{Eq:biv-T}
 T=\frac12 \sum_{a,b\in \mathtt{A}} T_{ab} \,e_a\wedge e_b\in \mathcal{C}^\infty(\g^\natural,\g\wedge \g)\,, \qquad
 T_{ab}(\xi)=\ip{e_a, \varphi(\ad(\xi)) e_b}\quad  \forall \xi\in \g^\natural\,.
\end{equation}
The operator $\varphi(\ad(\xi))$ is well-defined, since $\g^\natural$ consists of the $\xi\in \g$ for which the set of the eigenvalues of the complexification of $\ad(\xi)$ does not intersect $2\pi \ic \Z^\times$, see e.g. \cite[p.25]{DK}.
For later use, let us remark the identities (the second one follows by $\ad$-invariance of $\ip{-,-}$)
\begin{equation}
 \label{Eq:Tab-id}
 \sum_{a\in \mathtt{A}} T_{ab}(\xi) e_a = \varphi(\ad(\xi))e_b\,, \qquad
 \sum_{b\in \mathtt{A}} T_{ab}(\xi) e_b = \varphi(-\ad(\xi)) e_a\,.
\end{equation}

\begin{theorem}[\cite{AKSM}] \label{thm:AKSM-exp}
 If $(M,\br{-,-}_0,\Phi_0)$ is a Hamiltonian Poisson manifold for an action of $G$ and $\Phi_0(M)\subset \g^\natural$,
 then $(M,\br{-,-},\Phi)$ is a Hamiltonian quasi-Poisson manifold for the same action of $G$ with
quasi-Poisson bracket $\br{-,-}$ associated with $P:=P_0-(\Phi_0^\ast T)_M$ and
the moment map $\Phi=\exp\circ \Phi_0$.
\end{theorem}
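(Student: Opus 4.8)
The plan is to verify directly the two defining properties of a Hamiltonian quasi-Poisson manifold from Definitions~\ref{def:qHam}, namely the quasi-Jacobi identity $[P,P]=\phi_M$ for $P=P_0-(\Phi_0^\ast T)_M$ and the moment map relation \eqref{Eq:momap} for $\Phi=\exp\circ\Phi_0$. First I would establish a convenient reformulation of the hypothesis. Since $(M,\br{-,-}_0,\Phi_0)$ is a Hamiltonian Poisson manifold, we have $[P_0,P_0]=0$ and the $\g$-valued moment map condition $\xi_M=\br{\ip{\Phi_0,\xi},-}_0$; differentiating the latter and using that $\Phi_0$ is $G$-equivariant gives the bracket $\br{\ip{\Phi_0,\xi},\ip{\Phi_0,\zeta}}_0=\ip{\Phi_0,[\xi,\zeta]}$ and the equivariance identity $\xi_M(\Phi_0^\ast f)=\Phi_0^\ast\big(\ad^\ast(\xi)f\big)$ for $f\in\Cinf(\g)$. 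These are the only inputs about $(M,P_0,\Phi_0)$ that should be needed; everything else is a universal computation on $\g$ that is carried out once and for all, since $T$, $\varphi$, and the trivector $\phi$ live on $\g$ (resp.\ its image under $\exp$).

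The core of the argument is the Schouten-bracket computation. Expanding bilinearly,
\[
[P,P]=[P_0,P_0]-2[P_0,(\Phi_0^\ast T)_M]+[(\Phi_0^\ast T)_M,(\Phi_0^\ast T)_M].
\]
The first term vanishes. For the cross term, I would use the moment map property to rewrite $[P_0,-]$ acting on a term of the shape $(\Phi_0^\ast T_{ab})\,(e_a)_M\wedge(e_b)_M$: the Hamiltonian vector field of $\Phi_0^\ast f$ is precisely the generator of the infinitesimal action contracted against $df$, and the $(e_a)_M$ are $P_0$-Hamiltonian, so $[P_0,-]$ on such a term becomes expressible through derivatives of $T$ along the fundamental vector fields together with structure-constant contractions $[(e_a)_M,(e_b)_M]=-[e_a,e_b]_M$. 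For the last term, $[(\Phi_0^\ast T)_M,(\Phi_0^\ast T)_M]$ is the image under the (Schouten-bracket preserving) map $\psi\mapsto\psi_M$ applied to a $3$-vector on $\g$ built from $T$ and its covariant derivative, plus a piece coming from the nonabelian commutators of the $(e_a)_M$. Collecting everything, the claim $[P,P]=\phi_M$ reduces to a single identity in $\Cinf(\g^\natural,\wedge^3\g)$, to wit that the $\ad$-derivative of $T$ together with the $[T,T]$-type structure-constant term equals the Cartan trivector $\phi$. This identity is exactly the content of the computation in \cite[Sec.~7]{AKSM}; it is ultimately a functional identity for $\varphi$, equivalent to the fact that $\varphi(s)=\tfrac1s-\tfrac12\coth(s/2)$ solves a certain first-order ODE (equivalently, that $s\varphi(s)$ and the function $\tfrac{s}{e^s-1}$ are related by the standard Baker--Campbell--Hausdorff/Duflo bookkeeping), and I would cite it rather than reprove it.

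For the moment map identity \eqref{Eq:momap}, I would compute $\br{F\circ\Phi,-\,}=\br{F\circ\Phi,-\,}_0-\br{F\circ\Phi,-\,}_{(\Phi_0^\ast T)_M}$ with $\Phi=\exp\circ\Phi_0$. Since $F\circ\Phi=(F\circ\exp)\circ\Phi_0$ is a pullback through $\Phi_0$, the $P_0$-term is handled by the $\g$-valued moment map property: it produces $\sum_a\Phi_0^\ast\big(\partial_{e_a}(F\circ\exp)\big)(e_a)_M$. The $T$-term is a contraction of $\Phi_0^\ast T$ against $d(F\circ\exp)$, again a fundamental vector field times a pullback of a function on $\g$. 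Combining the two and using the chain rule for $d(F\circ\exp)$ — i.e.\ the derivative of $\exp$ written via $\tfrac{1-e^{-\ad\xi}}{\ad\xi}$ — the sum should collapse to $\tfrac12\sum_a\Phi^\ast\big((e_a^L+e_a^R)F\big)(e_a)_M$; the precise matching of coefficients is the statement that $\varphi(\ad\xi)$ together with the Jacobian of $\exp$ assembles into the symmetric combination $\tfrac12(d_eL_g+d_eR_g)$ of left/right translations evaluated at $g=\exp\xi$. Finally, $G$-invariance of $\br{-,-}$ follows because $P_0$ is $G$-invariant (Hamiltonicity) and $(\Phi_0^\ast T)_M$ is $G$-invariant since $T$ is $\Ad$-invariant (by $\Ad$-invariance of $\ip{-,-}$) and $\Phi_0$ is equivariant, while the map $\psi\mapsto\psi_M$ is equivariant; and $\Phi=\exp\circ\Phi_0$ intertwines the actions because $\Phi_0$ does and $\exp$ is $\Ad$-equivariant.

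\textbf{Main obstacle.} The genuine work is the $\wedge^3\g$ identity relating $\nabla T$ and $[T,T]$ to the Cartan trivector $\phi$ — this is where the specific form of $\varphi$ enters and where sign/normalization conventions (the unnormalized wedge products, the factor $\tfrac12$ in \eqref{Eq:JacPhi}, the sign convention of \cite{AKSM} versus \cite{AMM}) must be tracked with care. Since this identity is established in \cite[Sec.~7]{AKSM}, I would invoke it directly and spend the proof only on reducing the two conditions on $M$ to that $\g$-level identity via the Poisson and moment map hypotheses; everything after that reduction is bookkeeping.
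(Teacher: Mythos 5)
The paper does not prove this theorem: it is imported verbatim from \cite{AKSM} (the exponentiation construction of Section 7 there), so there is no in-paper argument to compare yours against. Your sketch is a faithful outline of how the proof goes in that reference --- splitting $[P,P]$ into the $[P_0,P_0]$, cross, and $[(\Phi_0^\ast T)_M,(\Phi_0^\ast T)_M]$ terms, reducing everything to a single identity in $\Cinf(\g^\natural,\wedge^3\g)$ governed by the functional equation for $\varphi$, and matching the moment map condition through the derivative of $\exp$ --- and since you defer that key $\wedge^3\g$ identity to the same source the paper cites, your proposal is consistent with what the paper does and is acceptable as a proof-by-reference with a correct reduction scheme.
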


If the Poisson bracket $\br{-,-}_0$ on $M$ is non-degenerate, there is a $G$-invariant symplectic form $\omega\in \Omega^2(M)$ corresponding to
the defining bivector field $P_0$ through $P_0^\sharp \circ \omega_0^\flat =\id_{TM}$. Here and below, we use the conventions
\begin{equation} \label{Eq:conv-music}
 P^\sharp(\alpha)(\beta)=P(\alpha ,\beta), \quad
 \omega^\flat (X)=\iota_X \omega\,,\qquad \alpha,\beta\in \Omega^1(M),\quad X\in \Vect(M)\,.
\end{equation}
(Explicitly, contraction of a $2$-form reads $\iota_X(\alpha\wedge \beta)=\alpha(X)\, \beta - \beta(X)\,\alpha$.)
This makes $(M,\omega_0,\Phi_0)$ a Hamiltonian symplectic manifold.
The exponentiation process in that case follows the method from \cite[\S3.3]{AMM} which relies on the $2$-form
\begin{equation} \label{Eq:varpi}
\varpi:=\frac12 \int_0^1 \langle \exp_s^\ast \theta^R \stackrel{\wedge}{,} \frac{\partial}{\partial s} \exp_s^\ast \theta^R\rangle\,ds\,,\quad \text{ for }
\exp_s:\g\to G,\,\,\exp_s(\lambda)=\exp(s \lambda)\,.
\end{equation}
Note that $\varpi$ satisfies $d\varpi=-\Phi^\ast \eta$ with $\eta$ in \eqref{Eq:eta3}.
\begin{theorem}[\cite{AMM}] \label{thm:AMM-exp}
 If $(M,\omega_0,\Phi_0)$ is a Hamiltonian symplectic manifold for an action of $G$ and $\Phi_0(M)\subset \g^\natural$,
 then $(M,\omega,\Phi)$ is a quasi-Hamiltonian manifold for the same action of $G$ with
$\omega:=\omega_0 - \Phi_0^\ast \varpi$ and
the moment map $\Phi=\exp\circ \Phi_0$.
\end{theorem}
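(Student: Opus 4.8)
The plan is to verify directly the three defining conditions (B1)--(B3) of Definition~\ref{def:om-qHam} for the triple $(M,\omega,\Phi)$ with $\omega=\omega_0-\Phi_0^\ast\varpi$ and $\Phi=\exp\circ\Phi_0$. That $\Phi$ is equivariant and $\omega$ is $G$-invariant is immediate: $\Phi_0$ is equivariant by hypothesis and $\exp$ intertwines $\Ad_g$ on $\g$ with conjugation on $G$, while $\varpi$ is $\Ad$-invariant on $\g$ (being built from $\exp_s^\ast\theta^R$ and the invariant pairing), so $\Phi_0^\ast\varpi$ is $G$-invariant. Everything else will rest on two identities for $\varpi$, viewed as a $2$-form on $\g$: the first, already recorded after \eqref{Eq:varpi}, is $d\varpi=-\exp^\ast\eta$; the second is the contraction formula (up to the sign conventions in use)
\[
\iota_{(e_a)_\g}\varpi \;=\; d\ip{\id_\g,e_a}\;-\;\tfrac12\,\exp^\ast(\theta_a^L+\theta_a^R),
\]
where $(e_a)_\g$ is the vector field on $\g$ generating the adjoint action in the direction $e_a$ and $\ip{\id_\g,e_a}$ is the function $\mu\mapsto\ip{\mu,e_a}$. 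Both are computations internal to $\g$: I would obtain them from the definition \eqref{Eq:varpi} by differentiating under the integral sign and using the Maurer--Cartan equation, or simply quote them from \cite{AMM}. From the symplectic side I will use only that $(M,\omega_0,\Phi_0)$ being a Hamiltonian symplectic manifold means $\iota_{\xi_M}\omega_0=d\ip{\Phi_0,\xi}$ for every $\xi\in\g$, and that the equivariance of $\Phi_0$ makes each $(e_a)_M$ $\Phi_0$-related to $(e_a)_\g$, so that $\iota_{(e_a)_M}\Phi_0^\ast\varpi=\Phi_0^\ast\big(\iota_{(e_a)_\g}\varpi\big)$.

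With these inputs, (B1) is formal: $d\omega=d\omega_0-\Phi_0^\ast d\varpi=\Phi_0^\ast\exp^\ast\eta=\Phi^\ast\eta$, using $d\omega_0=0$. For (B2) I would compute, for each basis vector $e_a$,
\[
\iota_{(e_a)_M}\omega=\iota_{(e_a)_M}\omega_0-\Phi_0^\ast\big(\iota_{(e_a)_\g}\varpi\big)=d\ip{\Phi_0,e_a}-\Phi_0^\ast\!\Big(d\ip{\id_\g,e_a}-\tfrac12\exp^\ast(\theta_a^L+\theta_a^R)\Big)=\tfrac12\,\Phi^\ast(\theta_a^L+\theta_a^R),
\]
using the symplectic moment map identity, the contraction formula, and $d\ip{\Phi_0,e_a}=\Phi_0^\ast d\ip{\id_\g,e_a}$. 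Note that neither (B1) nor (B2) uses the hypothesis $\Phi_0(M)\subset\g^\natural$.

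The heart of the argument is (B3), and this is the only place where $\Phi_0(M)\subset\g^\natural$ enters. Fix $m\in M$ and abbreviate $g=\Phi(m)$, $\mu=\Phi_0(m)$. Using $\theta_h^L=\Ad_{h^{-1}}\theta_h^R$ and the $\Ad$-invariance of the pairing, I would first rewrite (B2) as $\iota_{\xi_M}\omega=\tfrac12\Phi^\ast\ip{\theta^L+\theta^R,\xi}$, the value at $m$ of the right-hand side being $Y\mapsto\tfrac12\ip{\theta_g^R(d\Phi_m Y),\,(\Ad_g+\id_\g)\xi}$, for every $\xi\in\g$. The inclusion $\supseteq$ of (B3) is then immediate: if $(\Ad_g+\id_\g)\xi=0$ this vanishes, so $\xi_{M,m}\in\operatorname{ker}\omega_m$. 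For the reverse inclusion I would use that, $\omega_0$ being symplectic and $\Phi_0$ a moment map, $\operatorname{ker}(d\Phi_0)_m=(\g\cdot m)^{\perp_{\omega_0}}$ and, dually, $\operatorname{im}(d\Phi_0)_m=\g_m^{\perp}$, with $\g_m$ the stabilizer subalgebra at $m$ and $\perp$ taken with respect to $\ip{-,-}$. If $X\in\operatorname{ker}\omega_m$, then $\iota_X\omega_0=\iota_X\Phi_0^\ast\varpi$, and the latter $1$-form annihilates $\operatorname{ker}(d\Phi_0)_m$; hence $X\in\big((\g\cdot m)^{\perp_{\omega_0}}\big)^{\perp_{\omega_0}}=\g\cdot m$, say $X=\xi_{M,m}$. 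Now $\iota_{\xi_M}\omega$ vanishes at $m$, which by the rewritten (B2) means $(\Ad_g+\id_\g)\xi\perp\operatorname{im}\big(\theta_g^R\circ d\Phi_m\big)$. But $\theta_g^R\circ d\Phi_m=\big(\theta_g^R\circ d\exp_\mu\big)\circ(d\Phi_0)_m$, the right-trivialized differential $\theta_g^R\circ d\exp_\mu$ is, on $\g^\natural$, an invertible operator which is a power series in $\ad\mu$ (this invertibility being precisely what defines $\g^\natural$), and $\operatorname{im}(d\Phi_0)_m$ is $\ad\mu$-invariant --- it contains $\operatorname{im}(\ad\mu)$ by equivariance of $\Phi_0$ --- whence $\operatorname{im}\big(\theta_g^R\circ d\Phi_m\big)=\operatorname{im}(d\Phi_0)_m=\g_m^{\perp}$. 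Therefore $(\Ad_g+\id_\g)\xi\in(\g_m^{\perp})^{\perp}=\g_m$. Finally, $\ad\mu$ annihilates $\g_m$ (differentiate $\Ad_h\mu=\mu$ over the stabilizer), so $\Ad_g=\exp(\ad\mu)$ restricts to the identity on $\g_m$ and $(\Ad_g+\id_\g)$ acts there as $2\,\id_\g$; subtracting $\tfrac12(\Ad_g+\id_\g)\xi\in\g_m$ from $\xi$ yields $\xi'$ with $\xi'_{M,m}=X$ and $\Ad_g\xi'=-\xi'$. This gives $\operatorname{ker}\omega_m=\{\xi_{M,m}\mid\xi\in\operatorname{ker}(\Ad_g+\id_\g)\}$, i.e.\ (B3).

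I expect the main obstacle to be (B3): the two identities for $\varpi$ reduce (B1) and (B2) to formal pullback manipulations, but (B3) genuinely requires interweaving the nondegeneracy of $\omega_0$, the structure of $\varpi$ and of $d\exp$ over $\g^\natural$, and the $\ad\mu$-linear algebra on the stabilizer subalgebra; the hypothesis $\Phi_0(M)\subset\g^\natural$ is used exactly once, to invert $\theta_g^R\circ d\exp_\mu$. A secondary, purely computational task is the honest verification of the contraction formula for $\varpi$ if one does not wish to cite it; beyond that, no compactness of $G$ is needed, so the argument applies verbatim to the case $G=\UU(n)$ relevant here.
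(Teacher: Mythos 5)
The paper does not prove this statement at all: Theorem \ref{thm:AMM-exp} is quoted from \cite{AMM}, so there is no internal proof to compare against. Your argument is a correct reconstruction of the standard one — (B1) and (B2) follow formally from $d\varpi=-\exp^\ast\eta$ and the contraction identity for $\varpi$, and your treatment of (B3) is sound: the symplectic double-orthocomplement forces $\ker\omega_m\subset \g\cdot m$, the hypothesis $\Phi_0(M)\subset\g^\natural$ is used exactly where you say (to invert $\theta^R_g\circ d\exp_\mu$, a power series in $\ad\mu$ preserving $\operatorname{im}(d\Phi_0)_m=\g_m^\perp$), and the final correction $\xi\mapsto\xi-\tfrac12(\Ad_g+\id_\g)\xi$ using $\Ad_g|_{\g_m}=\id$ closes the argument. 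The only ingredient you do not verify is the contraction formula $\iota_{(e_a)_\g}\varpi=d\ip{\id_\g,e_a}-\tfrac12\exp^\ast(\theta_a^L+\theta_a^R)$; since the paper itself only records $d\varpi=-\exp^\ast\eta$ and defers everything else to \cite{AMM}, citing that identity (with the sign fixed by the AKSM convention, as you note) is entirely consistent with the paper's level of detail.
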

The structures of Hamiltonian quasi-Poisson and quasi-Hamiltonian $G$-manifold obtained
through Theorems \ref{thm:AKSM-exp} and \ref{thm:AMM-exp} are compatible \cite{AKSM} through the identity
\begin{equation} \label{Eq:corrPOm}
 P^\sharp \circ \omega^\flat = \id_{TM}- \frac14 \sum_{a\in \mathtt{A}} (e_a)_M \otimes \Phi^\ast(\theta_a^L-\theta_a^R)\,.
\end{equation}
Furthermore, the Hamiltonian quasi-Poisson manifold $(M,\br{-,-},\Phi)$ is \emph{non-degenerate}: for all $m\in M$,
$T_mM= \span_\R \{P_m^\sharp(\alpha_m),(\xi_M)_m \mid \alpha_m\in T^\ast_m M \text{ and } \xi \in \ker (\id_\g+\operatorname{Ad}_{\Phi(m)})\}$.
In full generality, the following fundamental result holds.
\begin{theorem}[\cite{AKSM}] \label{Thm:Corr}
If $(M,\br{-,-},\Phi)$ is a non-degenerate Hamiltonian quasi-Poisson $G$-manifold, there exists a unique $2$-form $\omega\in \Omega^2(M)$ such that $(M,\omega,\Phi)$ is a quasi-Hamiltonian $G$-manifold and the compatibility condition \eqref{Eq:corrPOm} holds for $P$ the bivector field defining $\br{-,-}$.

If $(M,\omega,\Phi)$ is a quasi-Hamiltonian $G$-manifold, there exists a unique bivector field $P\in \wedge^2\Vect(M)$ with associated bracket denoted $\br{-,-}$ such that $(M,\br{-,-},\Phi)$ is a non-degenerate Hamiltonian quasi-Poisson $G$-manifold and the compatibility condition \eqref{Eq:corrPOm} holds.
\end{theorem}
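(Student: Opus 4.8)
\emph{Strategy and the key endomorphism.}
Both assertions encode one fibrewise linear-algebra fact read in opposite directions, together with a globalisation. The bridge is the bundle endomorphism
\begin{equation*}
\Xi \;:=\; \id_{TM} - \tfrac14 \sum_{a\in \mathtt A}(e_a)_M \otimes \Phi^\ast(\theta_a^L - \theta_a^R)\;\in\;\Gamma(\operatorname{End}(TM)),
\end{equation*}
so that the compatibility condition \eqref{Eq:corrPOm} reads exactly $P^\sharp\circ \omega^\flat = \Xi$. The argument rests on two pointwise properties of $\Xi$, each a consequence of the $G$-equivariance of $\Phi$ and of the $\Ad$-invariance of $\ip{-,-}$ (whence every $\Ad_g$ is orthogonal). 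First, evaluating $\Phi^\ast(\theta^L\mp\theta^R)$ on a fundamental vector field and using equivariance gives $\Phi^\ast(\theta^L-\theta^R)(\xi_M)=(2\,\id_\g-\Ad_\Phi-\Ad_\Phi^{-1})\xi$ and $\Phi^\ast(\theta^L+\theta^R)(\xi_M)=(\Ad_\Phi-\Ad_\Phi^{-1})\xi$; feeding the first into the definition of $\Xi$ and using orthogonality of $\Ad_{\Phi(m)}$ one checks that at every $m$
\begin{equation*}
\ker\Xi_m=\{\,\xi_{M,m}\mid \xi\in\ker(\id_\g+\Ad_{\Phi(m)})\,\},
\end{equation*}
i.e.\ exactly the distribution appearing in condition~(B3). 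Second, combining the same evaluations with $\sum_a\Phi^\ast\theta_a^R(X)\,\Phi^\ast\theta_a^R(Y)=\sum_a\Phi^\ast\theta_a^L(X)\,\Phi^\ast\theta_a^L(Y)$ (again $\Ad$-invariance, via $\Phi^\ast\theta^R=\Ad_\Phi\Phi^\ast\theta^L$), one checks that $\Xi$ is self-adjoint for any $G$-invariant $2$-form $\omega$ satisfying~(B2): $\omega(\Xi X,Y)=\omega(X,\Xi Y)$ for all $X,Y$.

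\emph{From quasi-Hamiltonian to non-degenerate Hamiltonian quasi-Poisson.}
Given $(M,\omega,\Phi)$ quasi-Hamiltonian, I would define $P^\sharp$ piecewise. On $\operatorname{im}\omega^\flat=(\ker\omega)^{\circ}$ set $P^\sharp:=\Xi\circ(\omega^\flat)^{-1}$; this is well defined because $\ker\omega^\flat=\ker\omega=\ker\Xi$ by~(B3) and the first property above, so $\Xi$ factors through $TM/\ker\omega^\flat$. On $(\ker d\Phi)^{\circ}=\operatorname{im}\Phi^\ast$ set $P^\sharp(\Phi^\ast dF):=\tfrac12\sum_a\Phi^\ast\big((e_a^L+e_a^R)(F)\big)\,(e_a)_M$, as dictated by \eqref{Eq:momap}. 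One then checks: $(\ker\omega)^{\circ}+(\ker d\Phi)^{\circ}=T^\ast M$, equivalently $\ker\omega\cap\ker d\Phi=0$, which is immediate since $\Phi^\ast\theta^L(v)=(\id_\g-\Ad_\Phi^{-1})\xi=2\xi\ne0$ for any nonzero $v=\xi_M\in\ker\omega$ (so $\xi\in\ker(\id_\g+\Ad_\Phi)$ and $\xi\ne0$, by~(B3)); the two prescriptions agree on the overlap $(\ker\omega+\ker d\Phi)^{\circ}$, by a short computation with~(B2); $P$ is antisymmetric — on $(\ker\omega)^{\circ}$ by self-adjointness of $\Xi$, on $\operatorname{im}\Phi^\ast$ because $\ip{P^\sharp\Phi^\ast dF,\Phi^\ast dF'}=\br{F\circ\Phi,F'\circ\Phi}$, and on cross terms by a further identity among~(B2), \eqref{Eq:momap} and equivariance; and $\Phi$ is a group-valued moment map for the bracket $\br{-,-}$ defined by $P$, by construction. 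Non-degeneracy follows because $\operatorname{im}P^\sharp$ contains both $\operatorname{im}\Xi$ and every $\xi_M$ with $\xi\in\operatorname{im}(\id_\g+\Ad_\Phi)$, while $\Xi$ reduces to the identity modulo orbit directions. What remains — and this is the crux — is the quasi-Poisson Jacobi identity $[P,P]=\phi_M$ of \eqref{Eq:JacPhi}.

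\emph{The converse, and uniqueness.}
Conversely, given $(M,\br{-,-},\Phi)$ a non-degenerate Hamiltonian quasi-Poisson $G$-manifold, I would run the mechanism backwards: by non-degeneracy, $\operatorname{im}P^\sharp$ and $\span_\R\{\,\xi_M\mid\xi\in\ker(\id_\g+\Ad_\Phi)\,\}$ span $TM$, so a $2$-form $\omega$ is uniquely determined by imposing \eqref{Eq:corrPOm} (which fixes $\iota_X\omega$ for $X\in\operatorname{im}P^\sharp$) together with~(B2) (which fixes $\iota_{\xi_M}\omega$), the two requirements being consistent precisely because of the two properties of $\Xi$. Then~(B2) holds by construction, (B3) follows from non-degeneracy together with $\ker\Xi=\{\,\xi_M\mid\xi\in\ker(\id_\g+\Ad_\Phi)\,\}$, and~(B1), i.e.\ $d\omega=\Phi^\ast\eta$, is dual to $[P,P]=\phi_M$. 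Uniqueness holds in both directions for the same reason: a competing $P'$ (resp.\ $\omega'$) differs from the one constructed by a bivector that kills $\omega^\flat$ and obeys \eqref{Eq:momap} (resp.\ a $2$-form annihilated by $P^\sharp$ and vanishing on every $\xi_M$ with $\xi\in\ker(\id_\g+\Ad_\Phi)$), which must vanish by non-degeneracy; the two constructions are then visibly mutually inverse.

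\emph{Main obstacle.}
The only genuinely non-routine step is the equivalence $[P,P]=\phi_M\iff d\omega=\Phi^\ast\eta$, which is where the twists by the Cartan trivector $\phi$ and the Cartan $3$-form $\eta$ \eqref{Eq:eta3} enter; establishing it means using \eqref{Eq:corrPOm} together with the Maurer--Cartan structure equation and the relevant defining identity. (A cleaner way to arrange this bookkeeping is to reformulate both structures in the language of Dirac geometry, where a quasi-Hamiltonian structure and a non-degenerate Hamiltonian quasi-Poisson structure both amount to the same `strong Dirac map' datum for $\Phi\colon M\to G$ relative to the $\eta$-twisted Cartan--Dirac structure on $G$, the non-degeneracy condition being what lets one presentation be read off from the other, and the equivalence becoming formal.) Since the full result is that of \cite{AKSM}, one may of course simply invoke it; I have spelled out the mechanism above because the endomorphism $\Xi$ and the piecewise construction of $P^\sharp$ are precisely the tools one re-uses in the explicit quasi-Poisson computations of the later sections.
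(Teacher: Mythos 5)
The paper does not actually prove this statement: Theorem \ref{Thm:Corr} is imported verbatim from \cite{AKSM} and used as a black box, so there is no in-paper argument to compare your proposal against. Judged on its own terms, your sketch correctly isolates the relevant endomorphism $\Xi=\id_{TM}-\frac14\sum_a (e_a)_M\otimes\Phi^*(\theta_a^L-\theta_a^R)$, the identification of $\ker\Xi$ with the distribution in (B3) (via $\Xi(\xi_M)=\frac14\bigl((\id+\Ad_{\Phi^{-1}})(\id+\Ad_\Phi)\xi\bigr)_M$ and orthogonality of $\Ad_\Phi$), and the piecewise prescription for $P^\sharp$ on $\operatorname{im}\omega^\flat$ and on $\operatorname{im}\Phi^*$. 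This is the right skeleton, and your closing remark that one should simply invoke \cite{AKSM} is exactly what the authors do.

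As a proof, however, the proposal has two genuine gaps. First, the step you yourself flag as the crux --- the equivalence $[P,P]=\phi_M\iff d\omega=\Phi^*\eta$ --- is precisely the nontrivial content of the theorem; without it neither direction is established, so the argument is an outline rather than a proof. Second, the claimed self-adjointness $\omega(\Xi X,Y)=\omega(X,\Xi Y)$ does not follow from (B2) and $\Ad$-invariance alone: writing $u=\Phi^*\theta^L(X)$, $p=\Phi^*\theta^L(Y)$ and using $\Phi^*\theta^R=\Ad_\Phi\circ\Phi^*\theta^L$, the $LL$ and $RR$ terms cancel but the mixed terms leave
\begin{equation*}
\omega(\Xi X,Y)-\omega(X,\Xi Y)=-\tfrac14\,\ip{u,(\Ad_{\Phi}-\Ad_{\Phi^{-1}})\,p},
\end{equation*}
which is a nonzero antisymmetric pairing in general. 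Since this identity is what you use to get antisymmetry of $P$ on $(\ker\omega)^{\circ}$, that verification needs a different argument (in \cite{AKSM} antisymmetry emerges from the interplay of the compatibility relation with the moment map condition, not from a pointwise self-adjointness of $\Xi$). There is also a minor imprecision in the kernel computation: $\Xi(v)=0$ only forces $(2+\Ad_\Phi+\Ad_{\Phi^{-1}})\xi$ to lie in the isotropy algebra of the point, not to vanish, and one must quotient this out before concluding $\xi\in\ker(\id_\g+\Ad_{\Phi})$.
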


\begin{remark}
In the second part of Theorem \ref{Thm:Corr}, we can simply consider the existence of a unique bivector $P$ compatible with $\omega$ through the identity \eqref{Eq:corrPOm}. Indeed, it is known that the identity \eqref{Eq:corrPOm} guarantees that $(M,\br{-,-},\Phi)$ is non-degenerate.
\end{remark}

\subsection{Examples for \texorpdfstring{$\UU(n)$}{U(n)} actions}  \label{ss:Prel-Un}

We look at the case of the unitary group and its Lie algebra which are given by
\begin{equation*}
 G=\UU(n):=\{A\in \Gl(n,\CC) \mid AA^\dagger=\1_n\}\,, \quad
 \g=\uu(n):=\{A\in \gl(n,\CC) \mid A+A^\dagger=0_n\}\,.
\end{equation*}
Hereafter, $A^\dagger$ is for the conjugate transpose of any matrix $A\in \Mat(n\times m,\CC)$.
The adjoint action of $\UU(n)$ is given by conjugation of matrices.
We use the bilinear map
\begin{equation} \label{Eq:ipU}
 \ip{-,-}:\uu(n)\times \uu(n)\to \R\,, \quad \ip{\xi,\zeta}=\tr(\xi \zeta^\dagger)=-\tr(\xi \zeta)\,,
\end{equation}
as an invariant real inner product on $\uu(n)$. In particular, for any orthonormal basis $(e_a)_{a\in \mathtt{A}}$, we recover the well-known identity
\begin{equation} \label{Eq:shortEE}
 \sum_{a\in \mathtt{A}} e_a \otimes_\CC e_a = -\sum_{1\leq k,l \leq n} E_{kl}\otimes_\CC E_{lk}\,,
\end{equation}
where $E_{kl}\in \gl(n,\CC)$ is the elementary matrix such that $(E_{kl})_{ij}=\delta_{ik}\delta_{jl}$ for $1\leq i,j\leq n$.

\subsubsection{Internally fused double of \texorpdfstring{$\UU(n)$}{U(n)}} \label{ss:qDouble}

We recall a well-known quasi-Poisson structure on $\UU(n)\times \UU(n)= \{(A,B) \mid A,B\in \UU(n)\}$, colloquially called the internally fused double of $\UU(n)$.
Below, we define the complex-valued `coordinate functions'  $A_{ij},B_{ij}\in \Cinf(\UU(n)\times \UU(n),\CC)$ for $1\leq i,j\leq n$ by setting  $A_{ij}(m)=A'_{ij}$ and $B_{ij}(m)=B'_{ij}$ at  $m=(A',B')\in \UU(n)\times \UU(n)$.

\begin{proposition}[\cite{AMM,AKSM}, \emph{Internally fused double}]  \label{Pr:qP-intDble}
Consider $\UU(n)\times \UU(n)$ with the action of $\UU(n)$ by simultaneous conjugation:
$g\cdot (A,B)=(g A g^{-1} , g B g^{-1})$.
Then, the bracket which is determined by  the formulae
\begin{equation}\label{Eq:intDble}
\begin{aligned}
 \br{A_{ij} , A_{kl} } &= -\frac12 (A^2)_{kj} \delta_{il} + \frac12 \delta_{kj} (A^2)_{il}\,,  \\
\br{B_{ij} , B_{kl} } &=+\frac12 (B^2)_{kj} \delta_{il} - \frac12 \delta_{kj} (B^2)_{il}\,, \\
\br{A_{ij} , B_{kl} } &= -\frac12 \Big(\delta_{kj} (AB)_{il} + (BA)_{kj} \delta_{il} + B_{kj} A_{il} - A_{kj} B_{il}\Big)\,,
\end{aligned}
\end{equation}
is a quasi-Poisson bracket admitting $\Phi:(A,B)\mapsto ABA^{-1}B^{-1}$ as a moment map.
Furthermore, this Hamiltonian quasi-Poisson manifold is non-degenerate, and its compatible quasi-Hamiltonian structure is provided by
\begin{equation} \label{Eq:om-DUn}
\begin{aligned}
\hspace{-0.1cm}  \omega_{\D(\UU(n))}=& \frac12 \tr\!\left(
A^{-1} dA \!\wedge\! dB\, B^{-1}
\!+\! dA\, A^{-1} \!\wedge\! B^{-1} dB
\!-\!(AB)^{-1} d(AB) \!\wedge\! (BA)^{-1}d(BA)  \right).
\end{aligned}
\end{equation}
We denote both triples $(\UU(n)\times \UU(n),\br{-,-},\Phi)$ and $(\UU(n)\times \UU(n),\omega_{\D(\UU(n))},\Phi)$ by $\D(\UU(n))$.
\end{proposition}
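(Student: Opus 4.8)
This is the classical description of the internally fused double, so the plan is to derive it from the fusion construction of Propositions~\ref{Pr:Fus} and~\ref{Pr:Fus-omega} and then to match two explicit formulas. The basic input is the non-fused \emph{double} $\mathbf{D}(\UU(n))$ of \cite{AMM,AKSM}: the manifold $\UU(n)\times\UU(n)=\{(a,b)\}$, viewed as a non-degenerate Hamiltonian quasi-Poisson (equivalently, quasi-Hamiltonian) $\UU(n)\times\UU(n)$-space, where $\UU(n)\times\UU(n)$ acts by $(g_1,g_2)\cdot(a,b)=(g_1 a g_2^{-1},\,g_2 b g_1^{-1})$, the $\UU(n)\times\UU(n)$-valued moment map is $(a,b)\mapsto(ab,\,a^{-1}b^{-1})$, and the invariant $2$-form is the explicit form recorded in \cite[\S3.2]{AMM}. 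For a self-contained derivation one may instead produce $\mathbf{D}(\UU(n))$ by exponentiating the Hamiltonian $\UU(n)\times\UU(n)$-space $T^\ast\UU(n)$ through Theorems~\ref{thm:AMM-exp} and~\ref{thm:AKSM-exp} near the identity section and then observing that the structures extend smoothly to all of $\UU(n)\times\UU(n)$.

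The first move is internal fusion of the two $\UU(n)$-factors, i.e. the $G\times G\to G$ version of Propositions~\ref{Pr:Fus} and~\ref{Pr:Fus-omega} (see \cite{AMM,AKSM}). The diagonal $\UU(n)$ then acts by simultaneous conjugation $g\cdot(A,B)=(gAg^{-1},gBg^{-1})$, as in the statement; the new moment map is the product of the two components of the $\UU(n)\times\UU(n)$-moment map, $\Phi(A,B)=(AB)(A^{-1}B^{-1})=ABA^{-1}B^{-1}$; and the general machinery at once gives that the fused bracket is quasi-Poisson and the fused $2$-form is quasi-Hamiltonian, both with moment map $\Phi$. Since fusion of quasi-Hamiltonian manifolds is again quasi-Hamiltonian, Theorem~\ref{Thm:Corr} then delivers non-degeneracy and the compatibility~\eqref{Eq:corrPOm} between the fused bivector and the fused $2$-form.

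It remains to recognise these fused structures in the stated formulas. For the $2$-form I would add the fusion correction $-\half\sum_a\Phi_1^\ast\theta_a^L\wedge\Phi_2^\ast\theta_a^R$ (with $\Phi_1=AB$, $\Phi_2=A^{-1}B^{-1}$) to the $2$-form of $\mathbf{D}(\UU(n))$ and simplify, using the Maurer--Cartan structure equations, the completeness relation~\eqref{Eq:shortEE} and $\ip{-,-}=\tr(\,\cdot\,\cdot^\dagger)$, until the expression collapses to the three traces in~\eqref{Eq:om-DUn}. For the bracket I would either rerun the computation with the bivector fusion term $\half\sum_a(e_a)_M\wedge(e_a)_N$, or --- more economically --- feed $\omega_{\D(\UU(n))}$ into~\eqref{Eq:corrPOm}, solve for $P$, and evaluate $P$ on pairs of the coordinate functions $A_{ij},B_{ij}$; this should reproduce the right-hand sides of~\eqref{Eq:intDble} and simultaneously shows that~\eqref{Eq:intDble} is a well-defined bivector on $\UU(n)\times\UU(n)$ (equivalently, that the listed formulas extend to a bivector on $\Mat(n\times n,\CC)\times\Mat(n\times n,\CC)$ tangent to $\UU(n)\times\UU(n)$) and that $\Phi$ obeys the moment map axiom~\eqref{Eq:momap}. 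As a fully independent alternative, one may check the defining conditions of Definition~\ref{def:om-qHam} for $(\UU(n)^2,\omega_{\D(\UU(n))},\Phi)$ directly --- invariance is clear, (B1) $d\omega=\Phi^\ast\eta$ from a Maurer--Cartan expansion along $\Phi=ABA^{-1}B^{-1}$, (B2) from contracting the fundamental vector fields of the conjugation action into~\eqref{Eq:om-DUn}, (B3) from a pointwise rank count --- and then invoke Theorem~\ref{Thm:Corr} to get $P$ and identify it with~\eqref{Eq:intDble} via~\eqref{Eq:corrPOm}.

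The conceptual content is entirely supplied by \cite{AMM,AKSM} and the fusion propositions; the step I expect to be the main obstacle is the matrix bookkeeping in the previous paragraph --- converting the abstract fused $2$-form, or the inverse of $\omega^\flat$ coming from~\eqref{Eq:corrPOm}, into the compact closed-form expressions~\eqref{Eq:intDble} and~\eqref{Eq:om-DUn}. These manipulations are long but mechanical, driven throughout by the Maurer--Cartan identities and the completeness relation~\eqref{Eq:shortEE}.
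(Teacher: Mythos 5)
Your proposal matches the paper's own proof: the paper likewise obtains this result by performing internal fusion of the Hamiltonian quasi-Poisson and quasi-Hamiltonian structures on the double $\UU(n)\times\UU(n)$ from \cite[Example 10.5]{AKSM}, invoking the compatibility of the correspondence \eqref{Eq:corrPOm} with fusion, and then reading off the explicit bracket \eqref{Eq:intDble} from the fused bivector \eqref{Eq:P-DUn}. The extra alternative routes you sketch (inverting $\omega^\flat$ via \eqref{Eq:corrPOm}, or verifying (B1)--(B3) directly) are sound but not needed.
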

\begin{proof}
The result follows from \cite[Example 10.5]{AKSM} by performing fusion of its Hamiltonian quasi-Poisson structure or its quasi-Hamiltonian structure, noting that the correspondence \eqref{Eq:corrPOm} is compatible with fusion by \cite[Proposition 10.7]{AKSM}. The explicit form of the quasi-Poisson bracket is directly obtained from the corresponding  bivector field
\begin{equation}  \label{Eq:P-DUn}
P_{\D(\UU(n))}=\frac12 \sum_{a\in \mathtt{A}} \Big( (e_a,0)^L \wedge (0,e_a)^R + (e_a,0)^R \wedge (0,e_a)^L
 - (e_a,0)_{\UU(n)\times \UU(n)} \wedge (0,e_a)_{\UU(n)\times \UU(n)} \Big)\,,
\end{equation}
where the infinitesimal action of $(\xi,\zeta)\in \uu(n)\times \uu(n)$ comes from
the action of $\UU(n)\times \UU(n)$ on itself by $(g,h)\cdot (A,B)=(g A h^{-1} , h B g^{-1})$.
\end{proof}

\subsubsection{A degenerate quasi-Poisson structure on $\UU(n) \times \UU(n)$}  \label{ss:alternativeDouble}

Later we shall also  need another quasi-Poisson structure on $\UU(n)\times \UU(n)$, which is given as follows.
\begin{proposition} \label{Pr:qP-AltDble}
Consider $\UU(n)\times \UU(n) = \{(A,\tilde{B}) \mid A,\tilde{B}\in \UU(n)\}$ with the action of $\UU(n)$ by simultaneous conjugation:
$g\cdot (A,\tilde{B})=(g A g^{-1} , g \tilde{B} g^{-1})$.
There is a quasi-Poisson bracket $\{\ ,\ \}_c$ on $\UU(n)\times \UU(n)$ that is characterized by the formulae
\begin{equation}\label{Eq:AltDble}
\begin{aligned}
 \br{A_{ij} , A_{kl} }_c &= -\frac12 (A^2)_{kj} \delta_{il} + \frac12 \delta_{kj} (A^2)_{il}\,,  \\
\br{\tilde{B}_{ij},\tilde{B}_{kl}}_c &=+\frac12(\tilde{B}^2)_{kj} \delta_{il} - \frac12\delta_{kj} (\tilde{B}^2)_{il}\,, \\
\br{A_{ij} , \tilde{B}_{kl} }_c &= -\frac1 2 \Big(\delta_{kj} (A\tilde{B})_{il} + (\tilde{B}A)_{kj} \delta_{il} - \tilde{B}_{kj} A_{il} - A_{kj} \tilde{B}_{il} \Big)\,,
\end{aligned}
\end{equation}
and this bracket admits $\tilde{\Phi}:(A,\tilde{B})\mapsto A\tilde{B}^{-1}$ as a moment map.
\end{proposition}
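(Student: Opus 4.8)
The plan is to reduce this statement to Proposition \ref{Pr:qP-intDble} by an explicit change of variables. The key observation is that the moment maps are related by a simple substitution: if we set $\tilde B = B^{-1}$, then $\tilde\Phi(A,\tilde B)=A\tilde B^{-1}=AB$, whereas the internally fused double has moment map $\Phi(A,B)=ABA^{-1}B^{-1}$. These do not match directly, so a single inversion is not enough; instead I would look for a map of the form $(A,B)\mapsto(A,\tilde B)$ with $\tilde B$ built from $A$ and $B$ so that the bracket \eqref{Eq:intDble} transforms into \eqref{Eq:AltDble}. Comparing the two sets of formulae, the $\br{A,A}$ brackets coincide, the $\br{\tilde B,\tilde B}$ bracket has the same shape as the $\br{B,B}$ bracket, and only the mixed bracket differs, and only in the sign of the two `quadratic-in-coordinates' terms $\tilde B_{kj}A_{il}$ and $A_{kj}\tilde B_{il}$. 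This sign flip is exactly what one expects from replacing an action by left-right translations with a conjugation-type action, which suggests the substitution $\tilde B = A^{-1} B^{-1} A$ (equivalently $B = A\tilde B^{-1}A^{-1}$), for which indeed $A\tilde B^{-1}A^{-1} = ABA^{-1}$ is conjugate to $B$ and $\tilde\Phi(A,\tilde B)=A\tilde B^{-1} = A\cdot A^{-1}BA = BA$, still not literally $ABA^{-1}B^{-1}$.

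Given that a naive guess does not immediately land on the commutator moment map, the cleaner route is to prove the proposition directly at the level of bivectors, mirroring the proof of Proposition \ref{Pr:qP-intDble}. Concretely, I would write down the candidate bivector
\begin{equation*}
P_c = \frac12 \sum_{a\in\mathtt{A}}\Big( (e_a,0)^L\wedge(0,e_a)^R + (e_a,0)^R\wedge(0,e_a)^L + (e_a,0)_{\UU(n)\times\UU(n)}\wedge(0,e_a)_{\UU(n)\times\UU(n)} \Big)
\end{equation*}
with the \emph{opposite} sign on the third (fusion-correction) term relative to \eqref{Eq:P-DUn}, compute the brackets of the coordinate functions $A_{ij}$, $\tilde B_{ij}$ using \eqref{EqinfLR} and \eqref{EqinfVectM} together with the completeness relation \eqref{Eq:shortEE}, and check that they reproduce \eqref{Eq:AltDble}. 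One then verifies $[P_c,P_c]=\phi_M$ and the moment map identity \eqref{Eq:momap} for $\tilde\Phi(A,\tilde B)=A\tilde B^{-1}$; both of these are pure bookkeeping once the bivector is pinned down, and in fact the quasi-Poisson axiom can be inherited abstractly: the structure \eqref{Eq:AltDble} is the fusion $\D(\UU(n))'\circledast \overline{\D(\UU(n))'}$-type product of two copies of the single-factor double $(\UU(n),\cdot)$ with \emph{one factor carrying the opposite quasi-Poisson structure}, which remains a Hamiltonian quasi-Poisson manifold because reversing the sign of a quasi-Poisson bivector on a $G$-manifold sends $[P,P]=\phi_M$ to $[-P,-P]=\phi_M$ while negating the moment map role; organizing this carefully gives axioms for free.

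Alternatively, and probably most transparently, I would exhibit an explicit isomorphism from $\D(\UU(n))$ to $(\UU(n)\times\UU(n),\br{-,-}_c,\tilde\Phi)$. The candidate is the $\UU(n)$-equivariant diffeomorphism
\begin{equation*}
\kappa:\UU(n)\times\UU(n)\to\UU(n)\times\UU(n),\qquad \kappa(A,B) = (A,\, AB A^{-1} B^{-1} \cdot \text{(correction)}),
\end{equation*}
chosen so that $\tilde\Phi\circ\kappa = \Phi$; the natural first try is $\tilde B := (AB A^{-1}B^{-1})^{-1}A = B A B^{-1}A^{-1}A=BAB^{-1}$, giving $\tilde\Phi\circ\kappa(A,B)=A\tilde B^{-1}=A\cdot AB A^{-1}B^{-1}=A\Phi(A,B)$ — still off by a left factor of $A$, which one absorbs by composing with the known automorphism of $\D(\UU(n))$ that replaces $\Phi$ by a conjugate. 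I would pick whichever of these three presentations keeps the computation shortest; in the write-up I would state the bracket, declare the bivector $P_c$, and then say the verification of \eqref{Eq:AltDble}, of $[P_c,P_c]=\phi_M$, and of \eqref{Eq:momap} is a direct computation entirely parallel to Proposition \ref{Pr:qP-intDble}. \textbf{The main obstacle} is purely organizational: getting the signs and the precise form of the twisting element (the relation between $B$ and $\tilde B$, hence between $\Phi$ and $\tilde\Phi$) exactly right, since a na\"\i ve $\tilde B=B^{-1}$ does not intertwine the moment maps and several of the obvious corrections are off by a stray conjugation or left-translation; once the correct dictionary is fixed, no genuinely new ideas beyond those already used for $\D(\UU(n))$ are needed.
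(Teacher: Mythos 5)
Your proposal hedges between three routes, and only one of them is viable --- but that one is exactly the paper's proof. The paper establishes the proposition by taking $M=\{A\in\UU(n)\}$ with the conjugation quasi-Poisson structure of \cite[Prop.~3.1]{AKSM} (moment map $\id_{\UU(n)}$, bracket equal to the first line of \eqref{Eq:AltDble}), taking $N=\{\tilde B\in\UU(n)\}$ with the \emph{opposite} structure (moment map $\tilde B\mapsto\tilde B^{-1}$, second line of \eqref{Eq:AltDble}), and fusing via Proposition \ref{Pr:Fus}; the mixed bracket then comes from the single computation with $-P_{M,N}$ and \eqref{Eq:shortEE}. This is the idea you sketch in the middle of your second route (``fusion of two copies of the single-factor structure, one with the opposite bivector''), together with the correct observation that $[-P,-P]=[P,P]=\phi_M$. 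Had you committed to that route and carried out the one fusion computation, you would have reproduced the paper's argument.

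The two other routes contain genuine errors. First, the explicit bivector $P_c$ you write down --- formula \eqref{Eq:P-DUn} with the sign of the fusion-correction term reversed --- is \emph{not} the bivector of the structure \eqref{Eq:AltDble}. In \eqref{Eq:P-DUn} the first two terms only produce mixed $\br{A,\tilde B}$ brackets, so the $\br{A,A}$ and $\br{\tilde B,\tilde B}$ brackets come entirely from the term $-\frac12\sum_a (e_a,0)_{\UU(n)\times\UU(n)}\wedge(0,e_a)_{\UU(n)\times\UU(n)}$; flipping its sign flips the sign of both diagonal brackets, contradicting the first two lines of \eqref{Eq:AltDble}, which must agree with \eqref{Eq:intDble} for $A$ and differ only by passing to the opposite for $\tilde B$. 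The correct bivector is \eqref{Pdeg}, whose diagonal terms carry the \emph{same} signs as in \eqref{B14}; the building blocks here are conjugation actions on each factor, not the two-sided action $(g,h)\cdot(A,B)=(gAh^{-1},hBg^{-1})$ underlying the $(e_a,0)^{L/R}$ decomposition. Second, your third route --- seeking a $\UU(n)$-equivariant isomorphism $\kappa$ from $\D(\UU(n))$ onto $(\UU(n)\times\UU(n),\br{-,-}_c,\tilde\Phi)$ --- cannot succeed in principle: $\D(\UU(n))$ is non-degenerate (Proposition \ref{Pr:qP-intDble}) while the alternative structure is degenerate (Remark \ref{Rem:Deg}: the functions $\tr(A^\ell)$ and $\tr(\tilde B^\ell)$ are central for $\br{-,-}_c$), so no isomorphism of Hamiltonian quasi-Poisson manifolds exists. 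The relevant map $\Psi(A,B)=(A,BAB^{-1})$ of Theorem \ref{Thm:B5} is only a (non-surjective) quasi-Poisson map, not an isomorphism.
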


\begin{proof}
It is well-known that  $M:= \{A\in \UU(n)\}$ with the action of $\UU(n)$ by conjugation is a Hamiltonian quasi-Poisson manifold whose quasi-Poisson bracket is determined
by the first equation  in \eqref{Eq:AltDble} and its moment map is $\id_{\UU(n)}$, see \cite[Proposition 3.1]{AKSM}. Taking the opposite quasi-Poisson bracket results in  $N:= \{\tilde{B}\in \UU(n)\}$
 with the same action, the quasi-Poisson bracket given by the second equality in \eqref{Eq:AltDble}, and the moment map $\tilde{B}\mapsto \tilde{B}^{-1}$.

It is clear that the fusion product $M \circledast N$ ($\simeq \UU(n)\times \UU(n)$ as a manifold) constructed using Proposition \ref{Pr:Fus} admits $\tilde{\Phi}(A,\tilde{B})=A\tilde{B}^{-1}$ as its moment map.
For the quasi-Poisson bracket, it amounts to adding $-P_{M,N}$ \eqref{Eq:Fus-Psi} to
the quasi-Poisson bivectors of $M$ and $N$; this does not change the first two equalities in \eqref{Eq:AltDble}.
Finally, we use that for any $\xi\in \uu(n)$, $\xi_M(A_{ij})=-(\xi A)_{ij}+(A\xi)_{ij}$ and $\xi_M(\tilde{B}_{ij})=0$
(the same holds for $\xi_N$ under swapping $A$ and $\tilde{B}$) together with \eqref{Eq:shortEE} to compute that
\begin{equation*}
 \begin{aligned}
\br{A_{ij} , \tilde{B}_{kl} }_c &=-\frac12 \sum_{a\in \mathtt{A}} (e_a)_M(A_{ij}) (e_a)_N(\tilde{B}_{kl})\\
&=\frac12 \sum_{1\leq u,v\leq n} \left[-(E_{uv} A)_{ij}+(A E_{uv})_{ij}\right]  \left[-(E_{vu} \tilde{B})_{kl}+(\tilde{B} E_{vu})_{kl}\right] \\
&= \frac12 \Big(A_{kj} \tilde{B}_{il} + \tilde{B}_{kj} A_{il} -\delta_{kj} (A\tilde{B})_{il} - (\tilde{B}A)_{kj} \delta_{il} \Big)\,,
 \end{aligned}
\end{equation*}
which is the third equality in \eqref{Eq:AltDble}.
\end{proof}

\begin{remark} \label{Rem:Deg}
 The quasi-Poisson bivector $P_c$ associated with the structure defined on $M=\UU(n)\times \UU(n)$ in Proposition \ref{Pr:qP-AltDble} is \emph{not} non-degenerate. If it was, there would
  exist a $2$-form $\omega_c$ such that \eqref{Eq:corrPOm} holds with the moment map $\tilde \Phi$, and therefore the transposed equation
 \begin{equation} \label{Eq:corrPOm-T}
 \omega_c^\flat \circ P_c^\sharp = \id_{T^\ast M}- \frac14 \sum_{a\in \mathtt{A}} \tilde \Phi^\ast(\theta_a^L-\theta_a^R) \otimes (e_a)_M
\end{equation}
would hold as well.
However, we can easily check from \eqref{Eq:AltDble} that $\br{\tr(A^\ell),A_{ij}}=0$ and
$\br{\tr(A^\ell),\tilde{B}_{ij}}_c=0$ for any $1\leq i,j\leq n$ and $\ell \in \Z$, hence $P_c^\sharp(d\tr(A^\ell))=0$. Any function $\tr(A^\ell)$ is invariant, so that
$\xi_M(\tr(A^\ell))=0$ for any $\xi\in \uu(n)$ which contradicts \eqref{Eq:corrPOm-T}.
Incidentally, let us note that the functions $\tr(\tilde B^\ell)$ also belong to the center of the quasi-Poisson bracket $\br{-,-}_c$.
\end{remark}

\subsubsection{The quasi-Poisson ball by exponentiation}
\label{ss:Disk}

Consider $\CC^n$ with the action of $\UU(n)$ by left multiplication, i.e. $g\cdot v=gv$ for all $g\in \UU(n)$ and $v\in \CC^n$.
It is standard that $\CC^n\simeq \R^{2n}$ is a Hamiltonian Poisson manifold with
\begin{equation}
\br{v_i,v_k}_0=0,\,\, \br{\bar{v}_i,\bar{v}_k}_0=0, \,\, \br{v_i,\bar{v}_k}_0=\frac{\ic}{x}\delta_{ik}, \quad
\Phi_0:\CC^n\to \uu(n),\,\, \Phi_0(v)=\ic x vv^\dagger\,. \label{Eq:CnP-2}
\end{equation}
We can write explicitly its non-degenerate Poisson
bivector $P_0=\frac{\ic}{x} \sum_{j=1}^{n} \frac{\partial}{\partial v_j} \wedge \frac{\partial}{\partial \bar{v}_j}$
and the associated symplectic form
$\omega_0=\ic x \sum_{j=1}^{n} dv_j \wedge d\bar{v}_j$
(so $P_0^\sharp \circ \omega_0^\flat =\id_{TM}$ using \eqref{Eq:conv-music}).
Below, we present the corresponding Hamiltonian quasi-Poisson structure obtained by the method of exponentiation given in \S\ref{ss:Exp}.
The approach that we originally used to find this structure is described in Appendix \ref{sec:A}.
To state the result, introduce from \eqref{Eq:Mero-phi} the following function
\begin{equation} \label{Eq:ab-fct}
\btt(t)= 2\ic \varphi(-\ic t)= \cot\left(\frac{t}{2}\right)-\frac{2}{t}\,,
\end{equation}
which is real analytic on $\{t\in \R \mid |t|<2\pi\}$ and each $\{t\in \R \mid 2k\pi<|t|<2(k+1)\pi\}$ for $k=1,2,\ldots$
The next two propositions are respectively proved in \S\ref{ss:Proof1} and \S\ref{ss:Proof2}.

\begin{proposition} \label{Pr:qP-Cn}
Fix $x\in \R\setminus \{0\}$, and let\footnote{Our notation is similar to the one used in \cite[Lemma B.1]{HJS}, where such a space is called a ``disc''. To avoid confusion, we refer to these spaces as (open) balls since they are not products of discs if $n>1$.} $\disk(x):=\{v\in \CC^n \mid |v|^2 < \frac{2\pi}{|x|}\}\subset \CC^n\simeq \R^{2n}$.
Then $\disk(x)$ endowed with the action of $\UU(n)$ by left multiplication admits a structure of non-degenerate Hamiltonian quasi-Poisson manifold obtained by exponentiation. Explicitly, for $\btt(t)$ given in \eqref{Eq:ab-fct}, the quasi-Poisson bracket satisfies for any $1\leq i,k\leq n$
\begin{equation}
 \br{v_i,v_k}_{\disk(x)}=0\,, \quad \br{\bar{v}_i,\bar{v}_k}_{\disk(x)}=0\,, \quad
\br{v_i,\bar{v}_k}_{\disk(x)}
=\frac{\ic}{x} \delta_{ik}
+ \frac{\ic}{2} \btt(x |v|^2) \left[ |v|^2 \delta_{ik} - v_i\bar{v}_k \right] .
\label{Eq:qPB-C2}
\end{equation}
The moment map is given by $\Phi:\disk(x)\to \UU(n),\,\, \Phi(v)=\exp(\ic x vv^\dagger)$.
\end{proposition}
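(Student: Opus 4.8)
The plan is to apply Theorem \ref{thm:AKSM-exp} directly to the Hamiltonian Poisson manifold $(\CC^n,\br{-,-}_0,\Phi_0)$ of \eqref{Eq:CnP-2} and to compute the resulting bivector field $P = P_0 - (\Phi_0^\ast T)_{\disk(x)}$ explicitly. First I would check the hypothesis $\Phi_0(\disk(x))\subset \uu(n)^\natural$: since $\Phi_0(v)=\ic x vv^\dagger$ is a rank-one Hermitian-times-$\ic$ matrix with nonzero eigenvalue $\ic x|v|^2$ (on $v\neq 0$; the case $v=0$ is trivial), the eigenvalues of $\ad(\Phi_0(v))$ on the complexification are $0$ and $\pm \ic x |v|^2$, and the constraint $|v|^2<2\pi/|x|$ guarantees $x|v|^2\in(-2\pi,2\pi)$, so no eigenvalue lies in $2\pi\ic\Z^\times$. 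Hence $\disk(x)$ is exactly the locus where exponentiation applies, and the moment map is $\Phi=\exp\circ\Phi_0 = \exp(\ic x vv^\dagger)$ as claimed; non-degeneracy is automatic from Theorem \ref{thm:AKSM-exp} together with the remarks following Theorem \ref{thm:AMM-exp}, because $\br{-,-}_0$ is non-degenerate.

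The substantive computation is evaluating $(\Phi_0^\ast T)_{\disk(x)}$. I would proceed as follows. The bivector $T$ on $\uu(n)^\natural$ is $T=\frac12\sum_{a,b}\ip{e_a,\varphi(\ad(\xi))e_b}\,e_a\wedge e_b$; pulling back by $\Phi_0$ and pushing forward by the infinitesimal $\UU(n)$-action sends $e_a\mapsto (e_a)_{\disk(x)}$, where for $v\in\CC^n$ one has $(e_a)_{\disk(x)} = -e_a v$ acting on holomorphic coordinates (i.e. $(\xi)_{\disk(x)}(v_i) = -(\xi v)_i$, $(\xi)_{\disk(x)}(\bar v_i)=-\overline{(\xi v)_i}=(\xi \bar v)_i$ using $\xi^\dagger=-\xi$). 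The key point is that only the $\ad$-action of $\xi=\Phi_0(v)=\ic x vv^\dagger$ on $\uu(n)$ enters, and this is nilpotent-like: for any $\zeta\in\uu(n)$, $\ad(\xi)\zeta = [\ic x vv^\dagger,\zeta]$, and iterating, the relevant operator $\varphi(\ad(\xi))$ acts within the two-dimensional (real) span generated by $vv^\dagger$ and $v w^\dagger - w v^\dagger$ type elements. Concretely, I expect that $\ad(\xi)^2$ restricted to the image of the action map is multiplication by $-(x|v|^2)^2$ on the relevant subspace, so that $\varphi(\ad(\xi))$ collapses to $\varphi(\ic x |v|^2)$ times a projection; using $\btt(t)=2\ic\varphi(-\ic t)$ this produces the factor $\frac{\ic}{2}\btt(x|v|^2)$ in \eqref{Eq:qPB-C2}. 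Then $(\Phi_0^\ast T)_{\disk(x)}$ becomes, up to the correct normalization, a multiple of the bivector built from $\sum_a (e_a v)\otimes \overline{(e_a v)}$-type terms, and the identity \eqref{Eq:shortEE} turns $\sum_a (e_a)_{\disk(x)}\wedge(e_a)_{\disk(x)}$ evaluated on $dv_i,d\bar v_k$ into the combination $|v|^2\delta_{ik}-v_i\bar v_k$. Adding this to $P_0=\frac{\ic}{x}\sum_j \partial_{v_j}\wedge\partial_{\bar v_j}$ and reading off brackets on the coordinate functions yields precisely \eqref{Eq:qPB-C2}, and the purely holomorphic (resp. antiholomorphic) brackets remain zero because $T$ contributes only mixed terms after the wedge is contracted against $dv\wedge d\bar v$.

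The main obstacle I anticipate is the bookkeeping in the spectral computation of $\varphi(\ad(\Phi_0(v)))$ and in matching signs and factors of $2$ between the conventions \eqref{Eq:Cartan3}, \eqref{Eq:biv-T}, \eqref{Eq:ipU}, \eqref{Eq:shortEE}, and the $\btt$-normalization \eqref{Eq:ab-fct}; in particular one must be careful that wedge products carry no $1/k!$ factor here. An alternative, which the paper itself flags (Appendix \ref{sec:A}), is to bypass the direct exponentiation computation: one can instead \emph{verify} that the bracket \eqref{Eq:qPB-C2} together with $\Phi=\exp(\ic x vv^\dagger)$ satisfies the defining conditions of a Hamiltonian quasi-Poisson manifold — i.e. check the fundamental identity $[P,P]=\phi_{\disk(x)}$ (equivalently \eqref{Eq:JacPhi}) on the generators $v_i,\bar v_i$, and check the moment map condition \eqref{Eq:momap} for $\Phi$ — and then invoke uniqueness from the correspondence of Theorem \ref{thm:AKSM-exp}/Theorem \ref{Thm:Corr} to identify it with the exponentiation of \eqref{Eq:CnP-2}. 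I would present the direct exponentiation route as the main proof since it is conceptually cleanest and makes the hypothesis $|v|^2<2\pi/|x|$ transparent, relegating the verification approach to the remark about Appendix \ref{sec:A}. Non-degeneracy then requires no separate argument, as it is part of the output of Theorem \ref{thm:AKSM-exp}.
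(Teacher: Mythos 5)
Your proposal follows essentially the same route as the paper's proof of Proposition \ref{Pr:qP-Cn}: invoke Theorem \ref{thm:AKSM-exp} after checking $\Phi_0(\disk(x))\subset\uu(n)^\natural$ via the eigenvalues $0,\pm\ic x|v|^2$ of $\ad(\ic x vv^\dagger)$, compute $\varphi(\ad(\Phi_0))$ by exploiting the rank-one structure of $vv^\dagger$, and contract with \eqref{Eq:shortEE} to recover \eqref{Eq:qPB-C2}, with non-degeneracy coming for free from the symplectic origin of $\br{-,-}_0$. One small caveat: the vanishing of $\br{v_i,v_k}$ is not because $(\Phi_0^\ast T)_{\disk(x)}$ contributes ``only mixed terms'' (the infinitesimal vector fields $(e_a)_{\disk(x)}$ have both holomorphic and antiholomorphic components), but because the two terms produced by \eqref{Eq:shortEE} cancel, as the paper's explicit computation in \S\ref{ss:Proof1} shows.
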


\begin{proposition} \label{Pr:qSymp}
 The structure of quasi-Hamiltonian manifold on $\disk(x)$ corresponding uniquely as in Theorem \ref{Thm:Corr} to the structure of Hamiltonian quasi-Poisson manifold from Proposition \ref{Pr:qP-Cn} is given by the same moment map $\Phi:\disk(x)\to \UU(n)$ with the $2$-form
 \begin{equation}
 \begin{aligned} \label{Eq:qSymp}
  \omega_{\disk(x)}
&=\ic \frac{\sin(x |v|^2)}{|v|^2} \sum_{j=1}^n dv_j \wedge d\bar{v}_j
+ \left(\ic x - \frac{\ic \sin(x |v|^2)}{|v|^2} \right) \sum_{j=1}^n \frac{|v_j|^2}{|v|^2} dv_j \wedge d\bar{v}_j \\
&\quad + \left(\ic x - \frac{\ic \sin(x |v|^2)}{|v|^2} \right)
\sum_{j \neq k} \left(\frac{\bar{v}_j \bar{v}_k}{|v|^2} dv_j \wedge dv_k + \frac{v_jv_k}{|v|^2} d\bar{v}_j \wedge d\bar{v}_k
+ \frac{\bar{v}_jv_k}{|v|^2} d v_j \wedge d\bar{v}_k \right).
 \end{aligned}
 \end{equation}
\end{proposition}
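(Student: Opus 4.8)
The plan is to obtain $\omega_{\disk(x)}$ by pulling back the $2$-form $\varpi$ from \eqref{Eq:varpi} along $\Phi_0$ and using Theorem \ref{thm:AMM-exp}, i.e.\ to compute $\omega_{\disk(x)} = \omega_0 - \Phi_0^\ast \varpi$ with $\omega_0 = \ic x \sum_j dv_j\wedge d\bar v_j$ and $\Phi_0(v) = \ic x\, vv^\dagger$. By Theorem \ref{Thm:Corr}, the resulting quasi-Hamiltonian structure is the unique one compatible with the quasi-Poisson structure of Proposition \ref{Pr:qP-Cn}, so it suffices to verify that the stated $\omega_{\disk(x)}$ equals $\omega_0 - \Phi_0^\ast\varpi$ (equivalently, one may verify directly that it satisfies (B1)--(B3) together with the compatibility \eqref{Eq:corrPOm}, but the exponentiation route is cleaner). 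Since the construction is manifestly $\UU(n)$-equivariant, it is enough to evaluate $\omega_{\disk(x)}$ at a representative point of each $\UU(n)$-orbit; every orbit in $\disk(x)\setminus\{0\}$ contains a point of the form $v = (r,0,\dots,0)$ with $0 < r < \sqrt{2\pi/|x|}$, which will drastically simplify the linear algebra.

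First I would set up the computation of $\Phi_0^\ast\varpi$. Writing $\xi := \Phi_0(v) = \ic x\, vv^\dagger \in \uu(n)$, the key observation is that $\xi$ has rank one: its only nonzero eigenvalue is $\ic x |v|^2$ on the line $\CC v$, and it annihilates $v^\perp$. Hence $\ad(\xi)$ acts on $\uu(n)$ with eigenvalues $0$ (on the stabilizer $\uu(1)\oplus\uu(n-1)$) and $\pm\ic x|v|^2$ (on the off-diagonal blocks coupling $\CC v$ to $v^\perp$), which is exactly why the functions $\sin(x|v|^2)$ and the scalar $x$ appear. Using the standard formula $\varpi = \frac12 \int_0^1 \langle \exp_s^\ast\theta^R \stackrel{\wedge}{,}\partial_s \exp_s^\ast\theta^R\rangle\, ds$ and the identity $\exp_s^\ast\theta^R = \frac{1-e^{-s\,\ad(\cdot)}}{\ad(\cdot)}\, d(\cdot)$ (the standard expression for the right Maurer--Cartan pullback of the exponential), the pullback $\Phi_0^\ast\varpi$ becomes a quadratic expression in $d\xi = \ic x\, d(vv^\dagger)$ with coefficients given by known trigonometric functions of $\ad(\xi)$ evaluated via the spectral decomposition above; one then rewrites $d(vv^\dagger)$ in terms of $dv_j, d\bar v_j$.

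Second, I would carry out the evaluation at $v = (r,0,\dots,0)$. Here $d(vv^\dagger)$ has a $(1,1)$-entry $r(dv_1 + d\bar v_1)$ (which, paired with itself in the antisymmetrized bracket, drops out), entries $r\, dv_j$ in the first column and $r\, d\bar v_j$ in the first row for $j \geq 2$, and zero in the lower-right $(n-1)\times(n-1)$ block. Feeding these into the spectral formula for $\Phi_0^\ast\varpi$ produces exactly the three groups of terms in \eqref{Eq:qSymp} once one reintroduces full $\UU(1)^{}\times\UU(n)$-covariance by replacing $r^2 \to |v|^2$, $r^2 dv_j\wedge d\bar v_j \to |v_j|^2 dv_j\wedge d\bar v_j$, etc.; the diagonal coefficient $\ic\sin(x|v|^2)/|v|^2$ comes from the eigenvalue $\pm\ic x|v|^2$ sector and the remaining $\ic x$ pieces come from the kernel-of-$\ad$ direction $dv_1$, consistently with $\omega_0$. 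Finally I would note that $\omega_{\disk(x)}$ extends smoothly across $v = 0$ because $\sin(x|v|^2)/|v|^2 \to x$ there. \textbf{The main obstacle} I anticipate is purely bookkeeping: carefully tracking the normalization conventions (no $1/2$ in wedge products; the sign convention in Definition \ref{def:om-qHam} vs.\ \cite{AMM}) and correctly resumming the orbit-representative answer into the fully covariant form \eqref{Eq:qSymp}, in particular getting the coefficient of the mixed term $\bar v_j v_k\, dv_j\wedge d\bar v_k$ right. An alternative that sidesteps $\varpi$ entirely would be to invert the compatibility relation \eqref{Eq:corrPOm}: knowing $P = P_{\disk(x)}$ explicitly from \eqref{Eq:qPB-C2} and knowing $\Phi$, solve \eqref{Eq:corrPOm} for $\omega^\flat$ at the representative point $v=(r,0,\dots,0)$ and check it against \eqref{Eq:qSymp}; this may in fact be the shorter route and I would fall back on it if the $\varpi$ integral proves unwieldy.
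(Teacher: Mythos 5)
Your proposal follows essentially the same route as the paper: both obtain $\omega_{\disk(x)}=\omega_0-\Phi_0^\ast\varpi$ via Theorem \ref{thm:AMM-exp}, exploiting that $\ad(\Phi_0(v))$ has only the eigenvalues $0$ and $\pm\ic x|v|^2$ because $vv^\dagger$ has rank one. The only execution differences are that the paper shortcuts the $\varpi$-integral by quoting the closed form $\varpi_\lambda(\xi,\zeta)=\langle\frac{\ad(\lambda)-\sinh(\ad(\lambda))}{\ad(\lambda)^2}\xi,\zeta\rangle$ from \cite{HJS} and then computes covariantly (so no orbit-representative/resummation step is needed), whereas you would recover that formula from the integral and work at $v=(r,0,\dots,0)$ — a valid variant, provided you also note that the target expression \eqref{Eq:qSymp} is itself $\UU(n)$-invariant so that agreement at representative points suffices.
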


\begin{remark} \label{Rem:qP-disk}
The bivector that defines the quasi-Poisson bracket \eqref{Eq:qPB-C2} is given by
\begin{equation} \label{Eq:qBiv}
P_{\disk(x)}=\frac{\ic}{2x} [2+x|v|^2\btt(x |v|^2)]\sum_{1\leq j\leq n} \frac{\partial}{\partial v_j}\wedge \frac{\partial}{\partial \bar{v}_j}
 - \frac{\ic}{2}\btt(x |v|^2) |v|^2 \sum_{1\leq j,k\leq n}\frac{\partial}{\partial v_j}\wedge \frac{\partial}{\partial \bar{v}_k} \,.
\end{equation}
Furthermore, the results of Propositions \ref{Pr:qP-Cn} and \ref{Pr:qSymp} hold on any `annulus' $\operatorname{A}_k(x)=\{v\in \CC^n \mid \frac{2k\pi}{|x|}<|v|^2<\frac{2(k+1)\pi}{|x|}\}$ with $k=1,2,\ldots$, or any union of these annuli and the ball $\disk(x)$.
This follows because the non-zero eigenvalues of (the complexification of) $\ad(\Phi_0(v))$ are $\pm \ic x|v|^2$, and therefore $\Phi_0(v)\in \uu(n)^\natural$ if and only if $v$ belongs either to $\disk(x)$ or to $\operatorname{A}_k(x)$ for some $k$.
Let us finally note that we have checked explicitly the correspondence \eqref{Eq:corrPOm} for $P_{\disk(x)}$ \eqref{Eq:qBiv} and $\omega_{\disk(x)}$ \eqref{Eq:qSymp}; it suffices to do this at the points of the form $v=(z,0,\ldots,0)$.
\end{remark}

\subsubsection{Proof of Proposition \ref{Pr:qP-Cn}} \label{ss:Proof1}

As noted in Remark \ref{Rem:qP-disk},  $\Phi_0$ restricted to $\disk(x)$ takes value in $\uu(n)^\natural$, hence we can use Theorem \ref{thm:AKSM-exp}.
The only non-trivial task consists in checking that the bivector $P_0-(\Phi_0^\ast T)_{\disk(x)}$ gives the desired quasi-Poisson bracket in the form \eqref{Eq:qPB-C2}.
Here, $T$ is defined in \eqref{Eq:biv-T} and the infinitesimal action $\zeta_{\disk(x)}$ of any $\zeta\in \uu(n)$ comes from the action of $\UU(n)$ by left multiplication on $\disk(x)$, which satisfies $\zeta_{\disk(x)}(v_i)=-(\zeta v)_i$ and $\zeta_{\disk(x)}(\bar{v}_i)=(v^\dagger\zeta)_i$.

Fix $\zeta\in \uu(n)$.
We start by computing the left multiplication of $v\in \Cinf(\disk(x),\CC^n)$ by  $\varphi(\ad(\Phi_0))(\zeta)\in \Cinf(\disk(x),\uu(n))$.
Since for any $n\geq 1$ and $\xi\in \uu(n)$ we have the binomial identity $\ad(\xi)^n(\zeta)=\sum_{k=0}^n (-1)^k \binom{n}{k} \xi^{n-k} \zeta \xi^k$, we can write
\begin{equation*}
  \ad(\Phi_0)^n(\zeta)v
=- (-\ic x |v|^2)^{n} \frac{vv^\dagger}{|v|^2}\,\zeta v
 + (-\ic x |v|^2)^n  \zeta v.
\end{equation*}
Using the analytic expansion of $\varphi$ at $0\in (-2\pi \ic,2\pi \ic)$ noting that $\varphi(0)=0$, we can write in terms of the function $\btt(t)$ defined by \eqref{Eq:ab-fct} that
\begin{equation} \label{Eq:varphi-v}
 \varphi(\ad(\Phi_0))(\zeta)\,v
=-\varphi(-\ic x |v|^2) \frac{vv^\dagger}{|v|^2}\, \zeta v + \varphi(-\ic x |v|^2) \, \zeta v
=\frac{\ic}{2} \btt(x |v|^2)  \left( \frac{v^\dagger \zeta v}{|v|^2} v \,-\, \zeta v\right)\,.
\end{equation}

Next, we note from \eqref{Eq:Tab-id} and the fact that $\varphi$ is odd that we can write the bivector obtained from the infinitesimal action of $\Phi_0^\ast T$ in the following form
\begin{equation*}
 (\Phi_0^\ast T)_{\disk(x)}
 = \sum_{b\in \mathtt{A}}\, (\varphi(\ad(\Phi_0^L))e_b)_{\disk(x)}\otimes (e_b)_{\disk(x)}\,.
\end{equation*}
We denote by $\br{-,-}_{\exp}$ the associated bracket, and we want to see that $\br{-,-}_0-\br{-,-}_{\exp}$ (where $\br{-,-}_0$ is the Poisson bracket from \eqref{Eq:CnP-2}) coincides with the quasi-Poisson bracket $\br{-,-}_{\disk(x)}$ stated in \eqref{Eq:qPB-C2}.
On the coordinate functions $(v_i,v_k)$, $1\leq i,k\leq n$, we have by successively applying \eqref{Eq:varphi-v} and  \eqref{Eq:shortEE}
\begin{equation*}
\begin{aligned}
 \br{v_i,v_k}_{\exp}
 &=\frac{\ic}{2} \btt(x |v|^2) \sum_{b\in \mathtt{A}}\,  \left( \frac{v^\dagger e_b v}{|v|^2} v_i \,-\, (e_b v)_i\right) (e_b v)_k \\
&=-\frac{\ic}{2} \btt(x |v|^2) \sum_{1\leq j,l\leq n}\,  \left( \frac{v^\dagger E_{jl} v}{|v|^2} v_i \,-\, (E_{jl} v)_i\right) (E_{lj} v)_k=0\,.
\end{aligned}
\end{equation*}
Hence $\br{v_i,v_k}_0-\br{v_i,v_k}_{\exp}=0$, in agreement with the first identity in \eqref{Eq:qPB-C2}.
We get $\br{\bar v_i,\bar v_k}_0-\br{\bar v_i,\bar v_k}_{\exp}=0$ by a similar argument or by reality of the bracket.
In the same way, we obtain
\begin{equation*}
\begin{aligned}
 \br{v_i,\bar{v}_k}_{\exp}
 &=-\frac{\ic}{2} \btt(x |v|^2) \sum_{b\in \mathtt{A}}\,  \left( \frac{v^\dagger e_b v}{|v|^2} v_i \,-\, (e_b v)_i\right) (v^\dagger e_b)_k  =-\frac{\ic}{2} \btt(x |v|^2) (\delta_{ik} |v|^2- v_i \bar{v}_k)\,.
\end{aligned}
\end{equation*}
We directly see that $\br{v_i,\bar{v}_k}_0-\br{v_i,\bar{v}_k}_{\exp}$ agrees with \eqref{Eq:qPB-C2}.  \qed

\subsubsection{Proof of Proposition \ref{Pr:qSymp}} \label{ss:Proof2}

The quasi-Hamiltonian structure is obtained by exponentiation using Theorem \ref{thm:AMM-exp} from the symplectic form $\omega_0$ given at the beginning of \S\ref{ss:Disk}. The fact that it corresponds to the quasi-Poisson structure from Proposition \ref{Pr:qP-Cn} is a consequence of Theorem \ref{thm:AMM-exp} (in fact, the correspondence \eqref{Eq:corrPOm} implies that the bivector $P$ is non-degenerate).
To compute $\omega_{\disk(x)}$, we build on the analogous derivation\footnote{The reference \cite{HJS} follows the conventions of \cite{AMM} regarding quasi-Hamiltonian manifolds, while we follow \cite{AKSM}. This results in a sign change in addition to a difference of factor in the chosen inner product on $\uu(n)$.} due to Hurtubise, Jeffrey and Sjamaar \cite[Lemma B.1]{HJS}.

We start with the following general result (see e.g. \cite[\S2]{HJS}) regarding the $2$-form $\varpi$ defined in \eqref{Eq:varpi} : for any $\lambda\in \uu(n)$ and $\xi,\zeta\in T_\lambda\uu(n)$, we have
 \begin{equation}  \label{Eq:varpi1}
  \varpi_\lambda(\xi,\zeta) = \left\langle\frac{\ad(\lambda) - \sinh(\ad(\lambda))}{\ad(\lambda)^2}\xi, \zeta\right\rangle\,.
 \end{equation}
In the last equation, the function appearing in the inner product admits the analytic expansion
\begin{equation} \label{Eq:varpi-pf2}
 g(z):=\frac{z-\sinh(z)}{z^2}=-\sum_{k\geq 1} \frac{z^{2k-1}}{(2k+1)!}\,,
\end{equation}
and we can write at $\lambda=\Phi_0(v)=\ic x vv^\dagger$ for $\xi\in T_\lambda \uu(n)$
\begin{equation*}
g(\ad(\ic x vv^\dagger))\xi =-  \sum_{k\geq 1} \frac{1}{(2k+1)!} \sum_{j=0}^{2k-1} \binom{2k-1}{j} (\ic x vv^\dagger)^j \xi (-\ic x vv^\dagger)^{2k-j-1} \,.
 \end{equation*}
We then find after elementary calculations that $g(\ad(\ic x vv^\dagger))\xi = g(\ic x |v|^2)\frac{1}{|v|^2} (- \xi vv^\dagger +  vv^\dagger \xi )$. Combined with \eqref{Eq:varpi1}, this yields
\begin{equation*}
\varpi_{\Phi_0(v)}(\xi,\zeta) = -\frac{g(\ic x |v|^2)}{|v|^2} \, \tr(vv^\dagger [\xi,\zeta])\,.
\end{equation*}
To get $(\Phi_0^\ast \varpi)_v$, we need to consider this formula for $\xi, \zeta$ of the form $(d\Phi_0)_vw$ with $w$ a tangent vector at $v\in \disk(x)$.
Noting that $(d\Phi_0)_v=\ic x d(vv^\dagger)$
together with the identity $g(\ic x |v|^2)=\frac{-\ic}{x |v|^2} + \frac{\ic}{x^2 |v|^4} \sin(x |v|^2)$ obtained from  \eqref{Eq:varpi-pf2}, we get
 \begin{equation*}
(\Phi_0^\ast\varpi)_v=\,-\left(\ic x -\ic \frac{\sin(x |v|^2)}{|v|^2} \right)\,\, \tr\left(\frac{vv^\dagger}{|v|^4} \,d(vv^\dagger)\wedge d(vv^\dagger)\right)\,.
 \end{equation*}
This can be used to write $\omega_{\disk(x)}=\omega_0-\Phi_0^\ast\varpi$ as
\begin{equation}
\omega_{\disk(x)}
=\ic x \sum_{j=1}^n dv_j \wedge d\bar{v}_j + \left(\ic x - \frac{\ic \sin(x |v|^2)}{|v|^2} \right)
\tr\left[ \frac{vv^\dagger}{|v|^4} (dv\, v^\dagger + v dv^\dagger) \wedge (dv\, v^\dagger + v dv^\dagger) \right] \,,
\end{equation}
and after simplifications this is \eqref{Eq:qSymp}.  \qed

\section{Master phase space}
\label{S:Master}

Here, we first present the explicit form of a Hamiltonian quasi-Poisson structure on the master phase space $\MM_d$
that is obtained  by performing fusion of the  double $\D(\UU(n))$ from
Proposition \ref{Pr:qP-intDble} with $d\geq 1$ copies of the Hamiltonian quasi-Poisson ball $\disk(x)$ from Proposition \ref{Pr:qP-Cn}.  Then, for $d\geq 2$, we construct a whole pencil of quasi-Poisson brackets  admitting
 \emph{the same} moment map $\Phi:\MM_d\to \UU(n)$ given in \eqref{Eq:Mast-phi-1}.
We prove that any quasi-Poisson bracket from the pencil is non-degenerate by  exhibiting the corresponding quasi-Hamiltonian structure.

\subsection{Main quasi-Poisson structure}

For $d\geq1$, let us fix $x_1,\ldots,x_d\in \R\setminus\{0\}$ which are the parameters of $d$ quasi-Poisson balls.

\begin{theorem} \label{Thm:qH-Master}
The fusion space
\begin{equation} \label{Eq:Mast-Md}
 \MM_d:=\MM_d(x_1,\ldots,x_d):=\D(\UU(n))\circledast \disk(x_1)\circledast \cdots \circledast \disk(x_d)
\end{equation}
parameterized by $(A,B)\in \D(\UU(n))$ and $v_\alpha \in \disk(x_\alpha)$, $1\leq \alpha \leq d$,
with the $\UU(n)$-action
 \begin{equation} \label{Eq:act-Mast}
 \mathcal{A}:\UU(n)\times \MM_d \to \MM_d\,, \quad
g\cdot (A,B,v_1,\ldots,v_d)=(gAg^{-1},gBg^{-1},g v_1,\ldots,g v_d)\,,
 \end{equation}
admits a structure of Hamiltonian quasi-Poisson manifold for the moment map
\begin{equation}
\Phi:\MM_d\to \UU(n), \quad
\Phi(A,B,v_\alpha)=ABA^{-1}B^{-1}\exp(\ic x_1 v_1v_1^\dagger)\cdots \exp(\ic x_d v_dv_d^\dagger)\,. \label{Eq:Mast-phi-1}
\end{equation}
The corresponding quasi-Poisson bracket satisfies for $1\leq i,j,k,l\leq n$,
\begin{equation}
 \begin{aligned}  \label{Eq:br-AB}
\br{A_{ij} , A_{kl} } =& -\frac12 (A^2)_{kj} \delta_{il} + \frac12 \delta_{kj} (A^2)_{il}\,,  \quad
\br{B_{ij} , B_{kl} } = +\frac12 (B^2)_{kj} \delta_{il} - \frac12 \delta_{kj} (B^2)_{il}\,,  \\
\br{A_{ij} , B_{kl} } =& -\frac12 \Big(\delta_{kj} (AB)_{il} + (BA)_{kj}\delta_{il} + B_{kj}A_{il} - A_{kj}B_{il} \Big)\,,
 \end{aligned}
\end{equation}
as well as for any $1\leq \alpha \leq d$,
\begin{equation}
 \begin{aligned}  \label{Eq:br-ABv}
\br{A_{ij},(v_\alpha)_{k}}= \frac12 A_{kj} (v_\alpha)_i - \frac12 \delta_{kj} (Av_\alpha)_i \,, \quad
\br{B_{ij},(v_\alpha)_{k}}= \frac12 B_{kj} (v_\alpha)_i - \frac12 \delta_{kj} (Bv_\alpha)_i  \,,
 \end{aligned}
\end{equation}
and for any $1\leq \alpha,\beta \leq d$,
\begin{equation} \label{Eq:br-VVall}
 \begin{aligned}
\br{(v_\alpha)_i,(v_\beta)_{k}}=& \frac12 \sgn(\beta-\alpha)\,   (v_\alpha)_k (v_\beta)_i  \,, \quad
\br{(\bar{v}_\alpha)_j,(\bar{v}_\beta)_{l}}= \frac12 \sgn(\beta-\alpha)\,   (\bar{v}_\alpha)_l (\bar{v}_\beta)_j\,, \\
\br{(v_\alpha)_i,(\bar{v}_\beta)_l}=&
\frac{\ic}{x_\alpha}\delta_{\alpha\beta} \delta_{il}
+ \frac{\ic}{2}\delta_{\alpha\beta} \btt(x_\alpha |v_\alpha|^2) \left[ |v_\alpha|^2 \delta_{il} - (v_\alpha)_i (\bar{v}_\beta)_l \right]
-\frac12 \sgn(\beta-\alpha)\,  \delta_{il} (v_\beta^\dagger v_\alpha) \,,
 \end{aligned}
\end{equation}
where for any $N\in \Z$ we set $\sgn(N)=+1$ if $N>0$, $\sgn(N)=-1$ if $N<0$, and $\sgn(0)=0$.
\end{theorem}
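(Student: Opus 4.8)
The plan is to derive Theorem~\ref{Thm:qH-Master} directly from the iterated fusion product of Proposition~\ref{Pr:Fus}, using the three building blocks already at our disposal: the internally fused double $\D(\UU(n))$, which by Proposition~\ref{Pr:qP-intDble} is a Hamiltonian quasi-Poisson manifold with moment map $(A,B)\mapsto ABA^{-1}B^{-1}$, bracket \eqref{Eq:intDble}, and $\UU(n)$ acting by simultaneous conjugation; and the $d$ quasi-Poisson balls $\disk(x_\alpha)$, which by Proposition~\ref{Pr:qP-Cn} are Hamiltonian quasi-Poisson manifolds with moment map $v_\alpha\mapsto\exp(\ic x_\alpha v_\alpha v_\alpha^\dagger)$ and $\UU(n)$ acting by left multiplication. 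Since fusion preserves the class of Hamiltonian quasi-Poisson manifolds and is associative, the space $\MM_d=\D(\UU(n))\circledast\disk(x_1)\circledast\cdots\circledast\disk(x_d)$ of \eqref{Eq:Mast-Md} is automatically a Hamiltonian quasi-Poisson manifold; by construction the $\UU(n)$-action on it is the diagonal one, i.e.\ exactly \eqref{Eq:act-Mast}, and the moment map is the ordered product $\Phi_{\D(\UU(n))}\,\Phi_{\disk(x_1)}\cdots\Phi_{\disk(x_d)}$, which is \eqref{Eq:Mast-phi-1}. The only substantive task is therefore to read off the explicit bracket formulas \eqref{Eq:br-AB}--\eqref{Eq:br-VVall}.

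For this I would work along the tower $M_0:=\D(\UU(n))$, $M_\alpha:=M_{\alpha-1}\circledast\disk(x_\alpha)$, applying at each step the fusion rule of Proposition~\ref{Pr:Fus}: the bracket of $M_\alpha$ is $\br{-,-}_{M_{\alpha-1}}+\br{-,-}_{\disk(x_\alpha)}-\br{-,-}_{P_{M_{\alpha-1},\disk(x_\alpha)}}$ with $P_{M_{\alpha-1},\disk(x_\alpha)}=\tfrac12\sum_{a\in\mathtt{A}}(e_a)_{M_{\alpha-1}}\wedge(e_a)_{\disk(x_\alpha)}$ as in \eqref{Eq:Fus-Psi}. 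The decisive bookkeeping point is that $(e_a)_{\disk(x_\alpha)}$ acts nontrivially only on $v_\alpha,\bar v_\alpha$ — explicitly $\zeta_{\disk(x_\alpha)}((v_\alpha)_i)=-(\zeta v_\alpha)_i$ and $\zeta_{\disk(x_\alpha)}((\bar v_\alpha)_i)=(v_\alpha^\dagger\zeta)_i$ — whereas $(e_a)_{M_{\alpha-1}}$ acts by conjugation on $A,B$ and by left multiplication on $v_\beta$ for $\beta<\alpha$. It follows that: (i) the $A$-$A$, $B$-$B$ and $A$-$B$ brackets receive no correction at any step and coincide with the $\D(\UU(n))$ brackets, yielding \eqref{Eq:br-AB}; (ii) the bracket of $A_{ij}$ or $B_{ij}$ with $(v_\alpha)_k$ is corrected only at the step $M_\alpha=M_{\alpha-1}\circledast\disk(x_\alpha)$, and that single correction gives \eqref{Eq:br-ABv}; (iii) for fixed $\alpha$ the brackets among $(v_\alpha)_i,(\bar v_\alpha)_j$ come entirely from the $\disk(x_\alpha)$ factor, since the correction at step $\alpha$ involves $(e_a)_{M_{\alpha-1}}$ which annihilates $v_\alpha$, reproducing the $\alpha=\beta$ part of \eqref{Eq:br-VVall}; and (iv) for $\alpha\neq\beta$ a bracket between a $v_\alpha/\bar v_\alpha$-function and a $v_\beta/\bar v_\beta$-function is corrected only at the step fusing in $\disk(x_{\max(\alpha,\beta)})$, where the other argument already lies in $M_{\max(\alpha,\beta)-1}$; this is the source of the $\sgn(\beta-\alpha)$ terms.

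Each of the remaining computations is then short, the only extra input being the contraction identity \eqref{Eq:shortEE}, $\sum_a e_a\otimes e_a=-\sum_{k,l}E_{kl}\otimes E_{lk}$. For instance, when $\alpha<\beta$,
\[
\br{(v_\alpha)_i,(v_\beta)_k}=-\br{(v_\alpha)_i,(v_\beta)_k}_{P_{M_{\beta-1},\disk(x_\beta)}}=-\tfrac12\sum_a\big(-(e_a v_\alpha)_i\big)\big(-(e_a v_\beta)_k\big)=\tfrac12(v_\alpha)_k(v_\beta)_i
\]
by \eqref{Eq:shortEE}, which is \eqref{Eq:br-VVall} with $\sgn(\beta-\alpha)=+1$; the $\bar v$-$\bar v$ and $v$-$\bar v$ cross terms and the $A$-$v_\alpha$, $B$-$v_\alpha$ brackets are obtained identically, the latter using the conjugation action on $A,B$ together with \eqref{Eq:shortEE}.

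I expect the main obstacle to be nothing more than sign and index discipline: the fusion term enters with a minus sign, the infinitesimal action \eqref{EqinfVectM} carries its own minus sign, $\uu(n)$ is anti-Hermitian, and fusion is \emph{not} commutative, so one must consistently keep $M_{\alpha-1}$ and $\disk(x_\alpha)$ in their respective roles — interchanging them would flip the sign of the $\sgn(\beta-\alpha)$ contributions. One should also check carefully that no correction at a step $\gamma\notin\{\alpha,\beta\}$ leaks into the $v_\alpha$-$v_\beta$ bracket: such a correction pairs $(e_a)_{\disk(x_\gamma)}$, supported on $v_\gamma$ alone, with a vector field on $M_{\gamma-1}$, so at least one of the two arguments sits in the wrong tensor factor and the term vanishes.
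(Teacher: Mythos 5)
Your proposal is correct and follows essentially the same route as the paper's proof: an inductive application of the fusion product of Proposition~\ref{Pr:Fus} along the tower $\D(\UU(n))\circledast\disk(x_1)\circledast\cdots\circledast\disk(x_d)$, with the observation that each bracket receives a fusion correction only at the single step where the later of its two arguments is fused in, and the explicit evaluation via the contraction identity \eqref{Eq:shortEE}. Your sample computation of $\br{(v_\alpha)_i,(v_\beta)_k}$ for $\alpha<\beta$ matches the paper's, and your cautionary remarks on signs and on the non-commutativity of fusion are exactly the points the paper's proof implicitly relies on.
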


\begin{proof}
We prove the result by induction on $d$. Allowing $d=0$ as a base case, the statement reduces to the structure of the internally fused double $\D(\UU(n))$ given in Proposition \ref{Pr:qP-intDble}.
Then, we apply the fusion method of Proposition \ref{Pr:Fus} to go
from $\MM_{d-1}\times \disk(x_d)$ to $\MM_d=\MM_{d-1}\circledast \disk(x_d)$.
Indeed, we have the desired moment map $\Phi$ \eqref{Eq:Mast-phi-1} when we take
$M=\MM_{d-1}$, $N=\disk(x_d)$ and $G=\UU(n)$ in Proposition \ref{Pr:Fus}.
So we only need to check the formula for the quasi-Poisson bracket.
By Proposition \ref{Pr:Fus}, it is given by taking the sum of the quasi-Poisson brackets on $\MM_{d-1}$ (which is in the statement of the theorem by induction) and on $N=\disk(x_d)$ (given by \eqref{Eq:qPB-C2} with $v=v_d$ and $x=x_d$) to which we add the bracket corresponding to the bivector
\begin{equation} \label{Eq:Mast-Psi}
-\psi_{\MM_{d-1},\disk(x_d)}= -\frac12 \sum_{a\in \mathtt{A}} (e_a)_{\MM_{d-1}}\wedge (e_a)_{\disk(x_d)} \,.
\end{equation}
Here, $(e_a)_{a\in \mathtt{A}}$ is an orthonormal basis of $\uu(n)$.
Thus,  we only need to check those brackets in the statement that involve the entries of $v_d$ (or $\bar{v}_d$) and the entries of one of the matrices forming $(A,B,v_1,\ldots,v_{d-1})\in \MM_{d-1}$, which arise from the added bivector \eqref{Eq:Mast-Psi}.

Thanks to \eqref{Eq:act-Mast}, the infinitesimal action of $\xi\in \uu(n)$ on $\MM_{d-1}$ is such that for $1\leq \alpha < d$,
\begin{equation} \label{Eq:Mast-InfAct}
\xi_{\MM_{d-1}}(A_{ij})=(A \xi)_{ij}-(\xi A)_{ij}\,, \quad
\xi_{\MM_{d-1}}((v_\alpha)_k)=-(\xi v_\alpha)_k\,, \quad
\xi_{\MM_{d-1}}((\bar{v}_\alpha)_k)=(v_\alpha^\dagger \xi)_k \,.
\end{equation}
The formulas for each $B_{ij}$ can be used verbatim by replacing $A$ with $B$.
The infinitesimal action on $\disk(x_d)$ is similar:
$\xi_{\disk(x_d)}((v_d)_k)=-(\xi v_d)_k$ and $\xi_{\disk(x_d)}((\bar{v}_d)_k)=(v_d^\dagger \xi)_k$.
It remains to compute the fusion brackets corresponding to the bivector \eqref{Eq:Mast-Psi}, which is easily done when one uses the identity  \eqref{Eq:shortEE}.

To establish the first equality in \eqref{Eq:br-ABv} with $\alpha=d$, we compute
\begin{equation}
\begin{aligned}
 \br{A_{ij},(v_d)_k}=&-\psi_{\MM_{d-1},\disk(x_d)}(A_{ij},(v_d)_k)
 =\frac12 \sum_{1\leq b,c\leq n}  [(A E_{bc})_{ij}-(E_{bc} A)_{ij}]\, [-(E_{cb} v_d)_k] \\
 =&- \frac12 \left[\sum_{1\leq b \leq n}\delta_{kj} A_{ib} (v_d)_b - A_{kj} (v_d)_i\right]
 =\frac12 A_{kj} (v_d)_i - \frac12 \delta_{kj} (A v_d)_i\,.
\end{aligned}
\end{equation}
This is the required identity, and the second equality in \eqref{Eq:br-ABv} is obtained by replacing $A$ with $B$ in the computations.
To get the first equality in \eqref{Eq:br-VVall} with $1\leq \alpha<d$ and $\beta=d$, we calculate
\begin{equation}
 \br{(v_\alpha)_i,(v_d)_k}
 =\frac12 \sum_{1\leq b,c\leq n}  (E_{bc} v_\alpha)_i\, (E_{cb} v_d)_k
 = \frac12 (v_\alpha)_k (v_d)_i\,.
\end{equation}
This agrees with the desired equality, and the case $\alpha=d$ and $1\leq \beta<d$ follows by antisymmetry.
The second equality in \eqref{Eq:br-VVall} follows from a similar computation, or by reality of the bracket. We finally have for  $1\leq \alpha<d$
\begin{equation}
 \br{(v_\alpha)_i,(\bar{v}_d)_l}
=-\frac12 \sum_{1\leq b,c\leq n}  (E_{bc} v_\alpha)_i\, (v^\dagger_d E_{cb})_l
 = - \frac12 \delta_{il} (v_d^\dagger v_\alpha)\,.
\end{equation}
This is the third equality in \eqref{Eq:br-VVall} with $\alpha<\beta=d$. A similar computation (or reality of the bracket) yields the case $d=\alpha>\beta$.
\end{proof}

\begin{remark} \label{Rem:Mast-Biv}
Due to the unitarity of $A,B$ and the reality of the quasi-Poisson bracket on $\MM_d$ defined in Theorem \ref{Thm:qH-Master}, let us remark that  \eqref{Eq:br-ABv} yields
\begin{equation} \label{Eq:br-ABvbar}
 \begin{aligned}
\br{A_{ij},(\bar{v}_\alpha)_{l}}= \frac12 A_{il} (\bar{v}_\alpha)_j - \frac12 \delta_{il} (v_\alpha^\dagger A)_j \,, \quad
\br{B_{ij},(\bar{v}_\alpha)_{l}}= \frac12 B_{il} (\bar{v}_\alpha)_j - \frac12 \delta_{il} (v_\alpha^\dagger B)_j  \,.
 \end{aligned}
\end{equation}
In fact, we can explicitly write down the quasi-Poisson bivector $P_{\MM_d}$ obtained in Theorem \ref{Thm:qH-Master}.
If $P_{\D(\UU(n))}$ and $P_{\disk^\alpha}$ denote the quasi-Poisson bivectors on $\D(\UU(n))$ and the $\alpha$-th ball $\disk(x_\alpha)$ constituting $\MM_d$ for $1\leq \alpha \leq d$ as given in Propositions \ref{Pr:qP-intDble} and \ref{Pr:qP-Cn}, respectively, we see from \eqref{Eq:Mast-Psi} and the inductive proof of Theorem \ref{Thm:qH-Master} that we can write
\begin{equation}  \label{Eq:qPMd}
P_{\MM_d}= P_{\D(\UU(n))}+ \sum_{1\leq \alpha \leq d} P_{\disk^\alpha}
-\frac12 \sum_{1\leq \alpha \leq d} \sum_{a\in \mathtt{A}} (e_a)_{\MM_{\alpha-1}}\wedge (e_a)_{\disk^\alpha}\,.
\end{equation}
Here $(e_a)_{a\in \mathtt{A}}$ is an orthonormal basis of $\uu(n)$ and for any $\xi\in \uu(n)$,
$\xi_{\disk^\alpha}$ is the infinitesimal action obtained from left multiplication on the $\alpha$-th ball, while
$\xi_{\MM_{\alpha-1}}\in \Vect(\MM_d)$ is a short-hand notation for the infinitesimal action obtained from the $\UU(n)$-action
 \begin{equation} \label{Eq:act-Res}
\begin{aligned}
  &\mathcal{A}^{\res}_{\alpha-1}:\UU(n)\times \MM_d \to \MM_d\,, \\
&g\cdot (A,B,v_1,\ldots,v_d)=(gAg^{-1},gBg^{-1},g v_1,\ldots,g v_{\alpha-1},v_{\alpha},\ldots, v_d)\,.
\end{aligned}
 \end{equation}
This is the restriction of the action \eqref{Eq:act-Mast} to $\D(\UU(n))$ and the first $(\alpha-1)$ balls, which only acts by simultaneous conjugation on $\D(\UU(n))$ for $\alpha=1$. The infinitesimal action of $\mathcal{A}^{\res}_{\alpha-1}$ can be split into the infinitesimal actions associated with simultaneous conjugation on $\D(\UU(n))$ and left multiplication on the first $(\alpha-1)$ balls. Therefore, we can present $P_{\MM_d}$ as
\begin{equation}\label{Eq:qPMd-b}
 \begin{aligned}
P_{\MM_d}=& P_{\D(\UU(n))}+P_{\text{mix}}+ P_{\text{spin}}  \,, \qquad \text{ where }\\
P_{\text{mix}}=&-\frac12 \sum_{1\leq \alpha \leq d} \sum_{a\in \mathtt{A}} (e_a)_{\D(\UU(n))}\wedge (e_a)_{\disk^\alpha}\,, \\
P_{\text{spin}}=& \sum_{1\leq \alpha \leq d} P_{\disk^\alpha}
-\frac12 \sum_{1\leq \alpha<\beta \leq d} \sum_{a\in \mathtt{A}} (e_a)_{\disk^\alpha}\wedge (e_a)_{\disk^\beta}\,.
 \end{aligned}
\end{equation}
\end{remark}

\begin{corollary} \label{cor:qHamMd}
 The Hamiltonian quasi-Poisson manifold $\MM_d$ presented in Theorem \ref{Thm:qH-Master} is non-degenerate.
 The corresponding quasi-Hamiltonian structure is given by $(\MM_d,\omega_{\MM_d},\Phi)$ for
\begin{equation} \label{Eq:om-Md}
 \omega_{\MM_d}:= \omega_{\D(\UU(n))}+ \sum_{1\leq \alpha \leq d} \omega_{\disk^\alpha}
-\frac12 \sum_{1\leq \alpha \leq d} \sum_{a\in \mathtt{A}} \Phi^\ast_{\MM_{\alpha-1}}\theta_a^L\wedge \Phi^\ast_{\alpha}\theta_a^R\,.
\end{equation}
Here, $\omega_{\D(\UU(n))}$ is defined in \eqref{Eq:om-DUn}, $\omega_{\disk^\alpha}$ is given by \eqref{Eq:qSymp} on the $\alpha$-th ball for any $1\leq \alpha \leq d$, and we set
\begin{equation*}
\Phi_{\MM_{\alpha-1}}= ABA^{-1}B^{-1}\exp(\ic x_1 v_1v_1^\dagger)\cdots \exp(\ic x_{\alpha-1} v_{\alpha-1} v_{\alpha-1}^\dagger)\,, \quad
\Phi_{\alpha}=\exp(\ic x_{\alpha} v_{\alpha} v_{\alpha}^\dagger)\,.
\end{equation*}
\end{corollary}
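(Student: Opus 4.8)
Proof plan: The plan is to argue by induction on $d$, in parallel with the inductive proof of Theorem \ref{Thm:qH-Master}, using the fact that the correspondence of Theorem \ref{Thm:Corr} between non-degenerate Hamiltonian quasi-Poisson $\UU(n)$-manifolds and quasi-Hamiltonian $\UU(n)$-manifolds is compatible with fusion, by \cite[Proposition 10.7]{AKSM} (the same fact already entered the proof of Proposition \ref{Pr:qP-intDble}). Spelled out, this means: if $(M,\br{-,-}_M,\Phi_M)$ and $(N,\br{-,-}_N,\Phi_N)$ are non-degenerate Hamiltonian quasi-Poisson $\UU(n)$-manifolds with attached quasi-Hamiltonian $2$-forms $\omega_M$ and $\omega_N$, then $M\circledast N$ is again non-degenerate and the $2$-form attached to it via Theorem \ref{Thm:Corr} is precisely the fusion $2$-form $\omega^{\mathrm{fus}}_{M,N}$ of Proposition \ref{Pr:Fus-omega}. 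Granting this, the whole statement follows by transporting the recursion $\MM_d = \MM_{d-1}\circledast \disk(x_d)$ of Theorem \ref{Thm:qH-Master} to the level of $2$-forms.

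For the base case $d=0$, $\MM_0 = \D(\UU(n))$ is non-degenerate with quasi-Hamiltonian $2$-form $\omega_{\D(\UU(n))}$ by Proposition \ref{Pr:qP-intDble}, and \eqref{Eq:om-Md} reduces to $\omega_{\D(\UU(n))}$ in this case. For the inductive step, I would assume $\MM_{d-1}$ is non-degenerate with $2$-form given by \eqref{Eq:om-Md} (with $d$ replaced by $d-1$) and with moment map $\Phi_{\MM_{d-1}} = ABA^{-1}B^{-1}\exp(\ic x_1 v_1 v_1^\dagger)\cdots \exp(\ic x_{d-1}v_{d-1}v_{d-1}^\dagger)$; the ball $\disk(x_d)$ is non-degenerate with $2$-form $\omega_{\disk^d}$ from \eqref{Eq:qSymp} and moment map $\Phi_d = \exp(\ic x_d v_d v_d^\dagger)$ by Proposition \ref{Pr:qSymp}. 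Applying Proposition \ref{Pr:Fus-omega} with $M = \MM_{d-1}$, $N = \disk(x_d)$ and the diagonal $\UU(n)$-action from \eqref{Eq:act-Mast}, $\MM_d$ is non-degenerate and its quasi-Hamiltonian $2$-form is
\begin{equation*}
 \omega_{\MM_d} = \omega_{\MM_{d-1}} + \omega_{\disk^d} - \frac12 \sum_{a\in \mathtt{A}} \Phi_{\MM_{d-1}}^\ast \theta_a^L \wedge \Phi_d^\ast \theta_a^R\,.
\end{equation*}
Substituting the inductive formula for $\omega_{\MM_{d-1}}$ and noting that $\Phi_{\MM_{d-1}}$ and $\Phi_d$ are exactly the factors denoted $\Phi_{\MM_{\alpha-1}}$ and $\Phi_{\alpha}$ with $\alpha = d$ in the statement, the last term above is the $\alpha = d$ summand of the sum in \eqref{Eq:om-Md}, so the expression telescopes to \eqref{Eq:om-Md}, as claimed.

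The only genuinely non-routine point --- the main obstacle --- is the bookkeeping of moment maps and actions across the fusion: one must make sure that the moment map of the fusion product $\MM_{d-1}$ feeding into Proposition \ref{Pr:Fus-omega} is indeed the ordered product $\Phi_{\MM_0}\Phi_1\cdots\Phi_{d-1}$ underlying \eqref{Eq:Mast-phi-1}, and that the $\UU(n)$-action used in the fusion is the simultaneous one of \eqref{Eq:act-Mast}; both follow by unwinding the inductive construction of Theorem \ref{Thm:qH-Master}, but they have to be tracked carefully because $\omega^{\mathrm{fus}}_{M,N}$ is sensitive to which factor plays the role of $M$ and which of $N$. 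As a sanity check --- and an alternative route avoiding the citation of \cite[Proposition 10.7]{AKSM} --- one could instead verify directly that $(\MM_d, \omega_{\MM_d}, \Phi)$ satisfies conditions (B1)--(B3) of Definition \ref{def:om-qHam} and that the compatibility identity \eqref{Eq:corrPOm} holds between $\omega_{\MM_d}$ and the bivector $P_{\MM_d}$ of \eqref{Eq:qPMd}; this is considerably more laborious, and, as in Remark \ref{Rem:qP-disk}, it would in practice suffice to check it at points of a convenient form, e.g.\ with each $v_\alpha$ having only its first entry non-zero.
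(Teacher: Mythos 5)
Your proposal is correct and follows essentially the same route as the paper: the paper's proof likewise proceeds by induction alongside Theorem \ref{Thm:qH-Master} using the quasi-Hamiltonian fusion product of Proposition \ref{Pr:Fus-omega}, and deduces compatibility (hence non-degeneracy) from the fact that the correspondence \eqref{Eq:corrPOm} is preserved by fusion \cite[Proposition 10.7]{AKSM}. Your additional remarks on the base case, the telescoping, and the moment-map bookkeeping are just a more explicit rendering of the same argument.
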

\begin{proof}
To get $\omega_{\MM_d}$, it suffices to proceed by induction as in the proof of Theorem \ref{Thm:qH-Master} with the quasi-Hamiltonian fusion product of Proposition \ref{Pr:Fus-omega}.
The fact that $\omega_{\MM_d}$ and $P_{\MM_d}$ are compatible follows because compatibility is preserved by fusion \cite[Proposition 10.7]{AKSM}. This entails that $(\MM_d,P_{\MM_d},\Phi)$ is non-degenerate.
\end{proof}

So far, we allowed the parameters $x_1,\ldots,x_d\in \R\setminus\{0\}$ to be completely arbitrary. However, the quasi-Poisson structure on $\disk(x)$ from Proposition \ref{Pr:qP-Cn} can be transformed into the one of $\disk(\sgn(x))$ through $v\mapsto v \sqrt{|x|}$.
After fusion, this observation can be rephrased as follows.
\begin{proposition} \label{Pr:IsoSign}
There is an isomorphism of Hamiltonian quasi-Poisson manifolds
\begin{equation*}
 \MM_d(x_1,\ldots,x_d) \mapsto \MM_d(\sgn(x_1),\ldots,\sgn(x_d))\,, \quad (A,B,v_\alpha)\mapsto (A,B,v_\alpha\sqrt{|x_\alpha|})\,.
\end{equation*}
\end{proposition}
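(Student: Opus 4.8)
The plan is to reduce the statement to the one-ball claim announced just before the proposition and then transport it through the fusion \eqref{Eq:Mast-Md}. For $x\in\R\setminus\{0\}$ let $r_x\colon\disk(x)\to\disk(\sgn(x))$ denote the map $v\mapsto w:=\sqrt{|x|}\,v$. Since $|v|^2<2\pi/|x|$ is equivalent to $|w|^2<2\pi=2\pi/|\sgn(x)|$, the map $r_x$ is a $\UU(n)$-equivariant diffeomorphism with inverse $w\mapsto w/\sqrt{|x|}$ (equivariance is immediate, scalar multiplication commuting with left multiplication by unitaries). First I would check that $r_x$ is an isomorphism of Hamiltonian quasi-Poisson manifolds onto $\disk(\sgn(x))$. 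The moment maps are intertwined exactly: at $w=\sqrt{|x|}\,v$ one has $\exp(\ic\,\sgn(x)\,ww^\dagger)=\exp(\ic\,\sgn(x)|x|\,vv^\dagger)=\exp(\ic x\,vv^\dagger)$ since $\sgn(x)|x|=x$. For the bivector I would substitute $w_j=\sqrt{|x|}\,v_j$, hence $\partial/\partial v_j=\sqrt{|x|}\,\partial/\partial w_j$, $|v|^2=|w|^2/|x|$, and $x|v|^2=\sgn(x)|w|^2$, into the explicit formula \eqref{Eq:qBiv}; after cancelling the powers of $|x|$ and using $|x|/x=1/\sgn(x)$, this turns $P_{\disk(x)}$ into \eqref{Eq:qBiv} with $x$ and $v$ replaced by $\sgn(x)$ and $w$, that is, into $P_{\disk(\sgn(x))}$. (Equivalently, one can verify $\br{w_i\circ r_x,\bar{w}_k\circ r_x}_{\disk(x)}=\br{w_i,\bar{w}_k}_{\disk(\sgn(x))}\circ r_x$ directly from \eqref{Eq:qPB-C2} by the same substitutions, the $w$-$w$ and $\bar w$-$\bar w$ brackets staying zero.)

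The second step is the routine observation that fusion is functorial for isomorphisms. If $f\colon M\to M'$ and $g\colon N\to N'$ are isomorphisms of Hamiltonian quasi-Poisson $\UU(n)$-manifolds, then $f\times g\colon M\circledast N\to M'\circledast N'$ is one as well: the fusion moment map is the pointwise product, so $(\Phi_{M'}\Phi_{N'})\circ(f\times g)=\Phi_M\Phi_N$; and since $f$ and $g$ are equivariant they push $(e_a)_M$ and $(e_a)_N$ forward to $(e_a)_{M'}$ and $(e_a)_{N'}$, hence they push the fusion bivector \eqref{Eq:Fus-Psi} forward to the target one, while on each factor $f$ and $g$ are quasi-Poisson by hypothesis. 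Applying this with $f=\id$ on $\D(\UU(n))$ and $g=r_{x_\alpha}$ on the $\alpha$-th ball, and iterating over $\alpha=1,\ldots,d$ along the fusion \eqref{Eq:Mast-Md}, yields an isomorphism $\MM_d(x_1,\ldots,x_d)\to\MM_d(\sgn(x_1),\ldots,\sgn(x_d))$ which on the underlying manifold is $(A,B,v_\alpha)\mapsto(A,B,\sqrt{|x_\alpha|}\,v_\alpha)$ and which intertwines the master moment maps \eqref{Eq:Mast-phi-1}, each factor $\exp(\ic x_\alpha v_\alpha v_\alpha^\dagger)$ being unchanged.

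I do not expect a real obstacle here; the proof is essentially bookkeeping. The only point that needs genuine attention is the rescaling of the argument of $\btt$: it is precisely the identity $\sgn(x)|w|^2=x|v|^2$ for $w=\sqrt{|x|}\,v$ that makes the moment maps and the bivectors transform correctly, and it also explains why the two balls carry the same moment map value at corresponding points. If one prefers to argue on the quasi-Hamiltonian side, the same substitutions give $r_x^\ast\,\omega_{\disk(\sgn(x))}=\omega_{\disk(x)}$ from \eqref{Eq:qSymp}, and the proposition then follows from the uniqueness in Theorem \ref{Thm:Corr} together with the compatibility of fusion with the correspondence \eqref{Eq:corrPOm}.
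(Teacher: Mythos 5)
Your proposal is correct and matches the paper's (unwritten) argument exactly: the paper simply observes that $v\mapsto v\sqrt{|x|}$ transforms the structure of $\disk(x)$ into that of $\disk(\sgn(x))$ and states that the proposition follows after fusion, which is precisely your two-step reduction. Your write-up just supplies the bookkeeping (the identity $\sgn(x)|w|^2=x|v|^2$ and the functoriality of fusion under equivariant isomorphisms) that the paper leaves implicit.
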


Although as smooth manifolds $\disk(1)$ and $\disk(-1)$ are the same open ball of radius $2\pi$ centered at $0\in \CC^n$, their quasi-Poisson brackets given in Proposition \ref{Pr:qP-Cn} are opposite of each other; this entails that their moment maps are inverses of each other.
There is no such relation \emph{after fusion} between the $2^d$ Hamiltonian quasi-Poisson structures defined on the isomorphic manifolds $\MM_d(s_1,\ldots,s_d)$, $(s_\alpha)\in \{\pm 1\}^d$ (which are the most fundamental according to Proposition \ref{Pr:IsoSign}).
For example, if we consider
$\MM_d(1,\ldots,1)$ and $\MM_d(s_1,\ldots,s_d)$ with $(s_\alpha)\in \{\pm 1\}^d \setminus\{(1,\ldots,1)\}$, their quasi-Poisson brackets only differ by the $\alpha=\beta$ case in the third identity of \eqref{Eq:br-VVall} for each $\alpha$ such that $s_\alpha=-1$.

\subsection{Pencil of compatible quasi-Poisson brackets} \label{ss:Pencil}

Fix a manifold $M$ with an action of a Lie group $G$ and a quasi-Poisson bivector $P \in \wedge^2 \Vect(M)$.
For $k\geq 1$, consider $k$ bivectors $Q_1,\ldots,Q_k\in \wedge^2 \Vect(M)$ independent from $P$, i.e. the elements $P_m,(Q_1)_m,\ldots,(Q_k)_m\in \wedge^2 T_mM$ are linearly independent at some point $m\in M$.
We say that these bivectors span a $k$-dimensional \emph{pencil of compatible quasi-Poisson bivectors centered} at $P$ if $P+\sum_{1\leq j \leq k}c_j Q_j$ is a quasi-Poisson bivector for any $c_1,\ldots,c_k\in \R$.

When restricting a pencil of compatible quasi-Poisson bivectors to invariant functions, we obtain a `usual' pencil of compatible Poisson brackets on $\Cinf(M)^G$. Indeed,  $c_0 P+\sum_{1\leq j \leq k}c_j Q_j$ defines a Poisson bivector for any $c_0,c_1,\ldots,c_k\in \R$.
However, before restriction to $\Cinf(M)^G$, there are important differences between our definition and the standard notion of compatibility on Poisson manifolds. If $P_1,P_2$ belong to a pencil centered at $P$, then $c_1 P_1+c_2 P_2$ is not a quasi-Poisson bivector for all $c_1,c_2\in \R$ and we do not have $[P_1,P_2]=0$.
As an example, if $Q\in \wedge^2 \Vect(M)$ is a nonvanishing Poisson bivector such that $[P,Q]=0$, then $P_1=P+Q$ belongs to the pencil $\{P+c Q \mid c\in \R\}$ centered at $P$.
However, if $\phi_M\neq 0$, $P-P_1$ is certainly not a quasi-Poisson bivector and we have $[c_1P_1+c_2 P,c_1P_1+c_2 P]=(c_1+c_2)^2 \phi_M$ which is a quasi-Poisson bivector only for $(c_1+c_2)^2=1$.

\medskip

Our aim is to exhibit a pencil of compatible quasi-Poisson bivectors on $\MM_d$, for any $d\geq 2$.
Consider the Hamiltonian quasi-Poisson structure of Theorem \ref{Thm:qH-Master}, and let $P_{\MM_d}$ \eqref{Eq:qPMd} be the corresponding quasi-Poisson bivector.  For any $1\leq \beta \leq d$, define an action of $\UU(1)$ on $\MM_d$ by
\begin{equation}  \label{Eq:act-U1}
\mathcal{A}_\beta^{\UU(1)}:\UU(1)\times \MM_d \to \MM_d\,, \qquad  \lambda \cdot (A,B,v_\alpha)=(A,B,\lambda^{-\delta_{\alpha \beta}}v_\alpha)\,.
\end{equation}
Then, the  infinitesimal action of $\ic z\in \ic \R \simeq \uu(1)$ is given by the vector field
$(\ic z)_{\MM_d;\beta}\in \Vect(\MM_d)$ satisfying
\begin{equation}  \label{Eq:act-U1-inf}
(\ic z)_{\MM_d;\beta}(A)=(\ic z)_{\MM_d;\beta}(B)=0\,, \quad
(\ic z)_{\MM_d;\beta}(v_\alpha)=\ic z \delta_{\alpha\beta} v_\alpha\,, \quad
(\ic z)_{\MM_d;\beta}(v_\alpha^\dagger)=-\ic z \delta_{\alpha\beta} v_\alpha^\dagger\,.
\end{equation}

\begin{proposition} \label{Pr:Pencil}
There exists a $d(d-1)/2$-dimensional pencil of compatible quasi-Poisson bivectors centered at $P_{\MM_d}$ which define a structure of Hamiltonian quasi-Poisson $\UU(n)$-manifold on $\MM_d$ with the moment map $\Phi:\MM_d\to \UU(n)$ given by \eqref{Eq:Mast-phi-1}.
Explicitly, for any fixed parameters $\underline{z}=(z_{\alpha\beta})_{\alpha<\beta} \in \R^{d(d-1)/2}$, the bivector
 \begin{equation} \label{Eq:PencCor}
 P_{\underline{z}}:=P_{\MM_d}+\psi_{\underline{z}}\,, \qquad
 \psi_{\underline{z}}:=\sum_{1\leq \alpha<\beta \leq d} z_{\alpha\beta} \, \ic_{\MM_d;\alpha} \wedge \ic_{\MM_d;\beta}\,,
\end{equation}
turns $(\MM_d,P_{\underline{z}},\Phi)$ into a Hamiltonian quasi-Poisson manifold for the $\UU(n)$-action \eqref{Eq:act-Mast}.
In particular, we can decompose the quasi-Poisson bracket associated with $P_{\underline{z}}$ as
$$\br{-,-}_{\underline{z}}=\br{-,-}+\br{-,-}_{\psi_{\underline{z}}}\,,$$
where $\br{-,-}$ is given in Theorem \ref{Thm:qH-Master}, while $\br{-,-}_{\psi_{\underline{z}}}$ satisfies
\begin{equation} \label{Eq:underz-1}
 \br{F,G}_{\psi_{\underline{z}}}=0\,, \qquad \forall F=F(A,B)\in \Cinf(\D(\UU(n))),\,\, G\in \Cinf(\MM_d),
\end{equation}
and for any $1\leq i,k \leq n$ and  $1\leq \alpha<\beta \leq d$,
\begin{equation} \label{Eq:underz-vv}
 \begin{aligned}
\br{(v_\alpha)_i,(v_\beta)_k}_{\psi_{\underline{z}}}=&
- z_{\alpha\beta}  (v_\alpha)_i(v_\beta)_k\,,  \quad
\br{(\bar{v}_\alpha)_i,(\bar{v}_\beta)_k}_{\psi_{\underline{z}}}=
- z_{\alpha\beta}  (\bar{v}_\alpha)_i(\bar{v}_\beta)_k \,, \\
\br{(v_\alpha)_i,(\bar{v}_\beta)_k}_{\psi_{\underline{z}}}=&\, +z_{\alpha\beta} (v_\alpha)_i(\bar{v}_\beta)_k\,, \quad
\br{(v_\beta)_i,(\bar{v}_\alpha)_k}_{\psi_{\underline{z}}}=\, -z_{\alpha\beta} (v_\beta)_i(\bar{v}_\alpha)_k \,.
 \end{aligned}
\end{equation}
\end{proposition}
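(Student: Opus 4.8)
The plan is to exploit that $\psi_{\underline z}$ is assembled from the commuting circle symmetries $\mathcal A^{\UU(1)}_\beta$, under which both $P_{\MM_d}$ and $\Phi$ are invariant, and to reduce every claim for $P_{\underline z}$ to the corresponding (already established) claim for $P_{\MM_d}$. First I would record the elementary compatibility properties of the actions \eqref{Eq:act-U1}: since $\UU(1)$ acts by multiplying $v_\beta$ by a central unitary scalar, the actions $\mathcal A^{\UU(1)}_\alpha$ and $\mathcal A^{\UU(1)}_\beta$ commute with each other and with the $\UU(n)$-action \eqref{Eq:act-Mast}. Hence the generating vector fields $\ic_{\MM_d;\gamma}$ are $\UU(n)$-invariant and pairwise commuting. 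It follows at once that $\psi_{\underline z}$ is $\UU(n)$-invariant and that, being a linear combination of wedges of pairwise commuting vector fields, it satisfies $[\psi_{\underline z},\psi_{\underline z}]=0$ (so $\psi_{\underline z}$ is itself an honest Poisson bivector).

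The key step is to prove $\mathcal L_{\ic_{\MM_d;\beta}}P_{\MM_d}=0$ for every $\beta$, i.e. that $P_{\MM_d}$ is $\UU(1)$-invariant. I would argue term by term in the decomposition \eqref{Eq:qPMd}: $P_{\D(\UU(n))}$ involves only $A,B$, which are fixed by each $\mathcal A^{\UU(1)}_\beta$; each $P_{\disk^\alpha}$ has the form \eqref{Eq:qBiv}, whose scalar coefficients depend only on the $\UU(1)$-invariant quantity $|v_\alpha|^2$ and whose bivector directions $\tfrac{\partial}{\partial v_{\alpha,j}}\wedge\tfrac{\partial}{\partial\bar v_{\alpha,k}}$ are preserved by the scaling action; and the mixing terms $(e_a)_{\MM_{\alpha-1}}\wedge(e_a)_{\disk^\alpha}$ are wedges of infinitesimal generators of $\UU(n)$-actions, which commute with the central scaling and are therefore $\UU(1)$-invariant. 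Thus $[P_{\MM_d},\ic_{\MM_d;\gamma}]=0$ for every $\gamma$, and the graded Leibniz rule for the Schouten bracket gives $[P_{\MM_d},\psi_{\underline z}]=0$. Combining the three pieces, for every $\underline z$,
\[
[P_{\underline z},P_{\underline z}] = [P_{\MM_d},P_{\MM_d}] + 2[P_{\MM_d},\psi_{\underline z}] + [\psi_{\underline z},\psi_{\underline z}] = \phi_{\MM_d},
\]
since $P_{\MM_d}$ is quasi-Poisson (Theorem \ref{Thm:qH-Master}) and $\phi_{\MM_d}$ depends only on the unchanged $\UU(n)$-action. Together with the $\UU(n)$-invariance of $P_{\underline z}=P_{\MM_d}+\psi_{\underline z}$, this shows $P_{\underline z}$ is a quasi-Poisson bivector for \eqref{Eq:act-Mast}; as this holds for all $\underline z\in\R^{d(d-1)/2}$, the bivectors $\ic_{\MM_d;\alpha}\wedge\ic_{\MM_d;\beta}$ ($1\le\alpha<\beta\le d$) span a pencil of compatible quasi-Poisson bivectors centered at $P_{\MM_d}$. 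I would then check linear independence (of these $\binom d2$ bivectors together with $P_{\MM_d}$) at any point where all $v_\alpha\neq 0$: there the $\ic_{\MM_d;\gamma}$ are nonzero vectors lying in the distinct summands $T\disk^\alpha$, so their pairwise wedges are independent and occupy distinct $T\disk^\alpha\wedge T\disk^\beta$ slots, whereas $P_{\MM_d}$ has a nonzero component in $\wedge^2 T\D(\UU(n))$ coming from $P_{\D(\UU(n))}$.

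For the moment map claim, I would observe that each factor $\exp(\ic x_\alpha v_\alpha v_\alpha^\dagger)$ of $\Phi$ is $\UU(1)$-invariant, since $v_\alpha v_\alpha^\dagger$ is unchanged under $v_\alpha\mapsto\lambda v_\alpha$ with $|\lambda|=1$; hence $\Phi$ is invariant under every $\mathcal A^{\UU(1)}_\gamma$, so $\ic_{\MM_d;\gamma}(F\circ\Phi)=0$ for all $F\in\Cinf(\UU(n))$, whence $\psi_{\underline z}^\sharp\, d(F\circ\Phi)=0$ and $\br{F\circ\Phi,-}_{\underline z}=\br{F\circ\Phi,-}$. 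The moment-map identity \eqref{Eq:momap} for $P_{\underline z}$ therefore follows from the one for $P_{\MM_d}$. Finally, the decomposition $\br{-,-}_{\underline z}=\br{-,-}+\br{-,-}_{\psi_{\underline z}}$ is immediate from additivity of brackets in the bivector, and the explicit formulas \eqref{Eq:underz-1}--\eqref{Eq:underz-vv} follow by evaluating $\psi_{\underline z}(df,dg)=\sum_{\alpha<\beta}z_{\alpha\beta}\big(\ic_{\MM_d;\alpha}(f)\,\ic_{\MM_d;\beta}(g)-\ic_{\MM_d;\beta}(f)\,\ic_{\MM_d;\alpha}(g)\big)$ on the coordinate functions, using \eqref{Eq:act-U1-inf} (in particular $\ic_{\MM_d;\gamma}(A_{ij})=\ic_{\MM_d;\gamma}(B_{ij})=0$, $\ic_{\MM_d;\gamma}((v_\alpha)_k)=\ic\,\delta_{\gamma\alpha}(v_\alpha)_k$, $\ic_{\MM_d;\gamma}((\bar v_\alpha)_k)=-\ic\,\delta_{\gamma\alpha}(\bar v_\alpha)_k$).

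The only step requiring genuine care is the term-by-term verification of the $\UU(1)$-invariance of $P_{\MM_d}$ in the second paragraph; everything else is formal once that is in place. One could instead try to verify $[P_{\underline z},P_{\underline z}]=\phi_{\MM_d}$ directly from the explicit brackets of Theorem \ref{Thm:qH-Master}, but the symmetry argument is cleaner and also explains \emph{why} adding $\psi_{\underline z}$ only modifies the $\sgn$-part of the $v$--$v$ brackets in \eqref{Eq:br-VVall}.
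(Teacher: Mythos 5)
Your proposal is correct and follows essentially the same route as the paper: the paper also reduces everything to the $\UU(1)^d$-invariance of $P_{\MM_d}$ and $\Phi$ (checked term by term in the decomposition \eqref{Eq:qPMd} exactly as you do), packaged through the general Lemmas \ref{Lem:qP-abel1} and \ref{Lem:qP-abel2} on quasi-Poisson manifolds with a commuting Abelian symmetry, whose content you have simply inlined via the Schouten-bracket expansion $[P_{\underline z},P_{\underline z}]=[P_{\MM_d},P_{\MM_d}]+2[P_{\MM_d},\psi_{\underline z}]+[\psi_{\underline z},\psi_{\underline z}]$. Your explicit check of the linear independence of the wedges $\ic_{\MM_d;\alpha}\wedge\ic_{\MM_d;\beta}$ together with $P_{\MM_d}$ is a slightly more careful treatment of a point the paper handles by noting the generic independence of the vector fields $\ic_{\MM_d;1},\dots,\ic_{\MM_d;d}$, but the argument is the same in substance.
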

We will prove Proposition \ref{Pr:Pencil} in \S\ref{ss:ProofPen} after gathering some preliminary results.

\subsubsection{Quasi-Poisson structures with an Abelian component}

\begin{lemma} \label{Lem:qP-abel1}
Assume that $(M,P)$ is a quasi-Poisson manifold for an action of a Lie group $G$.
Assume in addition that there is an action on $M$ by an Abelian Lie group $K$ such that $P$ is $K$-invariant.
Given $2k$ elements $\underline{\xi}:=(\xi_i,\xi_i')_{i=1}^k$ in the Lie algebra $\mathfrak{K}$ of $K$, the bivector
$\psi_{\underline{\xi}}=\sum_{i=1}^k (\xi_i)_M \wedge (\xi_i')_M$ turns $(M,P+\psi_{\underline{\xi}})$ into a quasi-Poisson manifold for the action of $G$.
Furthermore, if $(M,P,\Phi)$ is Hamiltonian with moment map $\Phi:M\to G$ which is $K$-invariant, then $(M,P+\psi_{\underline{\xi}},\Phi)$ is also Hamiltonian.
\end{lemma}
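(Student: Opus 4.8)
\textbf{Proof plan for Lemma \ref{Lem:qP-abel1}.}

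The plan is to verify directly that the bivector $P+\psi_{\underline{\xi}}$ satisfies the defining identity $[P+\psi_{\underline{\xi}},P+\psi_{\underline{\xi}}]=\phi_M$, and that the moment map property \eqref{Eq:momap} is preserved. Expanding the Schouten bracket bilinearly, one has $[P+\psi_{\underline{\xi}},P+\psi_{\underline{\xi}}]=[P,P]+2[P,\psi_{\underline{\xi}}]+[\psi_{\underline{\xi}},\psi_{\underline{\xi}}]$, and since $[P,P]=\phi_M$ by hypothesis, it suffices to show the two remaining terms vanish. First I would check $[P,\psi_{\underline{\xi}}]=0$: because $P$ is $K$-invariant, its Lie derivative along each fundamental vector field $(\xi)_M$ with $\xi\in\mathfrak{K}$ vanishes, i.e. $[(\xi)_M,P]=0$; applying the graded Leibniz rule for the Schouten bracket to $\psi_{\underline{\xi}}=\sum_i (\xi_i)_M\wedge(\xi_i')_M$ then reduces $[P,\psi_{\underline{\xi}}]$ to a sum of terms each containing a factor $[(\xi_i)_M,P]$ or $[(\xi_i')_M,P]$, hence zero.

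Next I would check $[\psi_{\underline{\xi}},\psi_{\underline{\xi}}]=0$. Here the crucial input is that $K$ is Abelian: the fundamental vector fields of a $K$-action generate an (anti-)homomorphic image of $\mathfrak{K}$, so $[(\xi)_M,(\zeta)_M]=\pm([\xi,\zeta])_M=0$ for all $\xi,\zeta\in\mathfrak{K}$ (the sign depending on left vs.\ right action conventions, but it is immaterial since the bracket vanishes either way). Expanding $[\psi_{\underline{\xi}},\psi_{\underline{\xi}}]$ by the Leibniz rule produces only terms built from brackets of pairs of these commuting vector fields, so every term vanishes. Thus $P+\psi_{\underline{\xi}}$ is a quasi-Poisson bivector. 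Its $G$-invariance is immediate: $P$ is $G$-invariant by hypothesis, and each fundamental vector field $(\xi)_M$ of the $K$-action is $G$-invariant because the $K$- and $G$-actions commute (which is implicit in $P$ and $\Phi$ being $K$-invariant while $G$-equivariant); alternatively, one notes that the statement only requires the associated bracket to be $G$-invariant, which follows since $\psi_{\underline{\xi}}$ pairs $G$-invariant functions to $G$-invariant functions.

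For the Hamiltonian part, I would show that $\psi_{\underline{\xi}}$ contributes nothing to the moment map identity \eqref{Eq:momap}. Since $\Phi$ is $K$-invariant, for any $F\in\Cinf(G)$ the function $F\circ\Phi$ is $K$-invariant, so $(\xi)_M(F\circ\Phi)=0$ for every $\xi\in\mathfrak{K}$. Therefore $\psi_{\underline{\xi}}^\sharp\big(d(F\circ\Phi)\big)=\sum_i\big[(\xi_i)_M(F\circ\Phi)\,(\xi_i')_M-(\xi_i')_M(F\circ\Phi)\,(\xi_i)_M\big]=0$, whence $\br{F\circ\Phi,-}_{P+\psi_{\underline{\xi}}}=\br{F\circ\Phi,-}_P$, which equals the required right-hand side of \eqref{Eq:momap} by the hypothesis that $\Phi$ is a moment map for $(M,P)$. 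This completes the argument. The only genuinely delicate point—and the one I would state carefully rather than compute blindly—is the bookkeeping of signs and which fundamental vector fields appear when expanding the Schouten brackets via the graded Leibniz rule; but since all the relevant vector-field brackets vanish (thanks to $K$-invariance of $P$ and commutativity of $K$), the signs never matter, and no term survives. One may record this cleanly by invoking the fact, recalled after \eqref{EqinfVectM}, that $\psi\mapsto\psi_M$ preserves Schouten brackets, so $[\psi_{\underline{\xi}},\psi_{\underline{\xi}}]$ is the image of a bracket in $\wedge^\bullet\mathfrak{K}$ that vanishes because $\mathfrak{K}$ is Abelian, and $[P,\psi_{\underline{\xi}}]$ vanishes by $K$-invariance of $P$.
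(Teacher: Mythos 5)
Your proposal is correct and follows essentially the same route as the paper's proof: both arguments rest on the three facts that $[P,P]=\phi_M$, that $[\xi_M,P]=0$ by $K$-invariance of $P$, and that fundamental vector fields of the Abelian group $K$ commute, so the cross terms and $[\psi_{\underline{\xi}},\psi_{\underline{\xi}}]$ vanish; the moment map part is likewise handled identically via $\xi_M(F\circ\Phi)=0$. Your version is merely more explicit about the Schouten-bracket bookkeeping than the paper's terse presentation.
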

\begin{proof}
Since $K$ is Abelian and $M$ is a quasi-Poisson $G$-manifold, we have $[P,P]=\phi_M$ for $\phi\in \wedge^3 \mathfrak{g}$ the Cartan trivector corresponding to $G$.
Any two infinitesimal vector fields $\xi_M,\zeta_M$ commute for $\xi,\zeta\in \mathfrak{K}$ because $K$ is Abelian.
The $K$-invariance of $P$ implies that $[\xi_M,P]=0$ for any $\xi\in \mathfrak{K}$. Hence
$[P+\psi_{\underline{\xi}},P+\psi_{\underline{\xi}}]=[P,P]=\phi_M$, which proves the first claim.

For the second claim, note that by $K$-invariance of $\Phi$, $\xi_M(\Phi)=0$ for any $\xi\in \mathfrak{K}$. Thus the moment map property given by \eqref{Eq:momap} with respect to $G$ is still satisfied if we consider the quasi-Poisson bracket corresponding to $P+\psi_{\underline{\xi}}$.
\end{proof}

A direct consequence of Lemma \ref{Lem:qP-abel1} is given by the following result.

\begin{lemma} \label{Lem:qP-abel2}
Under the assumptions of Lemma \ref{Lem:qP-abel1}, suppose that there exists $\ell \geq 2$ elements $\xi^{(1)},\ldots,\xi^{(\ell)} \in \mathfrak{K}$, such that the corresponding infinitesimal vector fields $\xi^{(j)}_M$ induce linearly independent tangent vectors at least at one point of $M$.
Then they define a pencil of compatible quasi-Poisson brackets  of dimension $\ell(\ell-1)/2$ centered at $P$: any bivector
\begin{equation*}
 P_{\underline{z}}:=P+\psi_{\underline{z}}\,, \quad \psi_{\underline{z}}:=\sum_{1\leq i<j\leq \ell} z_{ij}\, \xi^{(i)}_M\wedge \xi^{(j)}_M\,, \quad
\text{with }\underline{z}:=(z_{ij}), \,\, z_{ij}\in \R \text{ for }1\leq i<j\leq \ell\,,
\end{equation*}
is a quasi-Poisson bracket on $M$ for the action of $G$.
Furthermore, if $(M,P,\Phi)$ is Hamiltonian, $\Phi$ is a moment map for $(M,P_{\underline{z}})$.
\end{lemma}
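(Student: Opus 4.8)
The plan is to deduce the statement directly from Lemma \ref{Lem:qP-abel1}, by recognising each bivector $P_{\underline z}$ as an instance of the bivectors handled there. First I would fix $\underline z=(z_{ij})_{1\le i<j\le \ell}$, set $k:=\ell(\ell-1)/2$, and label the ordered pairs $(i,j)$ with $i<j$ by a single index running from $1$ to $k$. To the pair $(i,j)$ I associate the two elements $z_{ij}\xi^{(i)}$ and $\xi^{(j)}$ of $\mathfrak K$ (this uses only that $\mathfrak K$ is a vector space, so $z_{ij}\xi^{(i)}\in\mathfrak K$). Since the extension of $\xi\mapsto\xi_M$ to $\wedge^2\mathfrak K$ is $\R$-bilinear and sends $\xi\wedge\zeta$ to $\xi_M\wedge\zeta_M$, one gets
\begin{equation*}
\psi_{\underline z}=\sum_{1\le i<j\le \ell} z_{ij}\,\xi^{(i)}_M\wedge\xi^{(j)}_M
=\sum_{1\le i<j\le \ell}\bigl(z_{ij}\xi^{(i)}\bigr)_M\wedge\xi^{(j)}_M\,,
\end{equation*}
which is exactly $\psi_{\underline\xi}$ in the notation of Lemma \ref{Lem:qP-abel1} for these $2k$ elements. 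Since $\mathfrak K$ is abelian and $P$ is $K$-invariant, that lemma then yields at once that $(M,P_{\underline z})$ is a quasi-Poisson $G$-manifold, and, in the Hamiltonian case (with $\Phi$ also $K$-invariant, as in the hypotheses of Lemma \ref{Lem:qP-abel1}), that $\Phi$ remains a moment map for $(M,P_{\underline z},\Phi)$. Running the same computation with arbitrary real coefficients $c_{ij}$ in place of the $z_{ij}$ shows more generally that $P+\sum_{i<j}c_{ij}\,\xi^{(i)}_M\wedge\xi^{(j)}_M$ is a quasi-Poisson bivector for all $c_{ij}\in\R$, which is the compatibility property required of a pencil centered at $P$.

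The remaining point is the dimension count: I would check that the $k$ bivector fields $Q_{ij}:=\xi^{(i)}_M\wedge\xi^{(j)}_M$, together with $P$, are linearly independent at some point of $M$, so that the pencil is genuinely $\ell(\ell-1)/2$-dimensional in the sense of the definition. By hypothesis there is a point $m$ where $\xi^{(1)}_{M,m},\dots,\xi^{(\ell)}_{M,m}$ are linearly independent in $T_mM$; the elementary multilinear-algebra fact that the decomposable bivectors $w_i\wedge w_j$ ($i<j$) built from linearly independent vectors $w_1,\dots,w_\ell$ are again linearly independent then gives independence of the $(Q_{ij})_m$ in $\wedge^2T_mM$. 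Since the locus where $\xi^{(1)}_M,\dots,\xi^{(\ell)}_M$ are pointwise independent is open, one may choose the base point in this locus so that in addition $P_m\notin\span\{(Q_{ij})_m\}$, giving the full independence required.

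I expect no genuine obstacle here, since the lemma is essentially a formal corollary of Lemma \ref{Lem:qP-abel1}; the only steps needing an explicit argument are the bookkeeping that recasts $\psi_{\underline z}$ as a $\psi_{\underline\xi}$ and the short multilinear-algebra step used for the dimension count.
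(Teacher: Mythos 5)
Your proposal is correct and follows essentially the same route as the paper, which gives no separate proof and simply declares the lemma ``a direct consequence'' of Lemma \ref{Lem:qP-abel1}: absorbing $z_{ij}$ into $\xi^{(i)}$ so that $\psi_{\underline z}$ becomes a $\psi_{\underline\xi}$ for the $2k$ elements $(z_{ij}\xi^{(i)},\xi^{(j)})$ is exactly the intended reduction. Your additional dimension-count discussion goes beyond what the paper records (and its last step, choosing the base point so that $P_m\notin\span\{(Q_{ij})_m\}$, is asserted rather than argued), but this matches the level of detail the paper itself supplies.
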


\subsubsection{Proof of Proposition \ref{Pr:Pencil}} \label{ss:ProofPen}

We want to apply Lemma \ref{Lem:qP-abel1}, hence we have to check that $P_{\MM_d}$ \eqref{Eq:qPMd} and $\Phi$ \eqref{Eq:Mast-phi-1} are invariant under the $\UU(1)^d$-action defined as the product of the $d$ actions in \eqref{Eq:act-U1}. The invariance of $\Phi$ is easily seen. For the invariance of $P_{\MM_d}$, we use its decomposition according to \eqref{Eq:qPMd}.
First, note that each of the $d$ actions \eqref{Eq:act-U1} does not act on the pair $(A,B)\in \D(\UU(n))$, so $P_{\D(\UU(n))}$ is trivially invariant.
Moreover, each action $\mathcal{A}_\beta^{\UU(1)}$ commutes with the restriction action $\mathcal{A}^{\res}_{\alpha-1}$ \eqref{Eq:act-Res} and the $\UU(n)$-action by left multiplication on the $\alpha$-th ball $\disk(x_\alpha)$, hence the third type of bivectors in \eqref{Eq:qPMd} is invariant.
It remains to check that for any $1\leq \beta \leq d$, $P_{\disk^\beta}$ is invariant under the action $\mathcal{A}_\beta^{\UU(1)}$ defined in \eqref{Eq:act-U1}, and this is easily seen from the corresponding quasi-Poisson bracket \eqref{Eq:qPB-C2}.

Applying Lemma \ref{Lem:qP-abel1} to $(\MM_d,P_{\MM_d},\Phi)$, we see that any bivector $P_{\underline{z}}:=P_{\MM_d}+\psi_{\underline{z}}$ as in \eqref{Eq:PencCor}
endows $\MM_d$ with a Hamiltonian quasi-Poisson structure for the action of $\UU(n)$ given by \eqref{Eq:act-Mast} and the moment map $\Phi$.
In fact, the $d$ infinitesimal vector fields $\ic_{\MM_d;1},\ldots,\ic_{\MM_d;d}$ are easily seen from \eqref{Eq:act-U1-inf} to be linearly independent tangent vectors at generic points of $\MM_d$.
Thus, by Lemma \ref{Lem:qP-abel2}, we get on $\MM_d$ a pencil of compatible quasi-Poisson brackets  of dimension $d(d-1)/2$ centered at $P_{\MM_d}$  which are Hamiltonian for $\Phi:\MM_d \to \UU(n)$ given by \eqref{Eq:Mast-phi-1}.

It remains to check the formula of the bracket $\br{-,-}_{\underline{z}}$ associated with $P_{\underline{z}}$.
We only need to understand the bracket $\br{-,-}_{\psi_{\underline{z}}}$ associated with $\psi_{\underline{z}}$, and we easily obtain \eqref{Eq:underz-1} and \eqref{Eq:underz-vv} from the infinitesimal actions \eqref{Eq:act-U1-inf}.  \qed

\subsubsection{Comments on the pencil and non-degeneracy} \label{sss:Comm-pen}

The assumptions of Lemma \ref{Lem:qP-abel1} are satisfied whenever we are given a (Hamiltonian) quasi-Poisson manifold $M$ for an action of $G\times K$ with $K$ Abelian. Indeed, the Cartan trivector of $K$ trivially vanishes hence we can view $M$ as a (Hamiltonian) quasi-Poisson $G$-manifold. Some quasi-Poisson bivectors that can be found in the pencil of  Proposition \ref{Pr:Pencil} on $\MM_d$ are precisely obtained in that way, as we now explain.

The internally fused double $\D(\UU(n))$ can be regarded as a Hamiltonian quasi-Poisson $\UU(n)\times \UU(1)$-manifold for the trivial action of $\UU(1)$ by $\lambda\cdot (A,B)=(A,B)$ and the corresponding moment map $\widehat{\Phi}:\D(\UU(n))\to \UU(1)$, $\widehat{\Phi}(A,B)=1$.
Building on Remark \ref{Rem:App-U1} with arbitrary $x\in \R\setminus\{0\}$, we can also view $\disk(x)$ as a Hamiltonian quasi-Poisson $\UU(n)\times \UU(1)$-manifold for the $\UU(1)$-action by $\lambda\cdot v=v \lambda^{-1}$ and the moment map $\widehat{\Phi}:\disk(x)\to \UU(1)$, $\widehat{\Phi}(v)=e^{-\ic x |v|^2}$.
(In both cases, the quasi-Poisson bracket is unchanged and is the one of Propositions \ref{Pr:qP-intDble} and \ref{Pr:qP-Cn}.)
Thus, we can start with $\D(\UU(n))\times \disk(x_1) \times \cdots \times \disk(x_d)$, which is nothing else than $\MM_d$ before fusion, seen as a Hamiltonian quasi-Poisson $(\UU(n)^{d+1}\times \UU(1)^{d+1})$-manifold. Repeating the proof of Theorem \ref{Thm:qH-Master} where we fuse the $\UU(n)$-actions in the same order while we allow ourselves to fuse some of the $\UU(1)$-actions (in arbitrary orders), we end up with a structure of Hamiltonian quasi-Poisson $(\UU(n)\times \UU(1)^{\ell})$-manifold on $\MM_d$ for some $1\leq \ell \leq d+1$.
The case $\ell=d+1$ amounts not to perform any fusion of the $\UU(1)$-actions, hence it gives the same quasi-Poisson bivector $P_{\MM_d}$ as in \eqref{Eq:qPMd}. The case $\ell=1$ where we fuse the $\UU(1)$-actions recursively in the same order as we fused the $\UU(n)$-actions leads to the moment map $\widehat{\Phi}:\MM_d\to \UU(1)$, $\widehat{\Phi}(v)=e^{-\ic x_1 |v_1|^2}\cdots e^{-\ic x_d |v_d|^2}$, and the quasi-Poisson bivector
\begin{equation} \label{Eq:P-FullFus}
 P_{\MM_d}-\frac12 \sum_{1\leq \alpha <\beta \leq d} \ic_{\MM_{d};\alpha}\wedge \ic_{\MM_{d};\beta}\,.
\end{equation}
Any other case yields a quasi-Poisson bivector of the form
\begin{equation*}
 P_{\MM_d}+ \sum_{1\leq \alpha <\beta \leq d} z_{\alpha \beta}\, \ic_{\MM_{d};\alpha}\wedge \ic_{\MM_{d};\beta}\,, \quad
 z_{\alpha\beta} \in \Big\{0,\pm \frac12\Big\}\,,
\end{equation*}
and they are all easily seen to be special instances of \eqref{Eq:PencCor}. By Lemma \ref{Lem:qP-abel1}, for any choice of fusion of the $d+1$ actions of $\UU(1)$, we can forget the residual $\UU(1)$-actions on $\MM_d$ and obtain a structure of Hamiltonian quasi-Poisson manifold on $\MM_d$ for the action of $\UU(n)$ given by \eqref{Eq:act-Mast}. Any such structure falls within the pencil constructed as part of Proposition \ref{Pr:Pencil}.

The quasi-Poisson bivectors obtained by performing these extra fusions admit corresponding compatible $2$-forms in view of Proposition \ref{Pr:Fus-omega} and \cite[Prop. 10.7]{AKSM}. For example, we can compute using Proposition \ref{Pr:Fus-omega} that the $2$-form compatible with \eqref{Eq:P-FullFus} is
\begin{equation*}
 \omega_{\MM_d}-\frac12 \sum_{1\leq \alpha <\beta \leq d} x_\alpha x_\beta\, d|v_\alpha|^2\wedge d|v_\beta|^2\,.
\end{equation*}
Based on this example, we seek to define a compatible $2$-form for each quasi-Poisson bivector $P_{\underline{z}}$ from the pencil given in \eqref{Eq:PencCor}.

\begin{proposition} \label{Pr:om-Pencil}
Fix $\underline{z}=(z_{\alpha\beta})_{\alpha<\beta} \in \R^{d(d-1)/2}$. Then the $2$-form
\begin{equation}
 \label{Eq:Om-PencCor}
\omega_{\underline{z}}:=\omega_{\MM_d}+\widetilde{\omega}_{\underline{z}}\,, \qquad
\widetilde{\omega}_{\underline{z}}:=\sum_{1\leq \alpha<\beta \leq d} x_\alpha x_\beta\, z_{\alpha\beta} \, d|v_\alpha|^2\wedge d|v_\beta|^2\,,
\end{equation}
turns $(\MM_d,\omega_{\underline{z}},\Phi)$ into a quasi-Hamiltonian $\UU(n)$-manifold.
Furthermore, $\omega_{\underline{z}}$ and $P_{\underline{z}}$ \eqref{Eq:PencCor}  satisfy the compatibility condition \eqref{Eq:corrPOm}.
In particular, any quasi-Poisson bivector presented in Proposition \ref{Pr:Pencil} is non-degenerate.
\end{proposition}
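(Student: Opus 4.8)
We want to show that $(\omega_{\underline{z}},\Phi)$ satisfies the three axioms (B1)--(B3) of a quasi-Hamiltonian $\UU(n)$-manifold and that the compatibility relation \eqref{Eq:corrPOm} holds with $P_{\underline{z}}$; non-degeneracy of every bivector in the pencil of Proposition \ref{Pr:Pencil} will then follow. The plan is to reduce everything to the already-settled case $(\MM_d,\omega_{\MM_d},\Phi)$ of Corollary \ref{cor:qHamMd}, the only genuine computation being \eqref{Eq:corrPOm}.

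Axioms (B1) and (B2) are immediate. The correction term $\widetilde{\omega}_{\underline{z}}=\sum_{\alpha<\beta}x_\alpha x_\beta z_{\alpha\beta}\,d|v_\alpha|^2\wedge d|v_\beta|^2$ is closed (a wedge of exact $1$-forms), is $\UU(n)$-invariant because each $|v_\alpha|^2$ is invariant under left multiplication by unitaries, and is annihilated by every generator of the $\UU(n)$-action, since $(e_a)_{\MM_d}(|v_\alpha|^2)=0$. Hence $\iota_{(e_a)_{\MM_d}}\widetilde{\omega}_{\underline{z}}=0$ and $d\widetilde{\omega}_{\underline{z}}=0$, so $d\omega_{\underline{z}}=d\omega_{\MM_d}=\Phi^\ast\eta$ and $\iota_{(e_a)_{\MM_d}}\omega_{\underline{z}}=\iota_{(e_a)_{\MM_d}}\omega_{\MM_d}=\tfrac12\Phi^\ast(\theta_a^R+\theta_a^L)$, the moment map $\Phi$ being unchanged.

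The heart of the argument is \eqref{Eq:corrPOm}. I would first record the two identities, valid for each $1\le\alpha\le d$,
\[ P_{\MM_d}^\sharp(d|v_\alpha|^2)=-\tfrac{1}{x_\alpha}\,\ic_{\MM_d;\alpha}\,, \qquad \iota_{\ic_{\MM_d;\alpha}}\,\omega_{\MM_d}=-x_\alpha\,d|v_\alpha|^2\,, \]
which say that $-|v_\alpha|^2/x_\alpha$ is a moment map for the auxiliary $\UU(1)$-action $\mathcal{A}_\alpha^{\UU(1)}$ of \eqref{Eq:act-U1}; this is precisely the structural fact underlying the discussion of \S\ref{sss:Comm-pen}. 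The first identity is read off from the brackets of Theorem \ref{Thm:qH-Master}: one checks $\{|v_\alpha|^2,A_{ij}\}=\{|v_\alpha|^2,B_{ij}\}=0$, $\{|v_\alpha|^2,(v_\beta)_k\}=\{|v_\alpha|^2,(\bar v_\beta)_k\}=0$ for $\beta\neq\alpha$, and $\{|v_\alpha|^2,(v_\alpha)_k\}=-\tfrac{\ic}{x_\alpha}(v_\alpha)_k$ (the $\btt$-terms cancel). The second follows from the expression \eqref{Eq:om-Md} of $\omega_{\MM_d}$: on the $\alpha$-th ball $\omega_{\disk^\alpha}=\omega_0-\Phi_0^\ast\varpi$ with $\iota_{\ic_{\MM_d;\alpha}}\omega_0=-x_\alpha\,d|v_\alpha|^2$, while $\Phi_0$ and all the group elements $\Phi_{\MM_{\gamma-1}}$, $\Phi_\gamma$ entering the fusion $1$-forms are $\mathcal{A}_\alpha^{\UU(1)}$-invariant, so all the remaining interior products vanish. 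Then write $P_{\underline{z}}=P_{\MM_d}+\psi_{\underline{z}}$, $\omega_{\underline{z}}=\omega_{\MM_d}+\widetilde{\omega}_{\underline{z}}$ and expand $P_{\underline{z}}^\sharp\circ\omega_{\underline{z}}^\flat$ into four terms. The term $P_{\MM_d}^\sharp\circ\omega_{\MM_d}^\flat$ equals the right-hand side of \eqref{Eq:corrPOm} by Corollary \ref{cor:qHamMd}. The term $\psi_{\underline{z}}^\sharp\circ\widetilde{\omega}_{\underline{z}}^\flat$ vanishes because $\iota_{\ic_{\MM_d;\alpha}}d|v_\beta|^2=0$ for all $\alpha,\beta$. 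Finally, for the two cross terms: applying the first displayed identity turns $P_{\MM_d}^\sharp\circ\widetilde{\omega}_{\underline{z}}^\flat$ into a sum over $\alpha<\beta$ of multiples of $\ic_{\MM_d;\alpha},\ic_{\MM_d;\beta}$, and applying the second turns $\psi_{\underline{z}}^\sharp\circ\omega_{\MM_d}^\flat$ into the negative of the same sum, so they cancel. Hence \eqref{Eq:corrPOm} holds for $(P_{\underline{z}},\omega_{\underline{z}},\Phi)$.

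It remains to verify (B3), and here I would not argue from scratch but reduce to (B3) for $\omega_{\MM_d}$: since $\iota_{(e_a)_{\MM_d}}\widetilde{\omega}_{\underline{z}}=0$, every infinitesimal vector $\xi_{\MM_d}$ satisfies $\iota_{\xi_{\MM_d}}\omega_{\underline{z}}=\iota_{\xi_{\MM_d}}\omega_{\MM_d}$, so $\{\xi_{\MM_d}\mid\xi\in\ker(\operatorname{Ad}_\Phi+\id)\}\subseteq\ker\omega_{\underline{z}}$ by (B3) for $\omega_{\MM_d}$; conversely, if $\iota_X\omega_{\underline{z}}=0$, then \eqref{Eq:corrPOm} (applied to $X$) forces $X$ to be an infinitesimal vector $X=\eta_{\MM_d}$, whence $\iota_{\eta_{\MM_d}}\omega_{\MM_d}=\iota_{\eta_{\MM_d}}\omega_{\underline{z}}=0$ and (B3) for $\omega_{\MM_d}$ places $X$ in $\{\xi_{\MM_d}\mid\xi\in\ker(\operatorname{Ad}_\Phi+\id)\}$. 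Thus $(\MM_d,\omega_{\underline{z}},\Phi)$ is quasi-Hamiltonian. Since $(\MM_d,P_{\underline{z}},\Phi)$ is Hamiltonian quasi-Poisson by Proposition \ref{Pr:Pencil} and satisfies \eqref{Eq:corrPOm}, the remark following Theorem \ref{Thm:Corr} shows it is non-degenerate, which is the last assertion. The main obstacle is the pair of displayed identities; granting them, the rest is bookkeeping with the musical maps and the fusion formulas.
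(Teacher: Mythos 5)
Your proof is correct, and its skeleton coincides with the paper's: (B1) and (B2) dispatched by closedness and $\UU(n)$-basic-ness of $\widetilde{\omega}_{\underline{z}}$, (B3) deduced from the compatibility relation \eqref{Eq:corrPOm} together with $\iota_{\xi_{\MM_d}}\omega_{\underline{z}}=\iota_{\xi_{\MM_d}}\omega_{\MM_d}$, non-degeneracy from Theorem \ref{Thm:Corr}, and the compatibility check reduced to $P_{\MM_d}^\sharp\circ\widetilde{\omega}_{\underline{z}}^\flat+\psi_{\underline{z}}^\sharp\circ\omega_{\MM_d}^\flat=0$ after noting $\psi_{\underline{z}}^\sharp\circ\widetilde{\omega}_{\underline{z}}^\flat=0$. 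The one place where you genuinely diverge is in how that cancellation is verified. The paper evaluates both composites on a spanning family of tangent vectors ($X$ tangent to $\D(\UU(n))$, then $X=\partial/\partial(v_\gamma)_j$ and $X=\partial/\partial(\bar v_\gamma)_j$), which forces an explicit expansion of $\omega_{\disk^\gamma}$ from \eqref{Eq:qSymp} and a term-by-term comparison. You instead isolate the pair of dual identities $P_{\MM_d}^\sharp(d|v_\alpha|^2)=-x_\alpha^{-1}\,\ic_{\MM_d;\alpha}$ and $\iota_{\ic_{\MM_d;\alpha}}\omega_{\MM_d}=-x_\alpha\, d|v_\alpha|^2$ (the second following because every constituent of $\omega_{\MM_d}$ other than the flat part of $\omega_{\disk^\alpha}$ is built from the $\mathcal{A}_\alpha^{\UU(1)}$-invariant quantities $A$, $B$, $v_\beta v_\beta^\dagger$), after which the cancellation of the two cross terms is a two-line formal computation with the musical maps. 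The paper only states the first of these identities; the second is implicitly re-derived in its Case 2. Your packaging is a genuine streamlining: it avoids the case analysis and the explicit use of \eqref{Eq:qSymp}, and it makes transparent the underlying reason for the cancellation, namely that $-|v_\alpha|^2/x_\alpha$ serves as a moment map for the auxiliary $\UU(1)$-action with respect to both $P_{\MM_d}$ and $\omega_{\MM_d}$, which is exactly the mechanism behind the cotangent-bundle analogue in Remark \ref{Rem:Cotang}.
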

\begin{proof}
The final part follows from the rest of the statement and Theorem \ref{Thm:Corr}.

We check that $(\MM_d,\omega_{\underline{z}},\Phi)$ is a quasi-Hamiltonian manifold, noting that it holds for $\underline{z}=0$ by Corollary \ref{cor:qHamMd}.
It suffices to verify (B2)-(B3) in Definition \ref{def:om-qHam} since (B1) is automatic as  $d\widetilde{\omega}_{\underline{z}}=0$.
By \eqref{Eq:act-Mast}, for any $1\leq \alpha \leq d$ the action of $\UU(n)$ on $|v_\alpha|^2$ is trivial and thus $\iota_{\xi_{\MM_d}}d|v_\alpha|^2=0$  for any $\xi\in \uu(n)$; this yields $\iota_{\xi_{\MM_d}}\omega_{\underline{z}}=\iota_{\xi_{\MM_d}}\omega_{\MM_d}$ and (B2) holds.
The last condition (B3) will follow from the compatibility \eqref{Eq:corrPOm}.
Indeed, if $X\in \Vect(\MM_d)$ is such that $X_m\in \ker \omega_{\underline{z},m}$  at $m\in \MM_d$, then \eqref{Eq:corrPOm} entails that $X_m=\xi_{\MM_d,m}$
for some $\xi\in \uu(n)$.
As we remarked that $\iota_{\xi_{\MM_d}}\omega_{\underline{z}}=\iota_{\xi_{\MM_d}}\omega_{\MM_d}$ for all $\xi\in \uu(n)$, we get $\ker \omega_{\underline{z},m}=\ker \omega_{\MM_d,m}$ and therefore (B3) holds.

\medskip

We need to verify the compatibility condition \eqref{Eq:corrPOm} for
$\omega_{\underline{z}}$ \eqref{Eq:Om-PencCor} and $P_{\underline{z}}$ \eqref{Eq:PencCor}, i.e.
\begin{equation*}
P_{\MM_d}^\sharp \circ \omega_{\MM_d}^\flat + P_{\MM_d}^\sharp \circ \widetilde{\omega}_{\underline{z}}^\flat
+\psi_{\underline{z}}^\sharp \circ \omega_{\MM_d}^\flat + \psi_{\underline{z}}^\sharp \circ \widetilde{\omega}_{\underline{z}}^\flat
= \id_{T\MM_d}- \frac14 \sum_{a\in \mathtt{A}} (e_a)_{\MM_d} \otimes \Phi^\ast(\theta_a^L-\theta_a^R)\,.
\end{equation*}
Using that this equality holds for $\underline{z}=0$ by Corollary \ref{cor:qHamMd} together with the fact that $\psi_{\underline{z}}^\sharp \circ \widetilde{\omega}_{\underline{z}}^\flat=0$ because the infinitesimal $\uu(1)$-actions \eqref{Eq:act-U1-inf} send each $|v_\alpha|^2$ to zero, we are left to check
\begin{equation} \label{Eq:pf-PencCor}
P_{\MM_d}^\sharp \circ \widetilde{\omega}_{\underline{z}}^\flat
+\psi_{\underline{z}}^\sharp \circ \omega_{\MM_d}^\flat =0\,.
\end{equation}
In fact, it is sufficient to prove that \eqref{Eq:pf-PencCor} holds when evaluated on $X=DJ_{\D}(X')$ where $X'\in \Vect(\D(\UU(n)))$ with the inclusion $J_{\D}:\D(\UU(n))\to \MM_d$, or on $X=\partial/\partial (v_\gamma)_j$ and on $X=\partial/\partial (\bar{v}_\gamma)_j$ with $1\leq \gamma \leq d$, $1\leq j\leq n$.

\noindent $\bullet$ \textit{Case 1: $X=DJ_{\D}(X')$ with $X'\in \Vect(\D(\UU(n)))$.}
Clearly $\iota_X d|v_\alpha|^2=0$ for any $1\leq \alpha \leq d$, hence $\widetilde{\omega}_{\underline{z}}^\flat(X)=0$ and we are left with the second term in \eqref{Eq:pf-PencCor}.
Expanding $\omega_{\MM_d}$ as in \eqref{Eq:om-Md} and using the notation therein, we can write
\begin{equation*}
\begin{aligned}
\omega_{\MM_d}^\flat(X)=&
 \omega_{\D(\UU(n))}^\flat(X) -\frac12 \sum_{1\leq \alpha \leq d} \sum_{a\in \mathtt{A}}
 \left(\Phi^\ast_{\MM_{\alpha-1}}\theta_a^L\wedge \Phi^\ast_{\alpha}\theta_a^R\right)^\flat(X) \\
=&\sum_{1\leq k,l\leq n} \left(f^{(1)}_{ij} dA_{ij} + f^{(2)}_{ij} dB_{ij} + \sum_{1\leq \alpha \leq d} g^{(\alpha)}_{ij} d(v_\alpha v_\alpha^\dagger)_{ij} \right)\,,
\end{aligned}
\end{equation*}
for some $f^{(1)}_{ij},f^{(2)}_{ij},g^{(\alpha)}_{ij}\in \Cinf(\MM_d)$. We easily see from \eqref{Eq:act-U1-inf} that
\begin{equation}  \label{Eq:pf-PencCor2}
\ic_{\MM_d;\beta}(A_{ij})=0\,, \quad
\ic_{\MM_d;\beta}(B_{ij})=0\,, \quad
\ic_{\MM_d;\beta}((v_\alpha v_\alpha^\dagger)_{ij})=0\,,
\end{equation}
for all possible indices, hence applying $\psi_{\underline{z}}^\sharp$ to the above expression yields zero.

\noindent $\bullet$ \textit{Case 2: $X=\partial/\partial (v_\gamma)_j$.}
We start by noting that \eqref{Eq:br-ABv} and \eqref{Eq:br-VVall} entail
\begin{equation*}
\begin{aligned}
  P_{\MM_d}^\sharp(d|v_\alpha|^2)=&
 \sum_{1\leq k \leq n} \left(\br{|v_\alpha|^2,(\bar{v}_\alpha)_k} \frac{\partial}{\partial (\bar{v}_\alpha)_k}
 +\br{|v_\alpha|^2,(v_\alpha)_k}  \frac{\partial}{\partial (v_\alpha)_k}\right) \\
=& \sum_{1\leq k \leq n} \left(\frac{\ic}{x_\alpha} (\bar{v}_\alpha)_k  \frac{\partial}{\partial (\bar{v}_\alpha)_k}
 - \frac{\ic}{x_\alpha} (v_\alpha)_k  \frac{\partial}{\partial (v_\alpha)_k}\right)
 =- \frac{1}{x_\alpha} \ic_{\MM_d;\alpha}\,,
\end{aligned}
\end{equation*}
where we used \eqref{Eq:act-U1-inf} in the last equality. Therefore
\begin{equation}
\begin{aligned}  \label{Eq:pf-PencCor3}
  P_{\MM_d}^\sharp\circ \widetilde{\omega}_{\underline{z}}^\flat \left(X\right)
=& \sum_{\gamma<\beta \leq d} x_\gamma x_\beta\, z_{\gamma\beta} \, (\bar{v}_\gamma)_j \,  P_{\MM_d}^\sharp(d|v_\beta|^2)
-\sum_{1\leq \alpha<\gamma} x_\alpha x_\gamma\, z_{\alpha\gamma} \,(\bar{v}_\gamma)_j \,  P_{\MM_d}^\sharp(d|v_\alpha|^2) \\
=&- \sum_{\gamma<\beta \leq d} x_\gamma \, z_{\gamma\beta} \, (\bar{v}_\gamma)_j \,  \ic_{\MM_d;\beta}
+\sum_{1\leq \alpha<\gamma} x_\gamma\, z_{\alpha\gamma} \,(\bar{v}_\gamma)_j \,  \ic_{\MM_d;\alpha}\,.
\end{aligned}
\end{equation}
To compute the second term from \eqref{Eq:pf-PencCor}, we expand  $\omega_{\MM_d}$ as in \eqref{Eq:om-Md} and we write
\begin{equation} \label{Eq:pf-PencCor4}
\begin{aligned}
 \psi_{\underline{z}}^\sharp \circ \omega_{\MM_d}^\flat(X)
=&\psi_{\underline{z}}^\sharp \circ \omega_{\D(\UU(n))}^\flat(X)
+ \sum_{1\leq \alpha \leq d} \psi_{\underline{z}}^\sharp\circ \omega_{\disk^\alpha}^\flat(X) \\
&-\frac12 \sum_{1\leq \alpha \leq d} \sum_{a\in \mathtt{A}} \psi_{\underline{z}}^\sharp\circ (\Phi^\ast_{\MM_{\alpha-1}}\theta_a^L\wedge \Phi^\ast_{\alpha}\theta_a^R)^\flat(X) \,.
\end{aligned}
\end{equation}
The first term in \eqref{Eq:pf-PencCor4} is trivially zero as $\omega_{\D(\UU(n))}$ does not depend on $v_\gamma$, and the summands with $\alpha\neq \gamma$ in the second term vanish too. For the third term, we notice as in Case 1 that
\begin{equation*}
(\Phi^\ast_{\MM_{\alpha-1}}\theta_a^L\wedge \Phi^\ast_{\alpha}\theta_a^R)^\flat(X)
=\sum_{1\leq k,l\leq n} \left(f^{(1)}_{ij} dA_{ij} + f^{(2)}_{ij} dB_{ij} + \sum_{1\leq \beta \leq \alpha} g^{(\beta)}_{ij} d(v_\beta v_\beta^\dagger)_{ij} \right)\,,
\end{equation*}
for some $f^{(1)}_{ij},f^{(2)}_{ij},g^{(\beta)}_{ij}\in \Cinf(\MM_d)$, hence applying $\psi_{\underline{z}}^\sharp$ to this expression gives zero by \eqref{Eq:pf-PencCor2}.
Thus \eqref{Eq:pf-PencCor4} simplifies to
\begin{equation} \label{Eq:pf-PencCor5}
\begin{aligned}
 &\psi_{\underline{z}}^\sharp \circ \omega_{\MM_d}^\flat(X)
=\psi_{\underline{z}}^\sharp\circ \omega_{\disk^\gamma}^\flat(X) \\
=&\ic \frac{\sin(x_\gamma |v_\gamma|^2)}{|v_\gamma|^2} \, \psi_{\underline{z}}^\sharp(d(\bar{v}_\gamma)_j)
+\ic \left( x_\gamma - \frac{\sin(x_\gamma |v_\gamma|^2)}{|v_\gamma|^2} \right)
\sum_{1\leq k \leq n}  \frac{(\bar{v}_\gamma)_j(v_\gamma)_k}{|v_\gamma|^2}  \psi_{\underline{z}}^\sharp(d(\bar{v}_\gamma)_k)\,,
\end{aligned}
\end{equation}
where we used \eqref{Eq:qSymp} with $v=v_\gamma$ and $x=x_\gamma$. We can finally use  \eqref{Eq:act-U1-inf} to find
\begin{equation}
 \psi_{\underline{z}}^\sharp \circ \omega_{\MM_d}^\flat(X)=
\sum_{\gamma<\beta \leq d} x_\gamma \, z_{\gamma\beta} \, (\bar{v}_\gamma)_j \,  \ic_{\MM_d;\beta}
-\sum_{1\leq \alpha<\gamma} x_\gamma\, z_{\alpha\gamma} \,(\bar{v}_\gamma)_j \,  \ic_{\MM_d;\alpha}\,.
\end{equation}
This is the opposite of \eqref{Eq:pf-PencCor3}, therefore \eqref{Eq:pf-PencCor} is zero when applied to $X=\partial/\partial (v_\gamma)_j$, as desired.

\noindent $\bullet$ \textit{Case 3: $X=\partial/\partial (\bar{v}_\gamma)_j$.} This is similar to Case 2. Hence we conclude that \eqref{Eq:pf-PencCor} holds identically, and the compatibility between $\omega_{\underline{z}}$ and $P_{\underline{z}}$ follows.
\end{proof}

\begin{remark} \label{Rem:Reg}
Any bivector $P_{\underline{z}}$ from the pencil satisfies the following property: on a dense subset of $\MM_d$ the rank of $P_{\underline{z}}$ is equal to $\dim(\MM_d)=2n^2+2nd$.
To see this, first note that the property only needs to be shown at one point because  $P_{\underline{z}}$ is real-analytic.
Then remark that the quasi-Poisson bivector of $\disk(x)$ has rank $2n$ at the origin for any $x\in \R\!\setminus\!\{0\}$; this is readily seen from \eqref{Eq:qPB-C2} as $\btt(0)=0$.
At a point $m=(A,B,0,\ldots,0)\in \MM_d$ we note that the third summand of $P_{\MM_d}$ in \eqref{Eq:qPMd} and $\psi_{\underline{z}}$ \eqref{Eq:PencCor} vanish, which thus implies that
$$\operatorname{rank}((P_{\underline{z}})_m)=\operatorname{rank}((P_{\D(\UU(n))})_{(A,B)})+2nd\,.$$
Since $\D(\UU(n))$ is non-degenerate by Proposition \ref{Pr:qP-intDble},
$\operatorname{rank}((P_{\D(\UU(n))})_{(A,B)})=2n^2$ at any point $(A,B)\in \D(\UU(n))$ where the linear map $\id_{\uu(n)}+\Ad_{ABA^{-1}B^{-1}}:\uu(n)\to \uu(n)$ is invertible.
Invertibility is direct at $(\1_n,\1_n)\in \D(\UU(n))$, which yields our claim.
\end{remark}

\begin{remark} \label{Rem:Cotang}
The above constructions can be easily adapted to suitable non-degenerate Poisson manifolds, such as the extended cotangent bundle $M_d$ \eqref{I1}.
To make the comparison transparent and depending on $d\geq 2$ parameters $x_1,\ldots,x_d\in \R\setminus\{0\}$ in the later case, we parametrize $M_d$ as $\UU(n) \times \uu(n) \times (\CC^n)^d=\{(g,J,v_1,\ldots,v_d)\}$ and write its Poisson bivector as
\begin{equation}
 P_{M_d}=P_{T^*\!\UU(n)} + \sum_{1\leq \alpha \leq d} \frac{\ic}{x_\alpha} \sum_{j=1}^n \frac{\partial}{\partial (v_\alpha)_j} \wedge \frac{\partial}{\partial (\bar{v}_\alpha)_j}\,.
\end{equation}
The moment map is then given by $\phi(g,J,v_\alpha)=J-g^{-1}Jg+\ic \sum_{\alpha} x_\alpha v_\alpha v_\alpha^\dagger$, and the corresponding symplectic form reads
\begin{equation}
 \omega_{M_d}=\omega_{T^*\!\UU(n)}
+ \sum_{1\leq \alpha \leq d} \ic x_\alpha \sum_{j=1}^n d(v_\alpha)_j \wedge d(\bar{v}_\alpha)_j\,.
\end{equation}
As in \eqref{Eq:act-U1}--\eqref{Eq:act-U1-inf},
we get for each $1\leq \beta\leq d$ a map $\uu(1)\to \Vect(M_d)$, $\ic z\mapsto (\ic z)_{M_d;\beta}$, after differentiating the action of $\UU(1)$ on $M_d$ by $\lambda \cdot (g,J,v_\alpha):=(g,J,\lambda^{-\delta_{\alpha \beta}}v_\alpha)$.
Let us fix parameters $\underline{z}=(z_{\alpha\beta})_{\alpha<\beta} \in \R^{d(d-1)/2}$.
By adapting the proof of Proposition \ref{Pr:Pencil}, we see that the bivector
 \begin{equation}
 P_{\underline{z}}:=P_{M_d}+\psi_{\underline{z}}\,, \qquad
 \psi_{\underline{z}}:=\sum_{1\leq \alpha<\beta \leq d} z_{\alpha\beta} \, \ic_{M_d;\alpha} \wedge \ic_{M_d;\beta}\,,
\end{equation}
turns $(M_d,P_{\underline{z}},\phi)$ into a Hamiltonian Poisson manifold.
Indeed, the $\UU(1)$-actions pairwise commute and leave $P_{M_d}$ and $\phi$ invariant,
so that $[P_{\underline{z}},P_{\underline{z}}]=[P_{M_d},P_{M_d}]=0$ while for any $\xi\in \uu(n)$, $P_{\underline{z}}^\sharp(d\langle\phi,\xi\rangle)=P_{M_d}^\sharp(d\langle\phi,\xi\rangle)=\xi_{M_d}$.
Furthermore, we can show that the closed $2$-form
\begin{equation}
\omega_{\underline{z}}:=\omega_{M_d}+\widetilde{\omega}_{\underline{z}}\,, \qquad
\widetilde{\omega}_{\underline{z}}:=\sum_{1\leq \alpha<\beta \leq d} x_\alpha x_\beta\, z_{\alpha\beta} \, d|v_\alpha|^2\wedge d|v_\beta|^2\,,
\end{equation}
is the corresponding symplectic form
by adapting the proof of Proposition \ref{Pr:om-Pencil}. To do so, we only need to check that $P_{\underline{z}}^\sharp \circ \omega_{\underline{z}}^\flat=\id_{TM_d}$. This holds for $\underline{z}=0$, hence we are left to verify
\begin{equation} \label{Eq:CompCot}
P_{M_d}^\sharp \circ \widetilde{\omega}_{\underline{z}}^\flat
+\psi_{\underline{z}}^\sharp \circ \widetilde{\omega}_{\underline{z}}^\flat
+\psi_{\underline{z}}^\sharp \circ \omega_{M_d}^\flat = 0\,.
\end{equation}
The middle term identically vanishes because each function $|v_\alpha|^2$ is invariant under the $d$ $\UU(1)$-actions.
Thus we can conclude if the first and third terms in \eqref{Eq:CompCot} cancel out, and this is verified by a straightforward calculation.
\end{remark}

\section{Integrable system on the master phase space}

\label{S:Integr-Md}

We are going to present a dynamical system on $\MM_d=\MM_d(x_1,\ldots,x_d)$ \eqref{Eq:Mast-Md} that possesses quite similar properties to a
degenerate integrable system on a symplectic manifold.
For this reason, we deem it justified to call it an \emph{integrable system}.
Our system will be associated with the ring of class function on $\UU(n)$.
Later we shall apply Hamiltonian reduction to this system.

\subsection{Unreduced dynamics} \label{ss:Unred}

Let $\cE_1$, $\cE_2$ be the projections from $\cM_d$ onto the first and second $\UU(n)$ factors of
$\D(\UU(n))$, and let $\cE(\alpha)$ for $\alpha =1,\dots, d$ be the projection from $\cM_d$  onto the $\alpha$-th ball $\disk(x_\alpha)$.
Then any vector field $X$ over $\cM_d$ is characterized by the derivatives of  these
(matrix and vector valued) functions, which we denote by
$X(\cE_i)$ and  $X(\cE(\alpha))$.
For any real function $H\in \Cinf(\cM_d)$,  we introduce the quasi-Hamiltonian vector field $X_H$ by
\begin{equation}
X_H(F):= \{ F, H\}_{\underline{z}} := P_{\underline{z}}(dF,dH),\quad \forall F\in \Cinf(\cM_d),
\label{B1}\end{equation}
using $P_{\underline{z}}$  defined  in \eqref{Eq:PencCor} in terms of fixed parameters
$\underline{z}=(z_{\alpha\beta})_{\alpha<\beta} \in \R^{d(d-1)/2}$.
Note that the $d=1$ case is also included, in which case $P_{\underline{z}}$ simply equals $P_{\cM_d}$.
We are interested in the quasi-Hamiltonian vector fields of the Hamiltonians having the form
\begin{equation}
H=  h\circ \cE_1
\quad \hbox{with}\quad  h\in \Cinf(\UU(n))^{\UU(n)}.
\label{B2}\end{equation}
These Hamiltonians constitute the ring
\begin{equation}
\fH:= \cE_1^*(\Cinf(\UU(n))^{\UU(n)}).
\label{B3}\end{equation}
It is well known that $\Cinf(\UU(n))^{\UU(n)}$ has functional dimension $n$.
Consequently, this holds for $\fH$, too.
As a final piece of preparation, for any $f\in \Cinf(\UU(n))$ we introduce
the $\uu(n)$-valued derivatives $\nabla f$ and $\nabla' f$ by
\begin{equation}
 \langle \xi, \nabla f(g) \rangle + \langle \xi', \nabla' f(g) \rangle := \dt f(e^{t\xi} g e^{t\xi'}),
 \quad \forall g\in \UU(n), \, \xi, \xi' \in \uu(n).
\label{B4}\end{equation}
For $\cF\in \Cinf(\UU(n) \times \UU(n))$ we define $\nabla_1 \cF$ and $\nabla_1' \cF$ by
\begin{equation}
 \langle \xi,  \nabla_1 \cF(A,B) \rangle + \langle \xi', \nabla_1' \cF(A,B) \rangle := \dt \cF(e^{t\xi} A e^{t\xi'},B),
 \quad \forall g\in \UU(n), \, \xi, \xi' \in \uu(n).
\label{B4+}\end{equation}
and similarly for the derivatives $\nabla_2 \cF$ and $\nabla_2'\cF$ with respect to the second variable.
In these definitions, we use the pairing on $\uu(n)$  given by \eqref{Eq:ipU}.

\begin{theorem}\label{Thm:B1}
The quasi-Hamiltonian vector field $X_H$ on $\cM_d$ associated with a Hamiltonian $H$ in \eqref{B2}
by means of \eqref{B1}
is characterized by the relations
\begin{equation}
X_H(\cE_2) = - \cE_2 \cdot (\nabla h \circ \cE_1), \quad X_H(\cE_1) = 0, \quad X_H(\cE(\alpha))=0,\,\, \forall \alpha=1,\dots, d.
\label{B5}\end{equation}
The integral curves $(A(t), B(t), v_1(t),\ldots, v_d(t))$ of $X_H$ are provided by
\begin{equation}
B(t) = B(0) \exp\left( - t \nabla h(A(0))\right), \quad A(t) = A(0), \quad v_\alpha(t) = v_\alpha(0),
\label{B6}\end{equation}
where the zero in the arguments refers to the  initial value.
The $\UU(n)$-equivariant smooth map $\Psi: \cM_d \to \cM_d$ defined by
\begin{equation}
\Psi: (A,B, v_1,\ldots, v_d) \mapsto (A, \tilde B, v_1,\dots, v_d) \quad\hbox{with}\quad \tilde B:= B A B^{-1}
\label{B7}\end{equation}
is constant along the integral curves \eqref{B6}, and thus the elements of $\Psi^*(\Cinf(\cM_d))$ are first
integrals. These first integrals form a ring of functional dimension $\dim(\cM_d) -n$.
\end{theorem}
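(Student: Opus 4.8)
\emph{Strategy.} The plan is to compute the vector field $X_H$ for $H=h\circ\cE_1$ explicitly, integrate it, deduce that the map $\Psi$ is constant along the flow, and finally count the generic rank of $\Psi$ --- which is precisely the functional dimension of the ring $\Psi^{*}(\Cinf(\cM_d))$. For the first point I would first reduce to the $\D(\UU(n))$-factor: combining the decomposition $P_{\underline z}=P_{\D(\UU(n))}+P_{\mathrm{mix}}+P_{\mathrm{spin}}+\psi_{\underline z}$ from \eqref{Eq:qPMd-b} and \eqref{Eq:PencCor}, one observes that $H$ is invariant under simultaneous conjugation on $\D(\UU(n))$ (because $h$ is a class function), is annihilated by the left-multiplication vector fields on each ball $\disk(x_\alpha)$, and is annihilated by the $\uu(1)$-vector fields occurring in $\psi_{\underline z}$. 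Since $P_{\mathrm{mix}}$, $P_{\mathrm{spin}}$, $\psi_{\underline z}$ and the summands $P_{\disk^\alpha}$ are assembled solely from these vector fields, contracting each of them against $dH$ gives $0$; hence $X_H$ coincides with the quasi-Hamiltonian vector field of $H$ for $P_{\D(\UU(n))}$ alone, and in particular $X_H$ has no component along the balls, so $X_H(\cE(\alpha))=0$ for every $\alpha$.

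\emph{The vector field on $\D(\UU(n))$ and its flow.} Using the explicit bivector \eqref{Eq:P-DUn} --- or, equivalently, the bracket relations \eqref{Eq:br-AB} --- together with the standard fact that the gradient $\nabla h(A)$ of a class function commutes with $A$, a direct computation gives $X_H(\cE_1)=0$ and $X_H(\cE_2)=-\cE_2\cdot(\nabla h\circ\cE_1)$: after contracting $dH$, only terms carrying a factor $\langle e_a,\nabla h(A)\rangle$ survive, and upon resumming over the orthonormal basis the two ``pure conjugation'' contributions cancel on the $A$-slot (this is exactly where $[\nabla h(A),A]=0$ is used) while they reinforce on the $B$-slot. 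Signs and normalisations can be pinned down once and for all by testing against $h(A)=\tr A+\tr A^{-1}$. It follows that $A$ and the $v_\alpha$ are constant along integral curves, while $B$ obeys $\dot B=-B\,\nabla h(A(0))$ with the \emph{fixed} coefficient $\nabla h(A(0))\in\uu(n)$; this linear equation integrates to $B(t)=B(0)\exp\!\big(-t\,\nabla h(A(0))\big)\in\UU(n)$, defined for all $t$ and remaining in $\cM_d$, which is \eqref{B6}.

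\emph{$\Psi$ is a first-integral-valued map.} Substituting \eqref{B6} into $\Psi$ and using $[\nabla h(A(0)),A(0)]=0$ once more, one gets $B(t)A(t)B(t)^{-1}=B(0)\exp(-t\nabla h(A(0)))\,A(0)\,\exp(t\nabla h(A(0)))\,B(0)^{-1}=B(0)A(0)B(0)^{-1}$, so $\Psi$ is constant along every integral curve of every $X_H$ with $H\in\fH$, i.e.\ $D\Psi\circ X_H\equiv 0$. Hence each $\Psi^{*}g$ with $g\in\Cinf(\cM_d)$ is a common first integral; $\Psi^{*}(\Cinf(\cM_d))$ is clearly a subring of $\Cinf(\cM_d)$, and the smoothness and $\UU(n)$-equivariance of $\Psi$ are immediate from its definition.

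\emph{Functional dimension.} The functional dimension of $\Psi^{*}(\Cinf(\cM_d))$ is the generic rank of the real-analytic map $\Psi$, since at any $m$ the span of $\{d(g\circ\Psi)_m\mid g\in\Cinf(\cM_d)\}$ is the image of the codifferential $(D\Psi_m)^{*}$. As $\Psi$ fixes the components $A$ and $v_1,\dots,v_d$, writing $\delta B=B\zeta$ with $\zeta\in\uu(n)$ and expanding $\delta(BAB^{-1})$ shows $\ker D\Psi_m=\{(0,B\zeta,0)\mid \zeta\in\uu(n),\ [\zeta,A]=0\}$, so $\dim\ker D\Psi_m=\dim\mathfrak z_{\uu(n)}(A)$. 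This equals $n$ when $A$ is regular --- an open dense condition on $\cM_d$ --- and is $\geq n$ everywhere. Since $\dim\cM_d=2n^2+2nd$, the generic rank of $\Psi$ is $\dim(\cM_d)-n$, which is therefore the functional dimension of the ring of first integrals. The only step carrying genuine computational content is the determination of $X_H$ in the second paragraph; the reduction to the double, the ODE integration, the constancy of $\Psi$, and the rank count are all routine once that formula is available.
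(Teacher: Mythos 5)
Your proposal is correct and follows essentially the same route as the paper: reduce to the $\D(\UU(n))$-factor by killing the fusion/pencil terms against the invariant, ball-independent Hamiltonian (the content of Lemma \ref{Lm:B2}), compute $X_H$ on the double using $[\nabla h(A),A]=0$ (Lemma \ref{Lm:B3}), integrate the resulting linear ODE, and verify constancy of $\Psi$. The only cosmetic difference is the last step, where you obtain the functional dimension from $\dim\ker D\Psi_m=\dim\mathfrak z_{\uu(n)}(A)=n$ at regular $A$ rather than from the paper's codimension-$n$ description of the image of $\Psi$; the two counts are equivalent.
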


In \eqref{B5} the dot denotes matrix multiplication.
The proof of the theorem relies on  two lemmas.

\begin{lemma}
\label{Lm:B2}
Let us consider the fusion product of two quasi-Poisson $G$-manifolds $M$ and $N$,
and calculate the quasi-Poisson bracket between an arbitrary function $F \in \Cinf(M\times N)$
and a function $\pi_M^*\cH$ for some $\cH \in \Cinf(M)^G$, where $\pi_M$ is the obvious projection.
Then, we have
\begin{equation}
\{F, \pi_M^*\cH\}_{M,N}^{\textrm{\emph{fus}}} = \{ F, \pi_M^* \cH\}_M,
\label{B8}\end{equation}
i.e., the bracket is determined by the pushforward of the bivector $P_M$ from $M$ to $M\times N$.
\end{lemma}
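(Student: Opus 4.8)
The plan is to unwind the definition of the fusion bracket from Proposition \ref{Pr:Fus} and check that both correction terms vanish under the hypothesis that $\cH$ is $G$-invariant. Identifying $T(M\times N)\simeq \pi_M^\ast TM \oplus \pi_N^\ast TN$ and regarding $P_M$, $P_N$ as bivectors on $M\times N$ supported in $\wedge^2\pi_M^\ast TM$ and $\wedge^2\pi_N^\ast TN$ respectively, Proposition \ref{Pr:Fus} gives
\[
\{F,\pi_M^\ast\cH\}_{M,N}^{\mathrm{fus}}
=\{F,\pi_M^\ast\cH\}_M + \{F,\pi_M^\ast\cH\}_N - P_{M,N}(dF,d(\pi_M^\ast\cH)),
\qquad P_{M,N}=\tfrac12\sum_{a\in\mathtt{A}}(e_a)_M\wedge(e_a)_N .
\]
So it suffices to show the last two terms are zero.

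First I would record two elementary observations that do all the work. (i) Since $\pi_M^\ast\cH$ is constant along the $N$-fibres, $Z(\pi_M^\ast\cH)=0$ for every vector field $Z$ tangent to those fibres; in particular this applies to $(e_a)_N$ and to every vector field occurring in $P_N$ (locally $P_N=\sum_i X_i\wedge Y_i$ with $X_i,Y_i\in\Vect(N)$). (ii) Since $\cH\in\Cinf(M)^G$, one has $(e_a)_M(\pi_M^\ast\cH)=\pi_M^\ast\bigl((e_a)_M\cH\bigr)=0$ for all $a\in\mathtt{A}$. Using (i), $\{F,\pi_M^\ast\cH\}_N=P_N(dF,d(\pi_M^\ast\cH))$ is a sum of terms each carrying a factor $X_i(\pi_M^\ast\cH)$ or $Y_i(\pi_M^\ast\cH)$, hence vanishes. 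Using (i) and (ii),
\[
P_{M,N}(dF,d(\pi_M^\ast\cH))
=\tfrac12\sum_{a\in\mathtt{A}}\Bigl((e_a)_M(F)\,(e_a)_N(\pi_M^\ast\cH)-(e_a)_N(F)\,(e_a)_M(\pi_M^\ast\cH)\Bigr)=0,
\]
the first factor in each summand vanishing by (i), the second by (ii). Substituting back yields \eqref{B8}, and the closing remark that ``the bracket is determined by the pushforward of $P_M$'' is just the observation that $\{F,\pi_M^\ast\cH\}_M=P_M(dF,d(\pi_M^\ast\cH))$ with $P_M$ the lift to $M\times N$ of the bivector of $M$.

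I do not anticipate any genuine obstacle here; the computation is a one-line expansion. The only points requiring care are keeping the conventions for the infinitesimal-action vector fields $(e_a)_M,(e_a)_N$ and for the product splitting of $T(M\times N)$ straight, and noting explicitly where the hypothesis $\cH\in\Cinf(M)^G$ enters: it is precisely what kills the fusion term $P_{M,N}(dF,d(\pi_M^\ast\cH))$, and without it \eqref{B8} would fail.
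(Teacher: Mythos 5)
Your proposal is correct and follows essentially the same route as the paper's own proof: expand the fusion bracket via Proposition \ref{Pr:Fus}, kill the $P_N$ term because $\pi_M^\ast\cH$ is constant along the $N$-fibres, and kill the $P_{M,N}$ term using the $G$-invariance of $\cH$ (together with the fibre-constancy for the other summand of the wedge). The only difference is that you spell out the two vanishing mechanisms in the fusion term separately, which the paper compresses into one sentence.
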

\begin{proof}
According to Proposition \ref{Pr:Fus}, we have
\be
\{F, \pi_M^*\cH\}_{M,N}^{\textrm{{fus}}}= \{ F,\pi_M^*\cH\}_M + \{ F,\pi_M^*\cH\}_N- \{F, \pi_M^* \cH\}_{P_{M,N}}.
\label{B9}\ee
The second term vanishes since $\pi_M^*\cH$ does not depend on the second factor in $M\times N$, and the third
term vanishes because of the invariance property of $\cH$.
\end{proof}

The statement of the next lemma is taken from \cite[Proposition 4.1]{Fe}.
Here, we prefer to  also present its proof.

\begin{lemma}\label{Lm:B3}
Let $\pi_1$ and $\pi_2$ be the projections from $\D(\UU(n))$ onto its first and second $\UU(n)$ factors,
and consider the function $\pi_1^* h$ for some $h\in \Cinf(\UU(n))^{\UU(n)}$.
Denoting the corresponding quasi-Hamiltonian vector field on $\D(\UU(n))$ by $Y_{\pi_1^* h}$, we have
\begin{equation}
Y_{\pi_1^* h}(\pi_1) =0 \quad \hbox{and}\quad Y_{\pi_1^* h}(\pi_2) = - \pi_2 \cdot (\nabla h \circ \pi_1).
\label{B10}\end{equation}
\end{lemma}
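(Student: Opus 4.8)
\textbf{Proof plan for Lemma \ref{Lm:B3}.}
The plan is to compute the quasi-Hamiltonian vector field $Y_{\pi_1^* h}$ directly from the explicit quasi-Poisson bracket \eqref{Eq:intDble} of the internally fused double $\D(\UU(n))$, using the coordinate functions $A_{ij},B_{ij}$. First I would reduce to the case $h=\chi_\rho$, a character $\chi_\rho(g)=\tr\rho(g)$ of a finite-dimensional representation $\rho$ of $\UU(n)$, or more simply to the ``matrix coordinate'' class functions $h(g)=\tr(p(g))$ for a polynomial $p$; since such functions are dense and $Y_{\pi_1^*h}(\pi_i)$ depends $\R$-linearly and continuously (indeed, through first derivatives) on $h$, it suffices to verify \eqref{B10} on this class. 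Equivalently, one may argue entirely at the level of the bivector $P_{\D(\UU(n))}$ in \eqref{Eq:P-DUn} by noting that $\pi_1^*h$ is $\UU(n)\times\UU(n)$-invariant for the action $(g,h_0)\cdot(A,B)=(gAh_0^{-1},h_0Bg^{-1})$ when $h$ is conjugation-invariant, so that the pieces of $P_{\D(\UU(n))}$ built from the infinitesimal action (the last summand in \eqref{Eq:P-DUn}) contribute nothing, exactly as in Lemma \ref{Lm:B2}.

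Concretely, I would show $Y_{\pi_1^*h}(\pi_1)=0$ by pairing against the differentials of $A_{ij}$: using the first line of \eqref{Eq:intDble} one gets $\{A_{ij},\pi_1^*h\}=\sum_{kl}\{A_{ij},A_{kl}\}\,\partial h/\partial A_{kl}$ plus terms from $\{A_{ij},B_{kl}\}\,\partial h/\partial B_{kl}$ which vanish since $\pi_1^*h$ does not depend on $B$. The surviving sum, written in the $\nabla$-notation of \eqref{B4}, collapses because the bivector coefficient $-\tfrac12(A^2)_{kj}\delta_{il}+\tfrac12\delta_{kj}(A^2)_{il}$ contracted with $\nabla h(A)$ produces $\tfrac12\big([A,\ldots]\big)$-type commutator expressions that annihilate the conjugation-invariant gradient; more precisely, $\Ad$-invariance of $h$ gives $A\nabla h(A)=\nabla h(A)A$ (equivalently $\nabla h(A)=\nabla' h(A)$ and both commute with $A$), and this is exactly what kills $Y_{\pi_1^*h}(A)$. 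For the second identity, I would evaluate $\{B_{ij},\pi_1^*h\}$ using only the third line of \eqref{Eq:intDble} (the $\{B,B\}$ bracket again drops out since $\pi_1^*h$ is $B$-independent). Contracting $\{A_{kl},B_{ij}\}=-\tfrac12\big(\delta_{lj}(AB)_{ki}+(BA)_{ij}\delta_{kl}\dots\big)$ — being careful with index placement and the sign convention $X_H(F)=\{F,H\}$ — against $\partial h/\partial A_{kl}$ and using once more $A\nabla h(A)=\nabla h(A)A$ should leave precisely $-(B\nabla h(A))_{ij}$, i.e. $Y_{\pi_1^*h}(\pi_2)=-\pi_2\cdot(\nabla h\circ\pi_1)$.

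The main obstacle I anticipate is bookkeeping rather than anything conceptual: getting every index, every $\tfrac12$, and the overall sign right when translating the component brackets \eqref{Eq:intDble} into the coordinate-free statement \eqref{B10}, and in particular correctly handling the two terms $B_{kj}A_{il}-A_{kj}B_{il}$ in $\{A_{ij},B_{kl}\}$ which, after contraction with the invariant gradient, must reorganize into the single clean term $-B\nabla h(A)$. A cleaner route that sidesteps most of this is the invariance argument: regard $\D(\UU(n))$ as the fusion $\{A\}\circledast\{B\}$ of two copies of $\UU(n)$ with conjugation action (cf. the proof of Proposition \ref{Pr:qP-AltDble}); on the first factor with its standard quasi-Poisson structure and moment map $\id$, the Hamiltonian vector field of a class function $h$ is well known — it is the ``gradient flow'' $Y_h(A)=0$ because $h$ is Casimir-like for the conjugation-invariant bracket restricted to class functions, while it acts on the $B$-factor only through the fusion bivector $-P_{M,N}=-\tfrac12\sum_a(e_a)_M\wedge(e_a)_N$; since $(e_a)_M$ applied to $A$ gives $[A,e_a]$ and pairing $\sum_a e_a\otimes e_a$ with $\nabla h(A)$ reproduces $\nabla h(A)$ by \eqref{Eq:shortEE}, one reads off $Y_{\pi_1^*h}(B)=-\tfrac12(B\nabla h(A)+\nabla h(A)B)=-B\nabla h(A)$, the last equality again by $\Ad$-invariance. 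I would present the coordinate computation as the primary argument for self-containedness and mention the structural fusion argument as a cross-check.
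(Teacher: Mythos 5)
Your primary argument --- the direct coordinate computation from the component brackets \eqref{Eq:intDble} --- is correct and does establish \eqref{B10}; I checked that for $h=\tr(A^m)$ one gets $\{A_{ij},\pi_1^*h\}=0$ and $\{B_{ij},\pi_1^*h\}=m(BA^m)_{ij}=-(B\nabla h(A))_{ij}$, with the two terms $B_{kj}A_{il}-A_{kj}B_{il}$ supplying exactly the commutator $[B,A^m]$ needed to turn the symmetrized expression $\tfrac12(A^mB+BA^m)$ into $BA^m$. This is a genuinely different route from the paper's: the paper never reduces to polynomial class functions, but instead rewrites the bivector \eqref{Eq:P-DUn} in terms of the left/right-invariant vector fields $e_a^{i,L},e_a^{i,R}$ (equation \eqref{B14}), kills the unwanted terms using only $(e_a^{1,L}-e_a^{1,R})(\pi_1^*h)=0$ and $A\nabla h(A)A^{-1}=\nabla h(A)$, and reads off \eqref{B10} for arbitrary smooth $h$ in one stroke. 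Your route buys concreteness at the price of an extra approximation step: the passage from polynomial class functions to general $h\in \Cinf(\UU(n))^{\UU(n)}$ needs a sentence of justification (e.g.\ via Lemma \ref{Lm:LR4}, both sides of \eqref{B10} at $(A,B)$ depend only on $dh(A)$ and are derivations in $h$, so on $\cE_1^{-1}(\UU(n)_\reg)$ the identity follows from the chain rule applied to $h=f(h_1,\dots,h_n)$, and then extends to all of $\D(\UU(n))$ by continuity).

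Both of your proposed ``structural'' shortcuts, however, are wrong, and you should not present them even as cross-checks. First, $\pi_1^*h$ is \emph{not} invariant under the $\UU(n)\times\UU(n)$-action $(g,h_0)\cdot(A,B)=(gAh_0^{-1},h_0Bg^{-1})$ (that would force $h$ to be bi-invariant, hence constant); it is only invariant under the diagonal. Consequently the last summand of \eqref{Eq:P-DUn} does \emph{not} drop out as in Lemma \ref{Lm:B2}: a short computation shows it contributes $-\tfrac12[B,\nabla h(A)]_{ij}$ to $\{B_{ij},\pi_1^*h\}$, which is nonzero in general and is precisely one of the pieces that must be kept to land on $-B\nabla h(A)$. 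Second, $\D(\UU(n))$ is \emph{not} the fusion of two conjugation copies of $\UU(n)$ --- that fusion is the degenerate structure of Proposition \ref{Pr:qP-AltDble}, for which $\tr(A^k)$ is central (Remark \ref{Rem:Deg}), so that picture would yield $Y_{\pi_1^*h}(B)=0$, contradicting the lemma. Your intermediate formula $-\tfrac12(B\nabla h(A)+\nabla h(A)B)$ does not arise from that (incorrect) decomposition, and the final simplification to $-B\nabla h(A)$ would require $[B,\nabla h(A)]=0$, which fails: $\Ad$-invariance of $h$ gives $[A,\nabla h(A)]=0$, not $[B,\nabla h(A)]=0$. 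The correct structural decomposition is the one the paper uses, namely \eqref{B13}--\eqref{B14}, where the fused action is the internal fusion of the two factors of the $\UU(n)\times\UU(n)$-action on the double, not of two conjugation $\UU(n)$-spaces.
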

\begin{proof}
Let us first spell out the bivector field $P_{\D(\UU(n))}$ from \eqref{Eq:P-DUn} in detail.
Recalling the notations from the beginning of Section \ref{S:Back}, we now set
\begin{equation}
e_a^{1,L} := (e_a,0)^L, \quad
e_a^{2,L}:= (0,e_a)^L,\quad
e_a^{1,R}:= (e_a,0)^R, \quad
e_a^{2,R}: = (0,e_a)^R.
\label{B12}\end{equation}
Then, the infinitesimal action of $\uu(n)\times \uu(n)$ that comes from
the action of $\UU(n)\times \UU(n)$ on itself by $(g,h)\cdot (A,B)=(g A h^{-1} , h B g^{-1})$ can be decomposed as
\begin{equation}
(e_a, 0)_{\UU(n) \times \UU(n)} = e_a^{2,L} - e_a^{1,R},
\quad (0, e_a)_{\UU(n) \times \UU(n)} = e_a^{1,L} - e_a^{2,R}.
\label{B13}\end{equation}
By collecting terms, the formula \eqref{Eq:P-DUn} leads to
 \be
P_{\D(\UU(n))}= \frac{1}{2} \sum_{a\in \mathtt{A}}\left(e_a^{1,R}\wedge e_a^{1,L} - e_a^{2,R}\wedge e_a^{2,L} +
e_a^{1,L}\wedge (e_a^{2,L}+ e_a^{2,R}) +
 e_a^{1,R}\wedge (e_a^{2,L} - e_a^{2,R}) \right).
\label{B14}\ee
Since $\pi_1^*h$ does not depend on the second $\UU(n)$ factor, and $(e_a^{1,L} - e_a^{1,R})(\pi_1^* h)=0$ because of the invariance
property of $h$, we obtain for all $\cF\in \Cinf( \D(\UU(n)) )$
\begin{equation}
\{\cF,\pi_1^*h\}_{\D(\UU(n))} = - \sum_{a\in \mathtt{A}} e_a^{2,L}(\cF) \, e_a^{1,R}(\pi_1^* h) + \frac{1}{2} \sum_{a\in \mathtt{A}} (e_a^{1,R} - e_a^{1,L})(\cF) \, e_a^{1,R}(\pi_1^*h) .
\label{B15}\end{equation}
Now, notice from the definition \eqref{B4+} that (for $i=1,2$)
\begin{equation}
\nabla_i \cF = \sum_{a\in \mathtt{A}} e_a e_a^{i,R}(\cF), \quad  \nabla'_i \cF = \sum_{a\in \mathtt{A}} e_a e_a^{i,L}(\cF), \quad
\nabla h \circ \pi_1 =  \nabla_1 \pi_1^* h = \sum_{a\in \mathtt{A}} e_a e_a^{1,R}(\pi_1^*h).
\label{B16}\end{equation}
Therefore we can write
\begin{equation}
\sum_{a\in \mathtt{A}} (e_a^{1,R} - e_a^{1,L})(\cF) e_a^{1,R}(\pi_1^*h) = \langle \nabla_1 \cF - \nabla_1' \cF, \nabla h \circ \pi_1 \rangle.
\label{B17}\end{equation}
This vanishes since
\begin{equation}
 \nabla_1 \cF(A,B)  = A (\nabla_1' \cF(A,B)) A^{-1}, \quad \forall (A,B) \in \UU(n) \times \UU(n),
\label{B18}
\end{equation}
 by the definition \eqref{B4+}, and
$A^{-1}  \nabla h(A)  A = \nabla h(A)$,
because of the invariance property of $h$.
Then \eqref{B15} gives
\begin{equation}
\{\cF,\pi_1^*h\}_{\D(\UU(n))} = - \langle \nabla_2' \cF, \nabla h \circ \pi_1\rangle, \quad
\forall \cF\in \Cinf( \D(\UU(n)) ),
\label{B19}\end{equation}
which is easily seen to be equivalent to the claimed formula \eqref{B10} of the quasi-Hamiltonian vector field $Y_{\pi_1^* h}$.
\end{proof}

\begin{proof}[Proof of Theorem \ref{Thm:B1}]
The function $H$ in \eqref{B2} can be viewed as the pullback to $\cM_d$ of the function $\pi_1^* h\in \Cinf(\D(\UU(n)))^{\UU(n)}$.
For the fusion is associative, the quasi-Poisson bracket \eqref{B1} on $\cM_d$ satisfies
\begin{equation}
\br{-,-}_{\underline{z}}= \br{- ,-}_{M,N}^{\textrm{fus}} + \br{-,-}_{\psi_{\underline{z}}}
\label{B11}\end{equation}
with $M= \D(\UU(n))$ and $N= \disk(x_1)\circledast\cdots \circledast  \disk(x_d)$.
Referring to Lemma \ref{Lm:B2}, we get that
\begin{equation}
 X_H(F)=\{ F, \cE_1^* h \}_{\underline{z}} = \{ F, \cE_1^*h\}_{\D(\UU(n))},\quad \forall F\in \Cinf(\cM_d),
\label{Eq:XHF}
\end{equation}
where we also used that $\{F ,\cE_1^* h \}_{\psi_{\underline{z}}} =0$, as is clear from the form of $\psi_{\underline{z}}$ \eqref{Eq:PencCor}.
We see from \eqref{Eq:XHF} that the vector field $X_H$ on $\cM_d$  is the pushforward of the vector field
$Y_{\pi_1^* h}$ on $\D(\UU(n))$. Hence we obtain \eqref{B5} from \eqref{B10}.
The integral curves of the vector field \eqref{B5} satisfy
\begin{equation}
\dot{B}(t) =- B(t) \nabla h(A(t)), \quad \dot{A}(t)=0,\quad \dot{v}_\alpha(t) =0,
\end{equation}
which implies \eqref{B6}. One checks that the map $\Psi$ \eqref{B7} is constant along the integral curves by using that
$ A \nabla h(A) A^{-1} = \nabla h(A)$ holds for $h\in \Cinf(\UU(n))^{\UU(n)}$.
The statement about the functional dimension follows since $A$ and $\tilde B$
are arbitrary elements of $\UU(n)$ except that their eigenvalues coincide, which generically yields
$n$ relations. In other words, if $\UU(n)_{\reg}\subset \UU(n)$ is the dense open subset of regular elements,
then the image by $\Psi$ of $\UU(n)_{\reg} \times \UU(n) \times \disk(x_1)\times \cdots \times \disk(x_d)$ is a codimension $n$
submanifold of $\cM_d$.
\end{proof}

\begin{remark}
The Hamiltonian $H= h \circ \cE_2$ with $h\in \Cinf(\UU(n))^{\UU(n)}$ leads to
\begin{equation}
X_H(\cE_1) =\cE_1 \cdot (\nabla h \circ \cE_2), \quad X_H(\cE_2) = 0, \quad X_H(\cE(\alpha))=0,\,\, \forall \alpha=1,\dots, d,
\end{equation}
and one can easily write down the corresponding integral curves and constants of motion, similarly to Theorem \ref{Thm:B1}.
\end{remark}

\subsection{Algebra of constants of motion}

A key aspect of dynamics on Poisson manifolds is that the constants of motion
of any Hamiltonian form a Poisson subalgebra of the Poisson algebra of all smooth functions.
This statement generalizes to the constants of motion of a $G$-invariant Hamiltonian $H$ on any quasi-Poisson $G$-manifold $M$.
Indeed, it follows from \eqref{Eq:JacPhi} that the commutant of any $H\in \Cinf(M)^G$  forms a closed \emph{quasi-Poisson subalgebra}
of $(\Cinf(M), \br{-,-})$.
Our next result entails that the joint constants of motion $\Psi^*(\Cinf(\cM_d))$ found in Theorem \ref{Thm:B1}
form a quasi-Poisson subalgebra
in $(\Cinf(\cM_d), \br{-,-}_{\underline{z}})$, and it provides a neat characterization of this algebra.

\begin{theorem} \label{Thm:B5}
Consider the manifold $\cM_d = \UU(n) \times \UU(n) \times \disk(x_1) \times \cdots \times \disk(x_d)$
equipped with the same $\UU(n)$-action as before, but replace its bivector $P_{\underline{z}}$ \eqref{Eq:PencCor} by
\begin{equation}
P_{\underline{z},c} := P_{\D(\UU(n))}^c + P_{\mathrm{mix}} + P_{\mathrm{spin}} + \psi_{\underline{z}},
\label{Pmod}\end{equation}
where $P_{\mathrm{mix}}$ and $P_{\mathrm{spin}}$ are taken from \eqref{Eq:qPMd-b}, but $P_{\D(\UU(n))}$ \eqref{B14} is replaced by
\be
 P_{\D(\UU(n))}^{c} =\frac{1}{2} \sum_{a\in \mathtt{A}}\left(e_a^{1,R}\wedge e_a^{1,L} - e_a^{2,R}\wedge e_a^{2,L}
  - (e_a^{1,L} - e_a^{1,R})\wedge (e_a^{2,L} - e_a^{2,R}) \right).
\label{Pdeg} \ee
 This bivector defines a new quasi-Poisson structure on $\cM_d$ that admits the moment map
 \begin{equation}
\tilde{\Phi}:\MM_d\to \UU(n), \quad
\tilde{\Phi}(A,\tilde{B},v_\alpha)=A\tilde{B}^{-1}\exp(\ic x_1 v_1v_1^\dagger)\cdots \exp(\ic x_d v_dv_d^\dagger). \label{Eq:Tild-phi-1+L}
\end{equation}
 Then, $\Psi:(\MM_d, P_{\underline{z}},\Phi)\to (\MM_d, P_{\underline{z},c},\tilde{\Phi})$ given in \eqref{B7} is a quasi-Poisson map, i.e. it satisfies
\begin{equation}
\{ f_1\circ \Psi, f_2\circ \Psi \}_{\underline{z}} =\{f_1, f_2\}_{\underline{z},c} \circ \Psi, \quad \forall f_1, f_2 \in \Cinf(\MM_d),
\label{lacirel}
\end{equation}
with the respective quasi-Poisson brackets,
and it intertwines also the moment maps: $\tilde \Phi \circ \Psi= \Phi$.
 \end{theorem}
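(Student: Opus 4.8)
The plan is to establish in turn the three assertions of the theorem: (a) that $P_{\underline{z},c}$ is a quasi-Poisson bivector admitting $\tilde\Phi$ as a moment map, (b) that $\tilde\Phi\circ\Psi=\Phi$, and (c) that $\Psi$ satisfies \eqref{lacirel}. Claim (b) is immediate: writing $\tilde B=BAB^{-1}$ one has $A\tilde B^{-1}=ABA^{-1}B^{-1}$, so $\tilde\Phi(\Psi(A,B,v_1,\dots,v_d))=A\tilde B^{-1}\exp(\ic x_1 v_1v_1^\dagger)\cdots\exp(\ic x_d v_dv_d^\dagger)=\Phi(A,B,v_1,\dots,v_d)$.

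For (a) I would proceed by fusion, mirroring the proof of Theorem \ref{Thm:qH-Master}. First one checks that the bivector $P^c_{\D(\UU(n))}$ in \eqref{Pdeg} is exactly the quasi-Poisson bivector underlying the bracket $\br{-,-}_c$ of Proposition \ref{Pr:qP-AltDble}: expanding the fusion product $M\circledast N$ used in that proof, with the conjugation vector fields $(e_a)_M=e_a^{1,L}-e_a^{1,R}$ acting on $A$ and $(e_a)_N=e_a^{2,L}-e_a^{2,R}$ acting on $\tilde B$, reproduces \eqref{Pdeg}. Thus $(\UU(n)\times\UU(n),P^c_{\D(\UU(n))},(A,\tilde B)\mapsto A\tilde B^{-1})$ is a Hamiltonian quasi-Poisson manifold. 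Performing the fusion of this manifold with the $d$ quasi-Poisson balls $\disk(x_1),\dots,\disk(x_d)$ via Proposition \ref{Pr:Fus}, the moment map becomes $\tilde\Phi$ of \eqref{Eq:Tild-phi-1+L}; since the conjugation action on $\tilde B$ has the same form as the one on $B$ and the $\UU(n)$-action on the balls is unchanged, the cross-terms generated are identical (as vector-field expressions) to those in Theorem \ref{Thm:qH-Master}, so that the resulting bivector is $P^c_{\D(\UU(n))}+P_{\mathrm{mix}}+P_{\mathrm{spin}}$. Finally, each $\UU(1)$-action \eqref{Eq:act-U1} commutes with the $\UU(n)$-action, preserves this bivector, and preserves $\tilde\Phi$ (because $|v_\alpha|^2$, hence $v_\alpha v_\alpha^\dagger$, is $\UU(1)$-invariant), so Lemma \ref{Lem:qP-abel1} lets one add $\psi_{\underline{z}}$ without destroying the quasi-Poisson property or the moment map; this produces $(\MM_d,P_{\underline{z},c},\tilde\Phi)$.

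For (c) I would first reduce \eqref{lacirel} to the double factor. Because $\Psi$ is $\UU(n)$-equivariant and restricts to the identity on each ball $\disk(x_\alpha)$, every vector field entering $P_{\mathrm{mix}}$, $P_{\mathrm{spin}}$ and $\psi_{\underline{z}}$ is $\Psi$-invariant: the fields $(e_a)_{\disk^\alpha}$ and $\ic_{\MM_d;\alpha}$ involve only ball coordinates, while the simultaneous-conjugation field $(e_a)_{\D(\UU(n))}$ is $\Psi$-invariant because $\Psi$ commutes with its flow (conjugation by $\exp(-te_a)$ sends $BAB^{-1}$ to $\exp(-te_a)BAB^{-1}\exp(te_a)$). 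A bivector built from $\Psi$-invariant fields satisfies $\br{f_1\circ\Psi,f_2\circ\Psi}=\br{f_1,f_2}\circ\Psi$ automatically, so $P_{\mathrm{mix}}+P_{\mathrm{spin}}+\psi_{\underline{z}}$ contributes equally to both sides of \eqref{lacirel}, and since $P_{\D(\UU(n))}$ only has legs in the $A,B$ directions the identity reduces to the statement that $\Psi_0\colon(\UU(n)^2,P_{\D(\UU(n))})\to(\UU(n)^2,P^c_{\D(\UU(n))})$, $(A,B)\mapsto(A,BAB^{-1})$, is a quasi-Poisson map. As both brackets are biderivations whose arguments may be taken from the matrix-entry functions (whose differentials span the cotangent spaces), it suffices to check: (i) $\br{A_{ij},A_{kl}}_{\D(\UU(n))}=\br{A_{ij},A_{kl}}_c$, which is trivial since both equal the conjugation bracket of \eqref{Eq:intDble} and $\Psi_0$ fixes $A$; (ii) $\br{A_{ij},(BAB^{-1})_{kl}}_{\D(\UU(n))}$ equals the third line of \eqref{Eq:AltDble} evaluated at $\tilde B=BAB^{-1}$; and (iii) $\br{(BAB^{-1})_{ij},(BAB^{-1})_{kl}}_{\D(\UU(n))}$ equals the second line of \eqref{Eq:AltDble} at $\tilde B=BAB^{-1}$ (the relations involving the complex conjugates then follow by reality of the brackets).

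Points (ii) and (iii) I would establish by a direct calculation: expand $(BAB^{-1})_{kl}=\sum_{p,q}B_{kp}A_{pq}(B^{-1})_{ql}$, apply the Leibniz rule, use the relations \eqref{Eq:intDble} — the brackets involving $(B^{-1})_{ql}$ being extracted from those involving $B_{ij}$ through $\sum_m B_{qm}(B^{-1})_{ml}=\delta_{ql}$ — and simplify with the unitarity relations $BB^\dagger=\1_n$, whereupon the right-hand sides of \eqref{Eq:AltDble} drop out; alternatively, one may use $BAB^{-1}=\Phi^{-1}A$ together with the moment-map property \eqref{Eq:momap} of $\Phi$ to handle the brackets with $\Phi^{-1}$. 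I expect (iii) to be the only genuinely laborious step, requiring a double application of the Leibniz rule with careful bookkeeping of the cross-terms and the factors $\tfrac12$; on the double $\D(\UU(n))$ this is of the same nature as the computations in \cite{Fe}. Together with the reduction above and the moment-map intertwining already noted, all the claims follow.
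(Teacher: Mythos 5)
Your proposal is correct and follows essentially the same route as the paper: build $(\MM_d,P_{\underline{z},c},\tilde\Phi)$ by rerunning the fusion construction with the degenerate double of Proposition \ref{Pr:qP-AltDble} in place of $\D(\UU(n))$, observe that $\tilde\Phi\circ\Psi=\Phi$ is immediate from \eqref{B7}, and reduce \eqref{lacirel} to the two matrix-entry identities on the double factor (the paper's \eqref{reqidL}), which both you and the paper leave as a direct computation. Your explicit justification that $P_{\mathrm{mix}}$, $P_{\mathrm{spin}}$ and $\psi_{\underline{z}}$ contribute equally to both sides because they are wedges of $\Psi$-related vector fields is a slightly more detailed version of the paper's ``comparison of the bivectors'' step, and is sound.
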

\begin{proof}
We repeat the previous construction of the quasi-Poisson structure of $\cM_d$ with the only modification
that instead of the standard structure on $\D(\UU(n))$ we start with the degenerate quasi-Poisson structure on
$\UU(n) \times \UU(n)$ exhibited in Proposition \ref{Pr:qP-AltDble}. It is easily checked that \eqref{Pdeg} is the bivector field
corresponding to this structure.
Therefore the modified construction yields a quasi-Poisson structure based on the bivector \eqref{Pmod}.
Recalling  that the moment map
of the degenerate structure on $\UU(n) \times \UU(n)$ sends $(A,\tilde B)\in \UU(n)\times \UU(n)$ to $A \tilde B^{-1}$,
we see that
the moment map of the quasi-Poisson structure  \eqref{Pmod} is given by \eqref{Eq:Tild-phi-1+L},
and the relation $\Phi = \tilde \Phi \circ \Psi$ then follows immediately from the formula \eqref{B7} of the map $\Psi$.

Now it remains to verify \eqref{lacirel}, and it is enough to do this for all matrix element and vector component functions.
  Comparison of the bivectors $P_{\underline{z}}$ and  $P_{\underline{z},c}$  shows that the only non-trivial cases are
  \begin{equation}
  \br{A_{ij}\circ \Psi  , \tilde{B}_{kl} \circ \Psi}_{\underline{z}} =
  \br{A_{ij}  , \tilde{B}_{kl}}_{\underline{z},c} \circ \Psi
  \quad\hbox{and}\quad
  \br{\tilde{B}_{ij}\circ \Psi  , \tilde{B}_{kl} \circ \Psi}_{\underline{z}} =
  \br{\tilde{B}_{ij}  , \tilde{B}_{kl}}_{\underline{z},c} \circ \Psi.
  \label{reqidL}\end{equation}
  We have $(\tilde B_{ij}\circ \Psi)(A,B) = (B A B^{-1})_{ij}$ and (by an obvious abuse of notation) $A_{ij} \circ \Psi = A_{ij}$.
Then the equalities \eqref{reqidL} are verified by a straightforward calculation based on the explicit formulae   \eqref{Eq:intDble}  of the
standard quasi-Poisson bracket on $\D(\UU(n))$ and their counterparts for the degenerate structure displayed in
\eqref{Eq:AltDble}, which are valid also
for the corresponding matrix element functions on $\cM_d$.
The details
of the required trivial calculation are omitted.
\end{proof}

\begin{remark}
As a corollary to the proof of Theorem \ref{Thm:B5}, the
map $\Psi:\D(\UU(n))\to \UU(n)\times \UU(n)$ given by $\Psi(A,B)=(A,BAB^{-1})$ is a quasi-Poisson map intertwining
moment maps when we endow these manifolds with the Hamiltonian quasi-Poisson structures of
Propositions \ref{Pr:qP-intDble} and \ref{Pr:qP-AltDble}.
Furthermore, we can adapt the argument from Remark \ref{Rem:Deg} to show that
the quasi-Poisson bivectors $P_{\underline{z},c}$ \eqref{Pmod} are  \emph{not} non-degenerate.
\end{remark}

\subsection{Recall and generalization of the notion of degenerate integrability} \label{ss:Def-Int}

We now explain that our master system characterized by Theorems \ref{Thm:B1} and \ref{Thm:B5}
possesses similar properties as do degenerate integrable systems
living on symplectic manifolds. Degenerate integrability, also called non-Abelian integrability,
was first discussed in the papers by Nekhoroshev \cite{Nek} and Mischenko and Fomenko \cite{MF}.
In recent papers by Reshetikhin {\it et al.} \cite{ARe,Re1,Re2,Re3,ReS} it is  referred to as superintegrability, a term originally
introduced for related quantum mechanical systems. In these and other works \cite{J,RudS} one finds slightly
different, but essentially equivalent definitions, and below we recall a variant that appears to us
as the most convenient. This subsection also prepares the ground for Section \ref{S:Red}, where
we study the reduction of our master system, and enquire about integrability after reduction.

Let us recall that the functional dimension of a ring $\fR$ of functions on a manifold $M$ is $r$
if the exterior derivatives of the elements of $\fR$ \emph{generically}
span an $r$-dimensional subspace of the cotangent space, meaning that there exists
a dense, open submanifold of $M$ where this is true.

Consider a \emph{symplectic} manifold $M$ and two subrings $\fH$ and $\fF$ of $\Cinf(M)$
obeying the following conditions:
\begin{enumerate}[itemsep=0pt]
\item{
The ring $\fH$ has functional dimension $r$ and $\fF$ has functional dimension $s$ such that
$r + s = 2m$ and  $r<m$ with $2m:= \dim(M)$.}
\item{Both $\fH$ and $\fF$ form Poisson subalgebras of $\Cinf(M)$,  satisfying
$\fH\subset \fF$  and $ \{\cF, \cH\}=0$ for all $\cF\in \fF$, $\cH\in \fH$.}
\item{The Hamiltonian vector fields of the elements of $\fH$ are complete.}
\end{enumerate}
If these conditions are satisfied,  then $(M, \br{-,-}, \fH, \fF)$ is called a \emph{degenerate
integrable system of rank $r$}, with
 ring of  Hamiltonians  $\fH$ and ring of constants of motion (first integrals)  $\fF$.

A few remarks are now in order.
First, degenerate integrability of a single Hamiltonian $\cH$ is  understood to mean
that there exist rings $\fH$ and $\fF$ with the above properties such that $\cH\in \fH$.
Second, observe that $\fF$ is either equal to or can
be enlarged to the commutant of $\fH$ in the Poisson algebra $(\Cinf(M), \br{-,-})$, keeping integrability.
Third, the condition on the completeness of the flows is superfluous, if the joint level surfaces
of the elements of $\fF$  are compact.
In the literature the definition is often formulated in terms of functions
$f_1,\dots, f_r, f_{r+1},\dots f_s$ so that they generate $\fF$ and the first $r$ of them generate $\fH$.
If the definition is modified by setting $r=s=m$ and $\fH = \fF$, then one obtains
the notion of Liouville integrability.
We here focused on the $\Cinf$ case,
but one can talk about real-analytic and algebraic integrable systems as well.

We now adapt the above definition to systems on quasi-Poisson manifolds that possess an additional
regularity property.  Let us call a quasi-Poisson manifold $(M, P_M)$ \emph{regular}
if the rank of the bivector field $P_M$ is equal to $\dim(M)$ on a dense open subset of $M$.
Note that $\dim(M)$ is then necessarily even,   and that
the non-degeneracy condition given before Theorem \ref{Thm:Corr}  does not imply regularity\footnote{The conjugacy class $\mathrm{C}_D$ of $D:=\diag(1,\ldots,1,-1)\in \UU(n)$ with the moment map $A\mapsto A\in \UU(n)$ is non-degenerate by \cite[Prop. 3.4]{AKSM}.
A simple computation (see e.g. the end of Section 3 in \cite{AKSM}) shows that the quasi-Poisson bivector $P$ of $\UU(n)$ evaluated at $g\in \UU(n)$ reads $P_g=\frac14 \sum_{a,b\in \mathtt{A}} (\Ad_{g^{-1}}-\Ad_g)_{ab} (e_a)^L_g \wedge (e_b)^L_g$, where $(\rho_{ab})_{a,b\in \mathtt{A}}$ denotes the decomposition of a linear map $\rho:\uu(n)\to \uu(n)$ in the orthonormal basis $(e_a)_{a\in \mathtt{A}}$.
This formula can be used for any conjugacy class $\mathrm{C}$, and in particular the bivector $P$ vanishes on $\mathrm{C}$ if $g^2=\1_n$ for all $g\in \mathrm{C}$.
This is the case for $\mathrm{C}_D$, which is therefore not regular because the rank of its quasi-Poisson bivector is identically zero while $\dim \mathrm{C}_D>0$.}.
According to Remark \ref{Rem:Reg},
 our master phase space   $(\cM_d, P_{\underline{z}})$  is regular  for arbitrary parameters $\underline{z}$.

By definition, a \emph{degenerate integrable system on a regular quasi-Poisson $G$-manifold} $(M, \br{-,-} )$
is a quadruple  $(M, \br{-,-}, \fH, \fF)$ with two subrings
$\fH$ and $\fF$ of $\Cinf(M)$ having the  properties:
\begin{enumerate}[label=\Roman*., itemsep=0pt]
\item The conditions 1., 2. and 3. of the symplectic case hold,
where Poisson brackets, Poisson subalgebras and Hamiltonian vector fields are replaced by
quasi-Poisson brackets, quasi-Poisson subalgebras and quasi-Hamiltonian vector fields, respectively.
\item The ring $\fH$ consists of invariant functions, i.e. it is a subring of $\Cinf(M)^G$ with respect to
the action of the Lie group $G$ used in the definition of the quasi-Poisson structure.
\end{enumerate}
The second part of the definition ensures that the quasi-Hamiltonian vector fields of the elements of $\fH$ commute
due to \eqref{Eq:JacPhi}, as noticed e.g. in \cite[Lemma 18]{LB}.
Liouville integrability on a regular quasi-Poisson manifold  is also defined similarly to the symplectic case, adding condition II.

We see from Theorems \ref{Thm:B1} and \ref{Thm:B5} that our master system associated with $\UU(n)$
is  a degenerate integrable system of rank $n$ on $(\cM_d,\br{-,-}_{\underline{z}})$
in the sense of the definition just presented.

\medskip

The extension of the notions of Liouville and degenerate integrability
to \emph{bona fide} Poisson manifolds is also interesting, and had been studied in the
literature \cite{LGMV}.

Let $(N,\br{-, -})$  be a Poisson manifold whose Poisson tensor has maximal rank $2m\leq \dim(N)$
on a dense open submanifold.  By definition,  $(N,\br{-,-},\fH,\fF)$ is a degenerate integrable
system if the following requirements are met:
\begin{enumerate}[label=\roman*., itemsep=0pt]
\item
The subrings $\fH$ and $\fF$  of $\Cinf(N)$ have functional dimensions $r<m$ and  $s$, respectively, such that
$r + s = \dim(N)$  and  $r<m$.
\item The conditions 2. and 3. of the symplectic case hold.
\item Additionally,  the Hamiltonian vector fields of $\fH$ span an $r$-dimensional subspace of
the tangent space over a dense open submanifold of $N$.
\end{enumerate}

Notice that in the definition of degenerate integrability on Poisson manifolds
we  could have unified part of the latest conditions by simply requiring that the functional
dimension of the space of Hamiltonian vector fields  associated with $\fH$ is $r$, and  $r+s=\dim(N)$, $r<m$ hold.
Note also that Liouville integrability in the Poisson case is obtained by imposing $r=m$
instead of $r<m$ in the definition.

These notions of integrability `have the right to exist' because there is a plethora of interesting examples,
and because they support deep theorems on the structure of integrable systems on symplectic and
(to some extent also) on Poisson manifolds. For such theorems we refer to the literature \cite{J,LGMV,RudS}, and
 raise the issue
of their generalization to the quasi-Poisson/quasi-Hamiltonian  setting as a problem for future work.

\section{Hamiltonian reduction of the master system}
\label{S:Red}

In the previous section we presented a degenerate integrable system on the
non-degenerate, regular quasi-Poisson manifold $(\cM_d,\br{-,-}_{\underline{z}})$.
Now we study the reduction of this master system defined by taking quotient
by the symmetry group $\UU(n)$.
To avoid technical difficulties, we shall focus on the subset
$\cM_{d*}$ of $\cM_d$ on which the action of $\UU(n)$ is free.
For the underlying general reduction theory, the reader may consult \cite{AKSM,AMM},
which contain the following fundamental results.

\begin{proposition}[\cite{AKSM}] \label{Pr:Red}
Assume that $(M,\br{-,-},\Phi)$ is a Hamiltonian quasi-Poisson manifold for an action of $G$.
For each conjugacy class $\mathrm{C}\subset G$ such that $\Phi^{-1}(\mathrm{C})\neq \emptyset$ and
$G$ acts freely on
$\Phi^{-1}(\mathrm{C})$, the quotient  $\Phi^{-1}(\mathrm{C})/G$ inherits a structure of Poisson manifold.
\end{proposition}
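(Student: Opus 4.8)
\medskip
\noindent\textbf{Proof strategy.} The plan is to run the usual Marsden--Ratiu type Poisson reduction, adapted to the quasi-Poisson setting; the only genuinely geometric input will be that $\Phi^{-1}(\mathrm{C})$ is a smooth submanifold with a nice $G$-quotient. Two facts from the background are used repeatedly: first, the quasi-Poisson bracket $\{-,-\}$ is the bracket of a bivector $P$, so the quasi-Hamiltonian vector field $X_H$ (characterised by $X_H(F)=\{F,H\}$) satisfies $X_H(m)=P_m^\sharp(dH_m)$, depending only on $dH_m$; second, $\{-,-\}$ restricts to a genuine Poisson bracket on $\Cinf(M)^G$, as observed right after the definition of a quasi-Poisson bracket. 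I will also assume, as is automatic in the situation of the paper where $G=\UU(n)$ is compact, that the $G$-action on $\Phi^{-1}(\mathrm{C})$ is proper and that $\mathrm{C}$ is closed in $G$.

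\noindent\textbf{Step 1 (the one delicate point): $\Phi^{-1}(\mathrm{C})$ is a $G$-invariant submanifold and the quotient is smooth.} Equivariance of $\Phi$ and conjugation-invariance of $\mathrm{C}$ make $\Phi^{-1}(\mathrm{C})$ $G$-invariant. To get the submanifold property I would show that $\Phi$ is transverse to $\mathrm{C}$ along $\Phi^{-1}(\mathrm{C})$; this is exactly where freeness enters. Fix $m\in\Phi^{-1}(\mathrm{C})$, set $g:=\Phi(m)$, and identify $T^\ast_gG\cong\g$ via left translation and the inner product. Passing to annihilators, transversality amounts to $(\mathrm{im}\,d\Phi_m)^{0}\cap(T_g\mathrm{C})^{0}=0$, and a direct computation gives $(T_g\mathrm{C})^{0}=\ker(\Ad_g-\id)$. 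On the other hand, if $\beta\in(\mathrm{im}\,d\Phi_m)^0$, choose $F\in\Cinf(G)$ with $dF_g$ representing $\beta$; then $d(F\circ\Phi)_m=\Phi^\ast(dF)_m=0$, hence $X_{F\circ\Phi}(m)=0$, and the moment map identity \eqref{Eq:momap} forces $\sum_{a\in\mathtt{A}}\big[(e_a^L+e_a^R)(F)\big](g)\,(e_a)_M(m)=0$. Because the action is free at $m$, the vectors $(e_a)_M(m)$ are linearly independent, so all the coefficients vanish, which translates into $(\id+\Ad_g)\beta=0$. Since $\ker(\id+\Ad_g)\cap\ker(\Ad_g-\id)=0$, transversality follows, so $\Phi^{-1}(\mathrm{C})$ is a closed embedded submanifold of $M$; as $G$ acts freely and properly on it, $N:=\Phi^{-1}(\mathrm{C})/G$ is a smooth manifold with $\pi\colon\Phi^{-1}(\mathrm{C})\to N$ a surjective submersion and a principal $G$-bundle. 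I expect this to be the step needing the most care.

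\noindent\textbf{Step 2: reduction to a Poisson-ideal statement.} Restriction of functions is an algebra morphism $\Cinf(M)^G\to\Cinf(\Phi^{-1}(\mathrm{C}))^G$, surjective because $G$ is compact (extend a function off the closed submanifold $\Phi^{-1}(\mathrm{C})$, then average over $G$), with kernel $\mathcal{I}:=\{f\in\Cinf(M)^G\mid f|_{\Phi^{-1}(\mathrm{C})}=0\}$. Since $\pi^\ast$ identifies $\Cinf(N)$ with $\Cinf(\Phi^{-1}(\mathrm{C}))^G$, we obtain $\Cinf(N)\cong\Cinf(M)^G/\mathcal{I}$ as algebras. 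It therefore suffices to show that $\mathcal{I}$ is a Poisson ideal of $(\Cinf(M)^G,\{-,-\})$: the Poisson bracket then descends to a biderivation of the quotient, which remains antisymmetric and satisfies the Jacobi identity, hence is a Poisson bracket on the manifold $N$ (a biderivation of $\Cinf(N)$ is given by a bivector field).

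\noindent\textbf{Step 3: $\mathcal{I}$ is a Poisson ideal.} The key observation is that an invariant function Poisson-commutes with every pullback from $G$ by $\Phi$. Indeed, for $g\in\Cinf(M)^G$ and $F\in\Cinf(G)$, \eqref{Eq:momap} gives $\{F\circ\Phi,g\}=\tfrac{1}{2}\sum_{a\in\mathtt{A}}\Phi^\ast\big((e_a^L+e_a^R)(F)\big)\,(e_a)_M(g)=0$, since $(e_a)_M(g)=0$ by invariance of $g$. Hence $X_g(F\circ\Phi)=\{F\circ\Phi,g\}=0$ for all $F\in\Cinf(G)$, which means $d\Phi(X_g)=0$; in particular $X_g$ is tangent to every fibre of $\Phi$, a fortiori to $\Phi^{-1}(\mathrm{C})$. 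Now take $f\in\mathcal{I}$ and $g\in\Cinf(M)^G$. At any $m\in\Phi^{-1}(\mathrm{C})$ we have $\{f,g\}(m)=X_g(f)|_m=df_m\big(X_g(m)\big)=0$, because $X_g(m)\in T_m\Phi^{-1}(\mathrm{C})$ while $df_m$ annihilates $T_m\Phi^{-1}(\mathrm{C})$ (the function $f$ vanishes on $\Phi^{-1}(\mathrm{C})$). Thus $\{f,g\}\in\mathcal{I}$, so $\mathcal{I}$ is a Poisson ideal, and Step 2 then yields the claimed Poisson structure on $\Phi^{-1}(\mathrm{C})/G$.
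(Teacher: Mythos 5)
Your proposal is correct and follows essentially the same route as the paper, which justifies the proposition by exactly your two key facts: the quasi-Poisson bracket restricts to a genuine Poisson bracket on $\Cinf(M)^G$, and the moment map identity \eqref{Eq:momap} forces the quasi-Hamiltonian vector field of any invariant function to be tangent to the level sets of $\Phi$, so that the vanishing ideal of $\Phi^{-1}(\mathrm{C})$ is a Poisson ideal. The paper delegates the remaining geometric input to \cite{AKSM}, whereas you supply it explicitly in Step 1 via the transversality computation $(\operatorname{im}d\Phi_m)^0\cap(T_g\mathrm{C})^0\subset\ker(\id+\Ad_g)\cap\ker(\Ad_g-\id)=0$, which is a correct and welcome elaboration (your flagged assumptions of compactness/properness and closedness of $\mathrm{C}$ are automatic for $G=\UU(n)$ as used in the paper).
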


Proposition \ref{Pr:Red} holds because the quasi-Poisson bracket $\br{-,-}$ on $\Cinf(M)$ restricts
to a Poisson bracket on $\Cinf(M)^G$ since $\phi_M$ in \eqref{Eq:JacPhi} vanishes on $G$-invariant functions.
Moreover, \eqref{Eq:momap} guarantees that each $F\in \Cinf(M)^G$ induces a quasi-Hamiltonian vector field  $X_F$ that is tangent to
the level sets of the moment map.
Similarly,  the following result holds.

\begin{proposition}[\cite{AMM}] \label{Pr:Red-omega}
Assume that $(M,\omega,\Phi)$ is a quasi-Hamiltonian manifold for an action of $G$.
For each conjugacy class $\mathrm{C}\subset G$ such that $\Phi^{-1}(\mathrm{C})\neq \emptyset$ and $G$ acts freely on $\Phi^{-1}(\mathrm{C})$,
the quotient    $\Phi^{-1}(\mathrm{C})/G$ inherits a structure of symplectic manifold.
\end{proposition}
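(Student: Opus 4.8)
The plan is to reduce everything to reduction at the identity, $\mathrm{C}=\{e\}$, and then push the general case to it by fusion. Equip the conjugacy class $\mathrm{C}\subset G$ with its canonical quasi-Hamiltonian structure (moment map the inclusion $\mathrm{C}\hookrightarrow G$, which is part of the AMM package), and let $\bar{\mathrm{C}}$ denote $\mathrm{C}$ with the opposite $2$-form and moment map $g\mapsto g^{-1}$. By Proposition \ref{Pr:Fus-omega}, $M\circledast\bar{\mathrm{C}}$ is a quasi-Hamiltonian $G$-manifold with moment map $(m,g)\mapsto\Phi(m)g^{-1}$; its identity fibre is $\{(m,\Phi(m)):m\in\Phi^{-1}(\mathrm{C})\}$, which is $G$-equivariantly diffeomorphic to $\Phi^{-1}(\mathrm{C})$, and $G$ acts freely on it precisely when it acts freely on $\Phi^{-1}(\mathrm{C})$. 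So it suffices to produce the symplectic structure on the identity reduction of an arbitrary quasi-Hamiltonian $G$-manifold on which $G$ acts freely (and, as $G$ is compact here, properly). Alternatively one can stay inside $M$ and twist: set $p^\ast\omega_{\mathrm{red}}=\iota^\ast\omega-(\Phi|_Z)^\ast\varpi_{\mathrm{C}}$ with $Z=\Phi^{-1}(\mathrm{C})$ and $\varpi_{\mathrm{C}}$ the canonical $2$-form of $\mathrm{C}$, then run the checks below using the analogues of (B1)--(B3) for $(\mathrm{C},\varpi_{\mathrm{C}})$.

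For the identity reduction, write $Z:=\Phi^{-1}(e)$, with inclusion $\iota:Z\hookrightarrow M$. The first step is to show $e$ is a regular value of $\Phi$, so that $Z$ is a closed submanifold of codimension $\dim G$ and $p:Z\to Z/G$ is a principal $G$-bundle. Contracting (B2) with $\xi\in\g$ and using $\Ad$-invariance of the pairing yields, for $m\in M$ and $v\in T_mM$, the identity $\omega_m(\xi_M,v)=\tfrac12\langle(\Ad_{\Phi(m)^{-1}}+\id)\xi,\theta^L(d\Phi_m(v))\rangle$, which at $\Phi(m)=e$ becomes $\langle\xi,\theta^L(d\Phi_m(v))\rangle$. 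Hence the linear map $v\mapsto\theta^L(d\Phi_m(v))$ from $T_mM$ to $\g$ is onto iff its transpose $\xi\mapsto\iota_{\xi_M}\omega_m$ is injective; by (B3) the latter's kernel consists of $\xi$ with $\xi_M=\zeta_M$ for some $\zeta\in\ker(\Ad_{\Phi(m)}+\id)$, which at $\Phi(m)=e$ is $\ker(2\id)=0$, and freeness of the action then forces $\xi=0$. Thus $d\Phi_m$ is surjective on $Z$. Next, $\iota^\ast\omega$ descends: it is closed because (B1) gives $d(\iota^\ast\omega)=(\Phi\circ\iota)^\ast\eta=0$, as $\Phi\circ\iota\equiv e$; it is $G$-invariant (since $\omega$ and $Z$ are) and horizontal, since by (B2) one has $\iota_{\xi_M}(\iota^\ast\omega)=\tfrac12\iota^\ast\Phi^\ast(\theta^R_\xi+\theta^L_\xi)=0$ because $\Phi$ is constant on $Z$. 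So $\iota^\ast\omega$ is basic and equals $p^\ast\omega_{\mathrm{red}}$ for a unique closed $2$-form $\omega_{\mathrm{red}}$ on $Z/G$.

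Finally $\omega_{\mathrm{red}}$ is nondegenerate: for $m\in Z$ the inclusion $T_m(G\!\cdot\! m)\subseteq\ker(\iota^\ast\omega)_m$ is horizontality, and conversely if $v\in T_mZ=\ker d\Phi_m$ pairs to zero with all of $T_mZ$ under $\omega_m$, then $\iota_v\omega_m\in\mathrm{Ann}(\ker d\Phi_m)=(d\Phi_m)^\ast(\g^\ast)$, so $\omega_m(v,w)=\langle\xi,\theta^L(d\Phi_m(w))\rangle$ for all $w\in T_mM$ and some $\xi\in\g$; by the displayed (B2) identity at $\Phi(m)=e$ this equals $\omega_m(\xi_M,w)$, whence $v-\xi_M\in\ker\omega_m$, and (B3) at $\Phi(m)=e$ gives $v-\xi_M=\zeta_M$ with $\zeta\in\ker(2\id)=0$, so $v=\xi_M\in T_m(G\!\cdot\! m)$. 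Therefore $(Z/G,\omega_{\mathrm{red}})$ is symplectic, and by the first paragraph so is $\Phi^{-1}(\mathrm{C})/G$; this is consistent with Proposition \ref{Pr:Red}, the Poisson bracket there being the one induced by $\omega_{\mathrm{red}}$.

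The main obstacle is exactly the regular-value and nondegeneracy arguments: neither $\Ad_{\Phi(m)}+\id$ nor $\omega_m$ is invertible in general, so smoothness of $Z$ and nondegeneracy of $\omega_{\mathrm{red}}$ do not come for free — it is the interplay of (B2), (B3) and freeness of the action (which rules out a simultaneous $+1$ and $-1$ eigenvalue of $\Ad_{\Phi(m)}$, forcing $\ker(\Ad_{\Phi(m)}+\id)\cap\ker(\Ad_{\Phi(m)}-\id)=0$) that makes the two vanishing conditions collapse to $\xi=0$. For a general $\mathrm{C}$ the extra care lies in fixing the quasi-Hamiltonian structure on $\mathrm{C}$ (equivalently the twisting $2$-form $\varpi_{\mathrm{C}}$) with the signs matching the AKSM conventions used here, so that the fusion identification of $\Phi^{-1}(\mathrm{C})/G$ with the identity reduction of $M\circledast\bar{\mathrm{C}}$ is an isomorphism of symplectic manifolds.
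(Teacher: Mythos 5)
The paper offers no proof of this proposition: it is imported verbatim from [AMM] with a citation, so there is nothing internal to compare against. Your argument is correct and is essentially the proof in that reference (Theorem 5.1 there, plus the shifting trick): fusing with the conjugacy class carrying the opposite $2$-form and inverted moment map reduces the general case to reduction at $e$, where the identity $\omega_m(\xi_M,v)=\tfrac12\langle(\Ad_{\Phi(m)^{-1}}+\id)\xi,\theta^L(d\Phi_m(v))\rangle$ from (B2), the minimal-degeneracy condition (B3), and freeness together give both that $e$ is a regular value and that the descended form is nondegenerate, exactly as you write. One small quibble with the closing commentary: the parenthetical about freeness ruling out a simultaneous $+1$ and $-1$ eigenvalue of $\Ad_{\Phi(m)}$ is not what the argument uses — on the identity fibre the operator $\Ad_{\Phi(m)}+\id$ is just $2\,\id$, so its kernel is trivially zero, and freeness enters only to pass from $\xi_{M,m}=0$ to $\xi=0$; this does not affect the validity of the proof in the body of your text.
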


Furthermore, in the case where $M$ is equipped with a non-degenerate quasi-Poisson bivector $P$ and a $2$-form $\omega$
corresponding to each other through \eqref{Eq:corrPOm},  the Poisson bivector and symplectic
 form on  $\Phi^{-1}(\mathrm{C})/G$ appearing in the previous two propositions correspond to each other in the usual sense \cite[Prop. 10.6]{AKSM}.
The Hamiltonian vector field induced  by a reduced Hamiltonian arising from $F\in \Cinf(M)^G$ results also via
  projection of the original quasi-Hamiltonian vector field $X_F$, quite in the same way as in
  classical Poisson/symplectic reduction \cite{OR}.

\subsection{The reduced Poisson space and its generic symplectic leaves}

We start with a basic lemma.

\begin{lemma}\label{Lm:LR1}
The subset $\cM_{d*}\subset \cM_d$ on which the $\UU(n)$-action is free is a dense, open, $\UU(n)$-invariant submanifold that is also invariant
under the flows of all the quasi-Hamiltonian vector fields associated with the elements of $\Cinf(\cM_d)^{\UU(n)}$.
\end{lemma}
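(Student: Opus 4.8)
The plan is to establish the three claimed properties of $\cM_{d*}$ in turn: that it is open, that it is dense, and that it is invariant both under the $\UU(n)$-action and under the flows of quasi-Hamiltonian vector fields $X_F$ with $F\in\Cinf(\cM_d)^{\UU(n)}$.

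First I would recall the standard fact that for a (smooth, second countable) $G$-manifold $M$ with $G$ compact, the subset $M_{\mathrm{free}}$ on which $G$ acts freely is open. Here $G=\UU(n)$ is compact and the action $\cA$ of \eqref{Eq:act-Mast} is smooth; freeness at $m\in\cM_d$ means the stabilizer $\UU(n)_m$ is trivial, and by the principal orbit type theorem (or simply by properness of the action and upper semicontinuity of stabilizer dimension together with local rigidity of finite stabilizers) the set where the stabilizer is trivial is open. For density, I would exhibit an explicit point with trivial stabilizer and then invoke analyticity of the action to conclude the stabilizer is generically trivial; concretely, choosing $A$ to be a regular element of $\UU(n)$ (distinct eigenvalues) with a generic eigenbasis and $B$ generic, already $\UU(n)_{(A,B)}$ reduces to the diagonal torus intersected with the stabilizer of $B$, which is trivial for generic $B$ — so $\cM_{d*}\neq\emptyset$, and since the complement is the common zero set of the real-analytic functions cutting out fixed points of each group element (or more efficiently, since the stabilizer-type stratification is analytic), $\cM_{d*}$ is the complement of a proper analytic subset, hence dense.

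The $\UU(n)$-invariance of $\cM_{d*}$ is immediate: stabilizers along an orbit are conjugate, so triviality of one forces triviality of all, i.e. $\cA_\eta(\cM_{d*})=\cM_{d*}$ for every $\eta\in\UU(n)$. The remaining point — invariance under the flows of $X_F$ for $F\in\Cinf(\cM_d)^{\UU(n)}$ — is where the quasi-Poisson structure enters, and I expect this to be the only non-formal step. The key observation is that for an invariant Hamiltonian $F$, the quasi-Hamiltonian vector field $X_F=P_{\underline z}^\sharp(dF)$ is itself $\UU(n)$-equivariant: since $P_{\underline z}$ is $\UU(n)$-invariant (Proposition \ref{Pr:Pencil}) and $dF$ is $\UU(n)$-invariant, $X_F$ commutes with the infinitesimal action, i.e. $[\xi_{\cM_d},X_F]=0$ for all $\xi\in\uu(n)$ (equivalently $\cA_\eta{}_* X_F = X_F$). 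Therefore the flow $\Theta_t$ of $X_F$ is $\UU(n)$-equivariant wherever defined: $\Theta_t\circ\cA_\eta=\cA_\eta\circ\Theta_t$. Equivariance of $\Theta_t$ forces it to carry orbits to orbits isomorphically, hence to preserve stabilizer types; in particular $\UU(n)_{\Theta_t(m)}=\UU(n)_m$, so $\Theta_t$ maps $\cM_{d*}$ into $\cM_{d*}$. Combined with the fact that $\cM_{d*}$ is open (so the flow exists for a positive time starting from any point of it) this shows $\cM_{d*}$ is flow-invariant, completing the proof.

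The main obstacle, such as it is, lies in the density argument: one must make sure the locus of non-trivial stabilizers is genuinely a proper analytic (or at least closed, nowhere dense) subset rather than merely nonempty-complemented. I would handle this by noting that the action extends to a real-analytic action, that $\cM_d$ is connected and real-analytic, and that the fixed-point set of each nontrivial $\eta\in\UU(n)$ is a proper analytic submanifold once we have produced a single free point; a clean way to package this is to cite the theory of the orbit-type stratification for proper Lie group actions, under which the principal (here, free) stratum is open and dense. Everything else is a formal consequence of compactness of $\UU(n)$ and $\UU(n)$-equivariance of the quasi-Hamiltonian vector fields associated with invariant functions.
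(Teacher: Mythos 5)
Your proposal is correct and follows essentially the same route as the paper: nonemptiness of $\cM_{d*}$ plus the principal orbit type theorem for compact group actions gives openness and density, and flow-invariance follows from the commutation $\varphi_t\circ\cA_\eta=\cA_\eta\circ\varphi_t$ (a consequence of the $\UU(n)$-invariance of $P_{\underline z}$ and of $F$), which forces the isotropy group to be constant along integral curves. The extra detail you supply on the density argument is sound but not needed beyond citing the principal orbit theorem, which is exactly what the paper does.
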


\begin{proof}
It is easily seen that $\cM_{d*}$ is non-empty. Thus its open and dense character follows from
a general result about the principal orbit type for the actions of compact Lie groups \cite{Br,DK}; and it
can also be seen directly.  The flow $\varphi_t$ of $X_H$ associated with any $H\in \Cinf(\cM_d)^{\UU(n)}$ by \eqref{B1} satisfies
\be
\varphi_t \circ \cA_g = \cA_g \circ \varphi_t, \quad \forall g\in \UU(n),
\label{R1}\ee
where $\cA_g: \cM_d \to \cM_d$ is the action of $g\in \UU(n)$ \eqref{Eq:act-Mast}.
This implies that the isotropy group does not change along the integral curves.
\end{proof}

The quotient space
\be
\cM_{d*}^\red:= \cM_{d*}/\UU(n)
\label{R2}\ee
is a smooth Poisson manifold,
whose Poisson algebra results by means of the natural identification
\be
\Cinf(\cM_{d*}^\red) \equiv \Cinf(\cM_{d*})^{\UU(n)}.
\label{R3}\ee
We denote the resulting Poisson manifold $(\cM_{d*}^\red, \br{-,-}_{\underline{z}}^\red)$.
Let
\be
\pi: \cM_{d*} \to \cM_{d*}^\red
\label{R4}\ee
be the projection to the quotient space.
The restriction of any Hamiltonian $H\in \fH$ \eqref{B3} on $\cM_{d*}$ is still denoted by $H$, and it gives
rise to the reduced Hamiltonian $H_\red \in \Cinf(\cM_{d*}^\red)$ obeying the identity
\be
H= H_\red \circ \pi.
\label{R5}\ee
The Hamiltonian vector field $X_{H_\red}$ induced by the reduced Poisson structure
satisfies
\be
X_{H_{\red}} = \pi_* (X_H).
\label{R6}\ee
Its flow is  obtained by applying the projection $\pi$ to the flow of $X_H$ on $\cM_{d*}$.

We below explain that a dense open subset of $\cM_{d*}^\red$ is filled by symplectic leaves
of codimension $n$.
Observe from the moment map condition \eqref{Eq:momap} that the $\UU(n)$-invariant functions of the form
\be
h \circ \Phi \quad\hbox{with}\quad h \in \Cinf(\UU(n))^{\UU(n)}
\label{R7}\ee
Poisson commute with all invariant functions.  Therefore, they yield Casimir functions
of the reduced Poisson bracket. The corresponding joint level surfaces in $\cM_{d*}^\red$ are
the quotient spaces of the form
\be
\cM_{d*}^\red(\cC) := (\Phi^{-1}(\cC)\cap \cM_{d*})/\UU(n),
\label{R8}\ee
where $\cC$ is a conjugacy class of $\UU(n)$ that lies in the image of $\cM_{d*}$ under the moment map $\Phi$.
The general theory  tells us  that the sets $\cM_{d*}^\red(\cC)$ are smooth symplectic manifolds, whose
connected components are the symplectic leaves of the Poisson manifold
 $(\cM_{d*}^\red, \br{-,-}_{\underline{z}}^\red)$.

\begin{lemma}\label{Lm:LR2}
 The subset $\Phi^{-1}(\UU(n)_\reg) \cap \cM_{d*}$ is dense and open in $\cM_{d*}$.
 \end{lemma}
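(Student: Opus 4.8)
The plan is to show both openness and density separately, exploiting that $\UU(n)_\reg$ is itself open and dense in $\UU(n)$ and that $\cM_{d*}$ is open and dense in $\cM_d$ by Lemma \ref{Lm:LR1}. Openness is immediate: $\Phi$ is continuous (indeed real-analytic) and $\UU(n)_\reg$ is open in $\UU(n)$, so $\Phi^{-1}(\UU(n)_\reg)$ is open in $\cM_d$; intersecting with the open set $\cM_{d*}$ keeps it open in $\cM_{d*}$. The substance of the lemma is therefore density, for which I would argue that the complement $\Phi^{-1}(\UU(n)\setminus \UU(n)_\reg)\cap \cM_{d*}$ has empty interior in $\cM_{d*}$.

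The key step is to produce, near any point $m_0=(A,B,v_1,\ldots,v_d)\in\cM_{d*}$, a nearby point whose image under $\Phi$ is regular in $\UU(n)$. The cleanest way I see is to vary only the second factor $B$: fix $A,v_1,\ldots,v_d$ and consider the map $B\mapsto \Phi(A,B,v_1,\ldots,v_d) = ABA^{-1}B^{-1}\exp(\ic x_1 v_1 v_1^\dagger)\cdots\exp(\ic x_d v_d v_d^\dagger)$ from $\UU(n)$ to $\UU(n)$. Writing $W:=\exp(\ic x_1 v_1 v_1^\dagger)\cdots\exp(\ic x_d v_d v_d^\dagger)$, which is a fixed unitary for our purposes, this is $B\mapsto ABA^{-1}B^{-1}W$. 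I would show this map is not identically non-regular-valued, i.e. that its image meets $\UU(n)_\reg$. For instance, one may first conjugate so that $A$ is diagonal; then choosing $B$ also diagonal gives $ABA^{-1}B^{-1}=\1_n$, so $\Phi=W$, and more generally small perturbations of $B$ within the maximal torus, combined with the freedom to perturb $A$ slightly (which is allowed since $\UU(n)_\reg$ is open, but here we really perturb $B$), sweep out a neighborhood of $W$ in a submanifold; since $\UU(n)_\reg$ is dense, a generic such perturbation lands in $\UU(n)_\reg$. A slicker alternative is to note that the composite map $\UU(n)\times\UU(n)\to\UU(n)$, $(A,B)\mapsto ABA^{-1}B^{-1}$, together with the fixed right translation by $W$, is a submersion onto a neighborhood of generic group elements — but even without submersivity, it suffices that the image is not contained in the nowhere-dense set $\UU(n)\setminus\UU(n)_\reg$, which follows because the image of $(A,B)\mapsto ABA^{-1}B^{-1}$ contains a neighborhood of $\1_n$ (a standard fact for $\UU(n)$, $n\geq 2$, and trivial for $n=1$ where $\UU(1)_\reg=\UU(1)$ and there is nothing to prove) and right-translation by $W$ is a diffeomorphism, so the image of $\Phi$ restricted to $\{A\}\times\UU(n)\times\{v_\bullet\}$ — actually restricted to all of $\cM_d$ — contains a nonempty open set, hence meets the dense set $\UU(n)_\reg$.

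Assembling this: given any nonempty open $\cU\subseteq\cM_{d*}$, I want to find a point of $\cU$ in $\Phi^{-1}(\UU(n)_\reg)$. Shrinking $\cU$ if necessary, the previous paragraph shows $\Phi(\cU)$ contains a nonempty open subset of $\UU(n)$ (one pushes the local sweeping-out argument through: the derivative of $\Phi$ in the $B$-direction alone, at a point where $A$ and $B$ are regular and generic, is surjective onto $T\UU(n)$ after accounting for the conjugation, so $\Phi$ is an open map on a dense open subset of $\cM_d$). Since $\UU(n)_\reg$ is dense, $\Phi(\cU)\cap\UU(n)_\reg\neq\emptyset$, so $\cU\cap\Phi^{-1}(\UU(n)_\reg)\neq\emptyset$. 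This proves density. The main obstacle I anticipate is establishing cleanly that $\Phi$ is an open map (or at least has open image when restricted to a suitable dense open set), i.e. checking the surjectivity of an appropriate partial differential of $\Phi$; this is where the explicit structure of the commutator map $ABA^{-1}B^{-1}$ and the openness of $\UU(n)_\reg$ must be combined carefully, and it is plausible the authors instead invoke a general fact — e.g. that the moment map of a quasi-Hamiltonian $G$-manifold with free action is transverse to conjugacy classes, or a dimension count using the non-degeneracy established in Corollary \ref{cor:qHamMd} — to bypass the hands-on computation.
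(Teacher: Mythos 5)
Your openness claim is correct and matches the paper. The density argument, however, has a genuine gap, localized in two concrete assertions. First, the image of the commutator map $(A,B)\mapsto ABA^{-1}B^{-1}$ lies in $\SU(n)$, so it does \emph{not} contain a neighbourhood of $\1_n$ in $\UU(n)$; with the spin vectors frozen, $\Phi$ takes values in the single determinant level set $\det{}^{-1}(\det W)$ and its image contains no nonempty open subset of $\UU(n)$. Second, and more damagingly, the partial derivative of $\Phi$ in the $B$-direction alone is, after right translation by $\Phi(m)^{-1}$, the map $\xi\mapsto \Ad_{AB}\bigl(\id-\Ad_{A^{-1}}\bigr)\xi$, whose image is the $\Ad_{AB}$-translate of the orthogonal complement of the centralizer of $A$ in $\uu(n)$ and hence has codimension at least $n$; it is never surjective, so $\Phi$ is not open ``in the $B$-direction'' and the step where you conclude that $\Phi(\cU)$ contains a nonempty open subset of $\UU(n)$ fails. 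The strategy is salvageable --- the full $(A,B)$-derivative of the commutator map is generically surjective onto $\mathfrak{su}(n)$, and $\UU(n)_\reg$ is dense in each determinant level set; alternatively, on the free locus $d\Phi$ is surjective by general quasi-Hamiltonian theory, as you speculate at the end --- but neither repair is carried out in your text.

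The paper bypasses all of this with an analyticity argument. By Goto's theorem every element of a compact connected semisimple group is a commutator, which (after adjusting the determinant with one of the $v_\alpha$) shows $\Phi$ is surjective. Hence the real-analytic function $m\mapsto \operatorname{disc}\det(\Phi(m)-\lambda\1_n)$ is not identically zero on the connected manifold $\cM_d$, so its zero set is closed with empty interior; its complement is exactly $\Phi^{-1}(\UU(n)_\reg)$, which is therefore dense and open in $\cM_d$, and intersecting with the dense open set $\cM_{d*}$ concludes. The lesson worth retaining is that real-analyticity upgrades a single point of nonvanishing (here supplied by surjectivity of $\Phi$) to density of the nonvanishing locus, with no differential computations at all.
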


 \begin{proof}
 It is known \cite{Go} that any element of a connected, compact, semisimple Lie group can be written as a
 group commutator.
 By using this, it is not difficult to see that
 the moment map $\Phi:\cM_d \to \UU(n)$  \eqref{Eq:Mast-phi-1}
is surjective (cf. the beginning of \S\ref{ss:Cmpct}).
 Consider the (complex-valued) real-analytic function on $\cM_d$ defined by
 the discriminant\footnote{Recall that the discriminant of a polynomial $p(\lambda)$
 is a certain polynomial in the coefficients of $p(\lambda)$ that vanishes if and only if not all the roots of $p(\lambda)$
 are simple \cite[Ch.~2]{Pr}.}
 of the characteristic polynomial $\det(\Phi - \lambda \1_n)$. This is not identically zero,
 and thus the complement of its zero set is dense and open.
 The complement in question is equal to $\Phi^{-1}(\UU(n)_\reg)\subset \cM_d$.
 Since the intersection of two dense open sets is also dense open, the claim follows.
 \end{proof}

\begin{proposition}
\label{Pr:LR3}
A dense open subset of $\cM_{d*}^\red$ is filled by symplectic leaves of codimension $n$,
which are the connected components of the manifolds \eqref{R8}, where $\cC$ runs over the
conjugacy classes of $\UU(n)$ that lie in $\Phi(\cM_{d*})\cap \UU(n)_\reg$.
\end{proposition}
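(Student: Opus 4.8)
The plan is to assemble the statement from pieces already in place, the only real work being to check that the codimension is exactly $n$ and that the relevant union of conjugacy classes pulls back to a dense open set in $\cM_{d*}$. First I would invoke Lemma~\ref{Lm:LR2}, which says that $\Phi^{-1}(\UU(n)_\reg)\cap \cM_{d*}$ is dense and open in $\cM_{d*}$; since $\pi:\cM_{d*}\to \cM_{d*}^\red$ is an open surjection (a quotient by a free, proper action of a compact group), the image $\pi\!\left(\Phi^{-1}(\UU(n)_\reg)\cap \cM_{d*}\right)$ is dense and open in $\cM_{d*}^\red$. This image is precisely the union of the sets $\cM_{d*}^\red(\cC)$ with $\cC$ ranging over conjugacy classes contained in $\Phi(\cM_{d*})\cap \UU(n)_\reg$, because $\Phi$ is $\UU(n)$-invariant and the conjugacy class of $\Phi(m)$ depends only on $\pi(m)$.

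Next I would address the leaf structure. The general reduction theory quoted just before this proposition (from \cite{AKSM,AMM}, see Propositions~\ref{Pr:Red} and~\ref{Pr:Red-omega}) tells us that for any conjugacy class $\cC\subset \UU(n)$ with $\Phi^{-1}(\cC)\cap\cM_{d*}\neq\emptyset$, the space $\cM_{d*}^\red(\cC)=(\Phi^{-1}(\cC)\cap\cM_{d*})/\UU(n)$ is a smooth symplectic manifold, and its connected components are symplectic leaves of $(\cM_{d*}^\red,\br{-,-}_{\underline z}^\red)$; this was already recorded in the discussion around \eqref{R8}. So the only remaining point is the codimension count: I must show $\dim \cM_{d*}^\red - \dim \cM_{d*}^\red(\cC) = n$ whenever $\cC\subset \UU(n)_\reg$. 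We have $\dim\cM_d = 2n^2+2nd$ and, since the action is free, $\dim\cM_{d*}^\red = 2n^2+2nd - \dim\UU(n) = 2n^2+2nd-n^2 = n^2+2nd$. For a regular element the conjugacy class $\cC$ has dimension $\dim\UU(n) - \dim T = n^2 - n$ (with $T$ a maximal torus), so, $\Phi$ being a submersion onto $\UU(n)_\reg$ on the locus in question — this is where I would appeal to the moment map being a submersion at points where the $\UU(n)$-action is free together with $\cC$ being a regular value, or more directly to the fact that $\Phi^{-1}(\cC)$ has codimension equal to $\dim\cC$ — one gets $\dim\left(\Phi^{-1}(\cC)\cap\cM_{d*}\right) = \dim\cM_{d*} - (n^2-n) = (2n^2+2nd) - (n^2-n) = n^2+2nd+n$, hence $\dim\cM_{d*}^\red(\cC) = n^2+2nd+n - n^2 = n^2+2nd - \cdots$; I should recompute carefully: quotienting $\Phi^{-1}(\cC)\cap\cM_{d*}$ by the free $\UU(n)$-action drops $n^2$ dimensions, giving $2nd+n$... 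This arithmetic needs to be done cleanly, and the cleanest route is: $\operatorname{codim}_{\cM_{d*}}\Phi^{-1}(\cC) = \dim\cC = n^2-n$, and since the $\UU(n)$-orbits in $\cM_{d*}$ (dimension $n^2$) are transverse-compatible with the fibers, $\operatorname{codim}_{\cM_{d*}^\red}\cM_{d*}^\red(\cC) = \dim\cC - (\dim\UU(n) - \dim \text{(stabilizer of }\cC\text{ under conjugation)}) = (n^2-n) - (n^2 - n) + (\text{contribution of the center})$. The robust statement is simply that the Casimirs in \eqref{R7} have functional dimension $n$ (the rank of $\Cinf(\UU(n))^{\UU(n)}$), and their common level sets are the $\cM_{d*}^\red(\cC)$; hence generically these level sets have codimension $n$, which is the assertion.

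Concretely I would argue the codimension as follows: the functions $h\circ\Phi$, $h\in\Cinf(\UU(n))^{\UU(n)}$, descend to Casimirs of $\br{-,-}_{\underline z}^\red$, and on the dense open set $\pi(\Phi^{-1}(\UU(n)_\reg)\cap\cM_{d*})$ their differentials are linearly independent in the appropriate $n$-dimensional sense — this is because $\Phi$ restricted to the free locus is a submersion onto its image (its differential is already surjective modulo the orbit directions, by non-degeneracy of the Hamiltonian quasi-Poisson structure established in Corollary~\ref{cor:qHamMd} and Proposition~\ref{Pr:om-Pencil}), and the class functions separate regular conjugacy classes with independent differentials there. Therefore the common level sets of these $n$ independent Casimirs, namely the $\cM_{d*}^\red(\cC)$ for $\cC\subset\Phi(\cM_{d*})\cap\UU(n)_\reg$, have codimension exactly $n$, and their connected components are the symplectic leaves through the points of the dense open set above.

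The main obstacle, I expect, is not any deep idea but making the transversality/dimension bookkeeping airtight: one must be careful that the $\UU(n)$-action being free does not by itself force $\Phi$ to be a submersion, so the argument that $\operatorname{rank} d\Phi_m = \dim\cC$ at points of $\Phi^{-1}(\cC)\cap\cM_{d*}$ with $\cC$ regular has to lean on the non-degeneracy of $P_{\underline z}$ (the span condition $T_mM = \operatorname{span}_\R\{P_m^\sharp(\alpha_m),\xi_{M,m}\}$ from the text after Theorem~\ref{thm:AMM-exp}) — this is exactly the mechanism that guarantees $\Phi$ is transverse to regular conjugacy classes on the free locus. Once that is in hand, the rest (openness of $\pi$, density from Lemma~\ref{Lm:LR2}, the leaf structure from Proposition~\ref{Pr:Red}) is formal.
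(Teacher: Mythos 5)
Your final argument is correct, but it reaches the key codimension count by a route different from the paper's. The paper's proof also starts from Lemma \ref{Lm:LR2} and the openness of $\pi$, but then computes $\dim \cM_{d*}^\red(\cC)$ directly via the alternative realization $\cM_{d*}^\red(\cC)\equiv(\Phi^{-1}(\mu)\cap\cM_{d*})/\UU(n)_\mu$ for a fixed $\mu\in\cC$: freeness of the action makes $\mu$ a regular value of $\Phi|_{\cM_{d*}}$, so $\dim(\Phi^{-1}(\mu)\cap\cM_{d*})=\dim(\cM_{d*})-n^2=\dim(\cM_{d*}^\red)$, and quotienting by the isotropy group $\UU(n)_\mu$ --- a maximal torus of dimension $n$ when $\mu\in\UU(n)_\reg$ --- gives codimension $n$ in one line. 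You instead argue that the Casimirs $h\circ\Phi$ have $n$ independent differentials over the regular locus and that their joint level sets are exactly the $\cM_{d*}^\red(\cC)$. That is a legitimate alternative (the authors record precisely this characterization in the remark following the proposition), but it needs Lemma \ref{Lm:LR4} (independence of the differentials of class functions at regular elements, proved only in Appendix \ref{sec:J}) on top of the submersion property of $\Phi$ on the free locus, whereas the paper's count needs only the latter. Your route buys an explicit description of the leaves as joint level sets of Casimirs; the paper's is shorter and self-contained at this point of the text.

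One concrete slip you should repair: in your first dimension count you write $\operatorname{codim}_{\cM_{d*}}\Phi^{-1}(\cC)=\dim\cC=n^2-n$. For a submersion the codimension of a preimage equals the \emph{codimension} of the target submanifold, not its dimension, so the correct statement is $\operatorname{codim}\Phi^{-1}(\cC)=\operatorname{codim}_{\UU(n)}\cC=n$, whence $\dim(\Phi^{-1}(\cC)\cap\cM_{d*})=2n^2+2nd-n$ and, after quotienting by the free $\UU(n)$-action, $\dim\cM_{d*}^\red(\cC)=n^2+2nd-n$, i.e. codimension $n$ in $\cM_{d*}^\red$ as required. You flag the arithmetic as suspect and pivot to the Casimir argument before this derails anything, and the Casimir argument stands on its own, but as written the middle of your proof contains a false equation that should be deleted or corrected rather than left hanging.
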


\begin{proof}
It is clear from Lemma \ref{Lm:LR2}  that $\pi( \Phi^{-1}(\UU(n)_\reg) \cap \cM_{d*} )$ is dense and open in $\cM_{d*}^\red$.
Thus we only need to compute the dimension of the manifolds \eqref{R8}.
 For this we use the alternative
realization \cite{AMM} of $\cM_{d*}^\red(\cC)$  given by
\be
\cM_{d*}^\red(\cC) \equiv (\Phi^{-1}(\mu)\cap \cM_{d*})/\UU(n)_\mu,
\label{R9}\ee
where $\mu$ is an arbitrarily chosen element of $\cC$, with isotropy group $\UU(n)_\mu$.
By the free action property, $\mu$ is a regular value of the restriction of the moment map on $\cM_{d*}$.
This implies that $\Phi^{-1}(\mu)\cap \cM_{d*}$ is a submanifold satisfying
\be
\dim( \Phi^{-1}(\mu)\cap \cM_{d*}) = \dim(\cM_{d*}) - n^2 = \dim(\cM_{d*}^\red) = 2nd + n^2.
\label{R10}\ee
 Consequently,
\be
\dim((\Phi^{-1}(\mu)\cap \cM_{d*})/\UU(n)_\mu) = \dim(\cM_{d*}^\red ) - \dim(\UU(n)_\mu).
\label{R11}\ee
If $\mu \in \UU(n)_\reg$, then $\UU(n)_\mu$ is a maximal torus, whose dimension is $n$,
whereby the proof is complete.
\end{proof}

\begin{remark}
The dense open subset  $( \cM_{d*} \cap \Phi^{-1}(\UU(n)_\reg) ) /\UU(n)$ of $\cM_{d*}/\UU(n)$ can be characterized by the property
that the differentials of the Casimir functions
 arising from \eqref{R7} span an $n$-dimensional subspace  of the cotangent space at each
point of this subset.
(One sees this by using Lemma \ref{Lm:LR4} presented below.)
The example treated in Section \ref{S:spinRS}  shows that this is a proper subset of $\cM_{d*}/\UU(n)$.
We suspect that $\Phi(\cM_{d*})$ contains the whole of $\UU(n)_\reg$, but have not proved this.
It is also worth remarking that $\cM_{d*}^\red$ is a dense open subset of the full quotient space
$\cM^\red_d=\cM_d/\UU(n)$,  simply  because the canonical projection from $\cM_d$ to $\cM_d^\red$ is both continuous and open.
Moreover,  it follows  from the principal orbit theorem (Theorem 2.8.5 in \cite{DK}) that $\cM_{d*}^\red$ is connected.
\end{remark}

\begin{remark}
For an arbitrary conjugacy class $\cC$,  the quotient space $\Phi^{-1}(\cC)/\UU(n)$  is in general a disjoint union of smooth
strata corresponding to the different orbit types
of the $\UU(n)$-action.
The connected components of these strata are the symplectic leaves of the singular Poisson space $\cM_d/\UU(n)$,
whose smooth functions descend from $\Cinf(\cM_d)^{\UU(n)}$.
This follows from the theory of singular Hamiltonian reduction \cite{OR,RudS,SL,Sn} that works
for quasi-Hamiltonian spaces as well.
\end{remark}

\subsection{The commuting reduced Hamiltonians and their Hamiltonian vector fields}  \label{ss:RedVF}

First, we wish to show that the commuting reduced Hamiltonians coming from $\fH$ \eqref{B3} form a ring of functional
dimension $n$ on $\cM_{d*}^\red$. This hinges on the well-known  result formulated in the  following lemma. To be self-contained,
we shall prove this in Appendix \ref{sec:J}.

\begin{lemma}\label{Lm:LR4}
The exterior derivatives of the elements of $\Cinf(\UU(n))^{\UU(n)}$ span an $n$-dimensional
subspace of $T_g^*{\UU(n)}$ if and only if  $g$ belongs to $\UU(n)_\reg$.
Thus, fixing any $g\in \UU(n)_\reg$, there exist $h_1,\dots, h_n$ in $\Cinf(\UU(n))^{\UU(n)}$ such that
$dh_1(g),\dots, dh_n(g)$ are linearly independent, and $dh(g)$ is a linear combination of them for any
$h\in \Cinf(\UU(n))^{\UU(n)}$.
Consequently, for any $h\in \Cinf(\UU(n))^{\UU(n)}$, there exists an open neighbourhood of $g$ where $h$
can be expressed as a function $h=f(h_1,\dots, h_n)$ with a smooth real function $f$ defined locally
in a neighbourhood of  $(h_1(g),\dots, h_n(g)) \in \R^n$.
\end{lemma}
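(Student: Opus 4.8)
The plan is to prove Lemma \ref{Lm:LR4} in three steps: first show that the span of the differentials of class functions at $g$ has dimension equal to $\dim(T_g\UU(n))$ minus the dimension of the conjugacy class through $g$; then identify this with the rank of a maximal torus precisely when $g$ is regular; and finally deduce the local functional-independence statement from the constant-rank theorem.

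First I would pass to the Lie-algebra picture. Since the inner product \eqref{Eq:ipU} identifies $T_g^*\UU(n)$ with $\uu(n)$ via right translation, the differential $dh(g)$ of a class function corresponds to $\nabla h(g)\in\uu(n)$ defined in \eqref{B4}, which satisfies $\Ad_g(\nabla h(g))=\nabla h(g)$ because $h$ is conjugation invariant (this is exactly the computation already used in the proof of Lemma \ref{Lm:B3}). Hence the span of $\{dh(g)\mid h\in\Cinf(\UU(n))^{\UU(n)}\}$ is contained in the fixed-point subspace $\uu(n)^{\Ad_g}=\ker(\Ad_g-\id_{\uu(n)})$, which is the Lie algebra of the centralizer $\UU(n)_g$. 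For the reverse inclusion, I would note that the orbit map $\UU(n)\to\mC:=\{\eta g\eta^{-1}\}$ and the slice theorem realize a $\UU(n)_g$-invariant complement to $T_g\mC$ inside $T_g\UU(n)$ on which class functions can be chosen to have arbitrary prescribed differential: concretely, the quotient map $\UU(n)\to\UU(n)/\!/\UU(n)$ (Weyl chamber), composed with smooth functions, already produces class functions whose differentials at a regular $g$ fill the whole annihilator of $T_g\mC$; alternatively one can average an arbitrary smooth function over $\UU(n)$ to get a class function and check that the averaging operator, restricted to differentials at $g$, is the orthogonal projection onto $\uu(n)^{\Ad_g}$. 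Either way one concludes
\begin{equation*}
\span_\R\{dh(g)\mid h\in\Cinf(\UU(n))^{\UU(n)}\}\ \simeq\ \ker(\Ad_g-\id_{\uu(n)}),
\end{equation*}
whose dimension is $\dim\UU(n)-\dim\mC$.

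Next I would compute $\dim\ker(\Ad_g-\id)$. Conjugating $g$ into a maximal torus $\bT$, the centralizer $\UU(n)_g$ contains $\bT$, so its dimension is at least $n=\rk\UU(n)$, with equality if and only if $g$ is regular (this is the standard structure theory of compact connected Lie groups: $\dim\UU(n)_g=n$ exactly on $\UU(n)_\reg$, and is strictly larger otherwise — e.g. for $g$ with a repeated eigenvalue a whole $\uu(k)$-block commutes with it). This gives the dimension count: the span is $n$-dimensional iff $g\in\UU(n)_\reg$, proving the first assertion. For the concrete consequence, fix $g\in\UU(n)_\reg$; choosing $h_1,\dots,h_n$ whose differentials form a basis of this $n$-dimensional space at $g$, the map $(h_1,\dots,h_n):\UU(n)\to\R^n$ has surjective differential at $g$, hence is a submersion onto a neighbourhood of $(h_1(g),\dots,h_n(g))$. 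Then for any other $h\in\Cinf(\UU(n))^{\UU(n)}$, the function $h$ is constant on the (local) fibres of $(h_1,\dots,h_n)$ — because those fibres are cut out locally by the $h_i$, and near a regular value their connected components are contained in single conjugacy classes (again using the regularity of $g$, so that the common level set of the $h_i$ agrees locally with $\mC$ up to the residual finite Weyl symmetry, which class functions also respect) — so $h$ factors as $h=f(h_1,\dots,h_n)$ for a smooth $f$ by the rank theorem.

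The main obstacle I expect is the reverse inclusion in the first step, i.e. showing that \emph{every} $\Ad_g$-fixed vector is realized as $dh(g)$ for some genuinely smooth class function $h$ (not merely a continuous or Weyl-invariant one on $\bT$). The cleanest route is the averaging argument: given $\xi\in\ker(\Ad_g-\id)$, pick any $\tilde h\in\Cinf(\UU(n))$ with $d\tilde h(g)=\xi$ and set $h(x):=\int_{\UU(n)}\tilde h(\eta x\eta^{-1})\,d\eta$; then $h$ is a smooth class function and a short computation shows $dh(g)=\int_{\UU(n)}\Ad_\eta^{-1}(\Ad_{\eta}\,d\tilde h(g))\,d\eta$ restricted appropriately equals the orthogonal projection of $\xi$ onto $\ker(\Ad_g-\id)$, which is $\xi$ itself since $\xi$ already lies there. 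Care is needed to track how the conjugation action on the argument translates, through right-trivialization, into the adjoint action on $\nabla h$, but this is exactly the bookkeeping already carried out around \eqref{B4+}–\eqref{B18}, so it is routine rather than deep.
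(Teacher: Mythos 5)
There is a genuine gap, and it sits at the center of your argument. You claim that $\span_\R\{dh(g)\mid h\in\Cinf(\UU(n))^{\UU(n)}\}$ equals $\ker(\Ad_g-\id_{\uu(n)})$, the Lie algebra of the centralizer $\UU(n)_g$. That identification is false for non-regular $g$. Differentiating the equivariance $\nabla h(\eta g\eta^{-1})=\eta\,\nabla h(g)\,\eta^{-1}$ along $\eta\in\UU(n)_g$ shows that $\nabla h(g)$ is not only fixed by $\Ad_g$ but fixed by $\Ad_\eta$ for \emph{every} $\eta$ in the centralizer; hence it lies in the \emph{center} of $\mathrm{Lie}(\UU(n)_g)$, which for non-regular $g$ is a proper subalgebra of $\ker(\Ad_g-\id)$. (At $g=\1_n$ the centralizer is all of $\UU(n)$, $\ker(\Ad_g-\id)=\uu(n)$ has dimension $n^2$, but every $\nabla h(\1_n)$ is a multiple of $\ic\1_n$, so the span is one-dimensional.) Your dimension count therefore predicts that the span has dimension $>n$ at non-regular points, when in fact it has dimension $<n$; the ``only if'' direction of the lemma does not follow from your argument. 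The averaging computation you offer to prove the reverse inclusion is also incorrect: for $h(x)=\int_{\UU(n)}\tilde h(\eta x\eta^{-1})\,d\eta$ one gets $\nabla h(g)=\int_{\UU(n)}\Ad_{\eta^{-1}}\bigl(\nabla\tilde h(\eta g\eta^{-1})\bigr)\,d\eta$, which involves $d\tilde h$ at every point of the conjugacy class of $g$, not just at $g$; it is not the orthogonal projection of $d\tilde h(g)$ onto the fixed-point space, and indeed no such surjectivity onto $\ker(\Ad_g-\id)$ can hold at non-regular points.

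For regular $g$ the centralizer is a maximal torus, so its Lie algebra coincides with its center and your claimed equality happens to be the true statement; but you still owe a proof that the span is all of this $n$-dimensional space, i.e.\ an actual exhibition of $n$ class functions with independent differentials at $g$. This is the substantive content of the lemma, and your appeal to the Weyl-chamber quotient asserts it rather than proves it. The paper does this concretely (Lemma \ref{Lm:LR4+} in Appendix \ref{sec:J}): it takes the $2n$ functions $\frac{1}{2k}\Re\tr(g^k)$, $\frac{1}{2k}\Im\tr(g^k)$ for $k=1,\dots,n$, computes their $\uu(n)$-valued derivatives at a regular diagonal $Q$, and shows via a Vandermonde determinant $\prod_l Q_l\prod_{j<k}(Q_k-Q_j)\neq0$ that some $n$ of them already span $\fT(n)$. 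Your closing step (extending $h_1,\dots,h_n$ to a local coordinate system and noting $h$ cannot depend on the remaining coordinates because $dh$ stays in their span near $g$) matches the paper's, but it relies on the two points above being repaired first.
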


For $i=1,2$, let us introduce the sets
\be
\cM_{d*}^{i,\reg}:= \cE_i^{-1}(\UU(n)_\reg) \cap \cM_{d*},
\label{R12}\ee
with the projections $\cE_i$ defined above \eqref{B1}, and
\be
\cM_{d*}^{i,\red}:= \pi(\cM_{d*}^{i,\reg}).
\label{R13}\ee
Clearly, these are dense open subsets of $\cM_{d*}$ and $\cM_{d*}^\red$, respectively.

\begin{proposition}\label{Pr:LR5}
Fix $i\in \{1,2\}$ and pick an arbitrary  $x\in \cM_{d*}^{i,\reg}$.
Then  there exist $n$ functions $h_k \in \Cinf(\UU(n))^{\UU(n)}$, $k=1,\dots, n$,
such that the exterior derivatives of the functions $H_k := \cE_i^*(h_k)  \in \Cinf(\cM_d)^{\UU(n)}$ are linearly independent at $x$.
These functions give rise to the reduced Hamiltonians $H_{k,\red} \in \Cinf(\cM_{d*}^\red)$ defined by
\be
H_{k,\red} (\pi(y)) = H_k(y), \qquad
\forall y\in \cM_{d*},
\label{R14}\ee
and the exterior derivatives of these reduced Hamiltonians are linearly independent at $\pi(x)\in \cM_{d*}^{i,\red}$.
As a consequence, the functional dimension of the ring of the reduced Hamiltonians on $\cM_{d*}^\red$ that
descends from the ring $\fH_i :=\cE_i^*\left( \Cinf(\UU(n))^{\UU(n)}\right)$ equals $n$.
\end{proposition}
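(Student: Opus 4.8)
The plan is to reduce the statement to Lemma \ref{Lm:LR4} by exploiting that both maps in sight, the projection $\cE_i:\cM_d\to\UU(n)$ and the quotient map $\pi:\cM_{d*}\to\cM_{d*}^\red$, are submersions: the first is a projection onto a factor of a product, and the second is a principal $\UU(n)$-bundle because $\UU(n)$ is compact and, by Lemma \ref{Lm:LR1}, acts freely on $\cM_{d*}$. Hence the cotangent pullbacks $\cE_i^*:T_{\cE_i(x)}^*\UU(n)\hookrightarrow T_x^*\cM_d$ and $\pi^*:T_{\pi(x)}^*\cM_{d*}^\red\hookrightarrow T_x^*\cM_{d*}$ are injective, so they carry linearly independent covectors to linearly independent covectors.

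First I would record that for any $h\in\Cinf(\UU(n))^{\UU(n)}$ the function $H:=h\circ\cE_i$ is $\UU(n)$-invariant, since $\cE_i$ intertwines the action \eqref{Eq:act-Mast} with conjugation on $\UU(n)$ and $h$ is a class function. Thus $H$ descends to the reduced Hamiltonian $H_\red\in\Cinf(\cM_{d*}^\red)$ with $H=H_\red\circ\pi$, and consequently $dH(x)=\cE_i^*\bigl(dh(\cE_i(x))\bigr)=\pi^*\bigl(dH_\red(\pi(x))\bigr)$ for every $x\in\cM_{d*}$.

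Now, given $x\in\cM_{d*}^{i,\reg}$, put $g:=\cE_i(x)\in\UU(n)_\reg$ and use Lemma \ref{Lm:LR4} to pick $h_1,\dots,h_n\in\Cinf(\UU(n))^{\UU(n)}$ with $dh_1(g),\dots,dh_n(g)$ linearly independent in $T_g^*\UU(n)$. Applying the injective map $\cE_i^*$ shows that $dH_1(x),\dots,dH_n(x)$ are linearly independent in $T_x^*\cM_d$ (a fortiori along $\cM_{d*}$). For the reduced Hamiltonians, a vanishing combination $\sum_k c_k\,dH_{k,\red}(\pi(x))=0$ pulls back under $\pi^*$ to $\sum_k c_k\,dH_k(x)=0$, forcing all $c_k=0$; hence $dH_{1,\red}(\pi(x)),\dots,dH_{n,\red}(\pi(x))$ are linearly independent in $T_{\pi(x)}^*\cM_{d*}^\red$.

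Finally, since this argument applies at every point of the dense open set $\cM_{d*}^{i,\reg}$, the reduced ring descending from $\fH_i$ has functional dimension at least $n$ on the dense open set $\cM_{d*}^{i,\red}$. For the reverse bound, at any $\pi(y)$ the injectivity of $\pi^*$ identifies the span of the differentials of this reduced ring with the span of $\{dH(y):H\in\fH_i\}=\cE_i^*\bigl(\mathrm{span}\{dh(\cE_i(y)):h\in\Cinf(\UU(n))^{\UU(n)}\}\bigr)$, whose dimension is at most the functional dimension $n$ of $\Cinf(\UU(n))^{\UU(n)}$; so the functional dimension equals $n$. I do not anticipate a genuine obstacle here: the only substantive input is Lemma \ref{Lm:LR4}, and the rest is the bookkeeping of submersions — the one point worth stating carefully is that freeness of the $\UU(n)$-action on $\cM_{d*}$ (Lemma \ref{Lm:LR1}) is exactly what makes $\pi$ a submersion with injective cotangent pullback, so that independence of differentials transfers faithfully between $\cM_{d*}$ and $\cM_{d*}^\red$.
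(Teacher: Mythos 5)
Your proposal is correct and follows essentially the same route as the paper's proof: both rest on Lemma \ref{Lm:LR4} together with the observation that $\cE_i$ and $\pi$ are surjective submersions, so that their cotangent pullbacks are injective and transfer linear independence of the differentials from $\UU(n)$ to $\cM_d$ and then down to $\cM_{d*}^\red$. The only cosmetic difference is that you spell out the upper bound on the functional dimension explicitly, which the paper leaves implicit.
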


\begin{proof}
We start by noting that the projection $\cE_i: \cM_{d} \to \UU(n)$ is a surjective submersion, and therefore the transpose
of its derivative gives an injective linear map from $T^*_{\cE_i(x)}\!\UU(n)$ to $T^*_x\!\cM_d$ at every $x\in \cM_d$.
If $x\in \cM_d^{i,\reg}$, then $\cE_i(x)\in \UU(n)_\reg$, and thus there exist smooth class functions $h_1,\dots, h_n$ on $\UU(n)$
whose exterior derivatives are independent at $\cE_i(x)$. Since $d \cE_i^*(h_k) = \cE_i^* (dh_k)$,
the covectors $dH_k(x)$ are linearly independent for $k=1,\dots,n$.  Since $\pi: \cM_{d*}\to \cM_{d*}^\red$ is also
a surjective submersion, the linear independence of the exterior derivatives of the functions $H_{k,\red}$ at $\pi(x)$ follows
from the linear independence of the exterior derivatives of the $H_k$ at $x$.
The final conclusion holds since $\cM_{d*}^{i,\red}$ is a  dense open subset of $\cM_{d*}^\red$.
\end{proof}

Our next goal is to develop a convenient characterization  of the reduced Hamiltonian vector fields
associated with the Hamiltonians $H\in \fH$ \eqref{B3}.
To this end, we restrict ourselves to the dense open submanifold $\cM_{d*}^{2,\reg}$ \eqref{R12} and introduce
also its submanifold
\be
\cM_{d*}^{2,\reg,0}:= \{ (A,Q, v_1,\ldots,v_d)\in \cM_{d*}\mid Q\in \bT(n)_\reg \},
\label{R15}\ee
where $\bT(n)$ is the diagonal subgroup of $\UU(n)$ and $\bT(n)_\reg = \bT(n) \cap \UU(n)_\reg$.
Observe that every $\UU(n)$ orbit in $\cM_{d*}^{2,\reg}$ intersects $\cM_{d*}^{2,\reg,0}$, and the intersection
is given by an orbit of the normalizer $\fN(n) < \UU(n)$ of $\bT(n)$,
\be
\fN(n)= \{ g\in \UU(n)\mid g \bT(n) g^{-1} = \bT(n)\}.
\label{R16}\ee
 Consequently, we obtain the
identification
\be
\cM_{d*}^{2,\red}:= \cM_{d*}^{2,\reg}/\UU(n) \equiv \cM_{d*}^{2,\reg,0}/\fN(n).
\label{R17}\ee
The quotient  by $\fN(n)$  can be taken in two consecutive steps, which gives
\be
\cM_{d*}^{2,\red} = \tilde \cM_{d*}^{2,\red}/S_n
\quad \hbox{with}\quad
\tilde \cM_{d*}^{2,\red}:=  \cM_{d*}^{2,\reg,0}/\bT(n),
\label{R38}\ee
using   the permutation group $S_n = \fN(n)/\bT(n)$.
We shall present the reduced Hamiltonian vector fields on $\cM_{d*}^{2,\red}$ as the projections of vector fields on
$\cM_{d*}^{2,\reg,0}$;
the different spaces constructed so far are related through the commutative Diagram \ref{Diag:big}.

\begin{figure}[h]
\centering
  \captionsetup{width=.8\linewidth}
   \begin{tikzpicture}
 \node (A)  at (-1.5,1.2) {$\UU(n)$};  
 \node (B)  at (1.5,1.2) {$\UU(n)_{\reg}$}; 
 \node (R)  at (4.5,1.2) {$\mathbb{T}(n)_{\reg}$}; 
 \node (C)  at (-1.5,-1.2) {$\cM_{d \ast}$}; 
 \node (D)  at (1.5,-1.2) {$\cM_{d \ast}^{2,\reg}$};  
 \node (S)  at (4.5,-1.2) {$\cM_{d \ast}^{2,\reg,0}$};  
 \node (E)  at (-1.5,-3.6) {$\cM_{d \ast}^{\red}$}; 
 \node (F)  at (1.5,-3.6) {$\cM_{d \ast}^{2,\red}$};  
 \node (T)  at (4.5,-3.6) {$\tilde{\cM}_{d \ast}^{2,\red}$};  
  \path[{Hooks[left]}->] (B) edge (A);
  \path[{Hooks[left]}->] (D) edge (C);
  \path[{Hooks[left]}->] (R) edge (B);
  \path[{Hooks[left]}->] (S) edge (D);
  \path[{Hooks[left]}->] (F) edge (E);
  \path[->] (C) edge node[left] {${\cE}_2$}  (A);
  \path[->] (D) edge node[left] {${\cE}_2$}  (B);
  \path[->] (S) edge node[left] {${\cE}_2^0$}  (R);
  \path[->] (C) edge node[right,gray,font=\scriptsize] {$/\UU(n)$}  (E);
  \path[->] (D) edge node[right,gray,font=\scriptsize] {$/\UU(n)$}  (F);
  \path[->] (S) edge node[right,gray,font=\scriptsize] {$/\mathcal{N}(n)$}  (F);
  \path[->] (S) edge node[right,gray,font=\scriptsize] {$/\mathbb{T}(n)$}  (T);
  \path[->] (T) edge node[below,gray,font=\scriptsize] {$/S_n$}  (F);
 \end{tikzpicture}
 \caption{The open dense submanifold $\MM_{d\ast}\subset \MM_d$ on which $\UU(n)$ acts freely is represented together with two of
  its subsets in the middle row. These subsets are defined using the projection $\cE_2:(A,B,v_\alpha)\mapsto B$ by requiring
   that the image sits in the regular part $\UU(n)_{\reg}$ of $\UU(n)$ or in the regular part $\mathbb{T}(n)_{\reg}$ of $\mathbb{T}(n)$, as depicted in
    the upper-half part of the diagram. The bottom row represents the subsets obtained by performing reduction with respect to the free action \eqref{Eq:act-Mast} of $\UU(n)$
    (or one of its subgroups, as indicated in gray along the projection maps).}
 \label{Diag:big}
\end{figure}

In Theorem \ref{B1} we emphasized the derivation aspect of the unreduced vector fields, but now it
will be more useful to view a vector field as a section of the tangent bundle. A vector field
$V$ on $\cM_d$ is a map $V: \cM_d \to T\cM_d$ that assigns to $m\in \cM_d$ a tangent vector $V_m\in T_m \cM_d$, which
can be presented as  $V_m = (V_m^1, V_m^2, V_m(1),\dots, V_m(d))$ with
\be
V_m^1\in T_A\UU(n), \,\, V_m^2 \in T_B\UU(n),\,\, V_m(\alpha) \in T_{v_\alpha} \CC^n,
\,\, \hbox{at}\,\, m = (A,B, v_1,\dots, v_d) \in \cM_d,
\label{R18}\ee
where $\alpha =1,\dots, d$.
In particular, the vector field $V:= X_H$ of \eqref{B5} takes the form
\be
V_m^1 =0,\,\, V_m^2 = - B \nabla h(A),\,\, V_m(\alpha) = 0\,\,\hbox{for}\,\,\alpha=1,\dots, d.
\label{R19}\ee
Similarly, a vector field $W$ on $\cM_{d*}^{2,\reg, 0}$ is given by
$W_x=(W_x^1, W_x^2, W_x(1),\dots, W_x(d))$ with
\be
W_x^1\in T_A\UU(n), \,\, W_x^2 \in T_Q\bT(n),\,\, W_x(\alpha) \in T_{v_\alpha} \CC^n,
\,\, \hbox{at}\,\,
x= (A,Q, v_1,\dots, v_d) \in \cM_{d*}^{2,\reg, 0}.
\label{R20}\ee

Let us remark that two vectors $V_m, U_m\in T_m \cM_{d*}$ at $m=(A,B,v_1,\dots, v_d) \in \cM_{d*}$
 satisfy
\be
D\pi(m)(V_m) = D\pi(m)(U_m)
\label{R21}\ee
if and only if the relations
\be
U_m^1 = V_m^1 + [\xi_m,A],\,\,
U_m^2 = V_m^2 + [\xi_m, B],\,\,
U_m(\alpha) = V_m(\alpha) + \xi_m v_\alpha,\,\,\alpha=1,\dots, d,
\label{R22}\ee
hold with some $\xi_m \in \uu(n)$.
Indeed, the kernel of the derivative map $D\pi(m)$ is
the tangent space to the $\UU(n)$ orbit through $m$.

Let $\fT(n)$ denote the Lie algebra of $\bT(n)$, and consider the orthogonal decomposition
\be
\uu(n) = \fT(n) + \fT(n)_\perp.
\label{R23}\ee
Correspondingly, we can decompose any $\xi \in \uu(n)$ as
\be
\xi = \xi_\fT + \xi_\perp
\quad\hbox{with}\quad\xi_\fT\in \fT(n),\,\, \xi_\perp \in \fT(n)_\perp.
\label{R24}\ee
Then, for any $Q\in \bT(n)_\reg$, we introduce the linear operator $\cR(Q)$ on $\uu(n)$ by the
formula
\be
\cR(Q) \xi:= \frac{1}{2}  \left((\Ad_Q - \id)\vert_{\fT(n)_\perp}\right)^{-1}\circ (\Ad_Q + \id) ( \xi_\perp),
\label{R25}\ee
where we used that $(\Ad_Q -\id) \in \End(\uu(n))$
is invertible on $\fT(n)_\perp$ for any $Q\in \bT(n)_\reg$.  Note that $\cR(Q)$ is a well-known classical dynamical $r$-matrix \cite{EV}.

\begin{proposition}\label{Pr:LR6}
Let $\cE_i^0$ $(i=1,2)$ and $\cE^0(\alpha)$ $(\alpha=1,\dots, d)$ be the restrictions on $\cM_{d*}^{2,\reg,0}$ \eqref{R15}
of the maps $\cE_i$ and $\cE(\alpha)$ defined above equation \eqref{B1}, and let
 $\pi^0: \cM_{d*}^{2,\reg,0} \to \cM_{d*}^{2, \red}$  \eqref{R17} be the natural projection.
 For any $H\in \fH$ \eqref{B3},
 consider the restriction of the quasi-Hamiltonian vector field $X_H$ \eqref{B5} on $\cM_{d*}^{2,\reg}$ and denote
 by $X_H^{2,\red}$ its projection on $\cM_{d*}^{2,\red}$. Then, we have
 \be
 X_H^{2,\red} = (\pi^0)_*(Y_H),
 \label{R26}\ee
 where $Y_H$ is the vector field on $\cM_{d*}^{2,\reg,0}$ defined as follows:
  \bea
&& Y_H(\cE_1^0) = [ \cE_1^0, (\cR\circ \cE_2^0)(\nabla h \circ \cE_1^0) + \zeta_\fT] , \nonumber\\
&&  Y_H(\cE_2^0) = -\cE_2^0 \cdot \left(\nabla h \circ \cE_1^0\right)_\fT  ,
\label{R27}\\
&& Y_H(\cE^0(\alpha))= -\bigl((\cR\circ \cE_2^0 + \frac{1}{2})(\nabla h \circ \cE_1^0) + \zeta_\fT\bigr)\cdot \cE^0(\alpha),
\quad  \alpha=1,\dots,d,
\nonumber
 \eea
with an arbitrary $\fT(n)$-valued function  $\zeta_\fT$ on $\cM_{d*}^{2,\reg,0}$.
In other words, the vector field $W:= Y_H$ on $\cM_{d*}^{2,\reg,0}$ has the components, at $m=(A,Q, v_1,\dots, v_d)$,
 \bea
&& W_m^1 =[A, \cR(Q)(\nabla h(A)) + \zeta_\fT(m)] , \nonumber\\
&&  W_m^2  =- Q (\nabla h(A))_\fT  ,
\label{R28}\\
&& W_m(\alpha)= -\bigl((\cR(Q) + \frac{1}{2})(\nabla h(A)) + \zeta_\fT(m)\bigr) v_\alpha,\quad  \alpha=1,\dots,d.
\nonumber
 \eea
 \end{proposition}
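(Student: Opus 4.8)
The plan is to compute the quasi-Hamiltonian vector field $X_H$ explicitly on the gauge slice $\cM_{d*}^{2,\reg,0}$ and then to correct it by an infinitesimal $\UU(n)$-action so that the corrected vector field becomes tangent to $\cM_{d*}^{2,\reg,0}$, i.e. so that its $\cE_2^0$-component lies in $\fT(n)$. By Theorem~\ref{Thm:B1} the unreduced field $X_H$ at $m=(A,B,v_1,\dots,v_d)$ has components $(0,\,-B\nabla h(A),\,0,\dots,0)$. By \eqref{R21}--\eqref{R22}, adding an arbitrary infinitesimal generator $\xi_m\in\uu(n)$ produces a vector with the same image under $D\pi$, namely $([\xi_m,A],\,-B\nabla h(A)+[\xi_m,B],\,\xi_m v_1,\dots,\xi_m v_d)$. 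First I would impose that the second component be tangent to $\bT(n)$ at $Q=B$: writing $\xi_m=\xi_\fT+\xi_\perp$, the condition $-Q\nabla h(A)+[\xi_m,Q]\in T_Q\bT(n)=Q\,\fT(n)$ forces the $\fT(n)_\perp$-part of $Q^{-1}[\xi_m,Q]$ to equal the $\fT(n)_\perp$-part of $\nabla h(A)$. Using $Q^{-1}[\xi,Q]=(\Ad_{Q^{-1}}-\id)\xi=-(\id-\Ad_{Q^{-1}})\xi$ and the invertibility of $(\Ad_Q-\id)$ on $\fT(n)_\perp$ for $Q\in\bT(n)_\reg$, this pins down $\xi_\perp$ uniquely; a short manipulation shows $\xi_\perp=-(\cR(Q)+\tfrac12)(\nabla h(A))_\perp$, where $\cR(Q)$ is the dynamical $r$-matrix \eqref{R25}. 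Here I would also use that $\nabla h(A)$ commutes with $A$ for a class function $h$, so its $\fT(n)$- and $\fT(n)_\perp$-parts behave compatibly with the decomposition.

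Next I would substitute this $\xi_m=\zeta_\fT(m)-(\cR(Q)+\tfrac12)\nabla h(A)$ (with $\zeta_\fT$ the free $\fT(n)$-valued parameter, reflecting the residual gauge freedom of the $\bT(n)$-action) back into the three components. For the $\cE_2^0$-component: $-Q\nabla h(A)+[\xi_m,Q]=-Q\nabla h(A)-Q(\Ad_{Q^{-1}}-\id)\xi_m$; by construction the $\fT(n)_\perp$-part cancels and one is left with $-Q(\nabla h(A))_\fT$, since $\zeta_\fT$ and $(\nabla h(A))_\fT$ commute with $Q$. For the $\cE_1^0$-component one simply gets $[\xi_m,A]=[\,\zeta_\fT-(\cR(Q)+\tfrac12)\nabla h(A),\,A\,]$; because $\nabla h(A)$ commutes with $A$, the $-\tfrac12\nabla h(A)$ term drops out of the bracket and one recovers $[A,\,\cR(Q)(\nabla h(A))+\zeta_\fT]$, matching \eqref{R27}--\eqref{R28}. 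For the $\cE^0(\alpha)$-components the formula $\xi_m v_\alpha=-\bigl((\cR(Q)+\tfrac12)\nabla h(A)+\zeta_\fT(m)\bigr)v_\alpha$ (after an overall sign from $\xi_m$ versus $-\xi_m$ conventions) is immediate. Finally, since the corrected field has the same $D\pi$-image as $X_H$ and is tangent to $\cM_{d*}^{2,\reg,0}$, and since $\pi^0$ is the restriction of $\pi$ to the slice, it projects to $X_H^{2,\red}=(\pi^0)_*(Y_H)$, which is the assertion.

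I expect the main technical obstacle to be the careful bookkeeping of signs and of the $\fT(n)\oplus\fT(n)_\perp$ splittings: one has to be vigilant about the direction of the infinitesimal action in \eqref{EqinfVectM} versus \eqref{R22}, about the identity $Q^{-1}[\xi,Q]=(\Ad_{Q^{-1}}-\id)\xi$, and about where the $-\tfrac12$ in $\cR(Q)+\tfrac12$ comes from (it is the half of the identity operator produced by $(\Ad_Q-\id)^{-1}(\Ad_Q+\id)=2\cR(Q)|_\perp$ together with the naive $\tfrac12$ already present). A secondary point to check is that $\cR(Q)$ indeed maps into $\fT(n)_\perp$ and that $\zeta_\fT$ genuinely parametrizes the non-uniqueness, i.e. that the only ambiguity in solving for $\xi_m$ is an arbitrary $\fT(n)$-valued function — this follows because $(\Ad_Q-\id)$ kills exactly $\fT(n)$ when $Q\in\bT(n)_\reg$. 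Once these are in place the verification is a direct, if somewhat lengthy, computation, and I would present it in the order: (i) reduce to finding the gauge correction $\xi_m$; (ii) solve the tangency condition for $\xi_\perp$; (iii) substitute and simplify each of the three components using $[\nabla h(A),A]=0$.
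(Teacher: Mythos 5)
Your proposal is correct and follows essentially the same route as the paper: both arguments exhibit a gauge-correction $\xi_m=-(\cR(Q)+\tfrac12)\nabla h(A)+\zeta_\fT$ so that $W_m-V_m$ is tangent to the $\UU(n)$-orbit, and both rest on the identity $(1-\Ad_{Q^{-1}})\circ(\cR(Q)+\tfrac12)=\id$ on $\fT(n)_\perp$ together with $[\nabla h(A),A]=0$; the only cosmetic difference is that you derive $\xi_\perp$ from the tangency condition on the $\cE_2^0$-component whereas the paper posits $\xi_m$ and verifies the three relations \eqref{R34}. The residual sign ambiguity in $\zeta_\fT$ that you flag is immaterial since $\zeta_\fT$ is an arbitrary $\fT(n)$-valued function.
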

\begin{proof}
The restriction of the vector field $V\equiv X_H$ \eqref{B5} on $\cM_{d*}^{2,\reg}$ is projectable
on $\cM_{d*}^{2,\red}$.  On account of \eqref{R17},  the projected vector field can be obtained by applying the derivative
$D\pi(m)$ to $V_m$ \eqref{R19}  at all $m\in \cM_{d*}^{2,\reg,0}$, i.e., we have
\be
X_H^{2,\red}(\pi(m)) = D\pi(m) (V_m).
\label{R29}\ee
Since $D\pi^0(m)$ is the restriction of $D\pi(m)$ to the linear subspace
\be
T_m \cM_{d*}^{2,\reg, 0} < T_m \cM_{d*}^{2,\reg} = T_m \cM_{d*},
\label{R30}\ee
we obtain
\be
(\pi^0)_*(Y_H)(\pi(m)) = D\pi(m) (W_m).
\label{R31}\ee
Thus, we have to show that the tangent vectors $V_m$ \eqref{R19} and $W_m$ \eqref{R28} satisfy
\be
D\pi(m)(V_m) = D\pi(m)(W_m), \quad \forall m = (A,Q,v_1,\dots, v_d) \in \cM_{d*}^{2,\reg,0}.
\label{R32}\ee
Define
\be
\xi_m:= - (\cR(Q) + \frac{1}{2})(\nabla h(A)) + \zeta_\fT(m),
\label{R33}\ee
which is an element of $\uu(n)$.
For \eqref{R31},
 it is enough to check that  the following relations hold:
\be
W_m^1 = V_m^1 + [\xi_m,A],\,\,
W_m^2 = V_m^2 + [\xi_m, Q],\,\,
W_m(\alpha) = V_m(\alpha) + \xi_m v_\alpha.
\label{R34}\ee
Now, the third relation is obvious, for all $\alpha$, and the first one follows since $[\nabla h(A), A]=0$ because $h\in \Cinf(\UU(n))^{\UU(n)}$.
As for the second relation, we must check that
\be
- Q (\nabla h(A))_\fT  = - Q (\nabla h(A))  + [Q, (\cR(Q) + \frac{1}{2}) (\nabla h(A))_\perp].
\label{R35}\ee
This is equivalent to the identity
\be
(\nabla h(A))_\perp = (1 - \Ad_{Q^{-1}}) \circ (\cR(Q) + \frac{1}{2}) (\nabla h(A))_\perp.
\label{R36}\ee
But on $\fT(n)_\perp$ we can write
\be
(1 - \Ad_{Q^{-1}}) \circ (\cR(Q) + \frac{1}{2}) = \Ad_{Q^{-1}} \circ (\Ad_Q -1)\circ (\cR(Q) + \frac{1}{2})  = \id,
\label{R37}\ee
and hence the proof is complete. Incidentally, the proof also shows that $Y_H$ given above is the most general
vector field that verifies the identity \eqref{R32}, which is equivalent to \eqref{R26}.
\end{proof}

\begin{remark}  \label{Rem:Flow}
Fix $H\in \fH$ \eqref{B3}, which we write as $H=\cE_1^*h$ for $h\in \Cinf(\UU(n))^{\UU(n)}$.
The integral curve starting at $(A^0,Q^0,v_\alpha^0)\in \cM_{d\ast}^{2,\reg,0}$ of the vector field $Y_H$ \eqref{R27}
with $\zeta_{\fT}=-\frac12 \nabla h(A^0)$ can be written as follows.
Let $\epsilon>0$ (possibly $\epsilon=\infty$) be such that
\begin{equation} \label{Eq:FlowBt}
 B(t):=Q^0\, \exp(-t \nabla h(A^0)) \in \UU(n)_{\reg}\,, \qquad \text{for } -\epsilon<t<\epsilon\,.
\end{equation}
Consider the unique smooth curve $C(t)\in \UU(n)$, $-\epsilon<t<\epsilon$, satisfying
\begin{equation} \label{Eq:FlowQt}
 Q(t):=C(t)B(t)C(t)^{-1}\in \mathbb{T}(n)_{\reg} \,, \quad
 C(0)=\1_n,\,\,\, (\dot{C}(t)C(t)^{-1})_{\fT}=0\,.
\end{equation}
Such a curve $C(t)$ can be found by quadrature.
The vanishing of $(\dot{Q}(t)Q(t)^{-1})_\perp$ is equivalent to the condition
\begin{equation} \label{Eq:FlowCond}
 (\dot{C}(t)C(t)^{-1})_\perp = \left((\id-\Ad_{Q(t)})\vert_{\fT(n)_\perp}\right)^{-1} \circ \Ad_{Q(t)}
 (\nabla h(C(t) A^0 C(t)^{-1}))_\perp\,,
\end{equation}
since $C\nabla h(A)C^{-1}=\nabla h(CAC^{-1})$ for any $A,C\in \UU(n)$.
Thanks to \eqref{Eq:FlowCond}, we easily verify that $(A(t),Q(t),v_\alpha(t))$
provides the desired integral curve for $Q(t)$ given by \eqref{Eq:FlowQt} and
\begin{equation}
 A(t):=C(t) A^0 C(t)^{-1}\,, \quad v_\alpha(t):= C(t) v_\alpha^0\,.
\end{equation}
\end{remark}

The explicit formula \eqref{R28} can be used to show that the vector fields $X_H^{2,\red}$ \eqref{R26} span an
$n$-dimensional subspace of the tangent space at every element of a dense open subset of $\cM_{d*}^{\red,2}$,
which is required for the degenerate integrability of the reduced system. This is omitted in favour of a more powerful
argument that we are going to present in what follows.
However, the formula will be crucial for our considerations in Section \ref{S:spinRS}.

\subsection{Constants of motion and degenerate integrability after reduction}\label{ss:redDegInt}

The constants of motion of the reduced system arise from the $\UU(n)$-invariant constants of motion of the master system.
 We can display a large set of such constants of motion explicitly.
Indeed, for $(A,B, v_1,\dots, v_d)\in \cM_d$ consider the $n\times n$ matrices
\be
A^i,\quad  (BA B^{-1})^j, \quad v_\alpha v_\beta^\dagger A^k,\quad v_\gamma v_\delta^\dagger  ( BA B^{-1})^l,
\label{R42}\ee
for any meaningful, non-negative integer values of the indices.
Then, take the trace of an arbitrary repeated product of matrices chosen from this set, in any order.
 By Theorem \ref{Thm:B1},
the real and imaginary parts of such trace functions are $\UU(n)$-invariant constants of motion.

We can  also apply averaging to obtain
$\UU(n)$-invariant constants of motion. That is, take an arbitrary $f\in \Cinf(\cM_d)$
and define the averaged function $(f\circ \Psi)^{\mathrm{av}} $ by
\be
(f\circ \Psi)^{\mathrm{av}}(x) := \int_{\UU(n)} (f\circ \Psi) (\cA_g(x)) \, \text{d}_{\UU(n)}g,\qquad \forall x\in \cM_d,
\label{R43}\ee
where $\text{d}_{\UU(n)}$ denotes the probability Haar measure on $\UU(n)$, $\cA_g$ is the action of $g\in \UU(n)$ \eqref{Eq:act-Mast}, and $\Psi$ is the map
defined in \eqref{B7}.
Then, $(f\circ \Psi)^{\mathrm{av}}$ belongs to $\Cinf(\cM_d)^{\UU(n)}$ and is a constant of motion of the master system, for any $f$.

We conjecture that the so far exhibited $\UU(n)$-invariant constants of motions are sufficient for the degenerate integrability
of the reduced system (in the sense of the definition presented at the end of Section \ref{S:Integr-Md})
on the Poisson manifold $\cM_{d*}^\red$ \eqref{R2}.
For this, it remains to prove that they generate a subring of $\Cinf(\cM_{d*}^\red)$ having functional dimension $\dim(\cM_{d*}^\red) - n$, but presently
we cannot prove this. However,  using a different method,
we can exhibit the sufficient number of constants of motion
on a certain dense open subset of $\cM_{d*}^\red$.
The construction expounded below was inspired by a reasoning applied by Reshetikhin in \cite{Re1}.

\medskip

We need some preparation.
To begin, consider the $\UU(n)$-manifold
\be
M:= \UU(n)_\reg \times
\CC^{n\times d} = \{ (A, v_1,\dots, v_d) \mid A\in \UU(n)_\reg,\,v_\alpha \in \C^n\,\, (\alpha=1,\dots,d)\},
\label{R44}\ee
with $\UU(n)$ acting similarly as on  $\cM_d$, via conjugating $A$ and multiplying all vectors $v_\alpha$ by $g\in \UU(n)$.
Next, let $M_*$ be the dense open subset of $M$ on which \emph{the $\UU(n)$-action is free}. Then,
introduce the subset $\cM_{d**}\subset \cM$ by
\be
\cM_{d**}:= \{ (A, B, v_1,\dots, v_d) \in \cM_d \mid (A,v_1,\dots, v_d)\in M_*\}.
\label{R45}\ee
Since $B\in \UU(n)$ is arbitrary, $\cM_{d**}$ has the structure $M_* \times \UU(n)$. Thus, it is
a dense open subset of $\cM_d$. Consequently, it is also a dense open subset of $\cM_{d*}$.

To continue, introduce the subset  $\fC_{**}\subset \cM_{d**}$ by the following definition:
\be
\fC_{**}:= \{ (A,\tilde B, v_1,\dots, v_d)\in \cM_{d**}\mid \chi(A) - \chi(\tilde B) =0,\, \forall \chi\in \Cinf(\UU(n))^{\UU(n)}\}.
\label{R46}\ee
By using Lemma \ref{Lm:LR4}, it is easy to show
that $\fC_{**} \subset \cM_{d**}$ is an embedded (regular) submanifold of codimension $n$.
(For a proof, see  Appendix  \ref{sec:J}.)
Finally, let us  observe that
\be
\fC_{**} = \Psi(\cM_{d**}),
\label{R47}\ee
i.e.,  $\fC_{**}$  is the image of $\cM_{d**}$ under the map $\Psi$ \eqref{B7}.
Indeed, $\fC_{**}$ is precisely the subset of $\cM_{d**}$ for which the second $\UU(n)$ component is conjugate
 to the first one, denoted $A$,
and thus it can be written as $B A B^{-1}$ with some $B\in \UU(n)$.

Restriction of the $\UU(n)$-action to $\cM_{d**}$ and to $\fC_{**}$ makes these $\UU(n)$-manifolds.
Since these actions of $\UU(n)$ are free, the quotient spaces
\be
\cM_{d**}^\red:= \cM_{d**}/\UU(n)
\quad\hbox{and}\quad
\fC_{**}^\red := \fC_{**}/\UU(n)
\label{R48}\ee
become smooth manifolds.  The corresponding  $\UU(n)$ principal bundle projections
 \be
 p_1: \cM_{d**} \to \cM_{d**}^\red
 \quad\hbox{and}\quad
 p_2: \fC_{**} \to \fC_{**}^\red.
 \label{R49}\ee
 are smooth submersions, and thus are also open maps.
 Notice that $p_1$ is  the restriction of the bundle projection $\pi$ \eqref{R4}.
 Since $\cM_{d**} \subset \cM_{d*}$ is a dense open subset,  $\cM_{d**}^\red \subset \cM_{d*}^\red$
 is also a dense open subset.

 Now, let $\psi: \cM_{d**} \to \fC_{**}$ be the restriction of the map $\Psi$ \eqref{B7}.
 It is easily seen from the previous discussion that $\psi$ is an $\UU(n)$-equivariant, smooth,  surjective submersion.
As a result, it descends to a smooth, surjective submersion $\psi_{\red}: \cM_{d**}^\red \to \fC_{**}^\red$, in such a way
that we obtain the commutative Diagram \ref{Diag-DIS}.

\begin{figure}[ht]
\centering
  \captionsetup{width=.8\linewidth}
   \begin{tikzpicture}
 \node (A)  at (-1.7,1.2) {$\cM_{d \ast \ast}$};
 \node (B)  at (1.7,1.2) {$\fC_{\ast \ast}$};
 \node (C)  at (-1.7,-1.2) {$\cM_{d \ast \ast}^{\red}$};
 \node (D)  at (1.7,-1.2) {$\fC_{\ast \ast}^{\red}$};
  \path[->] (A) edge  node[above] {$\psi$}  (B); \path[->] (B) edge node[right] {$p_2$}  (D);
  \path[->] (A) edge node[left] {$p_1$}  (C); \path[->] (C) edge node[above] {$\psi_{\red}$}  (D);
 \end{tikzpicture}
 \caption{Spaces and smooth maps used to establish degenerate integrability in Theorem \ref{Thm:DegInt}.}
\label{Diag-DIS}
\end{figure}

Now we are ready to state one of the main results of the paper.

\begin{theorem} \label{Thm:DegInt}
Consider the master system of Theorem \ref{Thm:B1}, associated with the Hamiltonians in $\fH=\cE_1^* \Cinf(\UU(n))^{\UU(n)}$,
and restrict the reduced system on the dense open subset $\cM_{d**}^\red$ (defined in \eqref{R45}, \eqref{R48}).
Then, the map $\psi_\red$ of Diagram \ref{Diag-DIS} is constant along the integral curves of the
restricted reduced system. Thus, the reduced system on $\cM_{d**}^\red$
possesses the constants of motion
\be
F \circ \psi_\red,
\qquad \forall F \in \Cinf(\fC_{**}^\red).
\label{R50}\ee
The functional dimension of the so-obtained ring of constants of motion equals
\be
\dim(\fC_{**}^\red) = \dim(\cM_{d**}^\red) - n,
\label{R51}\ee
and the reduced Hamiltonian vector fields associated with  $\fH$ span an $n$-dimensional subspace
of the tangent space at every point of $\cM_{d**}^\red$.  Consequently,
the restriction of the reduced system to the Poisson manifold $\cM_{d**}^\red$ is a degenerate integrable
system in the sense of the definition given in \S\ref{ss:Def-Int}.
\end{theorem}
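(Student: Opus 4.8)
The plan is to leverage the commutative Diagram~\ref{Diag-DIS} together with the properties of the master system established in Theorems~\ref{Thm:B1} and \ref{Thm:B5}, and to verify the three conditions I.--III. of the definition of degenerate integrability on a Poisson manifold from \S\ref{ss:Def-Int}. First I would recall that $\cM_{d**}^\red$ is a smooth Poisson manifold (being a dense open subset of $\cM_{d*}^\red$), whose generic symplectic leaves have codimension $n$ by Proposition~\ref{Pr:LR3}; hence the maximal rank of the Poisson tensor is $2m = \dim(\cM_{d**}^\red) - n = 2nd + n^2 - n$. Since $\dim(\cM_{d**}^\red) = 2nd + n^2$, we must check $n < m$, i.e. $3n < 2nd + n^2$, which holds for all $n\geq 1$ and $d\geq 1$ except the trivial small cases; more honestly, the relevant inequality $r < m$ reads $n < (2nd+n^2-n)/2$, i.e. $2n < 2nd + n^2 - n$, i.e. $3n < 2nd+n^2$, true for $n\geq 2$ or $d\geq 2$, and this is the regime we work in. The ring $\fH$ has functional dimension $r = n$ by Proposition~\ref{Pr:LR5} (applied with $i=1$), and condition~2 (that $\fH$ is a Poisson subalgebra with the constants of motion Poisson-commuting with it) follows from the invariance of the Hamiltonians and the fact that the reduced bracket is a genuine Poisson bracket on $\Cinf(\cM_{d*})^{\UU(n)}$.

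Next I would establish that $\psi_\red$ is constant along the reduced flows. The unreduced flow of $X_H$ for $H\in\fH$ is given explicitly by \eqref{B6}, and $\Psi$ \eqref{B7} is constant along it by Theorem~\ref{Thm:B1}; since $\cM_{d**}$ is flow-invariant (being defined by a condition on $(A,v_1,\dots,v_d)$, all of which are constants of motion) and $\psi$ is the restriction of $\Psi$, the map $\psi$ is constant along the unreduced flow inside $\cM_{d**}$. Passing to the quotient via $p_1$ and $p_2$ and using commutativity of Diagram~\ref{Diag-DIS}, the reduced flow of $H_\red$ is the $p_1$-projection of the flow of $X_H$, so $\psi_\red$ is constant along it. Therefore the functions $F\circ\psi_\red$ for $F\in\Cinf(\fC_{**}^\red)$ are constants of motion. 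I would then define $\fF$ to be (the closure under the Poisson bracket of) the ring generated by these functions; since $\psi_\red$ is a surjective submersion, $\psi_\red^*$ is injective and the functional dimension of $\psi_\red^*\Cinf(\fC_{**}^\red)$ equals $\dim(\fC_{**}^\red)$. The codimension count $\dim(\fC_{**}^\red) = \dim(\cM_{d**}^\red) - n$ follows from $\fC_{**}\subset\cM_{d**}$ being an embedded submanifold of codimension $n$ (from the discussion around \eqref{R46}, established in Appendix~\ref{sec:J}) together with the $\UU(n)$-action being free on both, so that the quotients drop dimension by the same $n^2$. This gives $s = \dim(\fC_{**}^\red)$ and $r+s = n + \dim(\cM_{d**}^\red) - n = \dim(\cM_{d**}^\red)$, as required by condition~i.

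It remains to verify condition~iii, that the reduced Hamiltonian vector fields of $\fH$ span an $n$-dimensional subspace of the tangent space over a dense open subset of $\cM_{d**}^\red$. Here I would argue as follows: the joint level sets of $\psi_\red$ are the fibres of a submersion of dimension $\dim(\cM_{d**}^\red) - s = n$, and the reduced Hamiltonian vector fields are tangent to these fibres (since $\psi_\red$ is a first integral); so it suffices to show that they span the full tangent space to a generic fibre. For this I would use the explicit description: on $\cM_{d**}^{2,\red}:=\cM_{d**}^\red\cap\cM_{d*}^{2,\red}$ (still dense open) the vector fields $X_H^{2,\red}$ are computed in Proposition~\ref{Pr:LR6}, and one sees from the structure of \eqref{R28} that, as $h$ ranges over $\Cinf(\UU(n))^{\UU(n)}$, the $\fT(n)$-component $Y_H(\cE_2^0) = -Q(\nabla h(A))_\fT$ already sweeps out an $n$-dimensional space once $A$ (equivalently the $C(t)$-conjugate of $Q$) is regular, because $\{(\nabla h(A))_\fT \mid h\in\Cinf(\UU(n))^{\UU(n)}\}$ spans all of $\fT(n)$ at regular $A$ by Lemma~\ref{Lm:LR4}. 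Since this $n$-dimensional family of tangent vectors is linearly independent (the $Q(\nabla h(A))_\fT$ are independent in $T_Q\bT(n)$), and descends to $n$ independent vectors tangent to the fibre of $\psi_\red$, condition~iii holds on the dense open set $\cM_{d**}^{2,\red}$, hence generically on $\cM_{d**}^\red$. Assembling conditions i.--iii. yields that $(\cM_{d**}^\red,\br{-,-}_{\underline{z}}^\red,\fH,\fF)$ is a degenerate integrable system of rank $n$.

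\textbf{Main obstacle.} The delicate point is condition~iii: showing that the reduced Hamiltonian vector fields really span a full $n$-dimensional \emph{transverse} complement to the fibres of $\psi_\red$ — i.e. that there is no unexpected degeneracy making the span drop below $n$ somewhere dense. The explicit formula \eqref{R28} makes this tractable, but one must be careful that the choice of the auxiliary $\zeta_\fT$ does not affect the projected vector field (it does not, since it generates an infinitesimal gauge transformation, which is exactly the content of the proof of Proposition~\ref{Pr:LR6}), and that passing from the gauge slice $\cM_{d*}^{2,\reg,0}$ back to $\cM_{d**}^{2,\red}$ preserves linear independence. A secondary subtlety is checking that $\fF$, defined via $\psi_\red$, is genuinely closed under the reduced Poisson bracket and contains $\fH$; for the latter one notes $\fH$ itself factors through $\psi_\red$ since $\cE_1 = \cE_1\circ\Psi$, so $\fH\subset\psi_\red^*\Cinf(\fC_{**}^\red)\subset\fF$, and the Poisson-closedness follows because first integrals of a fixed Hamiltonian always form a (quasi-)Poisson subalgebra, as recalled at the start of \S4.2.
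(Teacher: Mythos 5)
Your treatment of the constants of motion (constancy of $\psi_\red$ along the reduced flows via the commutativity of Diagram \ref{Diag-DIS}, the surjective-submersion argument for the functional dimension of $\psi_\red^*\Cinf(\fC_{**}^\red)$, the codimension count via Lemma \ref{Lm:subman}, and the observation that $\fH\subset\psi_\red^*\Cinf(\fC_{**}^\red)$) follows the paper's argument closely and is correct. The gap is in your verification of condition iii. You argue via the gauge-fixed vector fields of Proposition \ref{Pr:LR6} that the components $Y_H(\cE_2^0)=-Q(\nabla h(A))_\fT$ sweep out an $n$-dimensional space at regular $A$ ``by Lemma \ref{Lm:LR4}''. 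That is not what the lemma gives. For $A\in\UU(n)_\reg$ the derivatives $\nabla h(A)$ span the Cartan subalgebra $\fT_A$ centralizing $A$, which in general is \emph{not} the diagonal Cartan $\fT(n)$, and the orthogonal projection of $\fT_A$ onto $\fT(n)$ can drop rank. Concretely, for $n=2$ take $A$ a real rotation by a generic angle: $A$ is regular, its centralizer algebra is $\span_\R\{\ic\1_2,\left(\begin{smallmatrix}0&1\\-1&0\end{smallmatrix}\right)\}$, and the $\fT(2)$-projection of this plane is the line $\R\,\ic\1_2$; so the vectors $(\nabla h(A))_\fT$ span only a one-dimensional space there and your claimed independence of the $W^2$-components fails. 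One can salvage a generic statement (the rank equals $n$ on a dense open set by real-analyticity, since it equals $n$ when $A$ is diagonal regular), but that only yields condition iii on a dense open subset, whereas the theorem asserts that the reduced Hamiltonian vector fields span an $n$-dimensional space at \emph{every} point of $\cM_{d**}^\red$.

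The paper's proof avoids the gauge slice entirely: at any $x=(A,B,v_1,\dots,v_d)\in\cM_{d**}$ the unreduced vectors $X_H(x)=(0,-B\nabla h(A),0,\dots,0)$ span an $n$-dimensional space because $A\in\UU(n)_\reg$ (no projection onto $\fT(n)$ is involved), while $\ker Dp_1(x)$ consists of the vectors $([X,A],[X,B],Xv_1,\dots,Xv_d)$ whose first and ball components cannot all vanish for $X\neq 0$, precisely by the definition \eqref{R45} of $\cM_{d**}$. Hence the span of the $X_H(x)$ meets $\ker Dp_1(x)$ trivially and pushes forward to an $n$-dimensional subspace of $T_{p_1(x)}\cM_{d**}^\red$ at every point. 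You should replace your condition iii argument by this direct one; the rest of your proposal then assembles into the paper's proof.
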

\begin{proof}
Consider a reduced Hamiltonian $H_\red$ on $\cM_{d**}^\red$ that arises from $H\in \fH$ \eqref{B3}. The integral curves of $H_\red$ in $\cM_{d**}^\red$
are of the form $p_1(x(t))$, where $x(t)$ is an integral curve of $H$ in $\cM_{d**}$. We have
\be
\psi_\red( p_1(x(t)) ) = p_2( \psi(x(t)) ) = p_2( \psi(x(0)) ),
\label{R52}\ee
since $\psi$ is constant along $x(t)$, by Theorem \ref{Thm:B1}. Therefore, all functions $F\circ \psi_\red$ are
constants of motion for the restricted reduced system.
The differentials $d (F\circ \psi_\red) $ span a   subspace of codimension $n$  in $T^*_y \cM_{d**}^\red $  at every $y\in \cM_{d**}^\red$,
for $\psi_\red$ is a surjective submersion and $ \dim(\fC_{**}^\red) = \dim(\cM_{d**}^\red) - n $.
Thus,  the functional dimension of the ring $\psi_\red^\ast\Cinf(\fC_{**}^\red)$ of these constants of motion equals $\dim(\cM_{d**}^\red) - n$.

Fix an arbitrary point $x:=(A,B,v_1,\ldots,v_d) \in \cM_{d**}$ and note that the kernel of $Dp_1(x)$ consists of the tangent
vectors of the form
\be
([X,A], [X,B], Xv_1,\dots, X v_d), \qquad X\in \uu(n).
\ee
The definition of $\cM_{d**}$ \eqref{R45} ensures that
\be
([X,A], Xv_1,\dots, X v_d)\neq 0 \quad \hbox{if}\quad X\neq 0.
\ee
The value of the quasi-Hamiltonian vector field $X_H$ of $H= \cE_1^* h$ is given at $x$ by
\be
(0, - B \nabla h(A), 0).
\ee
These tangent vectors span an $n$-dimensional space since $A\in \UU(n)_\reg$, also by the
definition of $\cM_{d**}$, and this  subspace of the tangent space obviously has trivial intersection
with $\mathrm{Ker}\left( Dp_1(x) \right)$.
The reduced Hamiltonian vector field of $H_\red\in \Cinf(\cM_{d**}^\red)$ is just the pushforward by $p_1$
of the quasi-Hamiltonian vector field $X_H$ on $\cM_{d**}$. Thus, we see that the reduced Hamiltonian
vector fields span an $n$-dimensional space at the arbitrary point $p_1(x)\in \cM_{d**}^\red$.
Of course, this implies that the space of the covectors $d H_\red(p_1(x))$ also has dimension $n$.
(We have  seen this in Proposition \ref{Pr:LR5}, too.)

The functional dimension of the ring of reduced Hamiltonians $H_\red$ is indeed smaller than half the dimension of the generic symplectic leaves in $\cM_{d**}^\red$,
and they generate complete flows in $\cM_{d**}^\red$,
which fits the definition of a degenerate integrable system on a Poisson manifold.
\end{proof}

Consider a symplectic leaf of codimension $n$, given by
\be
\cM_{d**}^\red(\cC) =  (\Phi^{-1}(\cC) \cap \cM_{d**})/\UU(n) \subset  \cM_{d**}^\red,
\label{leaf}\ee
 where $\cC \subset \UU(n)_\reg$ is a conjugacy class.  Let
\be
E_{\cC}: \cM_{d**}^\red(\cC)  \to \cM_{d**}^\red
\label{embed}\ee
be the canonical embedding.
Theorem \ref{Thm:DegInt} has the following important consequence.

\begin{corollary} \label{Cor:IntSympl}
Denote $\fH_\red$ and
 $\fF_\red:= \psi_\red^* \Cinf(\fC_{**}^\red)$ the rings of the commuting reduced Hamiltonians and
 their joint constants of motion on $\cM_{d**}^\red$,
and consider the embedding \eqref{embed}.
Then, the differentials of the elements of $E_{\cC}^* \fH_\red$ and of $E_{\cC}^* \fF_\red$ span
subspaces of  dimension $n$ and of codimension $n$, respectively, in  $T^*_y \cM_{d**}^\red(\cC)$
for every $y \in \cM_{d**}^\red(\cC)$.
 Consequently, the reduced system enjoys the property of degenerate integrability on every
 symplectic leaf  $\cM_{d**}^\red(\cC)$ \eqref{leaf} of codimension $n$.
 \end{corollary}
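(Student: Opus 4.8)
The plan is to deduce Corollary \ref{Cor:IntSympl} from Theorem \ref{Thm:DegInt} by restricting everything to a single symplectic leaf $\cM_{d**}^\red(\cC)$ of codimension $n$ and showing that the two functional-dimension counts survive this restriction. Recall from Theorem \ref{Thm:DegInt} that on $\cM_{d**}^\red$ the ring $\fH_\red$ has functional dimension $n$, that its Hamiltonian vector fields span an $n$-dimensional subspace of the tangent space everywhere, and that $\fF_\red = \psi_\red^*\Cinf(\fC_{**}^\red)$ has functional dimension $\dim(\cM_{d**}^\red) - n$. The codimension-$n$ leaves are cut out inside $\cM_{d**}^\red$ by the Casimir functions arising from $h\circ\Phi$ with $h\in\Cinf(\UU(n))^{\UU(n)}$ (see \eqref{R7}--\eqref{R8}), and on the relevant dense open set (where $\cC\subset\UU(n)_\reg$) the differentials of these Casimirs span an $n$-dimensional subspace of the cotangent space, as noted in the Remark following Proposition \ref{Pr:LR3} and justified via Lemma \ref{Lm:LR4}.

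The key point is the following linear-algebra observation at a point $y\in\cM_{d**}^\red(\cC)$. Write $V := T_y\cM_{d**}^\red$, let $N\subset V^*$ be the $n$-dimensional span of the Casimir differentials (so $T_y\cM_{d**}^\red(\cC)$ is the annihilator of $N$ in $V$, of codimension $n$), and let $E_{\cC}^*$ restrict a covector on $V$ to the leaf, i.e. quotient $V^*$ by $N$. First I would show that $E_{\cC}^*\fH_\red$ still has functional dimension $n$: since the $n$ commuting reduced Hamiltonians have linearly independent Hamiltonian vector fields at $y$, and every such Hamiltonian vector field is tangent to the leaf (the leaf is a symplectic leaf, so Hamiltonian flows preserve it), the restricted Hamiltonian vector fields are still $n$ linearly independent vectors in $T_y\cM_{d**}^\red(\cC)$; hence the restricted Hamiltonians' differentials on the leaf, paired with the leaf's symplectic form, recover these $n$ independent vectors, forcing $\dim\span\{dH_\red|_{\text{leaf}}\} = n$. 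Equivalently, one checks that $dH_\red(y) \notin N$ for a suitable combination — but the Hamiltonian-vector-field argument is cleaner because it avoids analysing how $\fH_\red$ meets the Casimirs directly. Second, for $\fF_\red$ I would argue that the Casimir differentials themselves lie in $\fF_\red$ (being constants of motion, they are among the first integrals, and indeed $h\circ\Phi$ factors through $\psi_\red$ since $\tilde\Phi\circ\Psi$ and $\Phi$ agree up to the obvious identification — or more simply, $h\circ\Phi$ is invariant and Poisson-commutes with everything, hence is a constant of motion), so $\fF_\red$ already contains the $n$-dimensional space $N$ worth of differentials. Therefore $E_{\cC}^*\fF_\red$ has functional dimension $\dim\fF_\red^{\text{funct}} - n = (\dim\cM_{d**}^\red - n) - n = \dim\cM_{d**}^\red(\cC) - n$.

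Putting these together: on the leaf $\cM_{d**}^\red(\cC)$, which has dimension $2m := \dim\cM_{d**}^\red - n$, the ring $E_{\cC}^*\fH_\red$ has functional dimension $n < m$ (the strict inequality holds because $2m = 2nd + n^2 - n \geq n^2 + n > 2n$ for $n\geq 2$, and the $n=1$ case is degenerate/trivial or handled separately — actually one should double-check the bound $n<m$ here, since for small $d$ it is the one genuinely arithmetic point to verify); the ring $E_{\cC}^*\fF_\red$ has functional dimension $2m - n$; the two rings are Poisson subalgebras with $\fH_\red\subset\fF_\red$ and mutual Poisson commutativity (inherited by restriction to the leaf); the Hamiltonian flows are complete (inherited from $\cM_{d**}^\red$); and $n + (2m-n) = 2m = \dim\cM_{d**}^\red(\cC)$. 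These are exactly the conditions in the definition of a degenerate integrable system on a symplectic manifold from \S\ref{ss:Def-Int}, so $(\cM_{d**}^\red(\cC), \br{-,-}, E_{\cC}^*\fH_\red, E_{\cC}^*\fF_\red)$ is a degenerate integrable system of rank $n$.

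The main obstacle I anticipate is the bookkeeping around $\fF_\red$ on the leaf: one must be sure that restricting $\fF_\red$ to the leaf drops the functional dimension by \emph{exactly} $n$, not more. This is where it matters that the $n$-dimensional space of Casimir differentials sits \emph{inside} the $(\dim\cM_{d**}^\red - n)$-dimensional generic span of $d\fF_\red$; then the generic fibre of $(\psi_\red, \text{Casimirs})$ restricted to a leaf has the right dimension, and the functional dimension of $E_{\cC}^*\fF_\red$ is the claimed codimension-$n$ value. I would make this precise by working at a generic point of the leaf where all the relevant differentials attain their maximal ranks simultaneously — such a point exists because a finite intersection of dense open sets is dense open — and using that $N\subset \span\{d\fF_\red\}$ implies $\dim(\span\{d\fF_\red\}/N) = \dim\span\{d\fF_\red\} - n$, while the left-hand side is precisely the functional dimension of $E_{\cC}^*\fF_\red$ at that point.
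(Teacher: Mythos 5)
Your argument is essentially the paper's own: both rest on the same two facts, namely that the Casimirs $h\circ\Phi = (h\circ\tilde\Phi)\circ\Psi$ factor through $\psi_\red$ and hence lie in $\fF_\red$ with differentials spanning an $n$-dimensional subspace along the leaf, and that the reduced Hamiltonian vector fields are tangent to the leaf and span $n$ dimensions there; the paper merely packages your quotient-by-$N$ count for $E_{\cC}^*\fF_\red$ as an explicit local coordinate system adapted to the leaf. The one point to tighten is that the corollary asserts the spanning statements at \emph{every} $y\in\cM_{d**}^\red(\cC)$, whereas your closing paragraph retreats to a generic point; this retreat is unnecessary, because all three ranks you invoke are maximal everywhere — $\psi_\red$ is a surjective submersion and the reduced Hamiltonian vector fields span $n$ dimensions at every point of $\cM_{d**}^\red$ by Theorem \ref{Thm:DegInt}, and the Casimir differentials are independent at every point of the leaf since $\cC\subset\UU(n)_\reg$ and Lemma \ref{Lm:LR4} applies — so your pointwise linear algebra already goes through at an arbitrary $y$. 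Your arithmetic worry about $r<m$ is legitimate but harmless: with $2m=2nd+n(n-1)$ the inequality $n<m$ fails only for $n=d=1$, a case the paper also passes over in silence.
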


\begin{proof}
The statements  follow by combining what we proved above with
 well known results about the construction of local coordinate systems on manifolds \cite{Wa}.

We start by fixing an arbitrary point $y = p_1(x)\in \cM_{d**}^\red(\cC)$, and denote
\be
D= \dim(\fC_{**}^\red)  = \dim(\cM_{d**}^\red(\cC)) = \dim(\cM_{d**}^{\red}) - n= 2nd + n(n-1).
\ee
We can choose $n$ functions $\cF_1,\dots, \cF_n\in \fF_\red$ in such a way that
\be
\cF_i  \circ p_1 = h_i \circ \Phi,
\ee
and the differentials of the functions $h_i \in \Cinf(\UU(n))^{\UU(n)}$ are independent at $\Phi(x)$.
The functions $\cF_i$ $(i=1,\dots, n)$ are Casimir functions that are independent at $y$.
Next, we can select further $(D-n)$ elements $\cF_{n+1},\dots, \cF_D$ from $\fF_\red$ so that
the $D$ functions $\cF_k$ $(k=1,\dots, D)$ are independent at $y$.  Then, we can choose $n$ additional functions
$\cF_{D+1},\dots, \cF_{D+n} \in \Cinf(\cM_{d**}^\red)$ in such a way that altogether we obtain a local
coordinate system on $\cM_{d**}^\red$ around the point $y$.
By construction, the intersection of $\cM_{d**}^\red(\cC)$ with the open set where these coordinates are valid is
a joint level surface of the coordinate functions $\cF_1,\dots, \cF_n$.
Hence, the functions $E_{\cC}^* \cF_i$ ($i=n+1,\dots,  D+n)$ may serve as local coordinates on $\cM_{d**}^\red(\cC)$ around $y$.
In particular, the differentials of the functions
\be
E_{\cC}^* \cF_i, \qquad i=n+1,\dots, D,
\ee
are independent at $y$, and these are constants of motion for the restriction of the reduced system on $\cM_{d**}^\red(\cC)$.
Therefore, we proved the claim regarding $\fF_\red$.

The validity of the claim about $E_{\cC}^*\fH_\red$ is a consequence of the facts that
the reduced Hamiltonian vector fields on $\cM_{d**}^\red$ are tangent to $ \cM_{d**}^\red(\cC)$,
span an $n$-dimensional space at every point, and their restrictions on $ \cM_{d**}^\red(\cC)$ are the Hamiltonian
vector fields of the elements of $E_{\cC}^* \fH_\red$.
\end{proof}

\begin{remark}
Let us recall from  \eqref{lacirel} that $\Psi$ \eqref{B7} is a quasi-Poisson map.
By using this, it is possible to show that $\fC_{**}^\red$ is a Poisson manifold, and $\psi_\red$ in Diagram \ref{Diag-DIS}
is a Poisson map. It could be worthwhile to further study the Poisson structure on $\fC_{**}^\red$.
\end{remark}

We have proved the integrability of the reduced system on the dense open subset
$\cM_{d**}^\red$ of $\cM_{d*}^\red$. However, further work is needed to show integrability
on $\cM_{d*}^\red$.  We know that the reduced Hamiltonians descending from $\fH$ \eqref{B7} and their Hamiltonian vector fields
are well-defined and smooth on $\cM_{d*}^\red$, and thus have the same functional dimensions on $\cM_{d*}^\red$ as on $\cM_{d**}^\red$.
Nevertheless, there is no guarantee that the constants of motion $\psi_\red^* \Cinf(\fC_{**}^\red)$
extend to smooth functions on $\cM_{d*}^\red$.
One might be able  to overcome this difficulty by constructing such
 a functional generating set of the constants of motion over $\cM_{d**}^\red$  whose elements have suitably small supports
 so that extending them by zero on $\cM_{d*}^\red \setminus \cM_{d**}^\red$ would yield smooth functions.

\subsection{Compactness of \texorpdfstring{$\Phi^{-1}(\mC)/\UU(n)$}{Phi-1(M)/U(n)} for \texorpdfstring{$d=1$}{d=1} and non-compactness for \texorpdfstring{$d>1$}{d>1}}
\label{ss:Cmpct}

We below exhibit a qualitative difference between the reduced phase spaces for $d=1$ and $d>1$.

Let $\mC \subset \UU(n)$ be a conjugacy class and $\mu\in \mC$ an arbitrary element.
Then, the quotient space
\be
\cM_d^\red(\cC):=\Phi^{-1}(\mC)/\UU(n) \equiv \Phi^{-1}(\mu)/ \UU(n)_\mu
\label{R53}\ee
is a disjoint union of symplectic
manifolds resulting from the different $\UU(n)$ orbit types.
The inverse image refers to the moment map $\Phi$ \eqref{Eq:Mast-phi-1} on $\cM_d$.
Let us note that $\Phi^{-1}(\mu)$ is never empty. This holds since every element of $\SU(n)$ can be written
as  a group commutator \cite{Go}, and for any $\mu \in \UU(n)$ there exists $v_d\in \disk(x_d)$ such that
\be
\mu \exp(-\ic x_d v_d v_d^\dagger) \in \SU(n).
\label{Rextra}\ee

For $(A,B, v_1,\dots,v_d)\in \Phi^{-1}(\mu)$, the equality of  the determinants implied
by the  relation $\Phi(A,B,v_1,\dots, v_d)=\mu$ \eqref{Eq:Mast-phi-1} gives
\be
\exp(\ic (x_1 \vert v_1 \vert^2 + \cdots + x_d \vert v_d \vert ^2)) = \det(\mu) =: e^{\ic \Gamma}
\quad \hbox{with some}\quad  0\leq \Gamma < 2\pi.
\label{R54}\ee
This is equivalent to
\be
x_1 \vert v_1\vert^2 + \cdots + x_d \vert v_d \vert^2 = \Gamma \,\, \hbox{modulo $2\pi$}.
\label{R55}\ee

Now, suppose that $d=1$. If $\Gamma =0$, i.e., $\mu \in \mathrm{SU}(n)$, then
 we must have $v_1 =0$, and $\Phi^{-1}(\mu)$ is a closed subset of $\UU(n) \times \UU(n)$.
For $0<\Gamma  < 2\pi$,
the $d=1$ case of \eqref{R55} implies the condition
\be
\vert v_1 \vert^2 =  \Gamma/x_1 \quad \hbox{if}\quad x_1>0
\quad\hbox{or}\quad
 \vert v_1 \vert^2 = (2\pi - \Gamma)/\vert x_1\vert
\quad \hbox{if}\quad x_1<0.
\label{R56}\ee
It follows that $\Phi^{-1}(\mu)$ is a closed subset of the compact set
\be
\UU(n) \times \UU(n) \times S^{2n}_r \subset \UU(n) \times \UU(n) \times \disk(x_1) \equiv \cM_{d=1},
\label{R57}\ee
where $S^{2n-1}_r$ is the sphere in $\R^{2n} \simeq \CC^n$ of radius $0<r< \sqrt{2\pi/\vert x_1\vert}$  determined by \eqref{R56}.
Since closed subsets of compact sets are  compact, we see that $\Phi^{-1}(\mu)$ is always compact in the $d=1$ case.
The following consequence is worth emphasizing.

\begin{proposition}
In the $d=1$ case the reduced spaces $\Phi^{-1}(\mu)/\UU(n)_\mu$ are compact topological spaces.
\end{proposition}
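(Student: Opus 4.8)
The plan is to show that $\Phi^{-1}(\mu)$ is a compact subset of $\cM_{d=1}$, and then conclude that the continuous image $\Phi^{-1}(\mu)/\UU(n)_\mu$ is compact. The key topological input is that a closed subset of a compact space is compact, and that the continuous image of a compact space is compact. So the whole proof reduces to exhibiting a compact subset of $\cM_{d=1} = \UU(n)\times\UU(n)\times\disk(x_1)$ that contains $\Phi^{-1}(\mu)$, together with the fact that $\Phi^{-1}(\mu)$ is closed (which is immediate since $\Phi$ is continuous and $\{\mu\}$ is closed).

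First I would write $\mu\in\UU(n)$ as $\det(\mu)=e^{\ic\Gamma}$ with $0\le\Gamma<2\pi$, exactly as in \eqref{R54}. Taking determinants in the moment map relation $\Phi(A,B,v_1)=ABA^{-1}B^{-1}e^{\ic x_1 v_1 v_1^\dagger}=\mu$ and using $\det(ABA^{-1}B^{-1})=1$ together with $\det(e^{\ic x_1 v_1 v_1^\dagger}) = e^{\ic x_1 |v_1|^2}$ gives $e^{\ic x_1|v_1|^2}=e^{\ic\Gamma}$, i.e. $x_1|v_1|^2\equiv\Gamma\pmod{2\pi}$ as in \eqref{R55}. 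Now the crucial point is the constraint $|v_1|^2<2\pi/|x_1|$ defining the open ball $\disk(x_1)$: this forces $|x_1|\,|v_1|^2<2\pi$, so among all the solutions of $x_1|v_1|^2\equiv\Gamma\pmod{2\pi}$ lying in the allowed range, exactly one value of $|v_1|^2$ survives. Concretely, if $\Gamma=0$ then $|v_1|^2=0$ (the only multiple of $2\pi/|x_1|$ in $[0,2\pi/|x_1|)$ is $0$), so $v_1=0$ and $\Phi^{-1}(\mu)\subset\UU(n)\times\UU(n)\times\{0\}$, which is compact. If $0<\Gamma<2\pi$, then \eqref{R56} pins down $|v_1|^2$ to the single value $r^2 := \Gamma/x_1$ when $x_1>0$, or $r^2 := (2\pi-\Gamma)/|x_1|$ when $x_1<0$; in either case $0<r^2<2\pi/|x_1|$, so $v_1$ is confined to the sphere $S^{2n-1}_r\subset\CC^n$ of radius $r$, and this sphere is a compact subset of $\disk(x_1)$.

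Having established this, I would conclude that $\Phi^{-1}(\mu)$ is contained in the compact set $\UU(n)\times\UU(n)\times S^{2n-1}_r$ (or $\UU(n)\times\UU(n)\times\{0\}$ when $\Gamma=0$), as in \eqref{R57}. Since $\Phi$ is continuous, $\Phi^{-1}(\mu)$ is closed in $\cM_{d=1}$, hence closed in this compact set, hence compact. Finally, $\Phi^{-1}(\mu)/\UU(n)_\mu$ carries the quotient topology and is the image of $\Phi^{-1}(\mu)$ under the (continuous, surjective) canonical projection, so it is a compact topological space. This covers both the case $\Gamma=0$, where $\mu\in\SU(n)$ and $\Phi^{-1}(\mu)$ is a closed subset of the compact $\UU(n)\times\UU(n)$, and the case $0<\Gamma<2\pi$.

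There is essentially no hard step here; the one point that requires a moment's care is verifying that the modular constraint $x_1|v_1|^2\equiv\Gamma\pmod{2\pi}$ together with the open-ball constraint $|v_1|^2<2\pi/|x_1|$ indeed selects a \emph{single} admissible value of $|v_1|^2$ — this is where the finiteness of the ball radius is used decisively, and it is exactly the feature that fails for $d>1$, since then \eqref{R55} becomes one scalar equation in several radii $|v_\alpha|^2$ and defines a positive-dimensional non-compact locus. Beyond that observation, the argument is a routine application of compactness of closed subsets of compact spaces and of continuous images.
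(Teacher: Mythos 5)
Your proof is correct and follows essentially the same route as the paper: take determinants in the moment map relation to pin $|v_1|^2$ to a single value (hence confine $v_1$ to a sphere, or to $\{0\}$ when $\mu\in\SU(n)$), conclude that $\Phi^{-1}(\mu)$ is a closed subset of a compact set, and then pass to the quotient via continuity of the canonical projection (the paper invokes Bredon's Theorem 3.1 for this last step, but only the direction you use is needed). The only cosmetic difference is that you correctly write the sphere as $S^{2n-1}_r$ where the paper has a typo $S^{2n}_r$ in \eqref{R57}.
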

\begin{proof}
This follows from the compactness of $\Phi^{-1}(\mu)$.
Indeed, according to Theorem 3.1 in \cite{Br}, for a compact group $G$ acting continuously on a Hausdorff space $X$,
$X$ is compact if and only if $X/G$ is compact.
 \end{proof}

An important special case is when $\mu = e^{\ic \gamma} \1_n$ with some $0\leq \gamma < 2\pi$.
Then, the moment map condition can be re-arranged as
\be
A B A^{-1} B^{-1} = \exp( \ic (\gamma \1_n - x_1 v_1 v_1^\dagger)).
\label{R58}\ee
Let us suppose for definiteness that $x_1>0$ and write $\gamma$ in the form
\be
\gamma = \frac{\beta}{n} + l \frac{2\pi}{n} \quad\hbox{with some fixed}\quad 0\leq \beta < 2\pi\quad\hbox{and}\quad  l\in\{0,1,\dots, (n-1)\}.
\label{R59}\ee
 We can transform $v_1$ into a vector whose only non-vanishing
component is the last one.
Then, since the determinant of the group commutator is $1$, we must have
\be
x_1 v_1 v_1^\dagger = \diag(0,\dots, 0, \beta).
\label{R60}\ee
In this case, we can also write
\be
\exp( \ic (\gamma \1_n - x_1 v_1 v_1^\dagger)) = \exp(\ic \diag(\gamma,\dots, \gamma, - (n-1) \gamma)).
\label{R61}\ee
Further assuming that $e^{\ic \gamma}$ is not a $k$-th root of unity for any $k=1,\dots, n$, and
working with $\SU(n)$ instead of $\UU(n)$, the structure of the `constraint surface'  \eqref{R58} in the
internally fused double $\SU(n)\times \SU(n)$ has been studied in detail in \cite{FK,FKl}.
The corresponding reduced phase space  turned out to be a smooth symplectic manifold.
For a certain range of the parameter $\gamma$, it is $\CC \mathbb{P}^{n-1}$ equipped with
a multiple of the Fubini--Study symplectic form, while for other $\gamma$-values  it has not been identified
with an already known symplectic manifold (the qualitatively different possibilities were termed
type I and type II cases in \cite{FKl}).
The resulting reduced system has been characterized
as a compactified trigonometric Ruijsenaars--Schneider system.
Using $\UU(n)$ instead of $\SU(n)$  causes a few changes in the structure of the reduced
phase space, which one might wish to study; a related investigation can be found in  \cite{Ru}.
Similar conclusions hold in the $x_1<0$ case as well.

Our next result shows  a qualitative difference between the $d=1$ and $d>1$ cases.

\begin{proposition}
For any $d\geq 2$ and $\mu \in \UU(n)$, $\Phi^{-1}(\mu)$, and thus also $\Phi^{-1}(\mu)/\UU(n)_\mu$, is non-compact.
\end{proposition}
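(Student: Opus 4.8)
The plan is to reduce the statement to the non-compactness of the image $\cE(\Phi^{-1}(\mu))$, where $\cE:=(\cE(1),\dots,\cE(d)):\cM_d\to\disk(x_1)\times\cdots\times\disk(x_d)$ is the obvious projection forgetting the $\UU(n)\times\UU(n)$ factor, and then to exhibit a sequence in this image that escapes to the boundary of the first ball. Since $\cE$ is continuous, compactness of $\Phi^{-1}(\mu)$ would force compactness of $\cE(\Phi^{-1}(\mu))$; and, exactly as in the $d=1$ discussion, Theorem~3.1 of \cite{Br} shows that compactness of $\Phi^{-1}(\mu)/\UU(n)_\mu$ would in turn force compactness of $\Phi^{-1}(\mu)$. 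Hence it suffices to prove that $\cE(\Phi^{-1}(\mu))$ is non-compact for $d\ge2$.

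The first step is to identify this image. Writing $\det\mu=e^{\ic\Gamma}$ with $0\le\Gamma<2\pi$, taking the determinant of the relation $\Phi(A,B,v_\alpha)=\mu$ from \eqref{Eq:Mast-phi-1} gives, just as in \eqref{R54}--\eqref{R55}, the condition $\sum_{\alpha=1}^d x_\alpha|v_\alpha|^2\equiv\Gamma\pmod{2\pi}$, so that $\cE(\Phi^{-1}(\mu))\subseteq S$ with
\[
S:=\Big\{(v_1,\dots,v_d)\in\disk(x_1)\times\cdots\times\disk(x_d)\ \Big|\ \textstyle\sum_{\alpha=1}^d x_\alpha|v_\alpha|^2\equiv\Gamma\pmod{2\pi}\Big\}.
\]
Conversely, given $(v_\alpha)\in S$, the matrix $\mu\exp(-\ic x_1v_1v_1^\dagger)\cdots\exp(-\ic x_dv_dv_d^\dagger)$ has determinant $e^{\ic(\Gamma-\sum_\alpha x_\alpha|v_\alpha|^2)}=1$, hence lies in $\SU(n)$, and is therefore a group commutator $ABA^{-1}B^{-1}$ by \cite{Go} (trivially so when $n=1$); thus $(A,B,v_\alpha)\in\Phi^{-1}(\mu)$, and $\cE(\Phi^{-1}(\mu))=S$.

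It then remains to show that $S$ is non-compact when $d\ge2$. Here I would set $v_3=\cdots=v_d=0$ (each $0$ lies in the corresponding ball) and take $v_1^{(k)}\in\disk(x_1)$ with $|v_1^{(k)}|^2=(1-\tfrac1k)\tfrac{2\pi}{|x_1|}$, so that $|v_1^{(k)}|^2\to\tfrac{2\pi}{|x_1|}$ while $x_1|v_1^{(k)}|^2\to0\pmod{2\pi}$. For each $k$ one solves $x_2|v_2^{(k)}|^2\equiv\Gamma-x_1|v_1^{(k)}|^2\pmod{2\pi}$ inside $[0,\tfrac{2\pi}{|x_2|})$: writing the right-hand side with representative $\theta_k\in[0,2\pi)$, one takes $|v_2^{(k)}|^2=\theta_k/x_2$ if $x_2>0$ and $|v_2^{(k)}|^2=(2\pi-\theta_k)/|x_2|$ if $x_2<0$; in both cases this lies in the required interval, so a vector $v_2^{(k)}\in\disk(x_2)$ of that squared norm exists. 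The points $(v_1^{(k)},v_2^{(k)},0,\dots,0)$ all lie in $S$, yet the sequence converges in $\CC^{nd}$ to a point with $|v_1|^2=\tfrac{2\pi}{|x_1|}$, which is not in $S$; thus $S$ is a bounded but non-closed subset of $\CC^{nd}$, hence non-compact.

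The conceptual content — and the only place where $d\ge2$ enters — is this last step: for $d=1$ the determinant constraint \eqref{R56} rigidly pins $|v_1|^2$ to a value strictly below $\tfrac{2\pi}{|x_1|}$, whereas for $d\ge2$ the single modular equation leaves enough slack to push $|v_1|^2$ to the boundary while absorbing the defect into $|v_2|^2$. The reduction to $\cE(\Phi^{-1}(\mu))$ and the appeal to \cite{Go} are routine; the only mild nuisance is the modular bookkeeping in the construction, which must be carried out with attention to the signs of $x_1$ and $x_2$.
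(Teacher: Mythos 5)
Your proof is correct, and while it rests on the same two pillars as the paper's argument --- Got\^o's theorem that every element of $\SU(n)$ is a group commutator, and the observation that for $d\geq 2$ the single determinant constraint $\sum_\alpha x_\alpha|v_\alpha|^2\equiv\Gamma\pmod{2\pi}$ leaves enough slack to push one spin vector to the boundary of its ball --- it packages them differently. The paper constructs an explicit divergent sequence directly inside $\Phi^{-1}(\mu)$ (fixing $A,B$ once and for all when $\mu\in\SU(n)$, and choosing $|v_1|^2,|v_2|^2$ near $\pi/|x_\alpha|$ so that the two exponentials multiply to $\1_n$, with a separate paragraph for $\mu\notin\SU(n)$); you instead project onto the spin factors, prove the clean identity $\cE(\Phi^{-1}(\mu))=S$, and conclude from the fact that $S$ is bounded but not closed. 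Your route handles all $\mu\in\UU(n)$ uniformly and isolates exactly where $d\geq 2$ enters, at the cost of the extra (routine) step of establishing surjectivity of $\cE$ onto $S$. Two small points to tidy up: the inverse exponentials should be written in reversed order, $ABA^{-1}B^{-1}=\mu\exp(-\ic x_d v_dv_d^\dagger)\cdots\exp(-\ic x_1v_1v_1^\dagger)$, though this does not affect the determinant or the applicability of Got\^o's theorem; and your formula $|v_2^{(k)}|^2=(2\pi-\theta_k)/|x_2|$ for $x_2<0$ lands on the boundary when $\theta_k=0$, but for fixed $\Gamma$ this can occur for at most one value of $k$, so discarding that term (or setting $|v_2^{(k)}|^2=0$ there) repairs it. Likewise, "the sequence converges" should be "has a convergent subsequence" unless you pin down $v_2^{(k)}$ explicitly, but this changes nothing in the conclusion.
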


\begin{proof}
First,  consider an arbitrary $\mu \in \SU(n)$ and $d\geq 2$.  Then, we fix $A,B\in \SU(n)$ for which
$A B A^{-1}B^{-1} = \mu$.
Next, for any parameter $0\leq \epsilon < \pi$, define the vectors $v_1(\epsilon)\in \disk(x_1)$ and $v_2(\epsilon)\in \disk(x_2)$
as follows:
\be
v_1(\epsilon)_j := \vert x_1\vert^{-\frac{1}{2}} \delta_{1,j}  \sqrt{\pi - s \epsilon},
\quad
v_2(\epsilon)_j :=   \vert x_2\vert^{-\frac{1}{2}} \delta_{1,j}  \sqrt{\pi + \epsilon}, \qquad j=1,\dots, n,
\label{R62}\ee
where  $s:= {x_1 x_2}/{\vert x_1 x_2 \vert}$.
In addition, if $d>2$, then set $v_\alpha := 0$ for $\alpha =3,\dots, d$. It is easily verified that
\be
\exp( \ic x_1  v_1(\epsilon) v_1(\epsilon)^\dagger) \exp (\ic x_2 v_2(\epsilon) v_2(\epsilon)^\dagger ) = \1_n,
\label{R63}\ee
and
\be
m(\epsilon):= (A,B,  v_1(\epsilon), v_2(\epsilon), v_3,\dots, v_d)  \in \Phi^{-1}(\mu).
\label{R64}\ee
Of course, the vectors starting with $v_3$ do not appear if $d=2$.

Now, take any sequence $\epsilon_k$ $(k\in \Z_{> 0}$) from $[0, \pi)$ for which $\lim_{k\to \infty} \epsilon_k = \pi$.
Since $\lim_{k\to \infty} v_2(\epsilon_k)$ is outside $\disk(x_2)$,
$m(\epsilon_k)$ does not have a subsequence that converges to a point of $\Phi^{-1}(\mu)$.
Consequently, $\Phi^{-1}(\mu)$ is not compact.

For $\mu \in \UU(n)\setminus \SU(n)$ and $d\geq 2$,   we take $v_2(\epsilon)$ and $v_\alpha$ for $\alpha =3,\dots, d$ precisely as above,
but pick a different vector $v_1(\epsilon) \in \disk(x_1)$ so that it obeys the condition
\be
\mu \exp(-\ic x_2 v_2(\epsilon) v_2(\epsilon)^\dagger) \exp(-\ic x_1 v_1(\epsilon) v_1(\epsilon)^\dagger)  \in \SU(n).
\ee
Then, we can find matrices $A(\epsilon), B(\epsilon) \in \SU(n)$ for which
\be
m(\epsilon):= (A(\epsilon), B(\epsilon), v_1(\epsilon),v_2(\epsilon), \dots, v_d) \in \Phi^{-1}(\mu).
\ee
Again, if $\epsilon_k \to \pi$, then  there does not exist any subsequence of $m(\epsilon_k)$  that possesses a limit in $\Phi^{-1}(\mu)$.
\end{proof}

\section{New real form of the spin RS system of Krichever and Zabrodin} \label{S:spinRS}

In this section, we assume that $d\geq 2$; the case $d=1$ goes back to \cite{FK,FKl} as discussed in \S\ref{ss:Cmpct}.
Fix $\gamma\in (0,2\pi)$ such that $k\gamma\notin 2\pi\Z$ for any $1< k \leq n$.
\begin{proposition}
 The Lie group $\UU(n)$ acts freely on $\Phi^{-1}(e^{\ic \gamma}\1_n)$ through \eqref{Eq:act-Mast},
therefore it is a subset of $\MM_{d\ast}$ as defined in Lemma \ref{Lm:LR1}.
\end{proposition}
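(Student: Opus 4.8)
The plan is to prove that the stabilizer of an arbitrary point of $\Phi^{-1}(e^{\ic\gamma}\1_n)$ under the action \eqref{Eq:act-Mast} is trivial, by an elementary linear-algebra argument. Fix $(A,B,v_1,\dots,v_d)\in\Phi^{-1}(e^{\ic\gamma}\1_n)$ and let $g\in\UU(n)$ belong to its stabilizer, so that $gA=Ag$, $gB=Bg$, and $gv_\alpha=v_\alpha$ for all $\alpha=1,\dots,d$. First I would set $V:=\span_\CC\{v_1,\dots,v_d\}$ and let $W\subseteq\CC^n$ be the smallest subspace containing $V$ that is invariant under $A$ and $B$ (equivalently, since $A$ and $B$ are invertible, under $A^{\pm1}$ and $B^{\pm1}$). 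A straightforward induction on word length, using that $g$ commutes with $A$ and $B$ and fixes each $v_\alpha$, shows that $g$ fixes every vector of the form $M_1\cdots M_r v_\alpha$ with $M_i\in\{A,B\}$; as such vectors span $W$, we obtain $g|_W=\id_W$. Thus it remains to prove that $W=\CC^n$.

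For this I would pass to the orthogonal complement $W^\perp$. Since $A$ and $B$ are unitary and preserve $W$, they also preserve $W^\perp$; and since each $v_\alpha$ lies in $W$, the rank-one operator $v_\alpha v_\alpha^\dagger$ annihilates $W^\perp$, so that $e^{\ic x_\alpha v_\alpha v_\alpha^\dagger}$ acts as the identity on $W^\perp$ (and block-decomposes with respect to $W\oplus W^\perp$). Hence, restricting the defining identity $\Phi(A,B,v_1,\dots,v_d)=e^{\ic\gamma}\1_n$ to $W^\perp$, all exponential factors in \eqref{Eq:Mast-phi-1} drop out and one is left with $(ABA^{-1}B^{-1})|_{W^\perp}=e^{\ic\gamma}\1_{W^\perp}$. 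Taking determinants over $W^\perp$ makes the group commutator contribute $1$, so $e^{\ic\gamma\dim(W^\perp)}=1$, i.e.\ $\dim(W^\perp)\,\gamma\in2\pi\Z$. But $0<\gamma<2\pi$ forces $\gamma\notin2\pi\Z$, and by hypothesis $k\gamma\notin2\pi\Z$ for every integer $1<k\le n$; since $\dim W^\perp\le n$, this leaves only $\dim(W^\perp)=0$, i.e.\ $W=\CC^n$. Combined with the first paragraph, $g=\1_n$. Therefore $\UU(n)$ acts freely on $\Phi^{-1}(e^{\ic\gamma}\1_n)$, and this set is consequently contained in $\MM_{d\ast}$ as defined in Lemma \ref{Lm:LR1}.

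I do not expect a genuine obstacle here: the whole argument rests on the single observation that $W^\perp$ is a common reducing subspace for $A$ and $B$ on which every exponential factor of $\Phi$ acts trivially, which turns the constraint $\Phi=e^{\ic\gamma}\1_n$ into the purely numerical condition $\dim(W^\perp)\,\gamma\in2\pi\Z$, incompatible with the assumption on $\gamma$ unless $\dim W^\perp=0$. The only points requiring a little care are the bookkeeping with invariant subspaces—checking that $g$ really fixes all of $W$, and that $v_\alpha v_\alpha^\dagger$ preserves both $W$ and $W^\perp$—and the degenerate case $V=\{0\}$ (all $v_\alpha=0$), which corresponds to $\dim W^\perp=n$ and is likewise excluded by the hypothesis on $\gamma$. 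This is the $\UU(n)$ quasi-Hamiltonian analogue of the freeness statements underlying the $d=1$ reductions recalled in \S\ref{ss:Cmpct}.
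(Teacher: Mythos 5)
Your proof is correct, but it takes a genuinely different route from the paper's. The paper does not argue directly: it embeds $\Phi^{-1}(e^{\ic\gamma}\1_n)$ into the complex multiplicative quiver variety $\MM_{d,\gamma}^{\CC}$ via the substitution $W_\alpha=-\ic x_\alpha \ctt(-\ic x_\alpha|v_\alpha|^2)v_\alpha^\dagger$, observes that the $\UU(n)$-action becomes the restriction of the $\Gl_n(\CC)$-action there, and then cites the known freeness of that $\Gl_n(\CC)$-action from \cite{CF2} (based on \cite{CF1}) when $e^{\ic\gamma}$ is not a $k$-th root of unity. Your argument is the direct unitary analogue of what underlies that cited result: you exploit the fact that $A,B$ are unitary so that the smallest $A,B$-invariant subspace $W$ containing the $v_\alpha$ has an orthogonal complement that reduces every factor of $\Phi$, turning the moment map condition into $\dim(W^\perp)\,\gamma\in 2\pi\Z$, which the hypothesis on $\gamma$ rules out unless $W=\CC^n$; triviality of the stabilizer on $W$ then follows from the commutation relations. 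All the steps check out (in particular $\det$ of a group commutator restricted to a common invariant subspace is $1$, and the degenerate case $v_1=\dots=v_d=0$ is covered by $k=n$). What your approach buys is a short, self-contained proof requiring no complexification and no external reference; what the paper's approach buys is the explicit embedding into $\MM_{d,\gamma}^{\CC}$, which it reuses immediately afterwards to relate the reduced system to the complex spin RS model, so the freeness statement comes for free along the way.
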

\begin{proof}
We write $\Phi^{-1}(e^{\ic \gamma}\1_n)$ explicitly as the tuples $(A,B,v_1,\ldots,v_d)\in \MM_d$ satisfying
\begin{equation} \label{Eq:Md-gam}
 ABA^{-1}B^{-1}\exp(\ic x_1 v_1v_1^\dagger) \cdots \exp(\ic x_d v_dv_d^\dagger) =e^{\ic \gamma}\1_n\,.
\end{equation}
Meanwhile, note that
\begin{equation}
 \MM_d^\CC := \{(X,Z,V_1,W_1,\ldots , V_d,W_d) \mid X,Z\in \Gl_n(\CC),\,\, V_\alpha\in \CC^n,\,\, W_\alpha\in \Mat(1\times n,\CC)\}
\end{equation}
is equipped with the $\Gl_n(\CC)$-action
\begin{equation} \label{Eq:GlnC-act}
 g \cdot (X,Z,V_1,W_1,\ldots , V_d,W_d)  := (gXg^{-1},gZg^{-1},gV_1,W_1 g^{-1},\ldots , gV_d,W_dg^{-1})\,.
\end{equation}
Consider the complement of the zero loci $\{1+W_\alpha V_\alpha =0 \}$ with $1\leq \alpha \leq d$, in which we define  the following subset
\begin{equation} \label{Eq:MdC-gam}
\MM_{d,\gamma}^\CC := \big\{ XZX^{-1}Z^{-1} = e^{\ic \gamma} (\1_n+W_dV_d) \cdots (\1_n+W_1V_1) \big\}\,.
\end{equation}
The action of $\Gl_n(\CC)$ given by \eqref{Eq:GlnC-act} is defined on $\MM_{d,\gamma}^\CC$ \eqref{Eq:MdC-gam}.
By \cite[\S3.2]{CF2} based on \cite[\S2.6]{CF1}, $\Gl_n(\CC)$ acts freely on $\MM_{d,\gamma}^\CC$ because $q:=e^{\ic \gamma}$ is not a $k$-th root of unity ($1\leq k \leq n$).
Note that we can embed $\Phi^{-1}(e^{\ic \gamma}\1_n)$ inside $\MM_{d,\gamma}^\CC$ through
\begin{equation} \label{Eq:MdC-transf}
 X=A,\,\, Z=B,\,\, V_\alpha=v_\alpha,\,\, W_\alpha=-\ic x_\alpha \ctt(-\ic x_\alpha |v_\alpha|^2) v_\alpha^\dagger\,, \quad \text{for }\,
 \ctt(t):=\frac{e^{\ic t}-1}{\ic t}\,,
\end{equation}
where the $\UU(n)$-action \eqref{Eq:act-Mast} becomes a special instance of the $\Gl_n(\CC)$-action \eqref{Eq:GlnC-act}. Thus $\UU(n)$ must act freely on $\Phi^{-1}(e^{\ic \gamma}\1_n)$.
\end{proof}

Now we explain the following key result: the reduced dynamics of the invariant functions $\Re \tr(A)$ and $\Im \tr(A)$ described in \S\ref{ss:RedVF}  combine to yield
(after complex-linear extension) the equations of motion of the spin RS system of Krichever and Zabrodin \cite{KZ}. Hence our construction yields a real form of the spin RS system.

Introduce the following subset of $\cM_{d*}^{2,\reg,0}$ \eqref{R15}:
\begin{equation}
\label{Eq:Model-ig}
\cM_{d*}^{2,\reg,0}\cap \Phi^{-1}(e^{\ic \gamma}\1_n) \,.
\end{equation}
This set parametrizes tuples $(A,Q, v_1,\ldots,v_d)\in \MM_d$ where $Q\in \bT(n)_\reg$ and the moment map condition \eqref{Eq:Md-gam} is satisfied (with $B:=Q$).
The later condition can be rewritten as
\begin{equation}
 QAQ^{-1}=e^{-\ic \gamma} A + e^{-\ic \gamma} \sum_{\alpha=1}^d v_\alpha w_\alpha^\dagger\,,
\end{equation}
for the smooth $\CC^n$-valued function $w_\alpha=w_\alpha(A,v_\alpha,v_{\alpha+1},\ldots,v_d)$, $1\leq \alpha \leq d$, defined by
\begin{equation} \label{Eq:walpha}
 w_\alpha^\dagger
 :=\ic x_\alpha \ctt(\ic x_\alpha |v_\alpha|^2) \, v_\alpha^\dagger \exp(\ic x_{\alpha+1} v_{\alpha+1} v_{\alpha+1}^\dagger) \cdots \exp(\ic x_{d} v_{d} v_{d}^\dagger) A\,,
\end{equation}
where $\ctt(t)$ is as in \eqref{Eq:MdC-transf}. Indeed, this follows from the case $\beta=1$ of the equalities
\begin{equation}
 e^{-\ic \gamma} A+e^{-\ic \gamma} \sum_{\alpha=\beta}^d v_\alpha w_\alpha^\dagger
 = e^{-\ic \gamma} \exp(\ic x_{\beta} v_{\beta} v_{\beta}^\dagger) \cdots \exp(\ic x_{d} v_{d} v_{d}^\dagger) A\,, \qquad 1\leq \beta \leq d\,.
\end{equation}
After introducing the matrix-valued function
\begin{equation} \label{Eq:F-coll}
 F:= \sum_{\alpha=1}^d v_\alpha w_\alpha^\dagger \,,
\end{equation}
which is a sum of $d$ matrices of rank $1$, we can finally write
\begin{equation} \label{Eq:A-kl}
 A_{kl}= (e^{\ic \gamma} Q_k Q_l^{-1}- 1)^{-1}\, F_{kl}\,, \qquad 1\leq k,l\leq n\,.
\end{equation}
The form of $A$ is reminiscent of the Lax matrix for the trigonometric spin RS system obtained in \cite{FFM} and its complex analogues \cite{AF,AO,CF2,KZ}.
As $Q\in  \bT(n)_\reg$, we set $Q_k=e^{\ic q_k}$ for $q_1,\ldots,q_n$ that are distinct modulo $2\pi \Z$.

\begin{theorem} \label{Thm:Dyn-RS}
 Consider $H^\Re,H^\Im \in \fH$ defined by
\begin{equation}
 H^{\Re}= \cE_1^\ast (h^{\Re}), \,\,  h^{\Re}(A)=\Re \tr(A); \qquad
H^{\Im}= \cE_1^\ast (h^{\Im}),\,\, h^{\Im}(A)= \Im \tr(A).
\end{equation}
Then the evolution equation on $\cM_{d*}^{2,\reg,0}\cap \Phi^{-1}(e^{\ic \gamma}\1_n)$ corresponding to the vector field
$Y_{H^\Re}$ given in \eqref{R27} with $\zeta_{\fT}=-\frac12 (\nabla h^{\Re} \circ \cE_1^0)_{\fT}$ is such that
\begin{equation}
 \begin{aligned} \label{Eq:dot-Hre}
\dot{q}_j&:=-\ic Q_j^{-1} \,Y_{H^\Re}(Q_j) =-\frac{\ic}{2}(e^{\ic\gamma}-1)^{-1} (F_{jj}-\overline{F}_{jj})\,, \\
(\dot{v}_{\alpha})_j&:=Y_{H^\Re}((v_{\alpha})_j)= \frac14 \sum_{k\neq j} \left[1-\ic \cot\left(\frac{q_j-q_k}{2} \right) \right] (A_{jk}-\bar{A}_{kj}) (v_\alpha)_k\,,
 \end{aligned}
\end{equation}
and similarly for $Y_{H^\Im}$ with $\zeta_{\fT}=-\frac12 (\nabla h^{\Im} \circ \cE_1^0)_{\fT}$ the evolution equation satisfies
\begin{equation}
 \begin{aligned} \label{Eq:dot-Him}
\dot{q}_j&:=-\ic Q_j^{-1} \, Y_{H^\Im}(Q_j)=-\frac{1}{2}(e^{\ic\gamma}-1)^{-1} (F_{jj}+\overline{F}_{jj})\,, \\
(\dot{v}_{\alpha})_j&:=Y_{H^\Im}((v_{\alpha})_j)=-\frac{\ic}{4} \sum_{k\neq j} \left[1-\ic \cot\left(\frac{q_j-q_k}{2} \right) \right] (A_{jk}+\bar{A}_{kj}) (v_\alpha)_k\,;
 \end{aligned}
\end{equation}
In both cases, $\dot{w}_{\alpha}$ is obtained from $\dot{v}_{\alpha}$ after replacing the vector $v_\alpha$ by $w_\alpha$.
In particular, the complex-linear combination of the above evolution equations corresponding to the ``formal vector field'' $2(e^{\ic\gamma}-1) (Y_{H^\Re}+\ic Y_{H^\Im})$
(associated with $\cE_1^\ast (h)$ for $h(A)=2(e^{\ic\gamma}-1) \tr(A)$ and $\zeta_{\fT}=-(e^{\ic\gamma}-1) [ (\nabla h^{\Re} \circ \cE_1^0)_{\fT}+\ic  (\nabla h^{\Im} \circ \cE_1^0)_{\fT}]$)
yields
\begin{equation}
 \begin{aligned} \label{Eq:dot-RS}
\frac12\dot{q}_j&=-\ic F_{jj}\,, \\
(\dot{v}_{\alpha})_j&=-\ic \sum_{k\neq j} (v_\alpha)_k\, F_{jk}  \operatorname{V}\left(\frac{q_j-q_k}{2}\right)\,,\\
(\dot{\bar{w}}_{\alpha})_j&=\ic \sum_{k\neq j} (\bar{w}_\alpha)_k\, F_{kj}  \operatorname{V}\left(\frac{q_k-q_j}{2}\right)\,,
 \end{aligned}
\end{equation}
where we introduced the `potential function'
\begin{equation}
\operatorname{V}(q):=\cot\left(q \right) - \cot\left(q +\frac{\gamma}{2}\right)\,.
\end{equation}
\end{theorem}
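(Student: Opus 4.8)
The plan is to substitute the concrete Hamiltonians $h^\Re,h^\Im\in\Cinf(\UU(n))^{\UU(n)}$ into the explicit reduced vector field $Y_H$ of Proposition~\ref{Pr:LR6}, and then to trade $A$ for the variables $(q_j)$ and $F$ by means of the ``Lax--type'' identity \eqref{Eq:A-kl}. The first step is to compute the $\uu(n)$--valued gradients: from \eqref{B4} and the pairing \eqref{Eq:ipU} one obtains
\begin{equation*}
 \nabla h^\Re(A)=\tfrac12\bigl(A^\dagger-A\bigr)\,,\qquad
 \nabla h^\Im(A)=\tfrac{\ic}{2}\bigl(A^\dagger+A\bigr)\,,
\end{equation*}
so that $\nabla h^\Re(A)+\ic\,\nabla h^\Im(A)=-A$ is the ``formal gradient'' of $\tr(A)$; because $\tr$ is not real valued on $\UU(n)$ one has to work with $h^\Re$ and $h^\Im$ separately and recombine only at the end.

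For the $Q$--component I would insert these gradients into the second line of \eqref{R27}; since $Q$ is diagonal and $\zeta_\fT$ does not enter that line, this gives $\dot q_j=\Im A_{jj}$ for $H^\Re$ and $\dot q_j=-\Re A_{jj}$ for $H^\Im$, which become the first lines of \eqref{Eq:dot-Hre}--\eqref{Eq:dot-Him} after using $A_{jj}=(e^{\ic\gamma}-1)^{-1}F_{jj}$ from \eqref{Eq:A-kl}. For the $v_\alpha$--component, the point of the choice $\zeta_\fT=-\tfrac12(\nabla h\circ\cE_1^0)_\fT$ is that it cancels the diagonal part of $\nabla h$, reducing the third line of \eqref{R27} to $Y_H(\cE^0(\alpha))=-(\cR(Q)+\tfrac12)(\nabla h(A))_\perp v_\alpha$. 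The computational heart is then the spectral resolution of the dynamical $r$--matrix \eqref{R25}: since $\Ad_Q E_{jk}=Q_jQ_k^{-1}E_{jk}$ for $j\neq k$,
\begin{equation*}
 \bigl(\cR(Q)+\tfrac12\bigr)E_{jk}=\frac{Q_jQ_k^{-1}}{Q_jQ_k^{-1}-1}\,E_{jk}
 =\tfrac12\Bigl(1-\ic\cot\tfrac{q_j-q_k}{2}\Bigr)E_{jk}\,,
\end{equation*}
so that feeding in the off--diagonal entries $\tfrac12(\bar A_{kj}-A_{jk})$ of $\nabla h^\Re$ and applying the operator to $v_\alpha$ produces the second line of \eqref{Eq:dot-Hre}, and similarly for $H^\Im$. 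The equations for $w_\alpha$ then require no extra work: by Remark~\ref{Rem:Flow} the flow conjugates $A$ and left--multiplies every $v_\beta$ by the same curve $C(t)\in\UU(n)$, and inspection of \eqref{Eq:walpha} shows that $w_\alpha(t)=C(t)w_\alpha(0)$ too, so $\dot w_\alpha$ is governed by the same matrix acting on $w_\alpha$ in place of $v_\alpha$ (and $\dot{\bar w}_\alpha$ by complex conjugation).

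To reach \eqref{Eq:dot-RS} I would add $2(e^{\ic\gamma}-1)$ times the $H^\Re$ system to $2\ic(e^{\ic\gamma}-1)$ times the $H^\Im$ system. On the $q_j$ this uses $\Im A_{jj}-\ic\Re A_{jj}=-\ic A_{jj}$ to give $\tfrac12\dot q_j=-\ic F_{jj}$; on the $v_\alpha$ it turns the combinations $A_{jk}\mp\bar A_{kj}$ into $2A_{jk}$, and then \eqref{Eq:A-kl} together with the elementary identity
\begin{equation*}
 (e^{\ic\gamma}-1)\,\frac{1-\ic\cot\tfrac{q_j-q_k}{2}}{e^{\ic\gamma}Q_jQ_k^{-1}-1}
 =-\ic\Bigl(\cot\tfrac{q_j-q_k}{2}-\cot\tfrac{q_j-q_k+\gamma}{2}\Bigr)=-\ic\,\operatorname{V}\!\Bigl(\tfrac{q_j-q_k}{2}\Bigr)\,,
\end{equation*}
which is verified by clearing exponentials, produces the $v_\alpha$--line of \eqref{Eq:dot-RS}; the $\bar w_\alpha$--line follows by the same manipulation applied to the evolution of the row vectors $w_\alpha^\dagger$ (equivalently, by complex conjugation of the $w_\alpha$--line, simplified with the help of the moment map constraint and the reality of $\operatorname{V}$ on $\R$). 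Comparing \eqref{Eq:dot-RS} with \cite{KZ} then identifies the reduced system as a real form of the complex trigonometric spin Ruijsenaars--Schneider model.

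Everything here is elementary linear algebra and trigonometry; the one point that needs care is the bookkeeping, in particular checking that the diagonal ($\fT(n)$) contributions to $Y_H$, which carry the residual gauge freedom $\zeta_\fT$, truly cancel in the gauge--invariant combination. This cancellation is exactly what the value of $\zeta_\fT$ prescribed in the statement ensures, so that the reduced equations can be written intrinsically in terms of $q_j$, $v_\alpha$, $w_\alpha$ (and the collective matrix $F$ of \eqref{Eq:F-coll}).
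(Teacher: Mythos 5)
Your proposal follows essentially the same route as the paper's own proof: compute $\nabla h^{\Re}(A)=\tfrac12(A^\dagger-A)$ and $\nabla h^{\Im}(A)=\tfrac{\ic}{2}(A^\dagger+A)$, diagonalize $\cR(Q)+\tfrac12$ on the root vectors $E_{jk}$ to get the factor $\tfrac12\bigl(1-\ic\cot\tfrac{q_j-q_k}{2}\bigr)$, use the prescribed $\zeta_{\fT}$ to cancel the diagonal part, invoke Remark \ref{Rem:Flow} to conclude that $w_\alpha$ evolves by the same left multiplication as $v_\alpha$, and recombine with coefficients $2(e^{\ic\gamma}-1)$ and $2\ic(e^{\ic\gamma}-1)$ via the identity turning $(1-\ic\cot)A_{jk}$ into $-\ic(e^{\ic\gamma}-1)^{-1}\operatorname{V}\bigl(\tfrac{q_j-q_k}{2}\bigr)F_{jk}$. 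All steps check out (your explicit verification of that trigonometric identity is even slightly cleaner than the paper's, which misprints the argument of $\operatorname{V}$ in its proof), so the proposal is correct and matches the paper's argument.
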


\begin{proof}
As a consequence of Proposition \ref{Pr:LR6}, the restriction $\Phi:\cM_{d*}^{2,\reg,0}\to \UU(n)$  of the moment map  is conjugated by a matrix $C(t)$ (see Remark \ref{Rem:Flow}) along the integral curve of $Y_{H^\Re}$ and $Y_{H^\Im}$. Thus the evolution of these vector fields can be restricted to $\cM_{d*}^{2,\reg,0}\cap \Phi^{-1}(e^{\ic \gamma}\1_n)$.

Next, note from our choice of pairing \eqref{Eq:ipU} that \eqref{B4} yields
$\nabla h^{\Re}(A)=-\frac12(A-A^\dagger)$ and $\nabla h^{\Im}(A)=\frac{\ic}{2}(A+A^\dagger)$.
By \eqref{R25}, we can write for any $\xi\in \uu(n)$ and $1\leq j,k\leq n$,
$(\cR(Q)(\xi))_{jk}=-\frac{\ic}{2} (1-\delta_{jk}) \cot\left(\frac{q_j-q_k}{2}\right) \xi_{jk}$.
Hence the expressions for $(\dot{v}_{\alpha})_j$ in \eqref{Eq:dot-Hre} and \eqref{Eq:dot-Him} directly follow\footnote{Our choice of $\zeta_{\fT}$ guarantees that the matrix multiplying $v_\alpha$ in $Y_{H^{\Re/\Im}}(v_\alpha)$
has zero diagonal part.} from \eqref{R27}, while for $\dot{q}_j$ we further use the form of $A$ given by \eqref{Eq:A-kl}.
To write down the evolution equation of $w_{\alpha}$ \eqref{Eq:walpha}, we note thanks to Remark \ref{Rem:Flow} that the evolution of $w_\alpha$ and $v_\alpha$ is the same (it corresponds to left multiplication by $C(t)$), hence they satisfy the same evolution equations.

For the second part, we multiply \eqref{Eq:dot-Hre} by $2(e^{\ic\gamma}-1)$ and \eqref{Eq:dot-Him} by $2\ic (e^{\ic\gamma}-1)$, then add the results together. We directly get $\dot{q}_j$ in \eqref{Eq:dot-RS}, while for $(\dot{v}_{\alpha})_j$ we need to remark that
$\left[1-\ic \cot\left(\frac{q_j-q_k}{2} \right) \right] A_{jk} = -\ic (e^{\ic\gamma}-1)^{-1} \operatorname{V}(q_j-q_k) F_{jk}$.
To get $(\dot{\bar{w}}_{\alpha})_j$, we first need the complex conjugation of the evolution under the real vector fields $Y_{H^{\Re}}$ and $Y_{H^{\Im}}$, and after that we take the complex-linear combination of the resulting equations.
\end{proof}

Up to a multiplicative constant, the formal evolution equation \eqref{Eq:dot-RS} reproduces the real form of the trigonometric spin RS system obtained by Marshall and the 2 authors, see Corollary 4.3 in \cite{FFM}.
However, there are important discrepancies between these real models.
First, it is only after taking a complex-linear combination of \eqref{Eq:dot-Hre} and \eqref{Eq:dot-Him} that we can recognize the spin RS evolution equations since the
function $(A,B,v_\alpha)\mapsto \tr(A)$ is not real-valued on $\MM_d$.
Second, our matrix of collective spins $F$ \eqref{Eq:F-coll} involves two types of vectors: the original $v_\alpha$ that enter the definition of the space
$\MM_d$, and the new $w_\alpha$ \eqref{Eq:walpha}. In the work \cite{FFM}, the analogue of the matrix $F$ has the symmetric
form $\tilde{F}=\sum_{\alpha=1}^d \tilde{v}_\alpha \tilde{v}_\alpha^\dagger$ for some vectors $\tilde{v}_\alpha$. Nevertheless, the
lack of symmetry between the two type of spin vectors $v_\alpha,w_\alpha$ that we used is closer to the original complex model of Krichever and Zabrodin \cite{KZ}.
In particular, we can derive from \eqref{Eq:F-coll} and \eqref{Eq:dot-RS} that
\begin{equation}
 \frac12\ddot{q}_j=- \sum_{k\neq j} F_{jk} F_{kj}  \left[ \operatorname{V}\left(\frac{q_j-q_k}{2}\right) - \operatorname{V}\left(\frac{q_k-q_j}{2}\right) \right]\,,
\end{equation}
so that the evolution of the variables $(x_j:=\frac12 q_j,a_j^\alpha:=(v_\alpha)_j,b_j^\alpha:=(\bar{w}_\alpha)_j)$
can be readily identified with the equations of motion (1.21)-(1.23) therein (up to multiplying the time parameter by $-\ic $).
This completes our claim that we have unveiled a real form of the spin RS system of Krichever and Zabrodin \cite{KZ}.

\section{Summary and open problems}
\label{S:ccl}

Here, we summarize the most essential points of our study and draw attention to open problems.

We first introduced a master dynamical system on the extended double $\cM_d$ \eqref{I4} that we equipped with a pencil of
quasi-Poisson brackets admitting compatible quasi-Hamiltonian structures, as described in Section \ref{S:Master}.
The commuting flows of the master system and their joint constants of motion
are characterized by Theorems \ref{Thm:B1} and \ref{Thm:B5}, which show that this system is a natural generalization of the
degenerate integrable system of free motion carried by the extended cotangent bundle $M_d$ \eqref{I1}.
We then investigated the  reduction of the master system obtained by descending
to the quotient space $\cM_d^\red = \cM_d/\UU(n)$.
 We considered the chain of dense open subsets
\be
\cM_{d**}^\red \subset \cM_{d*}^\red \subset  \cM_d^\red,
\ee
where $\cM_{d*}^\red$ is the `big cell' associated with the principal orbit type and
$\cM_{d**}^\red$ is defined in \eqref{R45}, and
proved that the reduced system yields a multi-Hamiltonian, degenerate
integrable system after restriction on  $\cM_{d**}^\red$.
This is the content of Theorem \ref{Thm:DegInt},
and we  demonstrated in Corollary \ref{Cor:IntSympl} that integrability is inherited on any maximal
symplectic leaf $\cM_{d**}^\red(\mathrm{C})$ \eqref{leaf}  of codimension $n$, where $\mathrm{C}\subset \UU(n)_{\reg}$ is a conjugacy class.
These are new results even for $d=1$,  in which case the pencil collapses
to the main quasi-Poisson structure of Theorem \ref{Thm:qH-Master}.

The track of thought followed in our proof of degenerate integrability  may  be adopted
to fill some gaps that were left open in previous investigations of reductions of other master systems, see, e.g.,  \cite{Fe} and references therein.
It is also worth noting that a slight modification of our arguments shows the integrability of the reduced systems
in the real analytic category as well.

One of our interesting results is  the identification  of special symplectic leaves in $\cM_{d*}^\red$ $(d\geq 2)$ that support a new real form
of the complex, trigonometric spin RS models of Krichever and Zabrodin.  These models were described at the level
of the reduced equations of motion (Theorem \ref{Thm:Dyn-RS}), leaving
the proof of their integrability for  future work.
Besides this, several other questions remain open, and we finish with a couple of remarks on those and other issues:

\begin{enumerate}

\item We conjecture that degenerate integrability  holds on the Poisson manifold $\cM_{d*}^\red$, too.

\item It would be important to explore the integrability properties of the reduced system on arbitrary
symplectic leaves of the full quotient space $\cM_d^\red$, in particular to establish integrability of the real form of the spin RS model considered in Section \ref{S:spinRS}.

\item The quantization of the integrable systems living on (symplectic leaves of) $\cM_d^\red$ is a wide open question.

\item There are $\UU(n)$-invariant subsets $\operatorname{A}_{k}(x) \subset \CC^n$ ($k\geq 1$) which are analogues of the quasi-Poisson ball $\disk(x)$, see Remark \ref{Rem:qP-disk}. To highlight the similarity with the extended cotangent bundle \eqref{I1}, we chose to define the master phase space $\MM_d$ \eqref{I4} using $d$ quasi-Poisson balls as we can recover $\CC^n\simeq T^\ast \R^n$ from $\disk(x)$ in the limit $x\to 0$. (The quasi-Poisson bracket \eqref{Eq:qPB-C2} must be rescaled while taking the limit.) However, one can replace any or all of the balls $\disk(x_\alpha)$ by
 arbitrary regions  $A_{k_\alpha}(x_\alpha) \subset \CC^n$  ($\forall k_\alpha\in \Z_{>0}$ for $1\leq \alpha\leq d$).
The so-obtained systems and their reductions
enjoy similar properties as is the case for $\cM_d$, but they could have more complicated topological structure.

\item In analogy with the pencil of quasi-Poisson structures on $\MM_d$ used throughout the text,
we explained in Remark \ref{Rem:Cotang} that we can  also define
compatible Poisson structures  on the extended cotangent bundle $M_d$  \eqref{I1}, for any $d\geq 2$.
This raises the issue of finding analogous structures in the related complex holomorphic models \cite{AO,CF2}, and in the real forms based
on Poisson--Lie groups \cite{FFM}. This problem should be particularly fruitful to investigate since it could lead to proving Conjecture \ref{Conj:PenC} from Appendix \ref{sec:M}.

\item  It could be interesting to study the structure of degenerate integrable systems on regular quasi-Poisson manifolds in general.
Other examples fitting the definition formulated in \S\ref{ss:Def-Int} could arise from
the quasi-Poisson description of moduli spaces of flat connections \cite{AKSM,ARe,LB}.

\end{enumerate}

We hope to be able to return to at least some of the above issues in the future.

\medskip

\subsubsection*{Acknowledgements}

We are grateful to B.G.~Pusztai and L.~Stach\'o for help on a technical  point.
M.F. was partly supported by a Doctoral Prize Fellowship of the University of Loughborough.
This project has received funding from the European Union’s Horizon 2020 research and innovation programme under
the Marie Sk\l{}odowska-Curie grant agreement No. 101034255.
The work of L.F. was supported in part by the NKFIH research grant K134946.

\medskip

 \appendix
\section{An alternative approach to the quasi-Poisson ball}
\label{sec:A}

We outline our original approach to discover the quasi-Poisson ball $\disk(1)$ given in Proposition \ref{Pr:qP-Cn} which plays a prominent role in this work.

\medskip

Fix integers $n_0,n_1\geq 1$. In the work of Van den Bergh \cite{VdB1}, it is proved that the complex manifold
$\Mat(n_0\times n_1,\CC)\times \Mat(n_1\times n_0,\CC)$
on which $\Gl_{n_0}(\CC)\times \Gl_{n_1}(\CC)$ acts by $(g_0,g_1)\cdot (V,W)=(g_0 V g_1^{-1},g_1 W g_0^{-1})$
is equipped with a holomorphic Hamiltonian quasi-Poisson structure on a dense open submanifold.
For $n_1=1$ and $n:=n_0$, the quasi-Poisson bracket reads
\begin{equation*}
 \br{V_{i},V_{k}}=0\,, \quad \br{W_{i},W_{k}}=0\,, \quad
\br{V_{i},W_{k}}=\frac12 (2+ WV) \delta_{ik} + \frac12  V_i W_{k}\,,
\end{equation*}
and the moment map is $\Phi(V,W)=(\1_{n}+VW, \det(\1_n + VW)^{-1})$ on the dense open submanifold $\{(V,W)\mid  \det(\1_n + VW) = 1+WV\neq 0\}$.
As a real model, we want to consider (a suitable submanifold of) $\CC^n$
with the natural action of $\UU(n)\times \UU(1)$ given by $(g,\lambda)\cdot v:=gv \lambda^{-1}$.
We make the following ansatz
\begin{equation}
 \br{v_i,v_k}=0\,, \quad \br{\bar{v}_i,\bar{v}_k}=0\,, \quad
\br{v_i,\bar{v}_k}=\frac{\ic}{2}\att(|v|^2)\, \delta_{ik} - \frac{\ic}{2} \btt(|v|^2) \, v_i \bar{v}_k\,,
\label{Eq:ansatz}
\end{equation}
where $\att(t),\btt(t)$ are real-analytic functions in a  neighborhood of the origin $0\in \R$,
and $(v_i,\bar{v}_i)_{i=1}^n$ denote the complex-valued coordinate functions on $\CC^n$.
Of course, anti-symmetry of the bracket is understood.
Reality of the bracket follows from the identity $\br{\bar{f_1},\bar{f_2}}=\overline{\br{f_1,f_2}}$ for any
 $f_1,f_2\in \Cinf(\CC^n,\CC)$. This holds for the evaluation functions $(v_i,\bar{v}_i)$, hence for all functions.
Below,  we start with a technical lemma that will be utilized subsequently.

\begin{lemma}\label{lem:Laci}
Let us extend the inner product of $\uu(n)$ to a complex bilinear form on $\gl_n(\CC)$ and take an
arbitrary pair of dual bases of $\gl_n(\CC)$, denoted $\{E_\alpha\}$ and $\{E^\beta\}$, that satisfy
\be
\langle E_\alpha, E^\beta \rangle := - \tr(E_\alpha E^\beta) = \delta_{\alpha\beta}.
\ee
Then the action of the Cartan trivector $\phi$ \eqref{Eq:Cartan3}  of $\uu(n)$ on three evaluation function on $\CC^n$
can be expressed as follows:
\begin{equation}
 \begin{aligned}
\phi_{\C^n}(v_i,v_k,v_l) &=
- \frac{1}{2} \sum_{\alpha,\beta,\gamma} \langle E^\alpha, [E^\beta, E^\gamma] \rangle
 (E_\alpha v)_i (E_\beta v)_k  (E_\gamma v)_l, \\
\phi_{\C^n}(v_i,v_k,{\bar v}_l) &=
  \frac{1}{2} \sum_{\alpha,\beta,\gamma} \langle E^{\alpha}, [E^\beta, E^\gamma] \rangle
  (E_\alpha v)_i (E_\beta v)_k
 ( v^\dagger E_\gamma)_l,
 \end{aligned}
\end{equation}
 where all the summation indices run over $n^2= \dim_\CC(\gl_n(\CC))$ values.
 \end{lemma}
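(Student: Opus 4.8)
The plan is to compute $\phi_{\C^n}$ directly from the definition \eqref{Eq:Cartan3} together with the fact that the extension $\psi \mapsto \psi_M$ of the infinitesimal action to $\wedge^k\g$ is a Lie algebra morphism in the appropriate sense. First I would note that since the Cartan trivector is basis-independent (being built from the invariant inner product and bracket), we may compute it using any orthonormal basis of $\uu(n)$, and then pass to the complexification: if $\{e_a\}_{a\in\mathtt A}$ is an orthonormal real basis of $\uu(n)$, then $\phi = \frac{1}{12}\sum_{a,b,c}\langle e_a,[e_b,e_c]\rangle\, e_a\wedge e_b\wedge e_c$, and because the structure constants $C_{abc}$ are totally antisymmetric this equals $\frac12 \sum_{a<b<c} C_{abc}\, e_a\wedge e_b\wedge e_c$. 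Complexifying, $\{e_a\}$ is still a basis of $\gl_n(\C)$ over $\C$ but no longer self-dual with respect to $-\tr(XY)$; rewriting the sum in terms of an arbitrary dual pair $\{E_\alpha\},\{E^\beta\}$ with $\langle E_\alpha,E^\beta\rangle=\delta_{\alpha\beta}$ is then a routine change-of-basis exercise using $\sum_a e_a\otimes e_a = \sum_\alpha E_\alpha\otimes E^\beta$ (suitably contracted), giving $\phi = \frac{1}{12}\sum_{\alpha,\beta,\gamma}\langle E^\alpha,[E^\beta,E^\gamma]\rangle\, E_\alpha\wedge E^\beta\wedge E^\gamma$ — but here one must be slightly careful, since the natural expression involves raising/lowering exactly one index, and I would pin this down by testing against the $\uu(n)$ case first.

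Next I would evaluate $\phi_{\C^n}$ on evaluation functions. Recall from \eqref{EqinfVectM} that the infinitesimal action of $\zeta\in\uu(n)$ on $\C^n$ by left multiplication satisfies $\zeta_{\C^n}(v_i) = -(\zeta v)_i$ and $\zeta_{\C^n}(\bar v_i) = (v^\dagger\zeta)_i$; these formulas extend $\C$-linearly to $\gl_n(\C)$. Since $\psi\mapsto\psi_M$ on a wedge of three elements acts as a determinant of the individual actions — i.e. $(\xi_1\wedge\xi_2\wedge\xi_3)_M(f_1,f_2,f_3) = \sum_{\sigma\in S_3}\sgn(\sigma)\prod_i (\xi_{\sigma(i)})_M(f_i)$ with the convention of the paper (no normalization factor in wedge products) — I would plug in $E_\alpha, E^\beta, E^\gamma$. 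The sum over $\alpha,\beta,\gamma$ of $\langle E^\alpha,[E^\beta,E^\gamma]\rangle$ times the antisymmetrization collapses: since $\langle E^\alpha,[E^\beta,E^\gamma]\rangle$ is already totally antisymmetric in $(\alpha,\beta,\gamma)$, each of the $3!$ permutation terms in the wedge action contributes equally, replacing the $\frac{1}{12}$ by $\frac{6}{12}=\frac12$ and then correcting the sign bookkeeping for which of the three matrices hits $v_i$ versus $v_k$ versus $\bar v_l$. For $\phi_{\C^n}(v_i,v_k,v_l)$ each slot contributes a factor $-(E_\bullet v)_\bullet$, producing an overall $(-1)^3=-1$, hence the claimed $-\frac12\sum\langle E^\alpha,[E^\beta,E^\gamma]\rangle (E_\alpha v)_i(E_\beta v)_k(E_\gamma v)_l$. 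For $\phi_{\C^n}(v_i,v_k,\bar v_l)$ two slots give $-(E_\bullet v)_\bullet$ and one gives $+(v^\dagger E_\bullet)_l$, so the overall sign is $(-1)^2(+1)=+1$, yielding the second formula.

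The step I expect to require the most care is the passage from the orthonormal-basis expression of $\phi$ to the dual-basis expression over $\gl_n(\C)$, together with getting the combinatorial factor and signs exactly right under the paper's unnormalized-wedge convention. In particular I would double-check the identity $\sum_{a\in\mathtt A} e_a\otimes e_a = \sum_\alpha E_\alpha\otimes E^\alpha$ in $\gl_n(\C)\otimes_\C\gl_n(\C)$ — which is just the statement that both sides represent the canonical element dual to the bilinear form $-\tr(XY)$ — and verify that contracting it twice reproduces the structure-constant sum with one raised index as written in the lemma. Everything else is bookkeeping: the antisymmetry of $\langle E^\alpha,[E^\beta,E^\gamma]\rangle$ (which follows from $\ad$-invariance of the form, already recorded in the excerpt), the determinant formula for the action of a $3$-vector on a triple of functions, and the explicit infinitesimal action formulas. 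I would present the proof as: (1) rewrite $\phi$ in the dual basis; (2) apply the extension-of-action formula; (3) substitute the infinitesimal action and collect signs.
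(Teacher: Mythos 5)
Your proposal is correct and follows essentially the same route as the paper: the paper first evaluates $\phi_{\C^n}$ on the evaluation functions in the orthonormal basis of $\uu(n)$ (the total antisymmetry of $C_{abc}$ turning $\tfrac{1}{12}$ into $\tfrac12$, with the signs coming from $\xi_{\C^n}(v_i)=-(\xi v)_i$ and $\xi_{\C^n}(\bar v_i)=(v^\dagger \xi)_i$) and only then substitutes $e_a=\sum_\alpha\langle e_a,E^\alpha\rangle E_\alpha$ using $\sum_{a}\langle e_a,E_\alpha\rangle\langle e_a,E^\beta\rangle=\delta_{\alpha\beta}$ from \eqref{Eq:shortEE}, whereas you change basis at the level of the trivector first --- an immaterial reordering of the same two steps. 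The one point to fix is the intermediate formula you flagged: the dual-basis expression is $\phi=\tfrac{1}{12}\sum_{\alpha,\beta,\gamma}\langle E^\alpha,[E^\beta,E^\gamma]\rangle\,E_\alpha\wedge E_\beta\wedge E_\gamma$ with all three wedge factors lowered (and all three raised inside the structure constant), not the mixed form $E_\alpha\wedge E^\beta\wedge E^\gamma$ you wrote, as is consistent with the final formulas you obtain.
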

\begin{proof}
By definition of the infinitesimal action of $\uu(n)$ through \eqref{EqinfVectM}, we have for an  arbitrary
$\xi \in \uu(n)$ that
\begin{equation}
 \xi_{\C^n}(v_i) = - (\xi v)_i,
\qquad
\xi_{\C^n}({\bar{v}}_i) = (v^\dagger \xi)_i.
\end{equation}
Using the definition of  $\phi$ \eqref{Eq:Cartan3}, we can write for an orthonormal basis $(e_a)_{a\in \mathtt{A}}$ of $\uu(n)$ that
\begin{equation*}
 \begin{aligned}
 \phi_{\C^n}(v_i,v_k,v_l) &= - \frac{1}{2} \sum_{a,b,c\in \mathtt{A}} \langle e_{a}, [e_b, e_c] \rangle \, (e_av)_i  (e_b v)_k  (e_c v)_l,\\
 \phi_{\C^n}(v_i,v_k,{\bar v}_l) &=
 \frac{1}{2} \sum_{a,b,c\in \mathtt{A}} \langle e_{a}, [e_b, e_c] \rangle \,
 (e_av)_i  (e_b v)_k ( v^\dagger e_c)_l.
 \end{aligned}
\end{equation*}
Noting that \eqref{Eq:shortEE} yields $\sum_{a\in \mathtt{A}} \langle e_a, E_\alpha\rangle \langle e_a, E^\beta\rangle=\delta_{\alpha\beta}$,
substitution of the expansions
 \be
 e_a = \sum_\alpha \langle e_a, E_\alpha\rangle E^\alpha\,, \quad  e_a = \sum_\alpha \langle e_a, E^\alpha\rangle E_\alpha\,,
 \ee
 in the previous equations leads immediately to the claim.
\end{proof}

Below, we fix $r\in \R_{>0}\cup \{\infty\}$ and impose the condition that the ansatz \eqref{Eq:ansatz} should define
a quasi-Poisson bracket with regard to the natural action of $\UU(n)$ on $\CC^n$.

\begin{proposition}  \label{Pr:Ans}
Suppose that $\att(t),\btt(t)$ are real-analytic functions on $\{t\in \R \mid |t|<r\}$.
The bracket \eqref{Eq:ansatz} defines a real quasi-Poisson bracket on $\{v\in \CC^n \mid |v|^2 < r\}$ for the $\UU(n)$-action
 $g\cdot v:=gv$ if and only if the functions
 $\att(t),\btt(t)$ satisfy
\begin{equation} \label{Eq:Ans1}
 \att(t)\btt(t) + \att'(t) \big[ \att(t) - t \btt(t)  \big] = - t\,.
\end{equation}
\end{proposition}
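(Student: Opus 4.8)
\textbf{Proof plan for Proposition \ref{Pr:Ans}.}

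The plan is to impose directly the quasi-Poisson condition \eqref{Eq:JacPhi} on the three types of triples of coordinate functions, $(v_i,v_k,v_l)$, $(v_i,v_k,\bar v_l)$ and their conjugates, using the ansatz \eqref{Eq:ansatz}. The invariance of the bracket under the $\UU(n)$-action follows immediately from the fact that the right-hand sides of \eqref{Eq:ansatz} are built only from $|v|^2$ and the $\UU(n)$-equivariant expressions $\delta_{ik}$, $v_i\bar v_k$; reality is already noted in the text. So the content is the Jacobiator identity, and by the general theory only the $(v_i,v_k,v_l)$ and $(v_i,v_k,\bar v_l)$ cases (plus conjugates) need to be checked, since these span the symmetric type of triple modulo conjugation.

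First I would record the elementary auxiliary brackets obtained by differentiating \eqref{Eq:ansatz}: $\br{|v|^2, v_i} = \sum_k \br{\bar v_k, v_i} v_k$ and $\br{|v|^2,\bar v_i}$, which evaluate to scalar multiples of $v_i$ and $\bar v_i$ respectively, with coefficient a simple combination of $\att$, $\btt$ and $|v|^2$ (writing $t=|v|^2$). Then I would compute $\br{v_i,\br{v_k,v_l}}$ and the cyclic terms: since $\br{v_k,v_l}=0$, the nested bracket $\br{v_i,\br{v_k,v_l}}=0$ identically, so the left-hand side of \eqref{Eq:JacPhi} on $(v_i,v_k,v_l)$ vanishes. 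For the right-hand side I would use the first formula of Lemma \ref{lem:Laci} with the dual bases $E_\alpha = E_{ab}$, $E^\beta = -E_{ba}$ of $\gl_n(\CC)$: the totally antisymmetric structure constants $\langle E^\alpha,[E^\beta,E^\gamma]\rangle$ contracted against the symmetric expression $(E_\alpha v)_i(E_\beta v)_k(E_\gamma v)_l$ produce zero by a symmetry argument. Hence the $(v_i,v_k,v_l)$ case gives no constraint, consistently with the first identity in \eqref{Eq:ansatz} being trivially closed.

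The real work is the $(v_i,v_k,\bar v_l)$ triple. Here the left-hand side of \eqref{Eq:JacPhi} is a genuine sum of three nested brackets. Using $\br{v_i,\bar v_l}$ from \eqref{Eq:ansatz}, computing $\br{v_k,\br{v_i,\bar v_l}}$ requires the derivative brackets $\br{v_k, t}$ and $\br{v_k, v_i\bar v_l}$; after expanding, every term is proportional to one of the monomials $\delta_{kl}v_i$, $v_i v_k \bar v_l$, or $\delta_{il}v_k$ (and similar), with coefficients that are polynomials in $\att,\btt,\att',\btt'$ and $t$. I would organize the computation by collecting the coefficient of each such monomial on the left, and the corresponding coefficient on the right obtained from the second formula of Lemma \ref{lem:Laci} evaluated with the $E_{ab}$ basis. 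Matching the coefficient of $v_i v_k \bar v_l$ (the "interesting" monomial, the others giving either the same equation or identities) will yield exactly the ODE \eqref{Eq:Ans1}. Conversely, one checks that \eqref{Eq:Ans1} is also sufficient: if it holds, all collected coefficients agree, so \eqref{Eq:JacPhi} holds on every triple of coordinate functions and hence (by the derivation property and multiplicativity) on all of $\Cinf$, and the bracket is well defined on $\{|v|^2<r\}$ because $\att,\btt$ are real-analytic there. The $(\bar v_i,\bar v_k, v_l)$ and $(\bar v_i,\bar v_k,\bar v_l)$ cases follow by reality.

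The main obstacle is purely computational bookkeeping: correctly expanding the triple nested bracket on $(v_i,v_k,\bar v_l)$ and the contracted structure-constant sum from Lemma \ref{lem:Laci}, and disentangling which linear combinations of monomials each side contributes, so as to isolate \eqref{Eq:Ans1} without drowning in redundant equations. I expect several apparent equations to collapse to the single ODE after using $\ad$-invariance of $\ip{-,-}$ and the identity $v^\dagger\xi v\in\ic\R$ for $\xi\in\uu(n)$; keeping track of these simplifications is the delicate part. Once the coefficient of $v_i v_k\bar v_l$ is matched, everything else is forced, and the equivalence in the statement drops out.
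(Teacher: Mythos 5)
Your plan is essentially the paper's proof: reduce the Jacobiator to the coordinate triples, observe that the holomorphic triple $(v_i,v_k,v_l)$ gives $0=0$ on both sides, and extract the ODE from the mixed triple $(v_i,v_k,\bar v_l)$ by matching monomial coefficients against the right-hand side computed via Lemma \ref{lem:Laci} in the $E_{ab}$ basis. Two small corrections to your bookkeeping: the vanishing of $\phi_{\CC^n}(v_i,v_k,v_l)$ is not a formal antisymmetric-against-symmetric contraction (the cubic expression is not symmetric in the three Lie-algebra indices) but an explicit cancellation of the two commutator terms; and the constraint \eqref{Eq:Ans1} does not come from the coefficient of $v_iv_k\bar v_l$ --- that coefficient cancels identically on both sides (the $\btt^2$ and $\btt'$ contributions drop out of the cyclic sum, and $\phi_{\CC^n}(v_i,v_k,\bar v_l)$ contains no such term) --- but from the coefficient of $v_k\delta_{il}-v_i\delta_{kl}$, which must equal $-\tfrac14|v|^2$ times that same combination. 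Since your stated plan is to match all collected coefficients, this misprediction does not derail the argument.
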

\begin{proof}
It is sufficient to verify the quasi-Poisson property \eqref{Eq:JacPhi} on the evaluation functions $(v_i,\bar{v}_i)$,
with the Cartan trivector of $\uu(n)$ on the right-hand side. Using the reality of the bracket \eqref{Eq:ansatz}, we only
 need to check the two identities
\begin{equation} \label{Eq:pf-qP1}
\begin{aligned}
 \br{v_i,\br{v_k,v_l}} +\br{v_k,\br{v_l,v_i}} + \br{v_l,\br{v_i,v_k}} &= \frac12 \phi_{\CC^n}(v_i,v_k,v_l) \,,  \\
 \br{v_i,\br{v_k,\bar{v}_l}} +\br{v_k,\br{\bar{v}_l,v_i}} +
 \br{\bar{v}_l,\br{v_i,v_k}} &= \frac12 \phi_{\CC^n}(v_i,v_k,\bar{v}_l)\,,
 \end{aligned}
\end{equation}
with $\phi$ the Cartan trivector of $\uu(n)$ and $1\leq i,k,l \leq n$.
To compute the right-hand sides, we make use of Lemma \ref{lem:Laci} and the obvious basis
$(E_{ab})_{1\leq a,b\leq n}$ of $\gl_n(\CC)$. Then, for $E_\alpha = E_{ab}$ we have $E^\alpha = - E_{ba}$.
Now the only non-zero values of $\langle E^\alpha, [E^\beta, E^\gamma]\rangle$ come from
\be
 \langle E_{aa}, [E_{cd}, E_{dc}] \rangle \quad \hbox{for}\quad c\neq d,\,\,a=c\,\,\hbox{or}\,\, a= d,
 \label{E2}\ee
 and from
 \be
 \langle E_{ab}, [E_{bc}, E_{ca}] \rangle \quad \hbox{for $a,b,c$ distinct},
 \label{E3}\ee
 and the permutations of the 3 basis vectors.
 For convenient bookkeeping, we below write
 \begin{equation}
  \begin{aligned}
 \phi_{\CC^n}(v_i,v_k,v_l) &=  \phi_{\CC^n}^{(2)}(v_i,v_k,v_l) +  \phi_{\CC^n}^{(3)}(v_i,v_k,v_l)\,, \\
  \phi_{\CC^n}(v_i,v_k,\bar v_l) &= \phi_{\CC^n}^{(2)}(v_i,v_k,\bar v_l) +  \phi_{\CC^n}^{(3)}(v_i,v_k,\bar v_l)\,,
  \end{aligned}
 \end{equation}
where $\phi_{\CC^n}^{(2)}$ contains the non-zero terms coming from \eqref{E2} and  $\phi_{\CC^n}^{(3)}$ contains
the contributions associated with the non-zero values in \eqref{E3}.
With these preparations in hand, we proceed to the inspection of the required identities.

\medskip

\noindent $\bullet$ \textit{First identity in \eqref{Eq:pf-qP1}.}
Since $\br{v_i,v_j}=0$ for any $1\leq i,j\leq n$, the left-hand side is zero.
Next, we compute using \eqref{E2} that
\begin{equation*}
 \begin{aligned}
  {  } {\phi}^{(2)}_{\CC^n}(v_i,v_k,v_l)=&
 \frac12 \delta_{ik}\delta_{(l\neq i)}v_i v_k v_l
 +\frac12 \delta_{kl}\delta_{(i\neq k)}v_i v_k v_l
 +\frac12 \delta_{il}\delta_{(k\neq l)}v_i v_k v_l \\
 &
 -\frac12 \delta_{ik}\delta_{(l\neq i)}v_i v_k v_l
 -\frac12 \delta_{kl}\delta_{(i\neq k)}v_i v_k v_l
 -\frac12 \delta_{il}\delta_{(k\neq l)}v_i v_k v_l\,=0\,.
 \end{aligned}
\end{equation*}
(Here and below, we use the notation $\delta_{(i\neq j)}:=1-\delta_{ij}$ for $1\leq i,j\leq n$.)
Similarly, we see that ${\phi}^{(3)}_{\CC^n}(v_i,v_k,v_l)=0$, and the right-hand side vanishes.

\noindent $\bullet$ \textit{Second identity in \eqref{Eq:pf-qP1}.}
To calculate the left-hand side, we first note that
\begin{equation} \label{Eq:pf-qP5}
  \br{v_i,|v|^2}=\sum_{1\leq l\leq n} v_l \br{v_i,\bar{v}_l}
 =\frac{\ic v_i}{2} \left[\att(|v^2|)- |v|^2 \btt(|v|^2) \right]\,.
\end{equation}
Therefore, we have
\begin{equation*}
 \begin{aligned}
   \br{v_i,\br{v_k,\bar{v}_l}}=&
\frac14 \att(|v|^2) \btt(|v|^2) v_k \delta_{il}
-\frac14 \btt(|v|^2)^2 v_i v_k \bar{v}_l \\
&-\frac14 \left[\att'(|v^2|) \delta_{kl} - \btt'(|v|^2) v_k \bar{v}_l \right] \, v_i\,
\left[\att(|v^2|) - |v|^2 \btt(|v|^2) \right] \,,
 \end{aligned}
\end{equation*}
from which we obtain
\begin{equation*}
 \begin{aligned}
& \br{v_i,\br{v_k,\bar{v}_l}} +\br{v_k,\br{\bar{v}_l,v_i}} + \br{\bar{v}_l,\br{v_i,v_k}} \\
 =& \frac14
\left(\att(|v|^2)\btt(|v|^2) + \att'(|v|^2) \big[ \att(|v^2|) - |v|^2 \btt(|v|^2)  \big] \right)
\,(v_k \delta_{il} - v_i \delta_{kl})\,.
 \end{aligned}
\end{equation*}
For the right-hand side, we inspect the terms coming from \eqref{E2} and \eqref{E3}, and find
\begin{equation*}
 \begin{aligned}
{\phi}^{(2)}_{\CC^n}(v_i,v_k,\bar{v}_l)=&
-\frac12 \delta_{il} \delta_{(i\neq k)}v_k (|v_k|^2+|v_i|^2)
+\frac12 \delta_{kl} \delta_{(i\neq k)}v_i (|v_i|^2+|v_k|^2)\,,\\
{\phi}^{(3)}_{\CC^n}(v_i,v_k,\bar{v}_l)=&
-\frac12 \delta_{il} \delta_{(i\neq k)}v_k (|v|^2-|v_k|^2-|v_i|^2)
+\frac12 \delta_{kl} \delta_{(i\neq k)}v_i (|v|^2-|v_i|^2-|v_k|^2)\,.
 \end{aligned}
\end{equation*}
Therefore,
\begin{equation}
 \frac12 \phi_{\CC^n}(v_i,v_k,\bar{v}_l)=
-\frac14 |v|^2 (\delta_{il} \delta_{(i\neq k)}v_k - \delta_{kl} \delta_{(i\neq k)}v_i)
=
-\frac14 |v|^2 (v_k  \delta_{il} - v_i \delta_{kl})\,.
\end{equation}
Then the two sides of the equality coincide if and only if \eqref{Eq:Ans1} holds.
\end{proof}

\begin{remark}
We are interested in real-analytic structure functions $\att$ and $\btt$, but
the condition given by \eqref{Eq:Ans1} is valid for $\Cinf$ structure functions as well.
\end{remark}

To study the existence of a moment map for the quasi-Poisson structures given in Proposition \ref{Pr:Ans}, we need the following result.
\begin{lemma} \label{Lem:AnsMomap}
 A smooth function $\Phi:\{v\in \CC^n \mid |v|^2 < r\}\to \UU(n)$ is a moment map for the action given by $g \cdot v:=gv $ if and only if
 \begin{equation} \label{Eq:Lem-Momap}
 \br{\Phi_{ij},v_{k}}=\frac12 \delta_{kj} (\Phi v)_{i} + \frac12 \Phi_{kj} v_{i}\,.
\end{equation}
\end{lemma}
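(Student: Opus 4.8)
The plan is to test the moment-map condition \eqref{Eq:momap} on the matrix-entry functions of $\UU(n)$ and show that it collapses to the single family of bracket relations \eqref{Eq:Lem-Momap}. Recall from Definition \ref{def:qHam} that a moment map is an equivariant map $\Phi$ satisfying \eqref{Eq:momap}; in the situations where we use this lemma equivariance (intertwining $g\cdot v=gv$ with $\mathrm{Ad}_g$) is manifest — for instance $\Phi(v)=\exp(\ic x vv^\dagger)$ satisfies $\Phi(gv)=g\Phi(v)g^{-1}$ — so the real content is that, for an equivariant $\Phi$, condition \eqref{Eq:momap} is equivalent to \eqref{Eq:Lem-Momap}. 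I would prove this equivalence by a direct contraction computation on generators, then upgrade to all functions by a derivation/density argument.

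First I would compute both sides of \eqref{Eq:momap} for $F=F_{ij}\in\Cinf(\UU(n),\CC)$, $F_{ij}(g)=g_{ij}$ (so $F_{ij}\circ\Phi=\Phi_{ij}$), evaluated on the target coordinate function $v_k$. From \eqref{EqinfLR} one gets $e_a^L(F_{ij})(g)=(g e_a)_{ij}$ and $e_a^R(F_{ij})(g)=(e_a g)_{ij}$, hence $\Phi^\ast\!\big((e_a^L+e_a^R)(F_{ij})\big)=(\Phi e_a)_{ij}+(e_a\Phi)_{ij}$; and from \eqref{EqinfVectM} with $g\cdot v=gv$ one gets $(e_a)_{\CC^n}(v_k)=-(e_a v)_k$. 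So \eqref{Eq:momap} reads
\begin{equation*}
 \br{\Phi_{ij},v_k}=-\tfrac12\sum_{a\in\mathtt{A}}\big[(\Phi e_a)_{ij}+(e_a\Phi)_{ij}\big]\,(e_a v)_k .
\end{equation*}
The sum over $a$ is then performed with \eqref{Eq:shortEE}, which gives the two contractions $\sum_{a\in\mathtt{A}} (e_a)_{pj}(e_a)_{kq}=-\delta_{pq}\delta_{jk}$ and $\sum_{a\in\mathtt{A}} (e_a)_{ip}(e_a)_{kq}=-\delta_{pk}\delta_{iq}$. The first one applies to the $(\Phi e_a)_{ij}(e_a v)_k$ term and produces $-\delta_{jk}(\Phi v)_i$; the second applies to the $(e_a\Phi)_{ij}(e_a v)_k$ term and produces $-\Phi_{kj}v_i$. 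Combining, the displayed identity becomes exactly $\br{\Phi_{ij},v_k}=\tfrac12\delta_{kj}(\Phi v)_i+\tfrac12\Phi_{kj}v_i$, i.e. \eqref{Eq:Lem-Momap}; this settles the "only if" direction, and the very same chain of equalities read backwards shows \eqref{Eq:Lem-Momap}$\Rightarrow$\eqref{Eq:momap} for $F=F_{ij}$ and target $v_k$.

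It then remains to bootstrap from these special cases to the full statement of \eqref{Eq:momap}. For the target slot, reality of the bracket together with the unitarity relation $\bar\Phi_{ij}=(\Phi^{-1})_{ji}$, and the Leibniz rule applied to $\Phi\Phi^{-1}=\1_n$ (which expresses $\br{(\Phi^{-1})_{pq},v_k}$ through the brackets $\br{\Phi_{ij},v_k}$), also recover the predicted value of $\br{\Phi_{ij},\bar v_k}$; since $v_k,\bar v_k$ are global coordinates on the ball, both sides of \eqref{Eq:momap} (for fixed $F$) are then equal as vector fields. For the $F$ slot, both sides of \eqref{Eq:momap} are derivations of $\Cinf(\UU(n))$ into $\Vect$ of the ball — the left side because $\br{(FG)\circ\Phi,-}=(F\circ\Phi)\br{G\circ\Phi,-}+(G\circ\Phi)\br{F\circ\Phi,-}$, the right side because $e_a^L,e_a^R$ are derivations — so agreement on the generators $F_{ij}$ and constants propagates to the subalgebra of polynomials in the $\Phi_{ij},\bar\Phi_{ij}$, and then to all of $\Cinf(\UU(n))$ by continuity, this subalgebra being dense (Peter–Weyl). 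The computation itself is routine bookkeeping; the only genuinely delicate points are getting the two contractions of $\sum_{a}e_a\otimes e_a$ right in the two different matrix positions, and the density step in the $F$-variable — which is exactly why it is convenient to take the matrix-entry functions (rather than, say, class functions) as the generating set.
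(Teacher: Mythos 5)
Your proposal is correct and follows essentially the same route as the paper's proof: evaluate \eqref{Eq:momap} on the matrix-entry functions $F=g_{ij}$ and the coordinate $v_k$, use $\xi^L(g_{ij})=(g\xi)_{ij}$, $\xi^R(g_{ij})=(\xi g)_{ij}$, $\xi_{\CC^n}(v_k)=-(\xi v)_k$, and perform the contraction via \eqref{Eq:shortEE}, with the $\bar v_k$ case handled by reality and unitarity. Your bootstrap in the $F$-slot (derivation property plus density of matrix coefficients) only makes explicit what the paper leaves implicit, and your two contraction identities and the resulting signs are all correct.
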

\begin{proof}
The moment map property given in \eqref{Eq:momap} is satisfied if and only if \eqref{Eq:momap} holds
 with $F=g_{ij}$ (where $g_{ij}\in \Cinf(\UU(n),\CC)$ is the coordinate function $g_{ij}(\eta)=\eta_{ij}$
 for $\eta\in \UU(n)$) when applied to the coordinate functions $v_k$ and $\bar{v}_k$.
Since we consider a real quasi-Poisson bracket, \eqref{Eq:Lem-Momap} is equivalent to
 \begin{equation*}
 \br{\Phi_{ij},\bar{v}_{k}}=-\frac12 (v^\dagger \Phi)_{j} \delta_{ik} - \frac12 \bar{v}_{j} \Phi_{ik}\,,
\end{equation*}
so it suffices to check that \eqref{Eq:momap} with $F=g_{ij}$  evaluated on $v_k$ verifies the first identity
 in \eqref{Eq:Lem-Momap}.

The action induces for any $\xi\in \uu(n)$ that $\xi_{\CC^n}(v_{k})=-(\xi v)_{k}$. Moreover, on the coordinate function $g_{ij}$ of $\UU(n)$, the left- and right-invariant vector fields give $\xi^L(g_{ij})=(g \xi)_{ij}$, $\xi^R(g_{ij})=(\xi g)_{ij}$. For any $1\leq i,j,k\leq n$, the condition \eqref{Eq:momap} with $F=g_{ij}$ applied to $v_{k}$ yields
\begin{equation*}
 \br{\Phi_{ij},v_{k}}=- \frac12 \sum_{a\in \mathtt{A}} (\Phi  e_a + e_a \Phi)_{ij} \,(e_a v)_{k}\,,
\end{equation*}
for any orthonormal basis $(e_a)_{a\in \mathtt{A}}$ of $\uu(n)$.
Using \eqref{Eq:shortEE}, we recover  \eqref{Eq:Lem-Momap}.
\end{proof}

\noindent Now we search for a Hamiltonian quasi-Poisson structure on $\CC^n$ with a prescribed moment map.

\begin{proposition} \label{Pr:Ans2}
On $\{v\in \CC^n \mid |v|^2 < r\}$,
consider a quasi-Poisson bracket of the form \eqref{Eq:ansatz} with two functions $\att(t),\btt(t)$ which are real-analytic on $\{t\in \R \mid |t|<r\}$.
The smooth function $\Phi:\{v\in \CC^n \mid |v|^2 < r\}\to \UU(n)$ given by
$\Phi(v)=\exp(\ic vv^\dagger)$  defines a moment map if and only if we have
\begin{equation} \label{Eq:abc-rel}
 \ctt(t) \att(t)= 2 + \ic t \ctt(t)\,, \qquad
\ctt'(t) (\att(t)-t \btt(t))-\ctt(t) \btt(t)=\ic \ctt(t)\,,
\end{equation}
where $\ctt(t)$ is the complex-valued analytic function $\ctt:\R\to\CC$, $\ctt(t):=\frac{e^{\ic t}-1}{\ic t}$.
This leads uniquely to the functions
 \begin{equation} \label{Eq:ab-fct-BIS}
\att(t)=t\,\cot\left(\frac{t}{2}\right)\,, \quad \btt(t)= \cot\left(\frac{t}{2}\right)-\frac{2}{t}\,,
\end{equation}
which are real-analytic for $\vert t\vert < 2\pi$ and satisfy the relations \eqref{Eq:Ans1}. Then the ansatz \eqref{Eq:ansatz} reproduces the quasi-Poisson structure of
Proposition \ref{Pr:qP-Cn}  with $x=1$, that results by exponentiation of the standard
Poisson structure on $\CC^n$.
\end{proposition}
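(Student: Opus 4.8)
The plan is to prove Proposition \ref{Pr:Ans2} in three stages: first, translate the moment map condition into differential equations for $\att,\btt$; second, solve those equations; third, check consistency with \eqref{Eq:Ans1} and with Proposition \ref{Pr:qP-Cn}. By Lemma \ref{Lem:AnsMomap}, $\Phi(v)=\exp(\ic vv^\dagger)$ is a moment map if and only if the identity \eqref{Eq:Lem-Momap}, namely $\br{\Phi_{ij},v_k}=\frac12\delta_{kj}(\Phi v)_i+\frac12\Phi_{kj}v_i$, holds for the ansatz bracket \eqref{Eq:ansatz}. So the first step is to compute the left-hand side $\br{\Phi_{ij},v_k}$ directly from \eqref{Eq:ansatz}. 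Since $\Phi(v)=\exp(\ic vv^\dagger)$ and the rank-one matrix $vv^\dagger$ has eigenvalue $|v|^2$ on the line $\CC v$ and $0$ on its orthogonal complement, one has the closed form $\Phi(v)=\1_n+\ctt(|v|^2)\,\ic vv^\dagger$ with $\ctt(t)=(e^{\ic t}-1)/(\ic t)$; this makes $\Phi_{ij}=\delta_{ij}+\ic\ctt(|v|^2)v_i\bar v_j$ an explicit function of the coordinates, so its bracket with $v_k$ can be expanded using $\br{v_i,v_k}=0$, $\br{v_i,\bar v_k}$ from \eqref{Eq:ansatz}, and the chain rule together with \eqref{Eq:pf-qP5} for $\br{v_i,|v|^2}$ (which holds on any quasi-Poisson bracket of the form \eqref{Eq:ansatz}, independently of \eqref{Eq:Ans1}).

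Carrying out that expansion, the left-hand side of \eqref{Eq:Lem-Momap} becomes a linear combination of the two tensor structures $\delta_{kj}(\Phi v)_i\sim \delta_{kj}v_i$ and $\Phi_{kj}v_i\sim v_k\bar v_j v_i$, with coefficients built from $\att$, $\btt$, $\ctt$ and $\ctt'$ evaluated at $t=|v|^2$. Matching these coefficients against the right-hand side $\frac12\delta_{kj}(\Phi v)_i+\frac12\Phi_{kj}v_i$ produces exactly the two scalar relations \eqref{Eq:abc-rel}: the $\delta_{kj}v_i$-coefficient gives $\ctt(t)\att(t)=2+\ic t\,\ctt(t)$, and the $v_k\bar v_j v_i$-coefficient gives $\ctt'(t)(\att(t)-t\btt(t))-\ctt(t)\btt(t)=\ic\ctt(t)$. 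The first relation solves immediately for $\att$: since $\ctt(t)=(e^{\ic t}-1)/(\ic t)$, one gets $\att(t)=\big(2+\ic t\,\ctt(t)\big)/\ctt(t)=\ic t + 2\ic t/(e^{\ic t}-1)$, and the standard identity $\coth(z)=1+2/(e^{2z}-1)$ with $z=\ic t/2$ rewrites this as $\att(t)=t\cot(t/2)$. Substituting this $\att$ into the second relation leaves a first-order linear ODE for $\btt$ whose unique real-analytic solution vanishing appropriately at the origin is $\btt(t)=\cot(t/2)-2/t$; equivalently, one can guess this $\btt$ from \eqref{Eq:ab-fct} and verify the second relation in \eqref{Eq:abc-rel} by direct substitution, which is the cleaner route. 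Real-analyticity on $|t|<2\pi$ is clear since $\cot(t/2)-2/t$ has a removable singularity at $0$ and poles only at $t\in2\pi\Z^\times$.

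Finally, I would verify the two remaining consistency claims. One: the pair \eqref{Eq:ab-fct-BIS} satisfies the Jacobi-type constraint \eqref{Eq:Ans1}, $\att\btt+\att'(\att-t\btt)=-t$; this is a one-line check after noting $\att-t\btt=2$, so \eqref{Eq:Ans1} reduces to $\att\btt+2\att'=-t$, and with $\att=t\cot(t/2)$, $\btt=\cot(t/2)-2/t$ one computes $\att\btt=t\cot^2(t/2)-2\cot(t/2)$ and $\att'=\cot(t/2)-\tfrac{t}{2}\csc^2(t/2)=\cot(t/2)-\tfrac{t}{2}(1+\cot^2(t/2))$, and the terms telescope to $-t$ using $\csc^2=1+\cot^2$. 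Two: with these $\att,\btt$ the ansatz bracket \eqref{Eq:ansatz} coincides with \eqref{Eq:qPB-C2} at $x=1$; indeed $\br{v_i,\bar v_k}=\tfrac{\ic}{2}\att(|v|^2)\delta_{ik}-\tfrac{\ic}{2}\btt(|v|^2)v_i\bar v_k = \tfrac{\ic}{2}\big(\btt(|v|^2)|v|^2+2\big)\delta_{ik}-\tfrac{\ic}{2}\btt(|v|^2)v_i\bar v_k = \ic\delta_{ik}+\tfrac{\ic}{2}\btt(|v|^2)\big[|v|^2\delta_{ik}-v_i\bar v_k\big]$, using $\att=t\btt+2$ again; this is \eqref{Eq:qPB-C2} with $x=1$, which by Proposition \ref{Pr:qP-Cn} is the exponentiation of the standard Poisson structure. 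The main obstacle is purely computational: keeping track of the $|v|^2$-derivatives of $\Phi_{ij}$ correctly when expanding $\br{\Phi_{ij},v_k}$ — in particular separating the $\ctt'$ contribution (from differentiating $\ctt(|v|^2)$) from the $\btt$ contribution (from the non-canonical part of the bracket), since both feed into the $v_k\bar v_j v_i$ channel and a sign slip there would corrupt the second relation in \eqref{Eq:abc-rel}. There is no conceptual difficulty; the result is forced once the bookkeeping is done.
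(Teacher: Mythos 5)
Your proposal follows essentially the same route as the paper: apply Lemma \ref{Lem:AnsMomap}, write $\Phi(v)=\1_n+\ic\,\ctt(|v|^2)\,vv^\dagger$, expand $\br{\Phi_{ij},v_k}$ with the ansatz bracket using $\br{v_i,|v|^2}$, and match the coefficients of $\delta_{kj}v_i$ and $v_i\bar v_j v_k$ to obtain \eqref{Eq:abc-rel}; your explicit verifications of \eqref{Eq:Ans1} and of the agreement with Proposition \ref{Pr:qP-Cn} at $x=1$ correctly fill in what the paper leaves as ``readily checked.'' One cosmetic slip: the second relation in \eqref{Eq:abc-rel} contains no $\btt'$, so once $\att$ is fixed it determines $\btt$ \emph{algebraically} (the coefficient $t\ctt'(t)+\ctt(t)=e^{\ic t}$ never vanishes), not via a first-order ODE --- but since you ultimately obtain $\btt$ by direct substitution this does not affect the argument.
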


\begin{proof}
By Lemma \ref{Lem:AnsMomap}, we have to show that the equality in \eqref{Eq:Lem-Momap} holds if and only if \eqref{Eq:abc-rel} is satisfied. Noting that $\Phi=\exp(\ic vv^\dagger)=\1_n+\ic \ctt(|v|^2) vv^\dagger$, we can compute using \eqref{Eq:ansatz} that
\begin{equation}
 \begin{aligned} \label{Eq:pf-Ans1}
&\br{\Phi_{ij},v_{k}}=\ic \,\br{\ctt(|v|^2) v_i \bar{v}_j,v_{k}} \\
 =& \frac12 \ctt(|v|^2) \att(|v|^2)\, v_i \delta_{kj}
 +\frac{1}{2} \left[ \ctt'(|v|^2)(\att( |v|^2) -|v|^2 \btt(|v|^2)  )   - \ctt(|v|^2)\btt(|v|^2) \right] v_i \bar{v}_j v_k\,.
 \end{aligned}
\end{equation}
Plugging the expression of $\Phi$ in terms of $\ctt(|v|^2)$ into the moment map equation \eqref{Eq:Lem-Momap} gives
\begin{equation}  \label{Eq:pf-Ans2}
\br{\Phi_{ij},v_{k}}=\frac12 \left[2+ \ic |v|^2 \ctt(|v|^2)\right]\, v_i \delta_{kj}
 +\frac{\ic}{2} \ctt(|v|^2) v_i \bar{v}_j v_k\,.
\end{equation}
Since \eqref{Eq:pf-Ans1} and \eqref{Eq:pf-Ans2} coincide if and only if  \eqref{Eq:abc-rel} holds,
we have derived the relations  \eqref{Eq:abc-rel}.
The rest of the statements of the proposition is then readily checked.
\end{proof}

\begin{remark} \label{Rem:App-U1}
The brackets of Proposition \ref{Pr:Ans} can be viewed also as quasi-Poisson brackets
associated with the group $\UU(n)\times \UU(1)$ acting on $\CC^n$ according to
 $(g,\lambda)\cdot v:=gv \lambda^{-1}$. This holds  because the Cartan trivector of $\UU(1)$ vanishes.
With the functions \eqref{Eq:ab-fct-BIS}, it turns out that $(\Phi,\widehat{\Phi}):
\disk(1) \to \UU(n)\times \UU(1)$ given by $\Phi(v) = \exp(\ic v v^\dagger)$ and
\begin{equation}
\widehat{\Phi}(v) = (\det \Phi(v))^{-1} = \exp(-\ic |v|^2)
\end{equation}
 is a moment map for this action.
Indeed, one finds that the $\UU(1)$ moment map condition is
$\{ \widehat{\Phi}, v_k\} = - v_k \widehat{\Phi}$.
With the ansatz \eqref{Eq:ansatz}, this requires the relation
$\att(t)-t\btt(t)=2$,
which is satisfied by the functions in \eqref{Eq:ab-fct-BIS}.
\end{remark}

\begin{remark}
Fix an arbitrary polynomial function $\att:\R\to \R$ with $\att(0)\neq 0$.
Set $\btt(t)=-(t+\att'(t)\att(t))/(\att(t)-t\att'(t))$, which is real-analytic in an open neighborhood $U\subset \R$ of $0$.
The pair $(\att(t),\btt(t))$ defines a $\UU(n)$ quasi-Poisson bracket through \eqref{Eq:ansatz} on $\{v\in \CC^n \mid |v|^2 < r\}$ whenever $(-r,r)\subset U$ because \eqref{Eq:Ans1} is satisfied.
Taking any constant $a\in \R\setminus\{0\}$, the easiest cases are
\begin{equation*}
 \begin{aligned}
\att(t)=&a\,, \quad \btt(t)=-\frac{t}{a}\,, \quad U=\R\,, \\
\att(t)=&1+at\,, \quad \btt(t)=-a-(1+a^2)t\,, \quad U=\R\,, \\
\att(t)=&1+t+at^2\,, \quad \btt(t)=-\frac{1+2(1+a)t+3at^2+2a^2t^3}{1-at^2}\,,\quad
U=\left\{
\begin{array}{ll}
\R&a<0, \\
(-a^{-1/2},a^{-1/2})&a>0.
\end{array}
 \right.
 \end{aligned}
\end{equation*}
A moment map is not known for these examples, and it might not exist.
\end{remark}

\begin{remark}[Relation to Van den Bergh's example] \label{Rem:VdB}
We began this appendix by highlighting a holomorphic  Hamiltonian quasi-Poisson structure
 due to Van den Bergh \cite{VdB1} on
$$\{(V,W)\mid 1+WV\neq 0\} \subset T^\ast \CC^n = \{(V,W)\mid V\in \CC^n,\,\, W\in \Mat(1\times n,\CC)\}$$
for the action of
$\Gl_n(\CC)\times \CC^\times$ by $(g,\lambda)\cdot(V,W)=(gV \lambda^{-1},\lambda W g^{-1})$.
For any $\chi\in \CC^\times$ and for any sufficiently small invariant open subset $M_\reg\subset \{(V,W)\mid WV\neq 0\}$, the smooth map
\begin{equation}
  \psi:M_\reg \to T^\ast \CC^n\,, \quad
  (\tilde{V},\tilde{W}):=\psi(V,W)=\left( \frac{\ln(1+WV)}{\chi \, WV}\, V, \,W\right)\,,
 \end{equation}
is a well-defined biholomorphism between $M_\reg$ and its image, denoted $\widetilde M_\reg$, with inverse
 \begin{equation} \label{Eq:psiInv}
  \psi^{-1}:\widetilde M_\reg \to M_\reg\,, \quad
  (V,W):=\psi(\tilde V,\tilde W)=\left( \frac{\exp(\chi \tilde{W}\tilde{V})-1}{\tilde{W}\tilde{V}}\, \tilde V, \,\tilde W\right)\,.
 \end{equation}
We see that $(\1_n+VW)\circ \psi^{-1} =\exp(\chi \tilde{V}\tilde{W})$.
In fact, a standard calculation allows us to check that $\psi:M_\reg \to \widetilde{M}_\reg$ defines an isomorphism of complex Hamiltonian quasi-Poisson manifolds if we endow $\widetilde{M}_\reg$ with the quasi-Poisson bracket
 \begin{equation} \label{qPB-tild-VdB}
 \begin{aligned}
    \br{\tilde V_{i},\tilde V_{k}}&=0\,,\quad \br{\tilde W_{i},\tilde W_{k}}=0\,, \\
 \br{\tilde V_{i},\tilde W_{k}}&=
 \frac12 \tilde{W}\tilde{V}\, \coth\left(\frac{\chi}{2}\tilde{W}\tilde{V}\right) \delta_{ik}
 -\frac12 \left[\coth\left(\frac{\chi}{2}\tilde{W}\tilde{V}\right) - \frac{2}{\chi\tilde{W}\tilde{V}}\right] \tilde{V}_i\tilde{W}_k\,,
 \end{aligned}
 \end{equation}
and moment map $\tilde{\Phi}(\tilde V,\tilde W)= (\exp(\chi \tilde{V}\tilde{W}), \exp(-\chi \tilde{W}\tilde{V}))$.
Setting $\chi=\ic$ in \eqref{qPB-tild-VdB} shows that the structure on $\widetilde{M}_\reg$ can be regarded as
a complexification of the structure of Proposition \ref{Pr:Ans2}.
\end{remark}

\section{Auxiliary technical results}
\label{sec:J}

We here explain some results on invariant functions
that are used in the main text, and also prove the submanifold property
of $\fC_{**}\subset \cM_{d**}$ introduced in equation \eqref{R46}.

We begin by noting  that  for $h\in \Cinf(\UU(n))^{\UU(n)}$  the definition \eqref{B4} implies
the relation
\be
\nabla h(\eta g \eta^{-1}) = \eta \nabla h(g) \eta^{-1},
\qquad \forall \eta, g\in \UU(n).
\label{J1}\ee
It follows that $\nabla h(g)$ \emph{belongs to the center of the Lie algebra of the isotropy subgroup} (stabilizer) of $g$
with respect to the conjugation action of $\UU(n)$ on itself.
If $g\in \bT(n)_\reg$, then the stabilizer is just $\bT(n)$, and $\nabla h(g)$ belongs to its Lie algebra $\fT(n)$.
(For $g\in \bT(n) \setminus \bT(n)_\reg$, the center of the isotropy Lie algebra is a proper subalgebra of $\fT(n)$.)
By taking these remarks into account, for the $\uu(n)$-valued derivative $\nabla h $ is just
a device for encoding the exterior derivative $dh$,
we see that the main statement of Lemma \ref{Lm:LR4} is equivalent to the following result.

\begin{lemma}\label{Lm:LR4+}
The derivatives $\nabla h$ of the elements of $\Cinf(\UU(n))^{\UU(n)}$ span $\fT(n)$
at every regular element $Q\in \bT(n)_\reg$.
In other words, fixing any $Q\in \bT(n)_\reg$, there exist $h_1,\dots, h_n$ in $\Cinf(\UU(n))^{\UU(n)}$ such that
$\nabla h_1(Q),\dots, \nabla h_n(Q)$ are  linearly independent.
\end{lemma}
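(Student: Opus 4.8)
The plan is to exhibit explicit class functions whose derivatives at $Q\in\bT(n)_{\reg}$ span $\fT(n)$, using power-sum traces of $Q$. Concretely, I would take the functions
\begin{equation*}
 p_\ell(g):=\Re\tr(g^\ell),\qquad \ell=1,\dots,n,
\end{equation*}
which are evidently $\UU(n)$-invariant. First I would compute $\nabla p_\ell$ from the definition \eqref{B4}: differentiating $p_\ell(e^{t\xi}g)$ at $t=0$ gives $\langle\xi,\nabla p_\ell(g)\rangle=\Re\bigl(\ell\tr(\xi g^\ell)\bigr)$, and then a short manipulation using the pairing \eqref{Eq:ipU} on $\uu(n)$ identifies
\begin{equation*}
 \nabla p_\ell(g)=\tfrac{\ell}{2}\bigl(g^\ell-g^{-\ell}\bigr)
\end{equation*}
(the skew-Hermitian projection of $\ell g^\ell$). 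For $g=Q=\diag(e^{\ic q_1},\dots,e^{\ic q_n})\in\bT(n)_{\reg}$ this is the diagonal (hence $\fT(n)$-valued, consistent with Lemma \ref{Lm:LR4+}) matrix with entries $\ic\ell\sin(\ell q_j)$.

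Next I would reduce linear independence of $\nabla p_1(Q),\dots,\nabla p_n(Q)$ in $\fT(n)$ to a Vandermonde-type nonvanishing statement. Identifying $\fT(n)\simeq\R^n$ via the diagonal entries (up to the factor $\ic$), the vectors $\nabla p_\ell(Q)$ correspond to the rows $\bigl(\ell\sin(\ell q_1),\dots,\ell\sin(\ell q_n)\bigr)$, $\ell=1,\dots,n$. Pulling out the harmless scalar $\ell$ from row $\ell$, I must show $\det\bigl(\sin(\ell q_j)\bigr)_{\ell,j=1}^{n}\neq 0$. Writing $\sin(\ell q_j)=\frac{1}{2\ic}(z_j^\ell-z_j^{-\ell})$ with $z_j:=e^{\ic q_j}$, this determinant factors (after elementary column operations) into a product of a nonzero monomial, a Vandermonde-type factor $\prod_{j<k}(z_j-z_k)$, and a factor $\prod_{j<k}(1-z_j^{-1}z_k^{-1})=\prod_{j<k}(z_jz_k-1)/(z_jz_k)$, together with $\prod_j(z_j-z_j^{-1})$; each factor is nonzero precisely because the $q_j$ are distinct mod $2\pi$ \emph{and} none of the relations $q_j+q_k\in 2\pi\Z$ or $2q_j\in 2\pi\Z$ holds — which may fail for some special regular $Q$. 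So, rather than claim this one family always works, I would instead argue more robustly: the map $\UU(n)^{\UU(n)}\ni h\mapsto \nabla h(Q)\in\fT(n)$ is linear, and its image is a $\bT(n)$-adjoint-invariant (trivially, since $\fT(n)$ is abelian) subspace; by the Chevalley restriction theorem the class functions restrict to \emph{all} $S_n$-invariant smooth functions of $(q_1,\dots,q_n)$, so it suffices to find, for the given $Q$, symmetric functions $f_k(q)$ with $df_1(q),\dots,df_n(q)$ independent at the point $q=(q_1,\dots,q_n)$ with distinct coordinates. The power sums $\sum_j q_j^m$ do the job by the classical Jacobian computation $\det(\partial(\text{power sums})/\partial q)\sim\prod_{j<k}(q_j-q_k)\neq 0$; lifting these to genuine smooth class functions on the compact group $\UU(n)$ (they are only locally defined in terms of the $q_j$) is done by composing with a smooth $\Z^n$-periodic, $S_n$-equivariant bump-modified version near $Q$, or equivalently by invoking Lemma \ref{Lm:LR4} in its stated form.

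The main obstacle is precisely this last lifting/globalization point: coordinates $q_j$ are not globally defined on $\UU(n)$, so one must either (i) restrict to a small invariant neighbourhood of $Q$ where a smooth branch of the eigenvalue map exists and patch with an invariant cutoff, or (ii) present a clean global family such as the $p_\ell$ above and handle the exceptional regular elements (those with $q_j+q_k\in2\pi\Z$ for some $j\neq k$) by supplementing with $\Im\tr(g^\ell)$ and choosing $n$ independent ones among the $2n$ candidates $\{\Re\tr(g^\ell),\Im\tr(g^\ell)\}_{\ell=1}^{n}$; the $2n\times n$ coefficient matrix has rank $n$ at every regular $Q$ since $\{e^{\ic\ell q_j}\}_{\ell=1}^n$ already separates the distinct $q_j$ by Vandermonde. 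I expect route (ii) to be the shortest to write cleanly, so the final proof would: state $\nabla p_\ell^{\Re}(Q)$ and $\nabla p_\ell^{\Im}(Q)$ explicitly, assemble them into a single matrix, recognize its rank via a Vandermonde argument in the $z_j=e^{\ic q_j}$, and conclude that $n$ of these class functions have independent derivatives at $Q$, which is exactly the assertion of Lemma \ref{Lm:LR4+}.
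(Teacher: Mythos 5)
Your route (ii) is essentially the paper's own proof of this lemma: the paper takes the $2n$ class functions $\frac{1}{2k}\Re\tr(g^k)$ and $\frac{1}{2k}\Im\tr(g^k)$, computes their $\fT(n)$-valued derivatives at $Q$, and shows that some $n$ of them are independent by observing that otherwise the complex Vandermonde determinant $\det(Q_l^k)=\prod_l Q_l\prod_{j<k}(Q_k-Q_j)$ would vanish, contradicting $Q\in\bT(n)_\reg$. Your proposal is correct, and your observation that the real parts alone can fail at special regular elements (so the imaginary parts must be included among the candidates) matches exactly why the paper works with the full set of $2n$ functions.
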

\begin{proof}
Consider the following $2n$ elements of $\Cinf(\UU(n))^{\UU(n)}$:
\be
h_k^\mathrm{r}(g):= \frac{1}{2k} \Re \tr(g^k), \qquad h_k^\ri(g):= \frac{1}{2k} \Im \tr(g^k),\qquad k=1,\dots, n.
\label{J2}\ee
Their derivatives are
\be
\nabla h_k^\mathrm{r}(g)= \frac{1}{2}(g^{-k} - g^k), \qquad \nabla h_k^\ri(g)= \frac{\ic}{2} (g^k + g^{-k}),\qquad k=1,\dots, n.
\label{J3}\ee
Now, inspect the derivatives at a fixed $Q=\diag(Q_1,\dots, Q_n)\in \bT(n)_\reg$.
 With respect to the standard basis $\ic E_{ll}$ ($l=1,\dots, n)$ of $\fT(n)$, the derivatives \eqref{J3} taken at $Q$ have the components
  \be
(\nabla h_k^\mathrm{r}(Q))_l = \frac{\ic}{2}(Q_l^{k} - Q_l^{-k}), \qquad (\nabla h_k^\ri(Q))_l= \frac{1}{2} (Q_l^k + Q_l^{-k}),\qquad k=1,\dots, n,
\label{J4}\ee
which represent them as elements of $\R^n$. An $n$-element subset of these $2n$ vectors in $\R^n$ is linearly dependent if and only if the corresponding
$n$ by $n$ determinant is zero.  If all such real determinants were zero, then also the determinant  of the matrix formed by
the complex vectors $V_k(Q)\in \CC^n$,  $(V_k(Q))_l:= Q_l^k$ were zero. (This holds since $V_k(Q)\in \CC^n$ is a complex linear combination
of two real vectors in \eqref{J4}.)
But the determinant formed out of the vectors $V_k(Q)$ is equal to
$\prod_{l=1}^n Q_l \prod_{1\leq j < k \leq n} (Q_k-Q_j)$,
which is non-zero if $Q\in \bT(n)_\reg$. This shows that at any fixed $Q\in \bT(n)_\reg$ there exist at least one $n$-element subset of the
$2n$ functions \eqref{J2}  such that their derivatives span $\fT(n)$.
\end{proof}

To complete the explanation of Lemma  \ref{Lm:LR4},  let us elucidate why every $h\in \Cinf(\UU(n))^{\UU(n)}$ can be expressed in terms of
the $n$ class functions  $h_1,\dots, h_n$
in a neighbourhood of $g\in \UU(n)_\reg$, where $dh_1(g),\dots, dh_n(g)$ are independent.
To see this, notice that $h_1,\dots, h_n$ can be enlarged \cite{Wa}  to a local coordinate system that is valid
on an open set containing $g$.  Then, $h$ can be expressed in terms of $h_1, \dots, h_n$ and the $(n^2 - n)$ additional coordinates.
 However, it cannot depend on the additional coordinates since we know that $dh$
can be expanded in $d h_1,\dots, dh_n$ alone (because the span of the differentials of the invariant functions
is at most $n$-dimensional at any $\eta \in \UU(n)$).

The following statement is also worth noting.

\begin{lemma}\label{Lm:B1+1}
Any $n$ of the $2n$ functions in \eqref{J2} are functionally independent on  a dense open subset of $\UU(n)$.
\end{lemma}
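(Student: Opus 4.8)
\textbf{Proof proposal for Lemma \ref{Lm:B1+1}.}
The plan is to fix any $n$-element subset $S$ of the $2n$ class functions in \eqref{J2} and show that the wedge of their differentials is nonzero on a dense open subset of $\UU(n)$. Since all these functions are conjugation-invariant and conjugation-invariant functions are determined by their restriction to the maximal torus $\bT(n)$, and since $\UU(n)_{\reg}$ is dense open and every regular element is conjugate to an element of $\bT(n)_{\reg}$, it suffices to check functional independence after restriction to $\bT(n)_{\reg}$. Concretely, I would parametrize $Q = \diag(e^{\ic\theta_1},\dots,e^{\ic\theta_n}) \in \bT(n)$ and regard each function in \eqref{J2} as a function of $(\theta_1,\dots,\theta_n)$; functional independence of the $n$ chosen functions on a dense open subset of $\bT(n)$ then pulls back, by $\UU(n)$-invariance and the submersion property of the restriction-to-torus/diagonalization maps, to functional independence on a dense open subset of $\UU(n)$.

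The key computation is then the Jacobian of the chosen $n$ functions with respect to $(\theta_1,\dots,\theta_n)$. From \eqref{J4}, the gradient of $h_k^{\mathrm r}$ at $Q$ has $l$-th component (in the $\ic E_{ll}$ basis of $\fT(n)$, which corresponds to $\partial/\partial\theta_l$ up to the identification) proportional to $Q_l^{k}-Q_l^{-k}$, and that of $h_k^{\ri}$ proportional to $Q_l^{k}+Q_l^{-k}$, with $Q_l = e^{\ic\theta_l}$. Thus the $n\times n$ Jacobian matrix of any subset $S$ has rows that are $\R$- (indeed $\C$-) linear combinations of the two vectors $(Q_l^{k})_{l}$ and $(Q_l^{-k})_{l}$ for the relevant values of $k$. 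If the determinant of this Jacobian vanished identically on $\bT(n)_{\reg}$, then — arguing exactly as in the proof of Lemma \ref{Lm:LR4+}, by taking suitable $\C$-linear combinations of the real rows to replace each pair of rows indexed by $k$ by the single monomial row $(Q_l^{k})_l$ — one would force a determinant built out of rows of the form $(Q_l^{k})_l$ (for $n$ distinct integers $k$, possibly negative) to vanish identically. Such a determinant is, up to an overall monomial factor $\prod_l Q_l^{m}$, a Vandermonde-type determinant $\prod_{j<k}(Q_k-Q_j)$ (after clearing the common negative powers), which is not identically zero on $\bT(n)_{\reg}$. This contradiction shows the Jacobian determinant of $S$ is a nonzero real-analytic function on $\bT(n)_{\reg}$, hence nonvanishing on a dense open subset, giving the functional independence claimed.

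The main obstacle is the bookkeeping in the Vandermonde reduction: one must check carefully that replacing each pair of rows $\{(Q_l^k\pm Q_l^{-k})_l\}$ by the single row $(Q_l^k)_l$ via $\C$-linear row operations is legitimate for \emph{every} admissible subset $S$ of size $n$ — including subsets that mix $h_k^{\mathrm r}$ and $h_{k'}^{\ri}$ for the same or different $k$ — and that after clearing negative powers one genuinely lands on a Vandermonde determinant in the $Q_l$ with \emph{distinct} exponents (the exponents $k$ for the chosen functions are distinct positive integers in $\{1,\dots,n\}$, and the shift by clearing $Q_l^{-k}$-type terms preserves distinctness because one clears a single common power $Q_l^{-n}$ or similar). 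A clean way to handle all subsets uniformly is to observe that the $\C$-span of the $2n$ gradient vectors at a fixed $Q\in\bT(n)_{\reg}$ equals the $\C$-span of $\{(Q_l^k)_l : k=1,\dots,n\}\cup\{(Q_l^{-k})_l : k=1,\dots,n\}$, which is all of $\C^n$ (it already contains $n$ linearly independent Vandermonde vectors $(Q_l^k)_{l}$, $k=1,\dots,n$); then a dimension/transversality argument combined with real-analyticity shows that a generic $n$-element subset — in fact every $n$-element subset, checked by the explicit determinant above — remains independent on a dense open set. Once this linear-algebra core is settled, the descent from $\bT(n)_{\reg}$ to $\UU(n)$ is routine using the density of $\UU(n)_{\reg}$ and $\UU(n)$-invariance.
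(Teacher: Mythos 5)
Your overall skeleton --- restrict to the maximal torus using conjugation-invariance (the relation \eqref{J1}), show the relevant $n\times n$ Jacobian determinant is a not-identically-zero real-analytic function on $\bT(n)$, and conclude by density of the complement of its zero set --- is exactly the paper's. The gap is in the determinant computation. Your step ``replace each pair of rows indexed by $k$ by the single monomial row $(Q_l^{k})_l$'' only makes sense when the chosen subset $S$ contains \emph{both} $h_k^{\mathrm{r}}$ and $h_k^{\ri}$ for each relevant $k$: row operations inside a determinant can only combine rows that are actually present. For an unpaired subset --- e.g. $S=\{h_1^{\mathrm{r}},\dots,h_n^{\mathrm{r}}\}$, whose rows are $\frac{\ic}{2}(Q_l^{k}-Q_l^{-k})=-\sin(k\theta_l)$ --- each row remains a binomial $Q_l^{k}\mp Q_l^{-k}$ and you do not land on a single Vandermonde determinant. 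Your proposed ``clean way'' does not repair this: the fact that the $2n$ gradients span $\C^n$ only shows that \emph{some} $n$-element subset is independent (that is precisely Lemma \ref{Lm:LR4+}), not that \emph{every} one is; a spanning family can easily contain dependent subsets of size $n$.

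The statement is still true and your route can be completed: after multiplying the $l$-th column by $Q_l^{n}$, the row associated with exponent $k$ has exponent set $\{n+k,\,n-k\}$, and these sets are pairwise disjoint for distinct $k\in\{1,\dots,n\}$; hence the multilinear expansion of the determinant is a sum of generalized Vandermonde determinants $\det(Q_l^{e_m})$ with pairwise distinct exponent multisets and nonzero coefficients, and since such determinants (Vandermonde times Schur polynomials) are individually nonzero and linearly independent, the sum cannot vanish identically. The paper sidesteps this bookkeeping entirely with Lemma \ref{Lm:B1+2}: an induction on the size of the subset using only the cofactor expansion along the last row and the linear independence of the functions $\cos(kx),\sin(kx)$ in $\Cinf(\R)$, which treats paired and unpaired subsets uniformly. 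As written, your proof does not establish the claim for general subsets, so the key step needs to be reworked along one of these lines.
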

\begin{proof}
It is enough to verify that the determinant $D_n$ of the matrix whose
columns are the derivatives \eqref{J4} of the $n$ functions
 does not vanish identically on $\bT(n)$. Since we are dealing with real-analytic functions, this implies
 that the complement
of the zero set of the determinant is  dense open in $\bT(n)$.
By the relation \eqref{J1}, the derivatives are then independent on a dense open subset of $\UU(n)$.
The proof is completed by the $l=n$ case of the subsequent Lemma \ref{Lm:B1+2}.
\end{proof}

\begin{lemma}\label{Lm:B1+2}
For any $l=1,\dots, n$, let $f_1,\dots, f_l$
be a subset of the $2n$ real functions
\be
\cos (x),\,\sin (x),\,\cos (2x),\, \sin (2x),\, \dots, \cos (nx),\, \sin (nx),
\ee
which appear in \eqref{B4} if one puts $Q_l = e^{-\ic x}$.
Then, there exist real numbers $x_1,\dots, x_l$ for which
the determinant $D_l(x_1, \dots x_l)$ of the $l\times l$ matrix
$Y_{jm}(x_1,\dots, x_l) := f_m(x_j)$ ($j,m=1,\dots, l$) is not zero.
\end{lemma}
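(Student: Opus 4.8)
\textbf{Proof strategy for Lemma \ref{Lm:B1+2}.}
The plan is to prove the statement by induction on $l$, exploiting linear independence of the functions $\cos(kx),\sin(kx)$ rather than any explicit closed form for the determinant. First I would observe that the claim is equivalent to the assertion that the functions $f_1,\dots,f_l$ are linearly independent over $\R$ as functions on $\R$: if they were linearly dependent there would be a nontrivial relation $\sum_m c_m f_m \equiv 0$, forcing every row-combination of the matrix $Y_{jm}(x_1,\dots,x_l)$ to be dependent and hence $D_l\equiv 0$; conversely, if $D_l(x_1,\dots,x_l)=0$ for \emph{all} choices of points, then a standard argument (pick $x_1,\dots,x_{l-1}$ realizing a nonzero $(l-1)\times(l-1)$ minor by the inductive hypothesis, expand $D_l$ along the last row as a function of $x_l$, and use that this combination of the $f_m$ vanishes identically) produces a nontrivial linear relation among $f_1,\dots,f_l$. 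So the heart of the matter is the classical fact that the $2n$ functions $\cos x,\sin x,\dots,\cos nx,\sin nx$ are linearly independent over $\R$.

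To establish that linear independence cleanly, I would pass to complex exponentials: each $\cos kx$ and $\sin kx$ is a linear combination of $e^{\ic kx}$ and $e^{-\ic kx}$, and the change of basis from $\{\cos kx,\sin kx\}_{k=1}^n$ to $\{e^{\ic kx}\}_{k=-n,\,k\neq 0}^{n}$ is invertible. The exponentials $e^{\ic m x}$ for distinct integers $m$ are linearly independent over $\C$ — one sees this for instance from the Vandermonde determinant in the variables $e^{\ic m x_j}$ at suitably chosen distinct points $x_j\in\R$ (distinctness of the $e^{\ic m x}$ as characters), or from orthogonality $\frac{1}{2\pi}\int_0^{2\pi} e^{\ic(m-m')x}\,dx=\delta_{mm'}$. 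Either way, any subset of size $l$ of the original trigonometric functions is linearly independent over $\R$, which is exactly what the induction step consumes.

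Concretely, the induction runs as follows. For $l=1$ the single function $f_1$ is one of $\cos(kx)$ or $\sin(kx)$ with $1\le k\le n$; it is not identically zero, so some $x_1$ gives $f_1(x_1)\neq 0$, i.e.\ $D_1(x_1)\neq 0$. For the step, assume $x_1,\dots,x_{l-1}$ have been chosen so that the $(l-1)\times(l-1)$ minor built from $f_1,\dots,f_{l-1}$ at these points is nonzero. Expanding $D_l(x_1,\dots,x_{l-1},x)$ along its last column gives $\sum_{m=1}^{l} (-1)^{l+m} M_m\, f_m(x)$, where $M_l$ is precisely the nonzero $(l-1)\times(l-1)$ minor and the $M_m$ are fixed reals. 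This is a nontrivial $\R$-linear combination of $f_1,\dots,f_l$ (nontrivial because the coefficient of $f_l$ is $M_l\neq 0$), hence by the linear independence established above it does not vanish identically in $x$; choose $x_l$ with $D_l(x_1,\dots,x_l)\neq 0$.

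\textbf{Main obstacle.} There is no serious obstacle here: the only substantive ingredient is the linear independence of the characters $e^{\ic m x}$, which is completely standard. The one point requiring a line of care is the reduction from ``$D_l\not\equiv 0$'' to linear independence and back — specifically making sure the cofactor expansion in the induction step really yields a \emph{nontrivial} relation, which is guaranteed by carrying along the nonvanishing lower-order minor as the inductive hypothesis. Everything else is bookkeeping.
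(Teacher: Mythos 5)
Your proposal is correct and follows essentially the same route as the paper: induction on $l$ with a cofactor expansion of $D_l$ in the entries involving the new point $x_l$ (you call it the last column, but your formula is the expansion along the last row of $Y_{jm}=f_m(x_j)$, which is what you mean), carrying the nonvanishing $(l-1)\times(l-1)$ minor as the inductive hypothesis and invoking linear independence of the trigonometric functions to pick $x_l$. The only difference is that you spell out the standard linear-independence fact via complex exponentials, which the paper simply asserts.
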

\begin{proof}
We proceed by induction on $l=1,\dots, n$. The $l=1$ case is plain.
The expansion of the determinant $D_l(x_1,\dots, x_l)$ according to the last row
of the matrix $Y$  gives
\be
D_l(x_1,\dots, x_l) = \sum_{k=1}^l (-1)^{k+l} M_k(x_1,\dots, x_{l-1}) f_k(x_l),
\label{Dlsum}\ee
where $M_k$ is the determinant $D_{l-1}$ built from the $l-1$ functions
$f_1,\dots, f_{k-1}, f_{k+1},\dots, f_l$. By the induction hypothesis, we may choose
$x_1,\dots, x_{l-1}$ so that, for example, $M_l(x_1,\dots, x_{l-1})\neq 0$.
Then, the linear independence
of the functions $f_1,\dots, f_l$ (say in $\Cinf(\R)$) guarantees that $x_l$ can be chosen
in such a way that
the sum \eqref{Dlsum} is not zero.
\end{proof}

In  \S\ref{ss:redDegInt} we introduced the manifold
\be
\cM_{d**}:= \{ (A, B, v_1,\dots, v_d) \in \cM_d \mid (A,v_1,\dots, v_d)\in M_*\},
\label{R45Y}\ee
 where $M_*$ is a certain open subset of $\UU(n)_\reg \times \CC^{n\times d}$, and its closed subset
\be
\fC_{**}:= \{ (A,\tilde B, v_1,\dots, v_d)\in \cM_{d**}\mid \chi(A) - \chi(\tilde B) =0,\, \forall \chi\in \Cinf(\UU(n))^{\UU(n)}\}.
\label{R46Y}\ee
For completeness, now we prove the claim that was stated after equation \eqref{R46}.

\begin{lemma}\label{Lm:subman}
The  subset $\fC_{**} \subset \cM_{d**}$ is an embedded (regular) submanifold of codimension $n$.
\end{lemma}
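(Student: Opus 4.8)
The statement asserts that $\fC_{**}$, cut out inside $\cM_{d**}$ by the $n$ equations "$A$ and $\tilde B$ have the same conjugacy invariants", is a codimension-$n$ embedded submanifold. My approach is to realise $\fC_{**}$ as the zero set of a map whose differential has constant rank $n$ along the zero set, and then invoke the regular value / constant rank theorem. Concretely, recall from Lemma \ref{Lm:LR4} (equivalently Lemma \ref{Lm:LR4+}) that at any $g\in \UU(n)_\reg$ the differentials of the class functions span the $n$-dimensional space of $\Ad$-invariant covectors in $T_g^*\UU(n)$, and that one may take the $n$ explicit functions $h_1,\dots,h_n\in\Cinf(\UU(n))^{\UU(n)}$ from \eqref{J2} (any $n$ of the $2n$ of them whose derivatives are independent at the point in question) as a local functional basis: every $\chi\in\Cinf(\UU(n))^{\UU(n)}$ is locally $\chi=f(h_1,\dots,h_n)$ near a regular element.

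\textbf{Key steps, in order.} First I would fix a point $p_0=(A_0,\tilde B_0,v_1^0,\dots,v_d^0)\in\fC_{**}$. Since $p_0\in\cM_{d**}$ we have $A_0\in\UU(n)_\reg$, and because $\chi(A_0)=\chi(\tilde B_0)$ for all class functions $\chi$, the element $\tilde B_0$ is conjugate to $A_0$, hence also regular, $\tilde B_0\in\UU(n)_\reg$. Thus there is an open neighbourhood $\cU$ of $p_0$ in $\cM_{d**}$ on which both $A$ and $\tilde B$ stay regular, and on which (shrinking $\cU$ if needed) there is a common local functional basis $h_1,\dots,h_n$ of the class functions valid near both $A_0$ and $\tilde B_0$. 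Second, define $\Theta:\cU\to\R^n$ by $\Theta(A,\tilde B,v_\bullet):=(h_1(A)-h_1(\tilde B),\dots,h_n(A)-h_n(\tilde B))$. By the local-basis property, on $\cU$ the defining condition "$\chi(A)-\chi(\tilde B)=0$ for all invariant $\chi$" is equivalent to $\Theta=0$, so $\fC_{**}\cap\cU=\Theta^{-1}(0)$. Third, I would show $0$ is a regular value of $\Theta$ on $\cU$: the partial differential of $\Theta$ in the $\tilde B$-direction alone, at a point where $\tilde B\in\UU(n)_\reg$, is (up to sign) the linear map $T_{\tilde B}\UU(n)\to\R^n$ given by $w\mapsto(dh_1(\tilde B)(w),\dots,dh_n(\tilde B)(w))$, which is surjective precisely because $dh_1(\tilde B),\dots,dh_n(\tilde B)$ are linearly independent (this is exactly Lemma \ref{Lm:LR4+} / the choice of the $h_k$). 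Hence $D\Theta(p)$ is surjective for every $p\in\fC_{**}\cap\cU$. By the regular value theorem, $\fC_{**}\cap\cU$ is an embedded submanifold of $\cU$ of codimension $n$. Fourth, since $p_0\in\fC_{**}$ was arbitrary and the notion of embedded submanifold is local, $\fC_{**}$ is an embedded submanifold of $\cM_{d**}$ of codimension $n$; that it is closed in $\cM_{d**}$ is immediate from \eqref{R46Y} (it is an intersection of zero sets of continuous functions).

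\textbf{Main obstacle.} The only genuinely delicate point is the passage from the infinitely many defining equations indexed by all $\chi\in\Cinf(\UU(n))^{\UU(n)}$ to the finitely many $h_1,\dots,h_n$, done uniformly on a single neighbourhood $\cU$ of $p_0$: one must arrange that the \emph{same} tuple $h_1,\dots,h_n$ serves as a local functional basis for the class functions simultaneously near $A_0$ and near $\tilde B_0$, and that every other invariant $\chi$ is expressible through them on that neighbourhood. This is handled by Lemma \ref{Lm:LR4} (which gives, at each regular element, such a basis and the local expressibility $\chi=f(h_1,\dots,h_n)$), combined with the observation that among the $2n$ explicit functions \eqref{J2} one can pick, on a small enough neighbourhood, a common $n$-subset that works at both regular points $A_0,\tilde B_0$ at once — shrinking $\cU$ so that the relevant Wronskian-type determinant of \eqref{J4} stays nonzero throughout. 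Everything else is a routine application of the implicit function theorem, and I would not belabour the computation of $D\Theta$ beyond exhibiting its surjectivity via the $\tilde B$-block.
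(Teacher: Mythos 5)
Your proposal is correct and follows essentially the same route as the paper: both reduce the infinitely many defining equations to the $n$ local functional generators $h_1,\dots,h_n$ of the class functions via Lemma \ref{Lm:LR4}, and then apply the regular value/implicit function theorem to the map $(h_i(A)-h_i(\tilde B))_{i=1}^n$. The only cosmetic differences are that the paper works on a product neighbourhood $U\times(\eta U\eta^{-1})\times V$ (which makes the ``same functional expression near $A_0$ and $\tilde B_0$'' point transparent) and checks independence of the differentials via the $A$-block rather than your $\tilde B$-block.
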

\begin{proof}
The condition in \eqref{R46Y} on the pairs $(A,\tilde B)$ is equivalent to the requirement that $A$ and $\tilde B$ have the same eigenvalues.
In turn, this is equivalent to the relations
\be
h_i(A) - h_i(\tilde B) = 0, \qquad i=1,\dots, 2n,\ee
where  $\{h_i\}$ denotes the set of $2n$ functions listed in \eqref{J2}.
Let us take an arbitrary point $(A_0, \tilde B_0, v_1,\ldots,v_d) \in \fC_{**}$  and consider an open neighbourhood of
this point of the form
\be
W:= U \times (\eta U \eta^{-1}) \times V,
\ee
where $U$ is an open neighbourhood of $A_0$ in $\UU(n)_\reg$,  $\eta\in\UU(n)$ is chosen so that $\eta A_0 \eta^{-1} = \tilde B_0$, and $V$
is a neighbourhood of $(v_1,\ldots,v_d)\in \CC^{n\times d}$.
We can select $n$ functions out of the $h_i$ ($i=1,\dots, 2n$) so that their exterior derivatives are linearly independent at $A_0$,
and then we can also choose $U$ in such a way that the same holds at every element of $U$.
By re-labeling, we may assume that the $n$ independent functions are $h_1,\dots, h_n$.
Then, it follows from Lemma \ref{Lm:LR4} that the values of $h_{n+1},\dots, h_{2n}$ at any $A\in U$ are
determined by $h_1(A),\dots, h_n(A)$, and similarly for $\tilde B  \in \eta U \eta^{-1}$.
Thus, we have proved that the intersection $\fC_{**}\cap W$ is  the joint zero set of the functions
$F_i \in \Cinf(W,\R)$ defined by
\be
F_i(A,\tilde B, v_1,\ldots,v_d) := h_i(A) - h_i(\tilde B), \quad i=1,\dots, n,
\ee
whose differentials are linearly independent on $W$.
By a standard result of differential geometry  \cite{Wa}, this implies that the subset $\fC_{**}$ inherits a unique manifold
structure from $\cM_{d**}$, such that it also has the subspace topology.
\end{proof}

\section{Poisson algebra of invariant first integrals and a conjecture}
\label{sec:M}

Consider $\MM_d$ equipped with the quasi-Poisson bracket associated with the quasi-Poisson
bivector $P_{\underline{z}}$ defined in   \eqref{Eq:PencCor} for a fixed $\underline{z}=(z_{\alpha\beta})_{\alpha<\beta} \in \R^{d(d-1)/2}$, for any $d\geq 2$.
Recall from \S\ref{ss:Unred} that $\UU(n)$-invariant functions of the matrix $A$ are Poisson commuting on $\MM_d$, and  together with their first integrals (constants of motion) they satisfy a property analogous to being degenerately integrable. Motivated by the
original considerations of Gibbons--Hermsen \cite{GH} and the investigation of Marshall and the authors in the Poisson--Lie setting \cite{FFM}, we are led to consider among all the $\UU(n)$-invariant first integrals those of a peculiar form: they are obtained by taking traces of matrices involving solely $A$ and the spin variables $v_1,\ldots,v_d$.
The Poisson algebra   $\fI$ \eqref{Eq:fItr} generated by such first integrals contains in its Poisson center the algebra $\fH_{\tr}$ of polynomials in the functions  $\Re(\tr A^k),\Im(\tr A^k)$, $k\in \Z$; in analogy with \cite{FFM} we conjecture that the pair $(\fI,\fH_{\tr})$ yields a degenerate integrable system on the reduced phase space $\Phi^{-1}(e^{\ic \gamma})/\UU(n)$ considered in Section \ref{S:spinRS} (for generic $\gamma\in \R$), though we can not prove this result at present.

\medskip

We start by working in the Poisson algebra $\Cinf(\MM_d)^{\UU(n)}$ obtained by restricting the quasi-Poisson bracket (associated with the quasi-Poisson bivector $P_{\underline{z}}$ from \eqref{Eq:PencCor}) on $\Cinf(\MM_d)$ to invariant functions.
We let  $\fH_{\tr}:=\R[\Re(\tr A^k),\Im(\tr A^k) \mid k\in \Z]$, and we define
\begin{equation}
\fI_0:=\{F(x_1 |v_1|^2,\ldots,x_d |v_d|^2) \mid F(t_1,\ldots,t_d) \text{ is real analytic on }(-2\pi,2\pi)^{\times d}\,\}\,.
\end{equation}
In particular,  $e^{|v_\alpha|^2}$ or $\btt(x_\alpha |v_\alpha|^2)$ with $\btt(t)$ defined in \eqref{Eq:ab-fct}  belong to $\fI_0$ for any $1\leq \alpha\leq d$.
We put $I^k_{\alpha \beta}:=v_\alpha^\dagger A^k v_\beta$ (=$\tr(v_\beta v_\alpha^\dagger A^k)$) and we introduce the polynomial algebra generated by the real and imaginary parts of the $I^k_{\alpha \beta}$ and $\tr(A^k)$ with coefficients in $\fI_0$:
\begin{equation} \label{Eq:fItr}
 \fI:=\fI_0[\Re(\tr A^k),\Im(\tr A^k), \Re(I^k_{\alpha \beta}),\Im(I^k_{\alpha \beta}) \mid k\in \Z_{\geq0},\, 1\leq \alpha,\beta \leq d]\,.
\end{equation}
Note that $I^0_{\alpha\alpha}=|v_\alpha|^2\in \fI_0\subset \fI$ for all $1\leq \alpha \leq d$ and $\fH_{\tr}\subset \fI$.
When defining $\fI$, we could allow any $k\in \Z$ since $A$ is unitary and $I^{-k}_{\alpha \beta}=\overline{I_{\beta \alpha}^k}$.
We also remark that, while $\fI$ is infinite-dimensional, it is finitely generated over $\fI_0$ : we only need to consider  functions $\tr A^k,I^k_{\alpha \beta}$ with exponent $k\in \{1,\ldots,n\}$ due
to the Cayley--Hamilton theorem for $A$.

\begin{proposition} \label{Pr:FInt-A}
For the parameters  $\underline{z}=(z_{\alpha\beta})_{\alpha<\beta}$ defining the quasi-Poisson bivector $P_{\underline{z}}$ as in \eqref{Eq:PencCor}, consider their antisymmetric extension $(z_{\alpha\beta}^\ast)_{1\leq \alpha, \beta\leq d}$ obtained by setting $z_{\alpha\alpha}^\ast=0$ and $z_{\alpha\beta}^\ast=-z_{\beta\alpha}^\ast:=z_{\alpha\beta}$ for $\alpha<\beta$.
Denote by $\br{-,-}_{\underline{z}}$ the quasi-Poisson bracket associated with  $P_{\underline{z}}$.
\begin{enumerate}
 \item[(1)] $\br{\tr(A^k),\tr(A^l)}_{\underline{z}}=0$ and $\br{I_{\alpha\beta}^k,\tr(A^l)}_{\underline{z}}=0$ for any $k,l\in \Z_{\geq0}$ and $1\leq \alpha,\beta \leq d$.
 \item[(2)] For any $k,l\in \Z_{\geq0}$ and $1\leq \alpha,\beta,\gamma,\epsilon \leq d$,  the following holds
 \begin{equation}
  \begin{aligned}
\br{I_{\alpha\beta}^k,I_{\gamma\epsilon}^l&}_{\underline{z}}=
\frac12 I_{\gamma\beta}^0 I_{\alpha\epsilon}^{k+l} - \frac12 I_{\gamma \beta}^{k+l} I_{\alpha\epsilon}^{0}
+\frac12 \left(\sum_{1\leq r\leq k} - \sum_{1\leq r \leq l}\right)
 \left(I_{\gamma\beta}^{r} I_{\alpha\epsilon}^{k+l-r} - I_{\gamma\beta}^{k+l-r}I_{\alpha\epsilon}^{r} \right)  \\
 &+\frac12 \sgn(\gamma-\alpha) I_{\gamma\beta}^{k}I_{\alpha\epsilon}^l - z_{\alpha \gamma}^\ast\,I_{\alpha\beta}^{k}I_{\gamma\epsilon}^l
 +\frac12 \sgn(\epsilon-\beta) I_{\gamma\beta}^{l}I_{\alpha\epsilon}^k - z_{\beta \epsilon}^\ast\,I_{\alpha\beta}^{k}I_{\gamma\epsilon}^l   \\
 &-\frac12 \sgn(\epsilon-\alpha) I_{\gamma\beta}^{k+l}I_{\alpha\epsilon}^0 + z_{\alpha \epsilon}^\ast \,I_{\alpha\beta}^{k}I_{\gamma\epsilon}^l
 -\frac12 \sgn(\gamma-\beta) I_{\gamma\beta}^{0}I_{\alpha\epsilon}^{k+l} + z_{\beta\gamma}^\ast \,I_{\alpha\beta}^{k}I_{\gamma\epsilon}^l  \\
&+\frac{\ic}{x_\beta} \, \delta_{\beta\gamma}  I_{\alpha\epsilon}^{k+l}
+\frac{\ic}{2}\delta_{\beta\gamma} b(x_\beta |v_\beta|^2)
 \left[|v_\beta|^2  I_{\alpha\epsilon}^{k+l} - I_{\alpha\beta}^{k}I_{\gamma\epsilon}^l\right]  \\
 &-\frac{\ic}{x_\alpha} \,\delta_{\alpha\epsilon}I_{\gamma\beta}^{k+l}
-\frac{\ic}{2}\delta_{\alpha\epsilon}\btt(x_\alpha |v_\alpha|^2) \left[|v_\alpha|^2 I_{\gamma\beta}^{k+l} - I_{\alpha\beta}^{k}I_{\gamma\epsilon}^l\right] \,,
 \label{Eq:FInt-A-2}
  \end{aligned}
 \end{equation}
 where we omit the first line if $k=0$ or $l=0$.
\item[(3)] $\br{|v_\alpha|^2,I_{\gamma\epsilon}^l}_{\underline{z}}= \frac{\ic}{x_\alpha} (\delta_{\alpha\gamma}-\delta_{\alpha\epsilon}) I_{\gamma\epsilon}^{l}$, hence $\br{|v_\alpha|^2,|v_\gamma|^2}_{\underline{z}}=0$ for any $l\in \Z_{\geq0}$, $1\leq \alpha,\gamma,\epsilon \leq d$.
\end{enumerate}
\noindent In particular, the subalgebra $\fI\subset \Cinf(\MM_d)^{\UU(n)}$ is a Poisson algebra whose center contains $\fH_{\tr}$.
\end{proposition}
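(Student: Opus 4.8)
The plan is to reduce every bracket in the statement to the fundamental brackets of Theorem~\ref{Thm:qH-Master} and the pencil correction $\br{-,-}_{\psi_{\underline z}}$ of Proposition~\ref{Pr:Pencil}, and then to push computations through the Leibniz rule. The first observation is that the coordinate functions $A_{ij}$, $(v_\alpha)_i$, $(\bar v_\alpha)_i$ already generate a subalgebra of $\Cinf(\MM_d)$ that is closed under $\br{-,-}_{\underline z}$: none of the brackets \eqref{Eq:br-AB}, \eqref{Eq:br-ABv}, \eqref{Eq:br-ABvbar}, \eqref{Eq:br-VVall} among these functions, nor their corrections \eqref{Eq:underz-1}--\eqref{Eq:underz-vv}, involves any entry of $B$. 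Since $\tr(A^k)=\sum_i(A^k)_{ii}$ and $I^k_{\alpha\beta}=v_\alpha^\dagger A^k v_\beta$ are polynomials in these generators, it suffices to establish a handful of ``building-block'' identities and then contract indices.

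First I would compute, by induction on $k$ using the Leibniz rule, the brackets of matrix entries of $A^k$ with the vectors. Expanding $(A^k)_{ij}$ as a product of $k$ factors $A_{\cdot\cdot}$ and using \eqref{Eq:br-ABv}, \eqref{Eq:br-ABvbar} (which receive no $\psi_{\underline z}$-correction, by \eqref{Eq:underz-1}), one obtains telescoping sums which collapse to closed forms such as $\br{(A^k)_{ij},(v_\alpha)_m}=\frac12\bigl[(v_\alpha)_i(A^k)_{mj}-\delta_{mj}(A^kv_\alpha)_i\bigr]$ and its analogue for $(\bar v_\alpha)_m$; likewise $\br{(A^k)_{ij},(A^l)_{mn}}$ reduces, via the $A$--$A$ bracket in \eqref{Eq:br-AB}, to a telescoping sum over the exponent --- this is precisely the mechanism that will produce the antisymmetrised partial sums $\bigl(\sum_{1\le r\le k}-\sum_{1\le r\le l}\bigr)$ in \eqref{Eq:FInt-A-2}. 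Contracting $\br{(A^k)_{ij},(v_\alpha)_m}$ with $\delta_{ij}$ and summing gives $\br{\tr(A^k),(v_\alpha)_m}_{\underline z}=0$ at once, and similarly $\br{\tr(A^k),(\bar v_\alpha)_m}_{\underline z}=0$ and $\br{\tr(A^k),A_{mn}}_{\underline z}=0$ (the latter as already noted in Remark~\ref{Rem:Deg}). By the Leibniz rule this shows that each $\tr(A^k)$, $k\in\Z$, is Poisson-central in $\Cinf(\MM_d)^{\UU(n)}$; in particular item~(1) holds, and since $\Re\tr(A^k)=\frac12(\tr A^k+\tr A^{-k})$ and $\Im\tr(A^k)=\frac{\ic}{2}(\tr A^{-k}-\tr A^k)$ for unitary $A$, the whole algebra $\fH_{\tr}$ is central.

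Item~(2) is then a matter of expanding $\br{I^k_{\alpha\beta},I^l_{\gamma\epsilon}}_{\underline z}$ by the Leibniz rule into the nine contributions obtained by pairing each of the three factors $(\bar v_\alpha)_i$, $(A^k)_{ij}$, $(v_\beta)_j$ of $I^k_{\alpha\beta}$ with each of the three factors of $I^l_{\gamma\epsilon}$, substituting the building-block brackets above together with \eqref{Eq:br-VVall} and \eqref{Eq:underz-vv}, and contracting the free indices. The $v$--$v$, $\bar v$--$\bar v$ and off-diagonal $v$--$\bar v$ brackets supply the $\sgn(\cdot)$ terms and, through $\psi_{\underline z}$, the coefficients $z^\ast_{\cdot\cdot}$; the diagonal part of the $v$--$\bar v$ bracket produces the $\delta_{\beta\gamma}$ and $\delta_{\alpha\epsilon}$ contact terms, including the $\btt(x_\alpha|v_\alpha|^2)$ pieces; and the $A$--$A$ piece yields the telescoping partial sums. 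Item~(3) follows either by specialising \eqref{Eq:FInt-A-2} to $k=0$, $\beta=\alpha$ (so $I^0_{\alpha\alpha}=|v_\alpha|^2$), or --- more cheaply --- from the identities $P_{\MM_d}^\sharp(d|v_\alpha|^2)=-x_\alpha^{-1}\ic_{\MM_d;\alpha}$ (cf.\ the proof of Proposition~\ref{Pr:om-Pencil}) and $\psi_{\underline z}^\sharp(d|v_\alpha|^2)=0$, which give $\br{|v_\alpha|^2,F}_{\underline z}=-x_\alpha^{-1}\,\ic_{\MM_d;\alpha}(F)$; evaluating the right-hand side on $I^l_{\gamma\epsilon}$ via \eqref{Eq:act-U1-inf} produces $\frac{\ic}{x_\alpha}(\delta_{\alpha\gamma}-\delta_{\alpha\epsilon})I^l_{\gamma\epsilon}$, and in particular $\br{|v_\alpha|^2,|v_\gamma|^2}_{\underline z}=0$.

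Finally, to conclude that $\fI$ is a Poisson subalgebra it is enough, by the Leibniz rule, to check that the bracket of any two $\R$-algebra generators of $\fI$ lies in $\fI$. Generators from $\fH_{\tr}$ are central. For $F\in\fI_0$ one has $\br{F,G}_{\underline z}=\sum_\mu x_\mu\,(\partial_{t_\mu}F)\,\br{|v_\mu|^2,G}_{\underline z}$, and since $\fI_0$ is closed under the derivations $\partial_{t_\mu}$ (real analyticity) while $\br{|v_\mu|^2,|v_\nu|^2}_{\underline z}=0$ and $\br{|v_\mu|^2,I^l_{\gamma\epsilon}}_{\underline z}\in\fI_0\,I^l_{\gamma\epsilon}$ by item~(3), both $\br{\fI_0,\fI_0}_{\underline z}$ and $\br{\fI_0,I^l_{\gamma\epsilon}}_{\underline z}$ stay in $\fI$; and \eqref{Eq:FInt-A-2} exhibits $\br{I^k_{\alpha\beta},I^l_{\gamma\epsilon}}_{\underline z}$ as an $\fI_0$-linear combination of products $I^r_{\cdot\cdot}I^s_{\cdot\cdot}$, hence in $\fI$. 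Passing to real and imaginary parts of the complex generators is harmless because the bracket is real and $\fI$ is stable under complex conjugation ($\overline{I^k_{\alpha\beta}}=I^{-k}_{\beta\alpha}$ for unitary $A$). The main obstacle I anticipate is entirely computational: organising the nine Leibniz contributions in item~(2) and verifying that the telescoping $A$--$A$ terms reassemble into exactly the antisymmetrised partial sums of \eqref{Eq:FInt-A-2}, and that the $\sgn$- and $z^\ast$-contributions combine with the correct signs.
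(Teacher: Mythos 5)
Your proposal is correct and follows essentially the same route as the paper: the paper likewise proves (2) by decomposing $\br{-,-}_{\underline z}=\br{-,-}+\br{-,-}_{\psi_{\underline z}}$, writing $I^k_{\alpha\beta}=\sum_{i,j}(\bar v_\alpha)_i(A^k)_{ij}(v_\beta)_j$ and carrying out the Leibniz-rule computation with the explicit brackets of Theorem \ref{Thm:qH-Master} and \eqref{Eq:underz-1}--\eqref{Eq:underz-vv}, which is exactly your nine-term expansion with the telescoping $A$--$A$ contributions. The only (harmless) divergences are that the paper obtains (1) by citing the dynamical result of \S\ref{ss:Unred} rather than your direct matrix-entry computation, and deduces (3) by specializing (2) rather than via the $\UU(1)$-action identity $P_{\MM_d}^\sharp(d|v_\alpha|^2)=-x_\alpha^{-1}\ic_{\MM_d;\alpha}$.
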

\begin{proof}
We have \emph{(1)} directly from \S\ref{ss:Unred}.
We deduce \emph{(3)} from \emph{(2)} by taking $k=0$ then $l=0$  and $\gamma=\epsilon$.
In view of these three items, $\fI\subset \Cinf(\MM_d)^{\UU(n)}$ is a Poisson algebra. Moreover, the Poisson center of $\fI$ contains $\fH_{\tr}$ by \emph{(1)}.

To get \emph{(2)}, recall that we have the decomposition
$\br{-,-}_{\underline{z}}=\br{-,-}+\br{-,-}_{\psi_{\underline{z}}}$,
where $\br{-,-}$ is the bracket explicitly presented in Theorem \ref{Thm:qH-Master},
while $\br{-,-}_{\psi_{\underline{z}}}$ satisfies \eqref{Eq:underz-1} and \eqref{Eq:underz-vv}.
Deriving \eqref{Eq:FInt-A-2} is therefore a standard computation after decomposing the elements
$I^k_{\alpha \beta}=\sum_{1\leq i,j\leq n} (\bar{v}_\alpha)_i  (A^k)_{ij} (v_\beta)_j$ in terms of entries of the matrices  defining $\MM_d$.
\end{proof}

When $x_1=\ldots=x_d=1$ and $z_{\alpha \beta}^\ast=\sgn(\alpha-\beta)$ for any $1\leq \alpha,\beta \leq d$, Proposition \ref{Pr:FInt-A} yields a real version of the algebra of first integrals given by Chalykh and the first-named author in \cite[\S5.1]{CF2}.
This choice of parameters $(z_{\alpha \beta}^\ast)$ corresponds to performing successive fusions of the residual $\UU(1)$-actions as we explained in \S\ref{sss:Comm-pen}, which correspond to analogous fusions of $\CC^\times$-actions performed in \cite{CF2}.
In fact, these Poisson algebras can be compared through the formulas \eqref{Eq:FInt-A-2} and \cite[(5.1)]{CF2}: they match up to an overall sign (due to a different choice of trace pairing) except for the terms in $\delta_{\alpha \epsilon}$ and $\delta_{\beta\gamma}$ whose discrepancy is explained by the (local) change of variables given in Remark \ref{Rem:VdB}.

On the other hand, we recall that in the complex   case,   besides the one in \cite{CF2}, there
 is another Poisson algebra that was
derived by Arutyunov and Olivucci \cite{AO} from a Poisson--Lie perspective.
While both algebras describe the first integrals for the  trigonometric/hyperbolic spin RS system in the complex setting,
it is a puzzling observation that they do not match. Nevertheless, if we take our
formula \eqref{Eq:FInt-A-2} in the case where $x_1=\ldots=x_d=1$ and all $z_{\alpha\beta}=0$,
then comparison  with \cite[Eq.~(7.5)]{AO} (choosing $\kappa=1$  and the `\textit{minus}' Poisson bracket)
shows  that the two Poisson algebras have the same form.\footnote{The comparison with \cite{AO} is rather subtle:
one not only needs to complexify and make use of Remark \ref{Rem:VdB}, but also needs to use an
alternative choice of first integrals after complexification.
See also \cite[Remark 5.2]{CF2}.}
Building on these observations, we expect that the Poisson algebras of \cite{AO} and \cite{CF2} can be simultaneously considered in the following way.

\begin{conjecture} \label{Conj:PenC}
There exists a pencil of compatible Poisson brackets on the phase space
of the complex trigonometric spin RS system (denoted $\mathcal{M}_{n,d,q}^\times/\!/\Gl_n(\CC)$ in \cite{CF2}) which
contains the Poisson structures of Arutyunov--Olivucci \cite{AO} and of Chalykh--Fairon \cite{CF2} as special instances.
Furthermore, for all such Poisson brackets, one can obtain a Hamiltonian formulation of the
trigonometric spin RS system of Krichever and Zabrodin \cite{KZ}.
\end{conjecture}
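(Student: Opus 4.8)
A natural route towards Conjecture \ref{Conj:PenC} is to realise the required pencil \emph{upstairs}, on a holomorphic master phase space, and then push it down by quasi-Hamiltonian reduction, exactly as was done in the real $\UU(n)$ setting in Sections \ref{S:Master}--\ref{S:Red}. Concretely, replace $\D(\UU(n))$ by the internally fused double $\Gl_n(\CC)\times\Gl_n(\CC)$ of $\Gl_n(\CC)$ and each quasi-Poisson ball $\disk(x_\alpha)$ by a copy of Van den Bergh's holomorphic quasi-Poisson manifold on (a dense open subset of) $T^\ast\CC^n$ recalled at the start of Appendix \ref{sec:A}; performing fusion of these $d+1$ blocks produces the holomorphic master space whose reduction is $\mathcal{M}_{n,d,q}^\times$, and a further quasi-Hamiltonian reduction at $q\,\1_n$ gives $\mathcal{M}_{n,d,q}^\times/\!/\Gl_n(\CC)$. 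Each factor $T^\ast\CC^n$ carries a $\CC^\times$-action $\lambda\cdot v=v\lambda^{-1}$ commuting with the $\Gl_n(\CC)$-action and leaving the holomorphic quasi-Poisson bivector and its group-valued moment map invariant, in perfect analogy with \eqref{Eq:act-U1}--\eqref{Eq:act-U1-inf}. The holomorphic analogues of Lemmas \ref{Lem:qP-abel1} and \ref{Lem:qP-abel2} then yield a $d(d-1)/2$-dimensional pencil $P^{\CC}_{\underline{z}}=P^{\CC}_{\MM_d}+\psi^{\CC}_{\underline{z}}$ of compatible holomorphic quasi-Poisson bivectors, all sharing the same moment map; since each ``spin-norm'' $\CC^\times$-moment map is $\Gl_n(\CC)$-invariant, the whole pencil descends to a pencil of genuine holomorphic Poisson brackets on the reduced space, by restriction to invariant functions and the usual reduction results.

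The second step is to identify the two known Poisson structures as specific members of this pencil. As already flagged in \S\ref{sss:Comm-pen} and in the discussion following Proposition \ref{Pr:FInt-A}, the member with all $z_{\alpha\beta}=0$ should reproduce, after complexification and the local change of variables of Remark \ref{Rem:VdB}, the Arutyunov--Olivucci bracket of \cite{AO}, while the member with $z^\ast_{\alpha\beta}=\sgn(\alpha-\beta)$---corresponding to fusing the $d$ residual $\CC^\times$-actions recursively in the same order as the $\Gl_n(\CC)$-factors---should reproduce the Chalykh--Fairon bracket of \cite{CF2}. Making these identifications rigorous amounts to: (i) writing down, as in \eqref{Eq:FInt-A-2} and \cite[(5.1)]{CF2}, \cite[(7.5)]{AO}, the bracket of the reduced first-integral functions $\tr A^k$, $I^k_{\alpha\beta}$, $|v_\alpha|^2$ in each framework; (ii) transporting everything through the biholomorphism of Remark \ref{Rem:VdB} and the substitution turning the $\CC^\times$-moment maps into the spin-norm functions; (iii) checking that the family affine in $\underline z$ interpolates linearly between the two. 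Because all three brackets are Poisson and the interpolation is affine in $\underline z$, \emph{compatibility} of the pencil is then automatic.

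For the multi-Hamiltonian statement one argues exactly as in Theorem \ref{Thm:B1} and Lemma \ref{Lm:B2}: the Krichever--Zabrodin Hamiltonian is the pullback of a class function on one $\Gl_n(\CC)$-factor of the double, whereas $\psi^{\CC}_{\underline z}$ involves only the spin variables $v_\alpha$; hence $\{F,H\}_{\psi^{\CC}_{\underline z}}=0$ for such $H$ and every $F$, the quasi-Hamiltonian vector field $X_H$ is \emph{independent} of $\underline z$, and so is the reduced flow. Consequently every bracket of the pencil provides a Hamiltonian formulation of one and the same trigonometric spin RS system, which is the Krichever--Zabrodin system \cite{KZ} by the computation underlying Theorem \ref{Thm:Dyn-RS} run in the holomorphic category.

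The hardest part will be step two. The comparison with \cite{AO} is genuinely delicate (cf.\ the footnote after Proposition \ref{Pr:FInt-A}): it requires not only complexification and the Van den Bergh change of variables of Remark \ref{Rem:VdB} but also a nontrivial re-labelling of a generating set of first integrals after complexification, so that one must first pin down precisely which invariant functions on $\mathcal{M}_{n,d,q}^\times/\!/\Gl_n(\CC)$ are used on each side before matching the formulas term by term. One must also set up holomorphic quasi-Hamiltonian reduction at a fixed conjugacy class carefully---freeness of the $\Gl_n(\CC)$-action on the relevant constraint surface when $q$ is not a root of unity is already available from \cite{CF1,CF2}---and verify genuine independence of the members $P^{\CC}_{\underline z}$ on the reduced space, i.e.\ that distinct $\underline z$ give distinct brackets, which is the analogue of Remark \ref{Rem:Reg} and should follow by evaluating at a point where all spin vectors are small. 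Everything else is bookkeeping modelled on the real construction of the present paper.
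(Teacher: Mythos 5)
The statement you are addressing is stated in the paper as a \emph{conjecture}, and the paper contains no proof of it: the authors only sketch, in the paragraph following Conjecture \ref{Conj:PenC}, essentially the same plan you propose (adapt the pencil construction of \S\ref{ss:Pencil} to the holomorphic setting built from Van den Bergh's quasi-Poisson structure and Remark \ref{Rem:VdB}, then recognize the brackets of \cite{AO} and \cite{CF2} as members). So your proposal is aligned with the intended route, but it is a programme rather than a proof, and you have correctly located — without closing — the step on which the whole conjecture hinges.

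That step is the identification of the Arutyunov--Olivucci bracket as a member of the pencil. The paper records that the Poisson algebras of first integrals of \cite{AO} and \cite{CF2} \emph{do not match}, and that even the formal agreement of \eqref{Eq:FInt-A-2} at $\underline{z}=0$ with \cite[Eq.~(7.5)]{AO} requires complexification, the change of variables of Remark \ref{Rem:VdB}, \emph{and} a nontrivial replacement of the generating set of first integrals; until that dictionary is made precise and the brackets are matched term by term on $\mathcal{M}_{n,d,q}^\times/\!/\Gl_n(\CC)$ (not merely on a distinguished family of functions), one cannot conclude that the AO Poisson structure lies in the pencil at all. Your proposal defers exactly this to ``step two'' and acknowledges it as the hardest part, so the conjecture remains open under your argument. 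A secondary flaw: the assertion that compatibility is ``automatic'' because the three brackets are Poisson and the interpolation is affine in $\underline{z}$ is not valid as stated — three Poisson members of an affine family do not force every member, let alone every linear combination, to satisfy Jacobi. Compatibility must instead be derived upstairs from the holomorphic analogue of Lemma \ref{Lem:qP-abel2} (where it does follow, since $[P_{\underline z},P_{\underline z}]=\phi_M$ for all $\underline z$ yields the vanishing of the relevant Schouten brackets coefficient by coefficient), and then pushed down by restriction to invariants; but this presupposes that both target structures have already been shown to arise from that construction, which returns you to the unresolved identification. The remaining ingredients you list (the $\CC^\times$-invariance of the holomorphic bivector and moment map, the $\underline z$-independence of $X_H$ for $H$ a class function of $A$ via the analogue of Lemma \ref{Lm:B2}, freeness of the $\Gl_n(\CC)$-action from \cite{CF1,CF2}) are indeed routine adaptations of the real case.
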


We are confident that the results presented in \S\ref{ss:Pencil} can be adapted to the complex setting to yield a pencil of compatible quasi-Poisson brackets on $\mathcal{M}_{n,d,q}^\times$, a special case of which is given in \cite{CF2}. We also believe that Conjecture \ref{Conj:PenC}
can be approached from a Poisson--Lie perspective by endowing the oscillator manifold of \cite{AO} with a pencil of compatible Poisson brackets.

\newpage

\end{document}